\documentclass[12pt]{article}
\pdfoutput=1
\usepackage{latexsym, graphicx,cite}
\usepackage{amsmath}
\usepackage{amssymb}
\usepackage{amsthm}
\usepackage{mathrsfs}
\usepackage{mathtools}
\usepackage{graphicx}
\usepackage{caption}
\usepackage{subcaption}
\usepackage{enumitem}
\usepackage{soul}
\usepackage{changepage}

\numberwithin{equation}{section}

\usepackage[top = 0.8 in,bottom = 0.8 in, left = 0.8 in, right=0.8 in]{geometry}

\newcommand{\arXiv}[1]{\href{http://www.arXiv.org/abs/#1}{arXiv:#1}}
\newcommand{\doi}[2]{\href{http://dx.doi.org/#2}{#1}}
\usepackage[colorlinks=true, linkcolor=blue, bookmarks=true]{hyperref}

\makeatletter
\renewcommand\section{\@startsection {section}{1}{\z@}%
	{-1.5ex \@plus -1ex \@minus -.2ex}
	{.3ex \@plus.2ex}%
	{\normalfont\large\bfseries}}
\renewcommand\subsection{\@startsection{subsection}{2}{\z@}%
	{-0.5ex\@plus -1ex \@minus -.2ex}%
	{0.2ex \@plus .2ex}%
	{\normalfont\bfseries}}
\makeatother

\newcommand{\w}{\omega}

\newcommand{\q}{a}
\newcommand{\h}{F}

\newcommand{\ax}{\hat{\mu}}
\newcommand{\bx}{\hat{\nu}}
\newcommand{\ah}{\nu_1}
\newcommand{\bh}{\nu_0}
\newcommand{\at}{\tilde{\mu}}
\newcommand{\bt}{\tilde{\nu}}

\newcommand{\aone}{\mu_1}
\newcommand{\bone}{\mu_0}
\newcommand{\atwo}{\mu_2}

\newcommand{\beq}{\begin{equation}}
	\newcommand{\eeq}{\end{equation}}
\newcommand{\beqnn}{\begin{equation*}}
	\newcommand{\eeqnn}{\end{equation*}}

\theoremstyle{proposition}
\newtheorem{proposition}{Proposition}[section]

\theoremstyle{corollary}
\newtheorem{corollary}{Corollary}[section]

\theoremstyle{remark}

\theoremstyle{lemma}
\newtheorem{lemma}{Lemma}[section]

\begin{document}
	
	\begin{center}
        {\large \bf 
			Coherent energy cascades in nonlinear disordered Hamiltonians			
		}

		\vskip 2mm
		{ Anxo Biasi$^1$, Brad Cownden$^1$, Oleg Evnin$^{2,3}$ and Alvaro Iturbe Jabaloyes$^1$}

		\vskip 1mm

		\vskip 0mm
	\end{center}
		
			\begin{adjustwidth}{-0.1cm}{-0.1cm}
				\begin{center}	\footnotesize{$^1$Instituto Galego de Física de Altas Enerxías, Universidade de Santiago de Compostela,\vspace{-.3mm}\\ Santiago de Compostela, Spain,\vspace{.5mm}\\	
					$^2$High Energy Physics Research Unit, Faculty of Science,
					Chulalongkorn University, Bangkok, Thailand,\vspace{.5mm}\\
					$^3$Theoretische Natuurkunde, Vrije Universiteit Brussel (VUB)
					\& International Solvay Institutes, Brussels, Belgium
				}	

                  \footnotesize{anxo.biasi@gmail.com, bradleystuart.cownden@usc.es, oleg.evnin@gmail.com, alvarojacobo.iturbe@rai.usc.gal}
                
                \end{center}
			\end{adjustwidth}

	\vspace{-0.2cm}
	\begin{center}
		\begin{adjustwidth}{2.2cm}{2.2cm}\footnotesize
The familiar wave kinetic theory revolves around turbulent energy cascades characterized by random phases of the individual normal modes.
By contrast, recent years have revealed turbulent cascades where the transfer of energy is mediated by phase-coherent dynamics.
Such coherent cascades have been previously established in deterministic systems with a tightly controlled algebraic structure of the mode interactions, such as nonlinear Schr\"odinger equations.\vspace{1.5mm}\\
Here, we show that the presence of phase-coherent cascades is robust with respect to substantial alterations of the mode couplings. We prove that, in infinite-dimensional resonant Hamiltonian systems with fully resonant normal mode spectra and random nonlinear couplings, a structured subset of couplings --- an infinitesimal fraction of the total --- suffices to support coherent cascades characterized by phase alignment, the formation of power-law spectra, and either unbounded growth of Sobolev norms or a finite-time blow-up. Numerical simulations further demonstrate the emergence of coherent cascades from random initial data through a process of phase locking, suggesting a mechanism for their emergence in broader classes of initial data and systems. We thus show that disordered mode interactions can sustain, rather than block, coherent energy transport across arbitrarily distant scales.
		\end{adjustwidth}
	\end{center}

	
	\section{Introduction}

    One of the most challenging aspects in the study of weak turbulence and long-time instabilities in nonlinear dispersive equations is understanding the different mechanisms that transfer energy to progressively smaller spatial scales, or equivalently, to modes of progressively higher frequencies. Such processes are broadly referred to as forward energy cascades and can occur in qualitatively different dynamical regimes, distinguished by the degree of phase correlation among the interacting modes. The classic regime is that of phase-incoherent cascades predicted by kinetic theories of weak turbulence \cite{Book_Nazarenko,Book_Zakharov,Book_Galtier}, in which phase correlations are assumed to be sufficiently weak for the transfer of energy to be described statistically within the random phase approximation. We will focus instead on the phase-coherent regime, where energy cascades are characterized by persistent correlations among the phases and are amenable to a fully deterministic treatment. Understanding these dynamical regimes is valuable because structured phases organize the transfer of energy in ways fundamentally different from those predicted by kinetic theories, leading to complementary forms of weak turbulent instabilities \cite{BR,BMR}, condensate formation \cite{Biasi,BG}, and rigorous descriptions of the growth of Sobolev norms \cite{Staffilani2010,GG}.

        Coherent cascades have been established primarily in models with a rigid analytic structure in the mode interactions, and are present  in a broad array of literature dealing with important equations of mathematical physics: nonlinear Schr\"odinger equations \cite{Staffilani2010,Maspero,Maspero2,Guardia,GHP,Hani,Hani2,Kuksin1,Kuksin2,HausProcesi}, nonlinear wave equations \cite{GGwaveequation,GerardLenzmannPocovnicuRaphael,Bourgain2}, related Hamiltonian systems \cite{GG,GG2,GGH,BE,Xu,Biasi,BG,GL,GP,Giuliani,GuardiaGiuliani,GiulianiScandone,minHamiltonian,Thomann,Pocovnicu}, water waves \cite{MasperoWaterWaves}, and relativistic waves in asymptotically anti-de Sitter (AdS) spacetimes and related geometries \cite{BMR,Jalmuzna,MBox1,MBox2,KehleMoschidis,E2} in relation to the AdS instability conjecture \cite{BR} of Bizo\'n and Rostworowski; see \cite{Freivogel} for an early targeted discussion of phase coherence in gravitational systems. (A number of works we mention do not emphasize the properties of phases explicitly and focus on the transfer of energy from low to high modes in deterministic Hamiltonian systems, more than on the phase-structure of those processes. However, the underlying dynamics revolves around phase-sensitive rather than random-phase trajectories.) Considering a quartic dispersive Hamiltonian as a representative example of the above models,
    \beq
        \mathcal{H} =  \sum_{n} \omega_n |a_n|^2 + \frac{1}{2}\sum_{n,m,k,j} C_{nmkj}\ \bar{a}_n\bar{a}_m a_k a_j,
        \label{eq:generic_Hamiltonian}
    \eeq
    coherent cascades have been established in systems whose dispersion relation $\omega_n$ and interaction coefficients $C_{nmkj}$ follow explicit functional dependences on the mode numbers, typically descending from the original PDEs that (\ref{eq:generic_Hamiltonian}) represents or approximates. These analytic specifications provide highly structured `interaction networks' through which phase correlations among the modes can persist and organize the transfer of energy across the spectrum.  Coherent cascades are often associated with asymptotically locked phases, with $\arg a_n$ becoming approximately linear in $n$ at large $n$.
    
In this work, we aim to understand whether nonlinear Hamiltonian systems with more complicated or arbitrary mode couplings compared to the previous studies can also exhibit coherent cascades in the way they tranfer energy toward arbitrarily small spatial scales. As a first approach, {\em we show that coherent energy cascades can arise in Hamiltonian systems with random nonlinear couplings, provided that an infinitesimal fraction of those couplings remain structured.} Introducing randomness in the Hamiltonian structure is a significant departure from small deformations of previously studied interaction networks concerning coherent cascades. It reveals that rigidly controlling the entire algebraic structure of the Hamiltonian is unnecessary for coherent energy transfer: a minimal prescribed core in the mode interactions may be sufficient.

This study differs fundamentally from the more familiar random-phase cascades of wave kinetic theory \cite{Book_Nazarenko,Book_Galtier,Book_Zakharov}. In that setting, the evolution preserves the uncorrelated phases imposed in the initial state and can be therefore effectively expressed in terms of the corresponding random-phase averages. Randomness in the interaction network can be and has been considered \cite{PicozziDramaticAceleration,EvninMelonicTurbulence,Frahm, HuRosenhaus}, having a venerable history in relation to turbulence and related problems \cite{rnd}; see also \cite{HaniLiNahmodStaffilani,FaouCarles,NazarenkoKrstulovicZhuSemisalov} for the inclusion of randomness through external sources. In such settings, interaction randomness can support energy transfer while contributing, heuristically, to the dynamical randomization of phases. Here, by contrast, randomness is introduced precisely to disrupt the global structure of the interaction network and thereby test how much organization is actually required for phase-coherent energy transport. The key question for us is whether a minimal structured component in the mode interactions can still organize the dynamics and sustain phase locking. Our results show that it can, suggesting that phase-coherent and random-phase cascades are not necessarily distinguished by the presence or absence of disorder, but by whether a residual organized structure is capable of supporting the alignment of phases during energy transport.

    We make our statement precise in the setting of infinite-dimensional resonant Hamiltonian systems, which often arise as effective models for weakly nonlinear Hamiltonian PDEs \cite{Kuksin3,E2}. We consider the cubic equation of motion derived from systems with fully resonant frequency spectra ($\omega_n$ linear in $n$), such as  nonlinear Schr\"odinger equations with harmonic potentials \cite{FaouGermainHani,GGT,GHT,BBCE}, nonlinear waves  in some special geometries \cite{CEL,CF,GGwaveequation}, and relativistic waves in anti-de Sitter space \cite{BMR,CEV1,BEF,E2}:
    \begin{equation}
		i\frac{d\alpha_n}{dt} = \underset{n+m=k+j}{\underbrace{\sum^{\infty}_{m=0}\sum^{\infty}_{k=0}\sum^{\infty}_{j=0}}}\  C_{nmkj}\ \bar{\alpha}_m \alpha_k \alpha_j,
		\label{eq:Resonant_Equation}
	\end{equation}
	where $\alpha_n\in\mathbb{C}$ are the dynamical variables written in the `interaction picture' ($a_n = \alpha_n \ e^{i\,\omega_n t}$) with respect to (\ref{eq:generic_Hamiltonian}), $n\in\mathbb{N}$ is the mode number, $n+m=k+j$ the resonance condition, and $C_{nmkj} \in \mathbb{R}$ the interaction coefficients, which satisfy the permutation symmetries:
    \beq
	C_{nmkj} = C_{mnkj} = C_{nmjk} = C_{kjnm}.
	\label{eq:C_symmetries}
	\eeq
    Such symmetries guarantee a Hamiltonian structure and two conserved quantities, which correspond to the `number of particles' (the
	$L^2$-norm) and the quadratic part of the energy when coming from nonlinear Schr\"odinger equations:
	\beq
	\mathcal{H} = \frac{1}{2}\sum_{\underset{n+m=k+j}{n,m,k,j}} C_{nmkj} \bar{\alpha}_n \bar{\alpha}_m \alpha_k \alpha_j, \qquad 	N = \sum_{n=0}^{\infty} |\alpha_n|^2, \qquad \text{and} \qquad E = \sum_{n=0}^{\infty} n |\alpha_n|^2.
	\label{eq:conserved_quantities}
	\eeq
    
    Previous studies of coherent cascades in these settings \cite{GG,Xu,GG2,GGH,BE,Biasi,BG,BMR, Jalmuzna,MBox2} considered nonlinear couplings $C_{nmkj}$ with a prescribed functional dependence on the indices.  We instead promote most entries of $C_{nmkj}$ to random variables, either sampled from independent probability distributions or evolved as stochastic processes. Under such randomization, we prove that retaining only a small subset of structured entries in $C_{nmkj}$ --- an infinitesimal fraction of the total --- is sufficient to sustain coherent transfer of energy from low to arbitrarily high modes. This is characterized by the formation of a power-law tail in the energy spectrum ($E_n \equiv n |\alpha_n|^2 $) with exponential suppression of high modes:
	\beq
	E_{n\gg 1}(t) \sim n^{\gamma} e^{-\rho(t) n}
	\ \underset{\rho\, \to\, 0}{\longrightarrow} \
	n^{\gamma},
	\eeq
    and asymptotic phase locking:
	\beq
	\arg\bigl(\alpha_{n\gg 1}(t)\bigr) \sim \phi_1(t)\,n + \phi_0(t),
	\eeq
	with $\rho>0$ and $\gamma,\phi_1,\phi_0\in\mathbb{R}$. 

	
	\subsection{Strategy}

    The difficulty in this paper lies in describing rigorously coherent energy cascades in the presence of an overwhelming majority of random couplings. Our strategy builds on an existing approach to describe coherent cascades in deterministic systems of the form (\ref{eq:Resonant_Equation}), whose idea is to capture cascades supported by a low-dimensional invariant manifold \cite{GG,GG2,GGH,Biasi,Xu,BE,BG,minHamiltonian}. {\em We develop a systematic construction of systems with the same kind of dynamically invariant manifold as in}  \cite{GG,GG2,GGH,Biasi,Xu,BE,BG,minHamiltonian} {\em and then randomize the interaction coefficients $C_{nmkj}$ that are not constrained by its dynamical invariance.} Then, we show that, while the invariant manifold is preserved even when most interaction coefficients are replaced by random variables, the coherent cascades are unaltered by this modification. 
    
    {\em To our knowledge, this construction results in the first rigorous description of coherent energy cascades in nonlinear random Hamiltonian systems with mostly random couplings.} This is however established on the invariant manifold, and we supplement this result with numerical simulations to investigate the robustness of coherent cascades beyond those analytically tractable conditions. We examine the emergence of phase locking and power-law spectra from random initial data both in systems with the invariant manifold underlying our construction and in systems that lack this analytic feature.

	
	\subsection{Results}
	\label{sec:Results}

    The results fall naturally into three parts that we summarize here:
	
	\noindent {\bf Part 1: Minimally structured Hamiltonians.} We develop a systematic construction of resonant Hamiltonian systems of the form (\ref{eq:Resonant_Equation}) that support an invariant manifold of the form
	\beq
	\alpha_0 = b,\qquad \alpha_{n\geq 1} = \sqrt{\h_n}\ c\,p^{n-1},
	\label{eq:invariant_manifold_RESULTS_SECTION}  
	\eeq
	where $b,c,p\in\mathbb{C}$ are the dynamical variables and $\h_n$ is a time-independent, positive sequence to be specified later. We shall see that this sequence presents an asymptotic behavior $F_{n\gg 1} \sim n^{\gamma} x_c^{-n}$, which requires $|p|^2 < x_c$, so that we consider scenarios where the total amount of energy is finite and primarily concentrated in the low modes. The specific form of the couplings $C_{nmkj}$ and the sequence $\h_n$ will be provided in Section~\ref{subsec:existence_invariant_manifold}. This treatment constitutes a major qualitative improvement over the previous related considerations in \cite{Biasi,BG}, as it gives a systematic construction of very large classes of resonant systems admitting invariant manifolds, and furthermore identifies precisely the large residual ambiguity in the mode couplings that we use essentially for coupling randomization.
	
	We show that this construction supports systems whose nonlinear couplings $C_{nmkj}$ are predominantly independent random variables (fixed in time or evolving as stochastic processes), while just an infinitesimal fraction are rigidly prescribed: if $\mathscr{N}_D$ and $\mathscr{N}_R$ denote the numbers of deterministic and random entries of $C_{nmkj}$, respectively, up to a resonant level with $n+m\leq M$, then their ratio vanishes in the limit $M\to \infty$,
    \beq
        \mathscr{F} := \frac{\mathscr{N}_D}{\mathscr{N}_R + \mathscr{N}_D} \xrightarrow[M\to\infty]{} 0.
    \eeq
   Yet, we prove that these largely disordered systems admit energy cascade solutions on the invariant manifold, independent of the probability distributions or stochastic processes followed by the unconstrained couplings.
	
	\noindent {\bf Part 2: Classification of energy cascades.} We characterize three types of energy cascades on the invariant manifold (\ref{eq:invariant_manifold_RESULTS_SECTION}); see Figure~\ref{fig:examples_coherent_energy_cascades_MANIFOLD} for their visual presentation. The transfer of energy to arbitrarily high modes is quantified by either the formation of a power-law tail in the energy spectrum or, equivalently, by unbounded growth or finite-time blow-up of Sobolev norms, denoted by:
	\beq
	H^{s} := \left(\sum_{n=0}^{\infty}(n+1)^{2s}|\alpha_n|^{2}\right)^{1/2}.
	\label{eq:Sobolev_norms}
	\eeq
	\begin{itemize}
		\item \textbf{Cascade I:} Infinite-time formation of a positive power law, inducing unbounded growth of Sobolev norms:
		\beq
		\frac{E_{n\geq 1}(t)}{|c(t)|^2} \xrightarrow[t\to\infty]{} n, \qquad \text{and} \qquad H^{s>1/2} \xrightarrow[t\to\infty]{} \infty. 
		\eeq
		
		\item \textbf{Cascade II:} 
		Finite-time formation of a power law $n^{-1/2}$, inducing blow-up of Sobolev norms:
		\beq
		\frac{E_{n\gg1}(t)}{|c(t)|^2} \xrightarrow[t\to T]{} n^{-1/2}, \qquad \text{and} \qquad 	H^{s>1/2} \xrightarrow[t\to T]{} \infty.
		\eeq
		
		\item \textbf{Cascade III:} 
		Finite-time formation of a power-law $n^{-3/2}$, inducing blow-up of Sobolev norms:
		\beq
		\frac{E_{n\gg1}(t)}{|c(t)|^2}
		\xrightarrow[t\to T]{} n^{-3/2}, \qquad \text{and} \qquad H^{s\geq 3/4} \xrightarrow[t\to T]{} \infty.
		\eeq
	\end{itemize}
	The first two cascades effectively ``dilute" ($E_n\to 0$) the energy over the space of modes: the energy becomes distributed across an increasing range of modes in such a way that the amount of energy stored in any individual mode becomes arbitrarily small. Specifically, as the power law is gradually approached, conservation of $E$ forces the amplitude $|c(t)|^2$ to decay to zero at the exact rate required for the total energy sum to remain finite. This is why Figure~\ref{fig:examples_coherent_energy_cascades_MANIFOLD} displays $E_n/|c|^2$ rather than $E_n$ itself, which would otherwise decrease as the power law develops. Such processes are interpreted as the coherent formation of a condensate as discussed in detail in \cite{Biasi, BG}: the system effectively approaches the ground state $|\alpha_n|^2 = N \delta_{n0}$, which has zero energy $E=0$ and stores the quantity $N$ entirely in this lowest mode. By contrast, the third cascade spreads the energy over the modes in such a way that all modes can retain a non-vanishing amount of energy.   
	
	\begin{figure}[h!]
		\centering
		\includegraphics[width = 7.6cm]{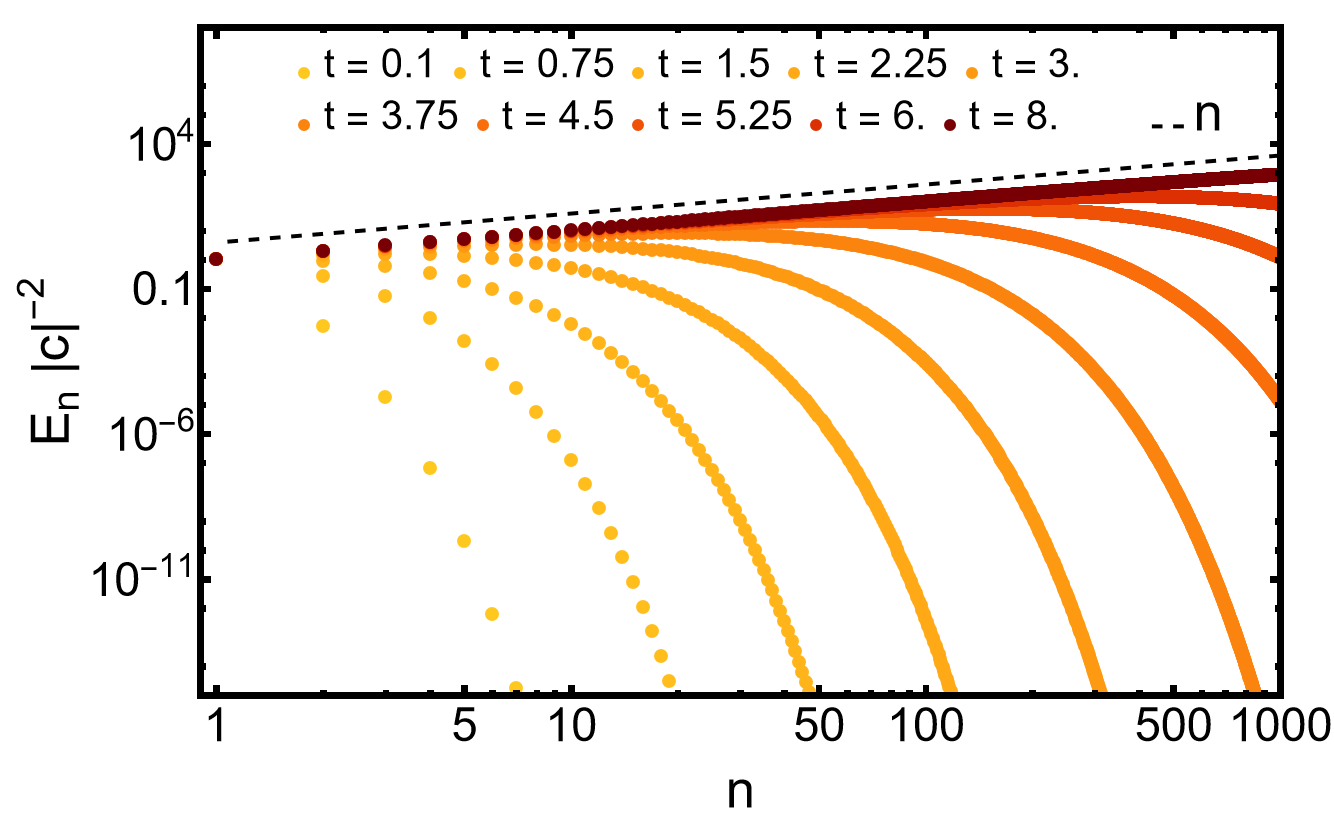}
        
		\includegraphics[width = 7.6cm]{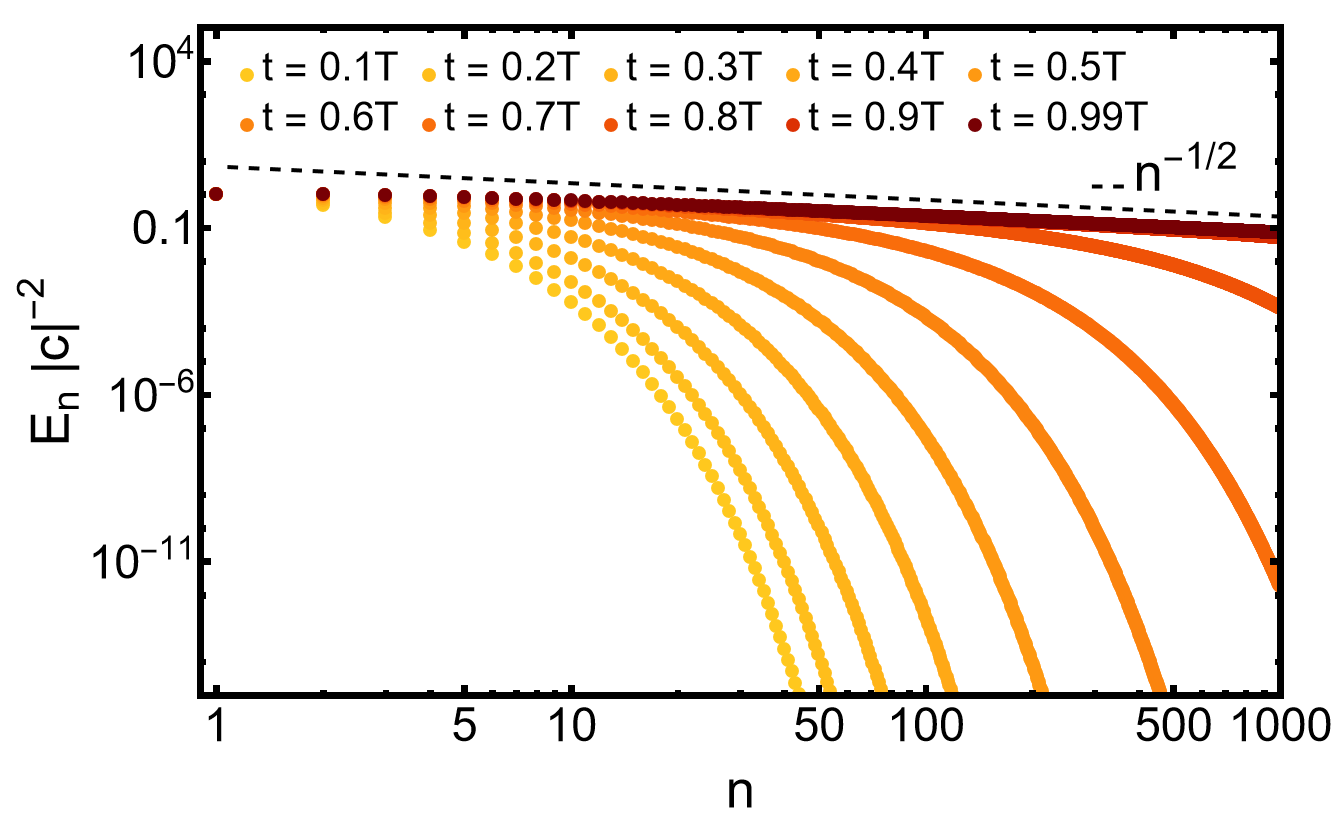}
		\includegraphics[width = 7.6cm]{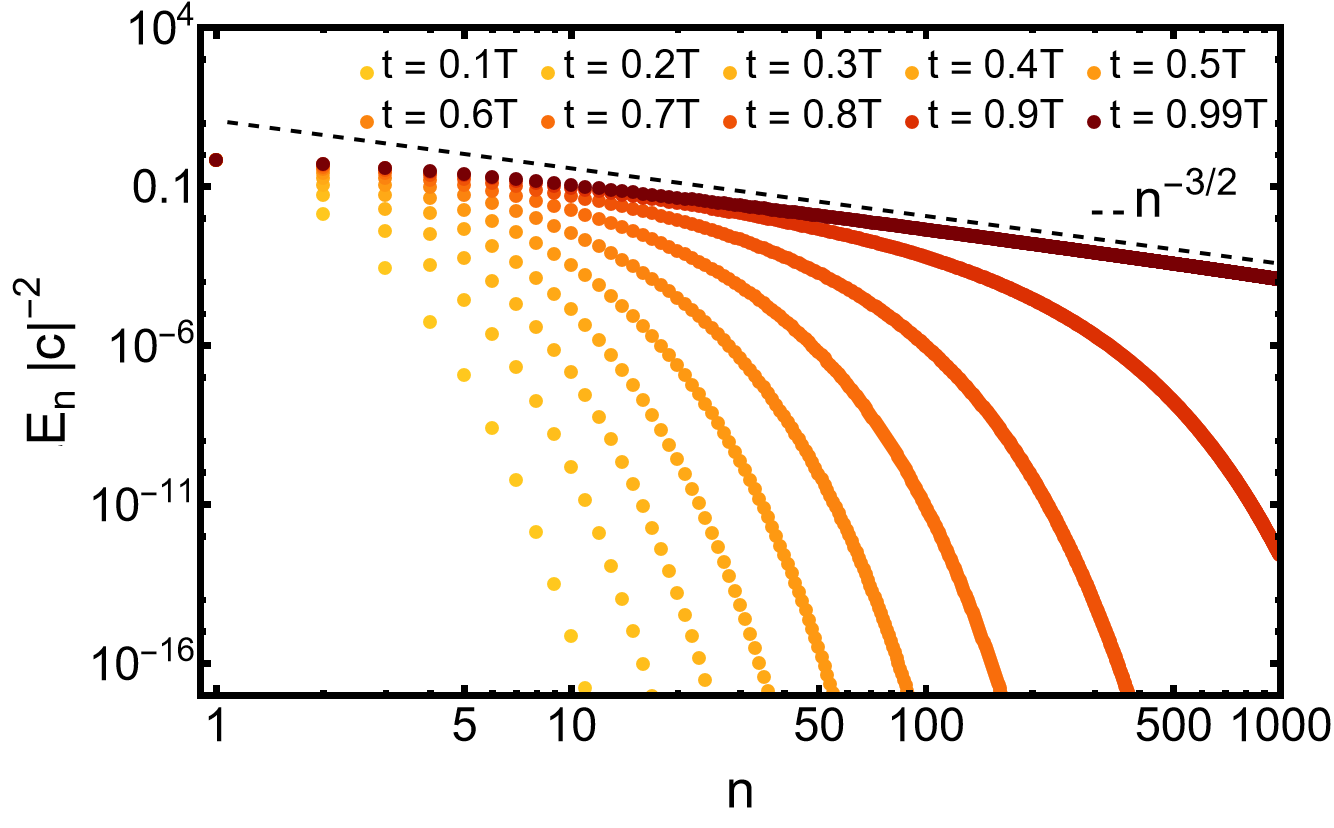}
		\caption{Analytic solutions for the three types of cascades present on the invariant manifold.}
		\label{fig:examples_coherent_energy_cascades_MANIFOLD}
	\end{figure}

    \newpage
	
	\noindent 	{\bf Part 3: Coherent cascades from random initial data.} We next perform numerical simulations of the above minimally structured systems but starting with random initial data (random phases and amplitudes) instead of the analytically tractable conditions on the invariant manifold. In this case, we consider systems with two classes of random structures for the couplings: those whose random couplings are drawn from independent identical Gaussian distributions and held fixed for each realization, and those whose random couplings evolve as a stochastic Ornstein--Uhlenbeck processes \cite{Stochastic_Processes}. Full details are provided in Section~\ref{sec:numerics}.
	
	Numerical simulations reveal that systems admitting Cascade III solutions exhibit qualitatively similar cascades starting from random initial data in a significant fraction of the simulations ($60\%-20\%$). As illustrated in Figure~\ref{fig:examples_coherent_energy_cascade_III}, they display:
	\begin{itemize}
		\item 	{\bf Power-law formation:} The energy spectrum develops a power-law tail close to the one exhibited by the analytic solutions on the manifold:
		\beq
		E_{n\gg 1}(t) \sim n^{-3/2}.
		\eeq
		
		\item 	{\bf Phase locking:}  Randomly distributed initial phases evolve to an approximately linear relation for sufficiently high modes,
		\beq
		\arg\bigl(\alpha_{n\gg 1}(t)\bigr) \sim \phi_1(t)\,n + \phi_0(t).
		\eeq
	\end{itemize}
	Systems admitting Cascades I and II do tend to exhibit phase locking when initialized with random data. Still, on most occasions, they are unable to maintain it or develop a clear power-law tail in the energy spectrum, as shown in Figure~\ref{fig:examples_coherent_energy_cascade_II}~and~\ref{fig:examples_coherent_energy_cascade_I}. This disparity between cascade types is consistent with the analysis of the invariant manifold carried out in Section~\ref{subsec:dynamics_within_the_manifold}: the domain of initial conditions on the manifold that lead to energy cascades is substantially larger for Cascade III than for Cascades I and II, and this difference appears to persist outside the manifold. 
    
    To conclude the numerical exploration, we assess the role of structured couplings in coherent cascade formation by simulating systems of the form (\ref{eq:Resonant_Equation}) both without a structured subset of couplings (fully nonlinear random systems) and with structured subsets distinct from those supporting the invariant manifold underlying our main construction. In fully random systems, the resonant condition ($n+m=k+j$) alone is insufficient to preserve or organize mode phases: the dynamics tends toward incoherence, although energy can still be transferred to higher modes, consistent with \cite{EvninMelonicTurbulence}. By contrast, systems with other structured subsets of the mode couplings can preserve or induce phase locking while maintaining energy transfer to high modes. These results show that resonance alone does not generate coherent cascades; rather, an organized subset of couplings provides the additional structure needed for phase-sensitive energy transfer. Details and visualizations are given in Section~\ref{sec:numerics}.

	
	\subsection{Structure of the paper}
	
	The paper is organized as follows. Section~\ref{subsec:existence_invariant_manifold} constructs Hamiltonian systems with explicit invariant manifolds, Section~\ref{subsec:Fn_sequence} studies structural properties of the manifold, Section~\ref{subsec:dynamics_within_the_manifold} provides a classification of dynamics and energy cascade solutions, Section~\ref{subsec:construction_predominantly_random_Hamiltonian_systems} constructs predominantly random and stochastic Hamiltonian systems with coherent energy cascade solutions, Section~\ref{sec:numerics} explores the dynamics of these systems beyond the invariant manifold, and Section~\ref{sec:Explicit_systems} provides a collection of explicit systems with the above properties. The paper concludes with a discussion in Section~\ref{sec:Discussion}. 
	
	\begin{figure}
		\centering
		\includegraphics[width = 5.6cm]{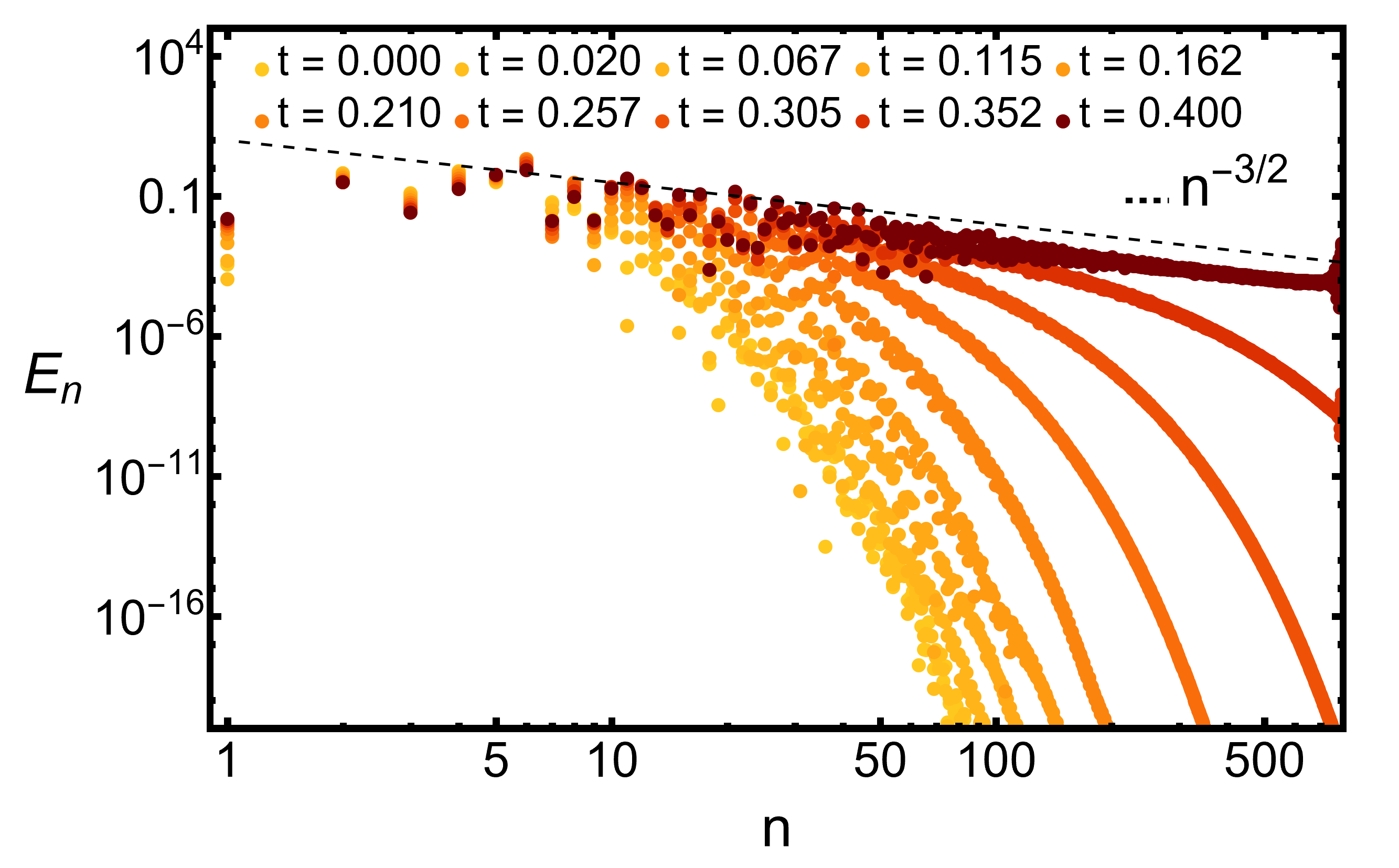}
		\includegraphics[width = 5.6cm]{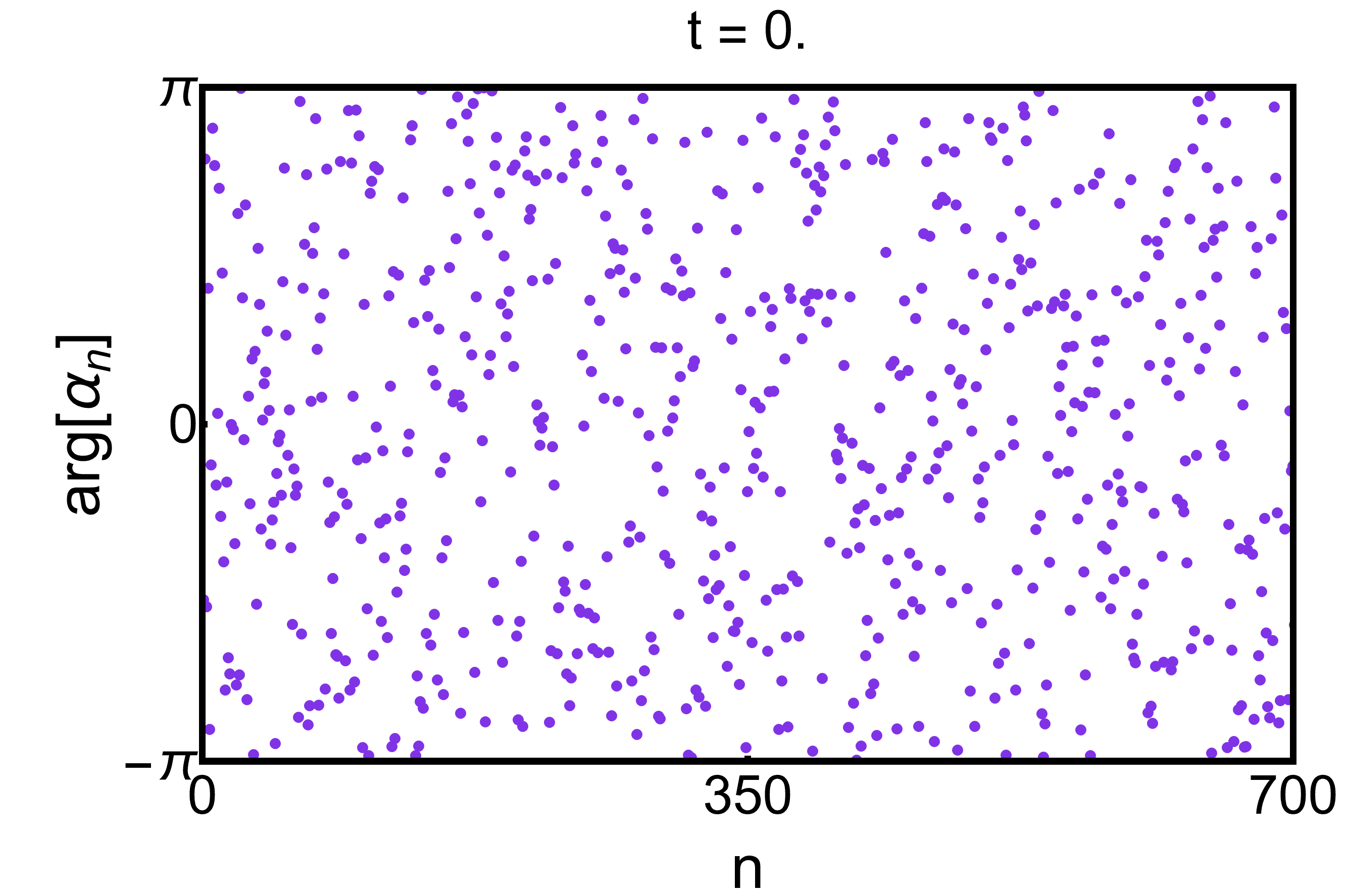}	
		\includegraphics[width = 5.6cm]{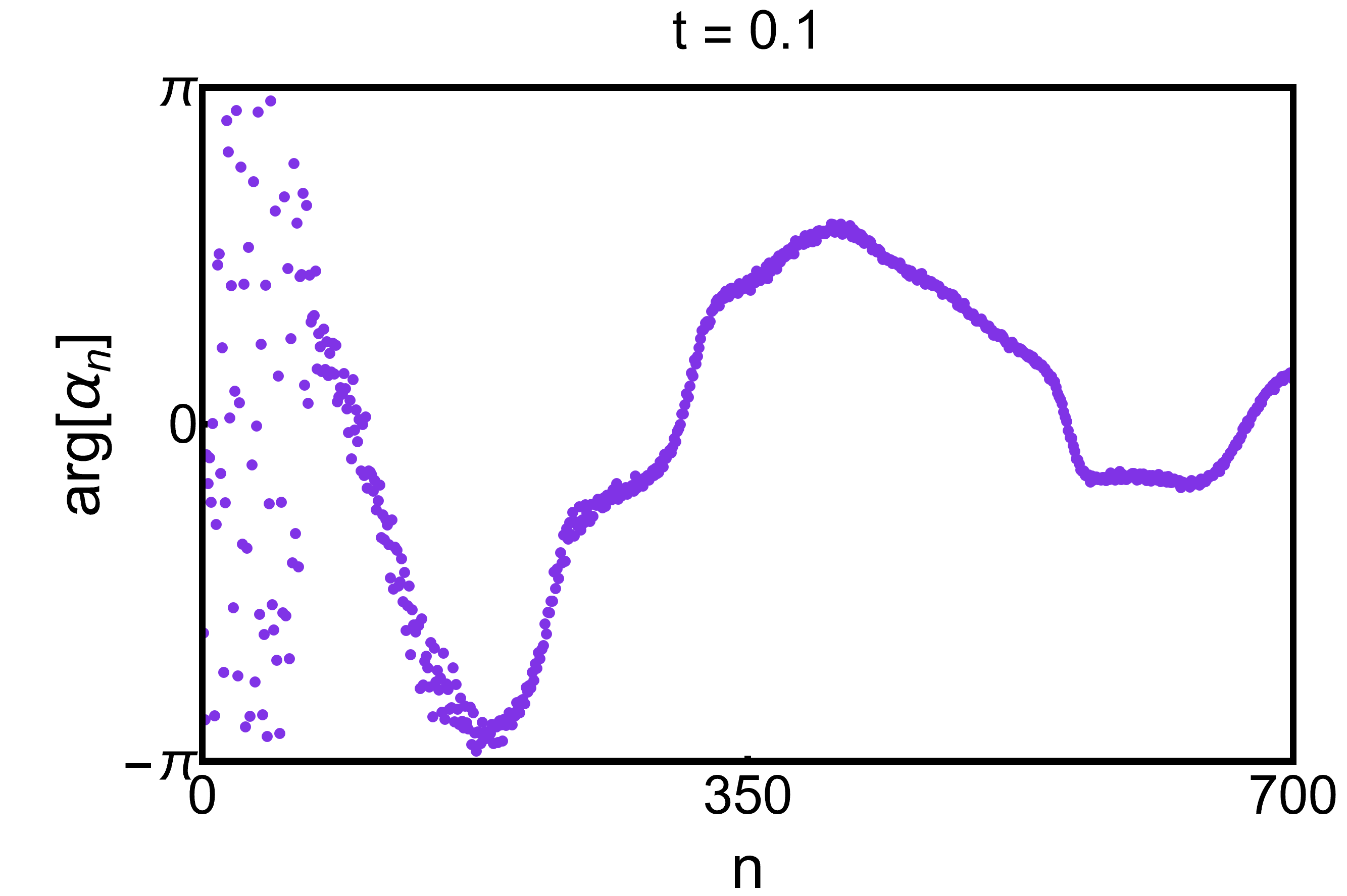}	
		
		\vspace{0.2cm}
		
		\includegraphics[width = 5.6cm]{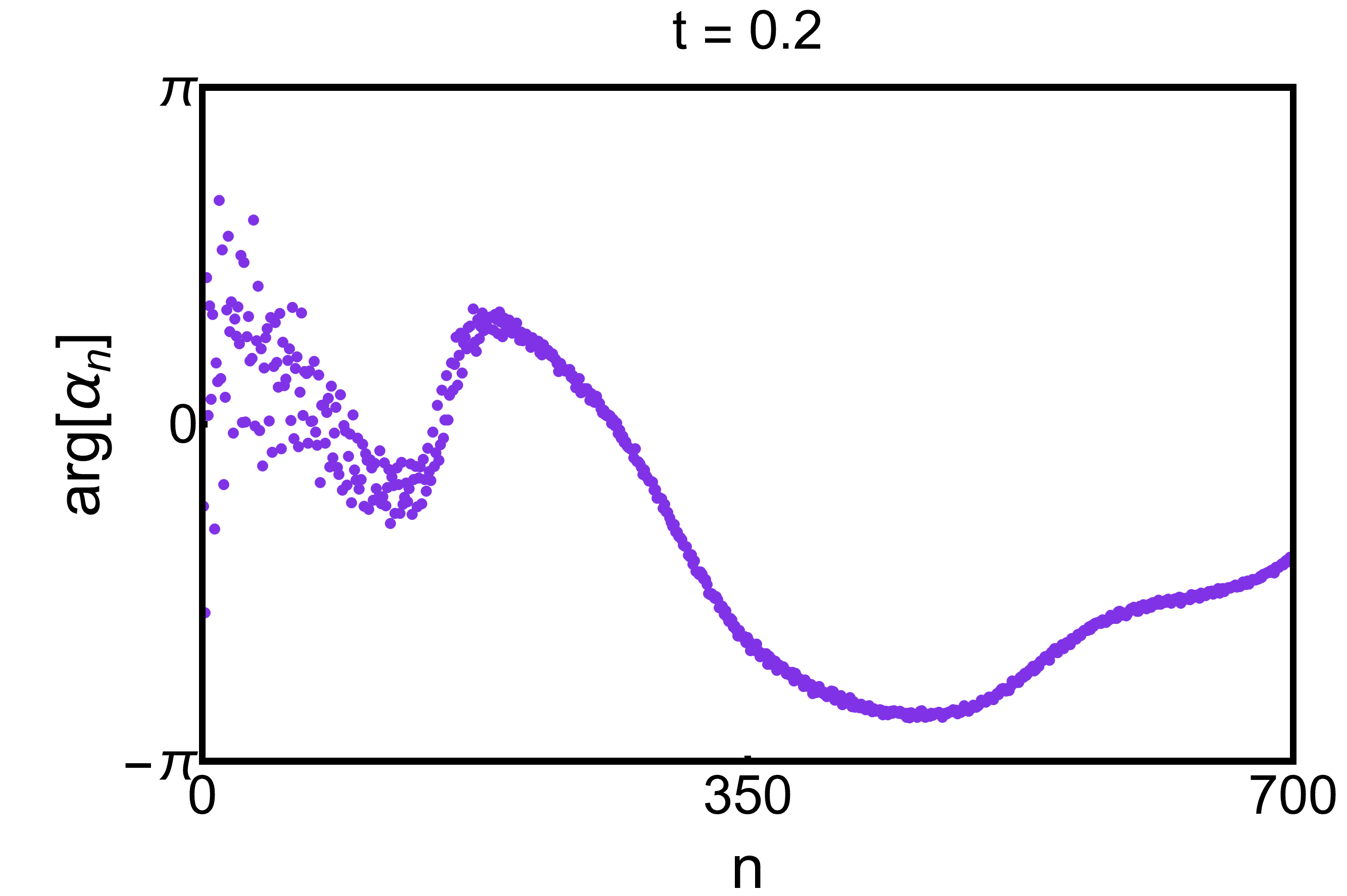}	
		\includegraphics[width = 5.6cm]{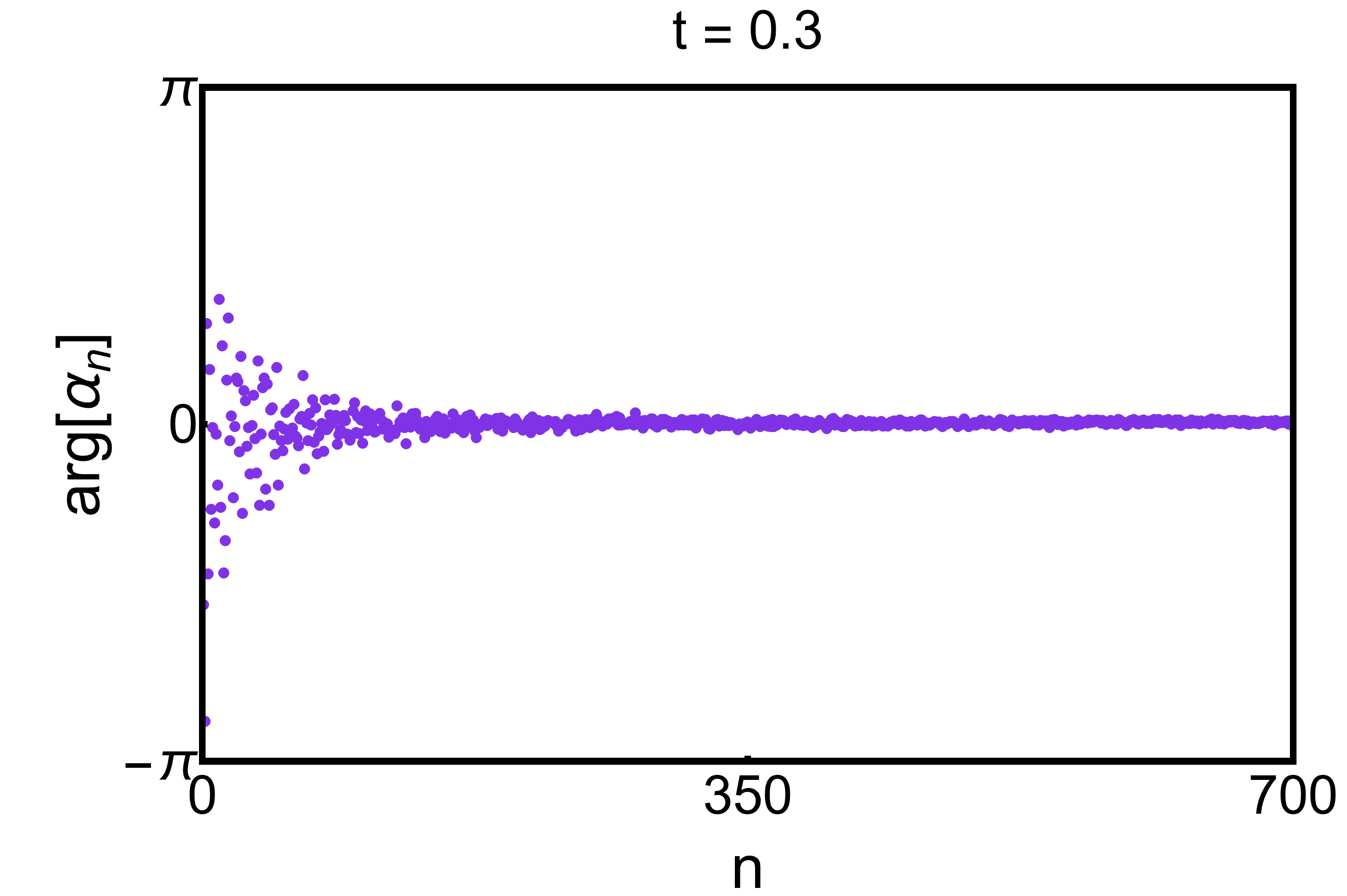}	
		\includegraphics[width = 5.6cm]{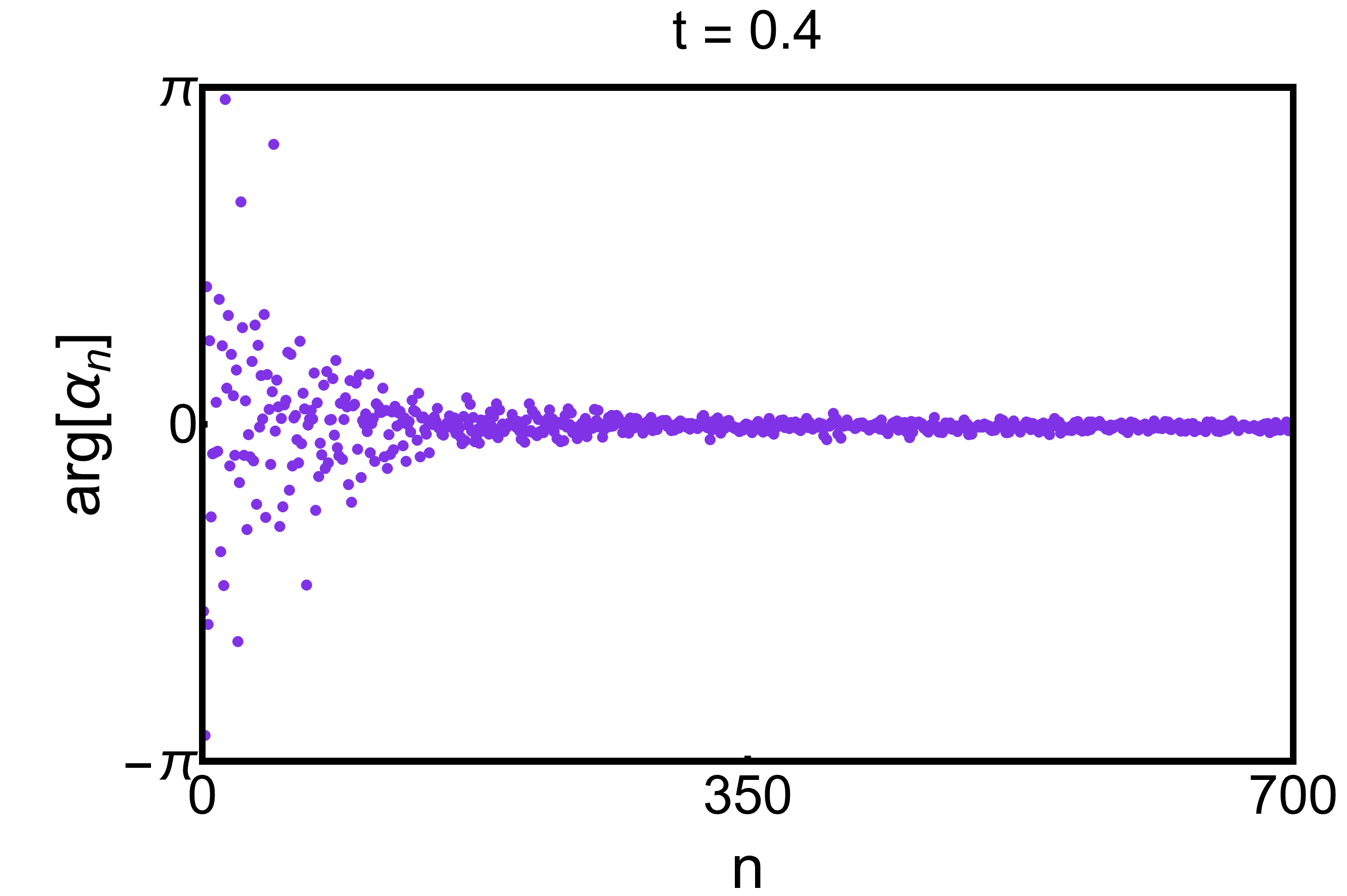}		
		\caption{Emergence of a coherent energy cascade from random initial data in a system admitting Cascade III solutions on the invariant manifold. {\em Upper-left corner}: power-law spectrum formation. {\em Remaining plots}: distribution of phases at successive times during the evolution. In these and subsequent plots, the phases have been rotated for visual purposes using the symmetry transformation $\alpha_n\to e^{i(\theta_1 n+\theta_0)}\alpha_n$ of resonant systems (\ref{eq:Resonant_Equation}), with $\theta_1,\theta_0\in\mathbb{R}$.}
		\label{fig:examples_coherent_energy_cascade_III}
	\end{figure}

	\begin{figure}
		\centering
		\includegraphics[width = 5.6cm]{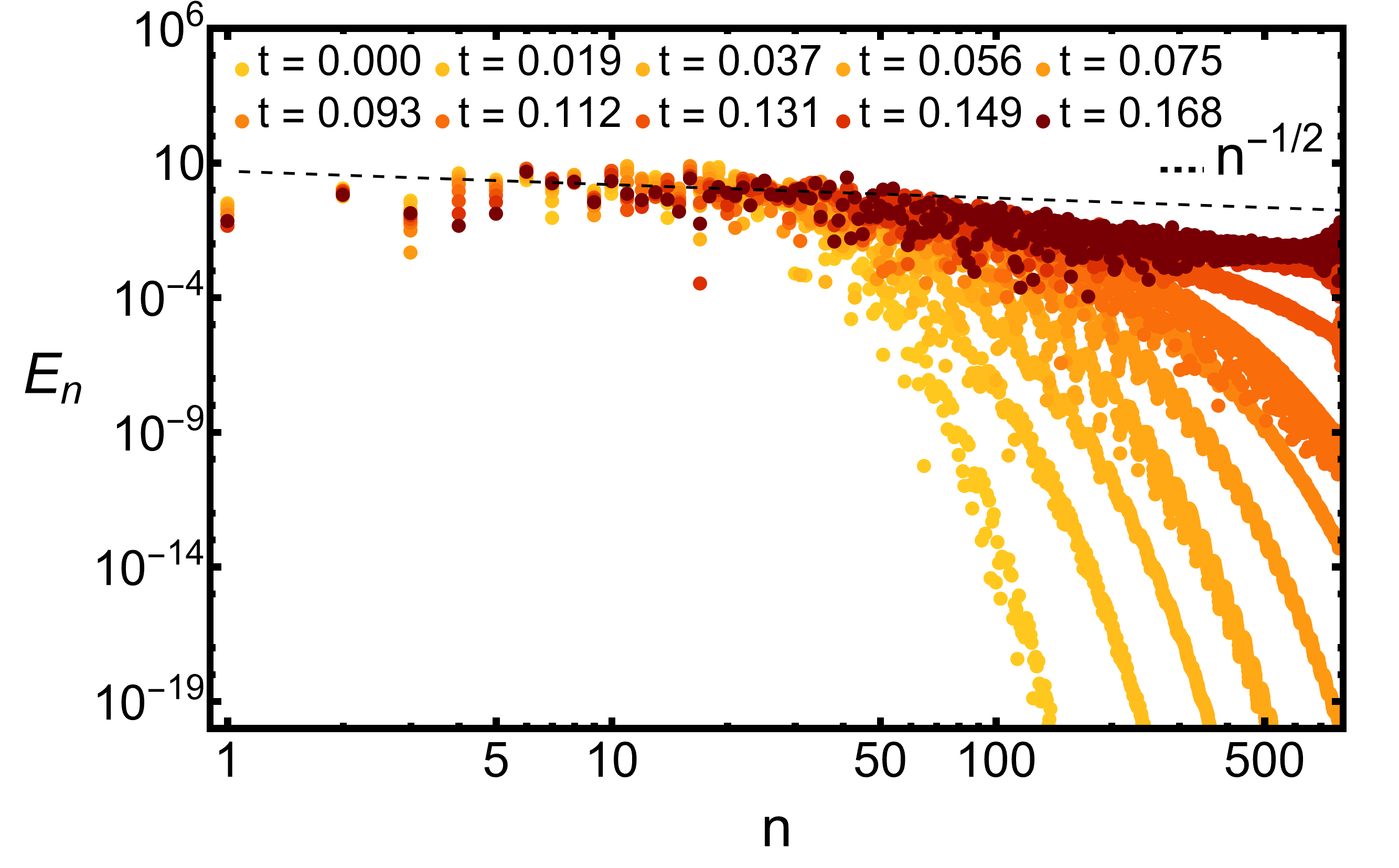}
		\includegraphics[width = 5.6cm]{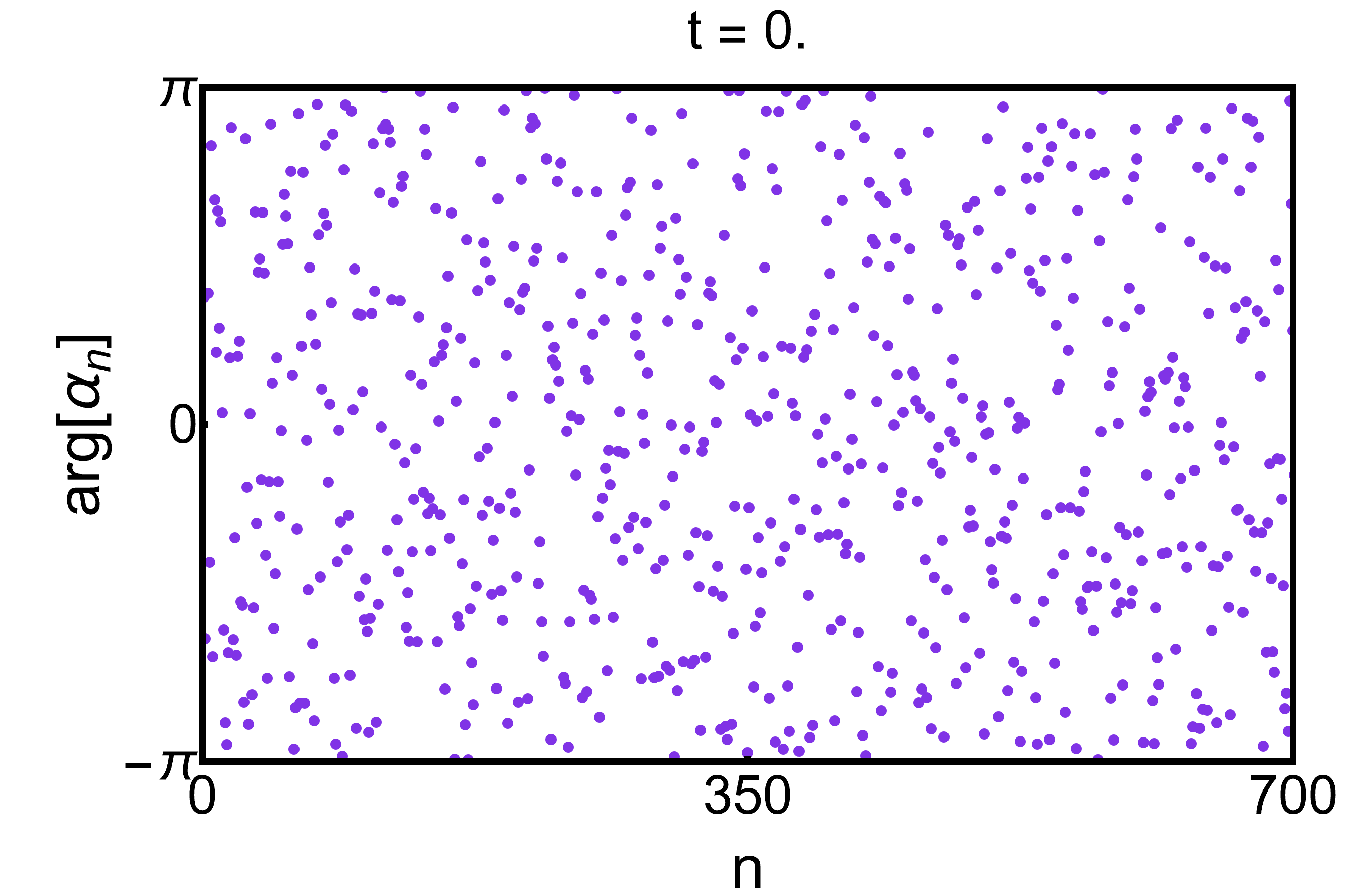}	
		\includegraphics[width = 5.6cm]{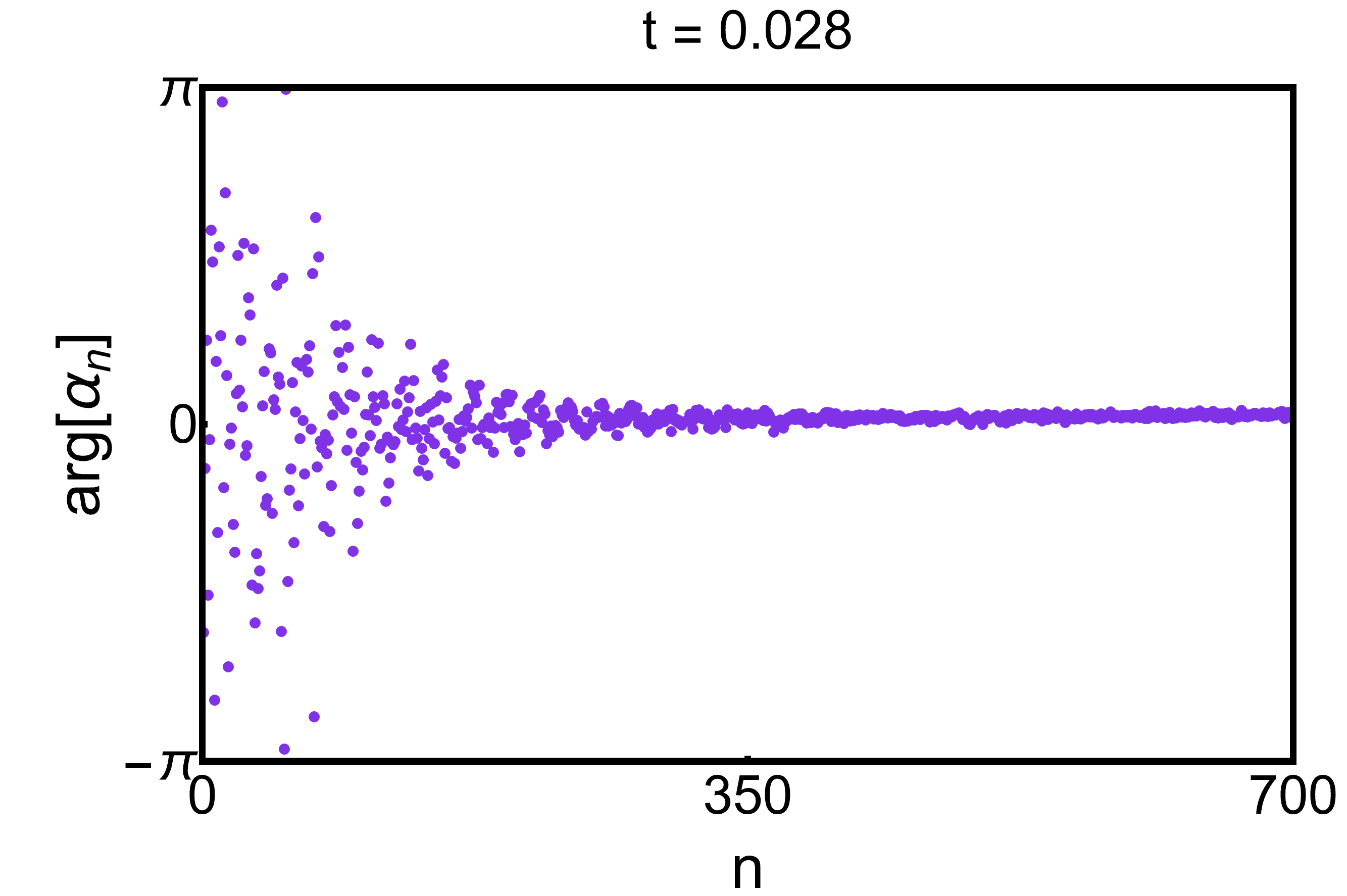}	
		
		\vspace{0.2cm}
		
		\includegraphics[width = 5.6cm]{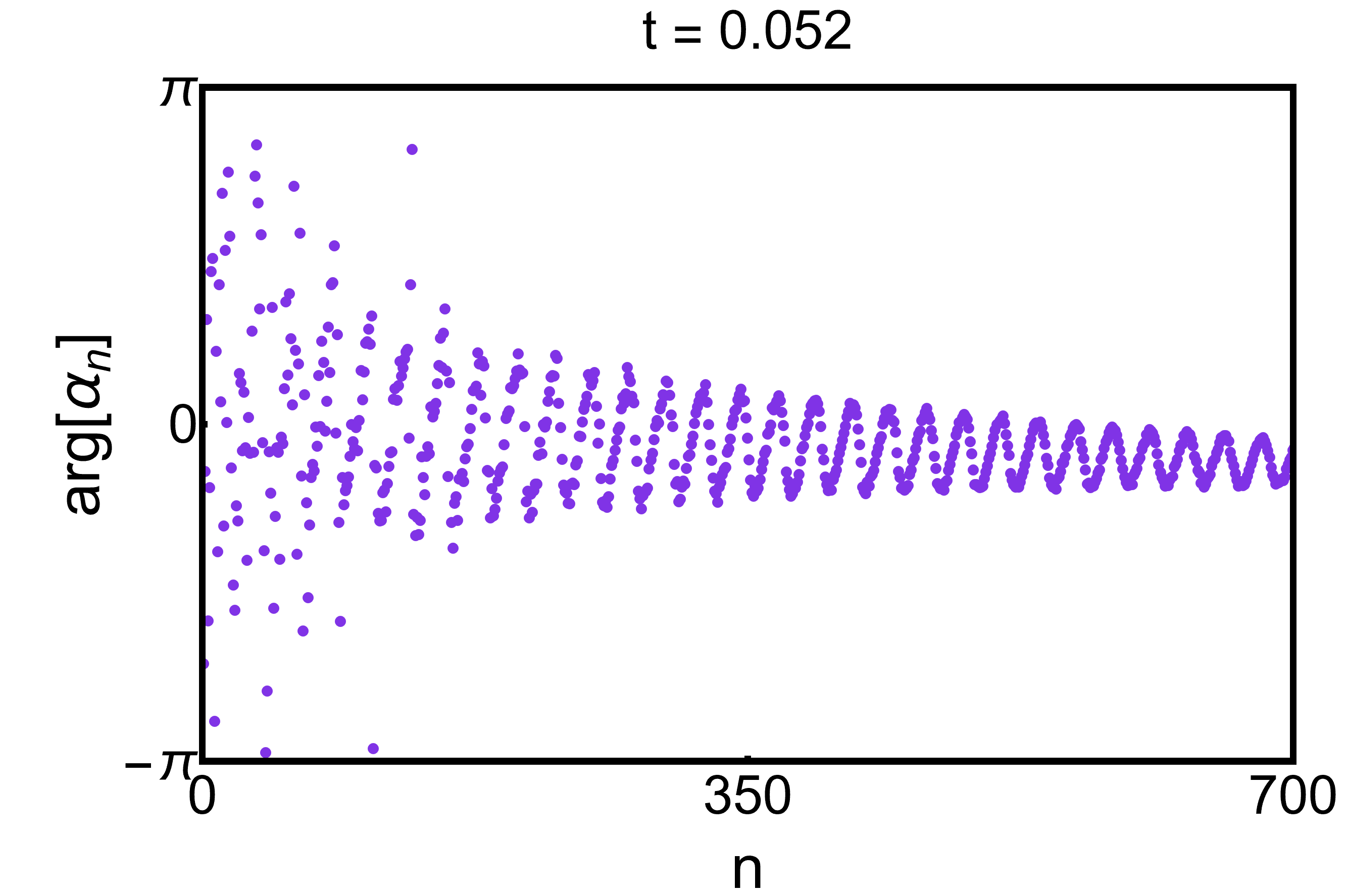}	
		\includegraphics[width = 5.6cm]{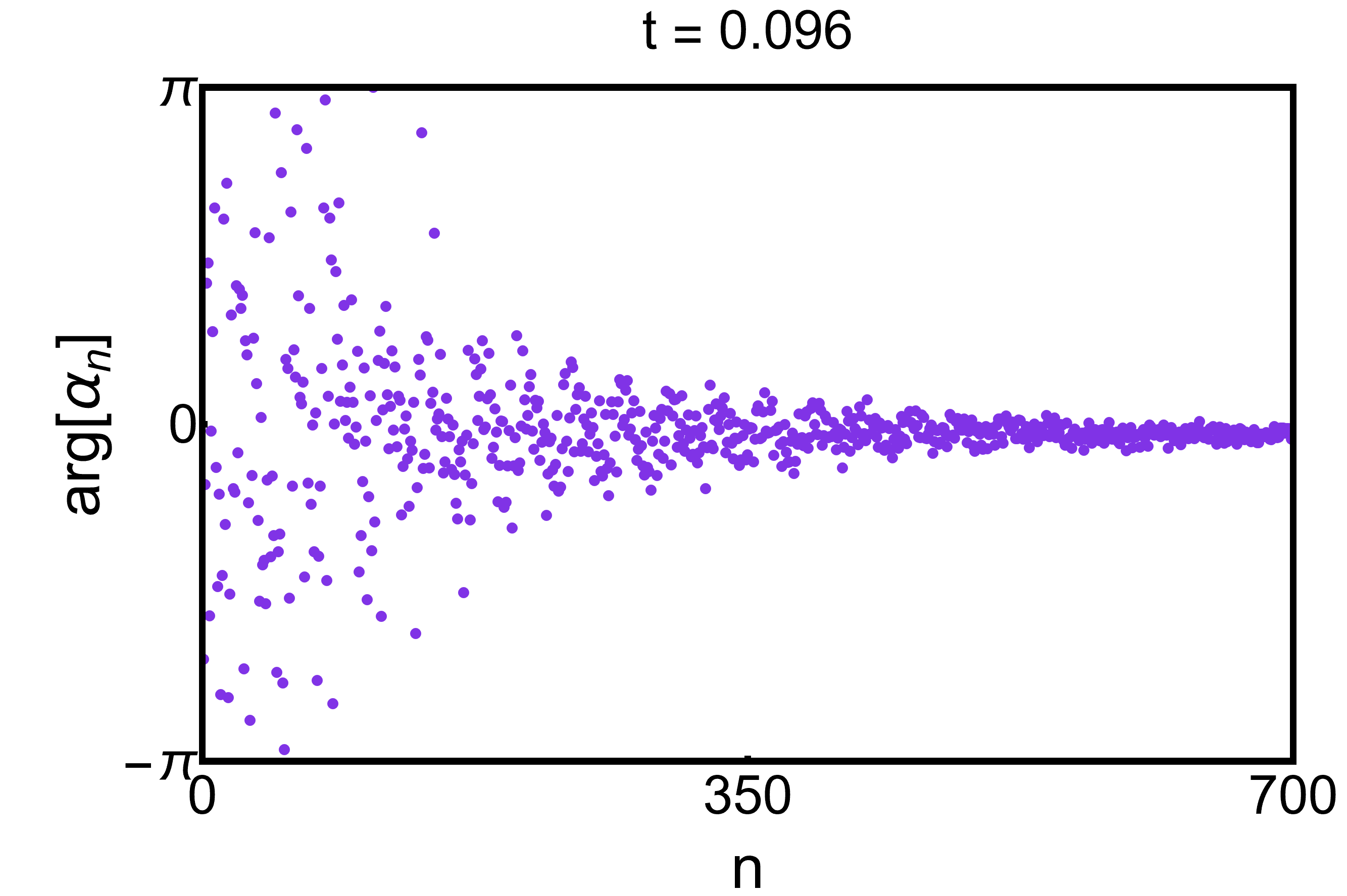}	
		\includegraphics[width = 5.6cm]{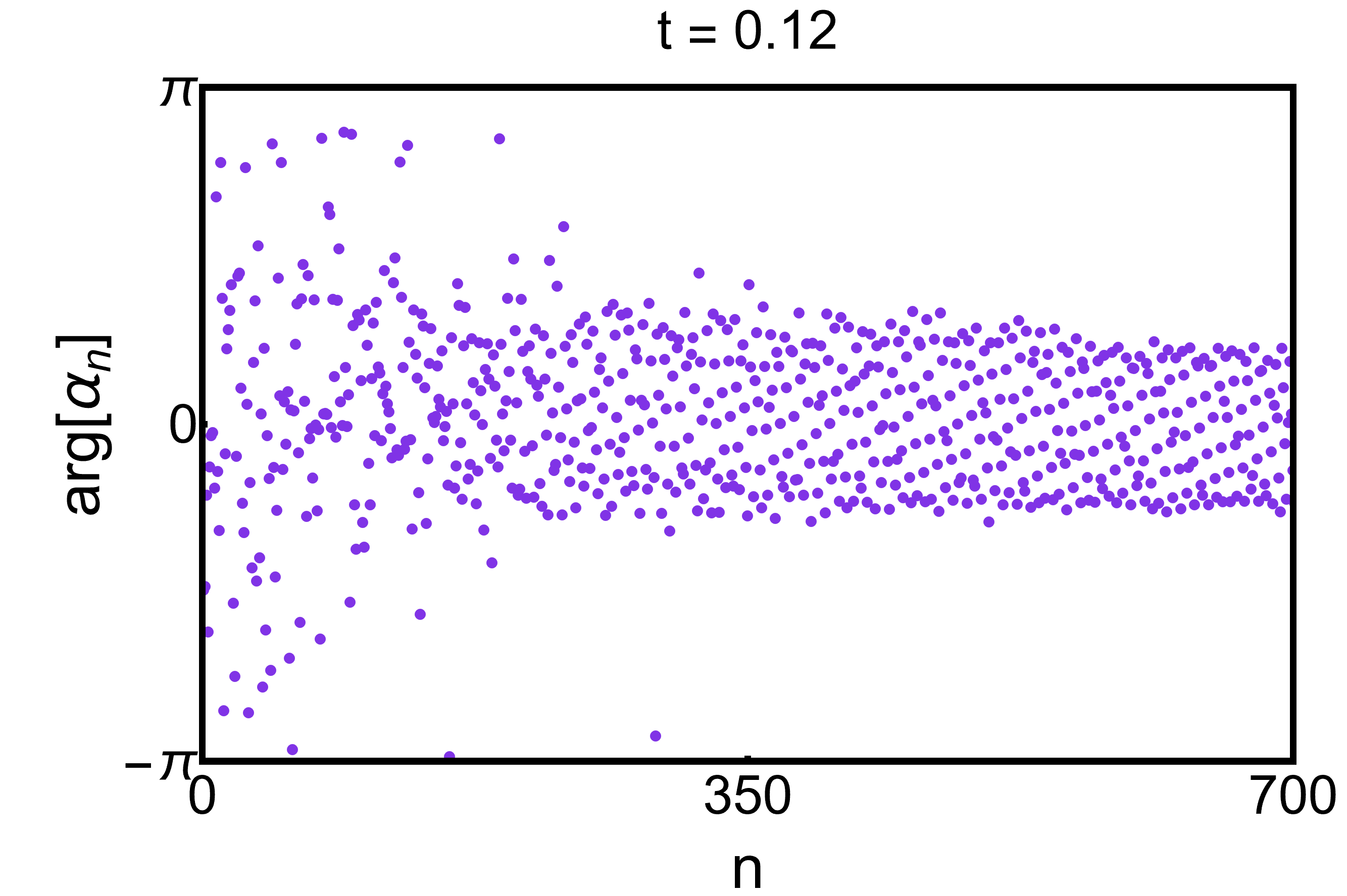}		
		\caption{Evolution of a system admitting Cascade II solutions on the invariant manifold, initialized with random initial data. {\em Upper-left corner}: energy spectrum at successive times. {\em Remaining plots}: phase distribution at the corresponding times.}
		\label{fig:examples_coherent_energy_cascade_II}
	\end{figure}
	
	\begin{figure}
		\centering
		\includegraphics[width = 5.6cm]{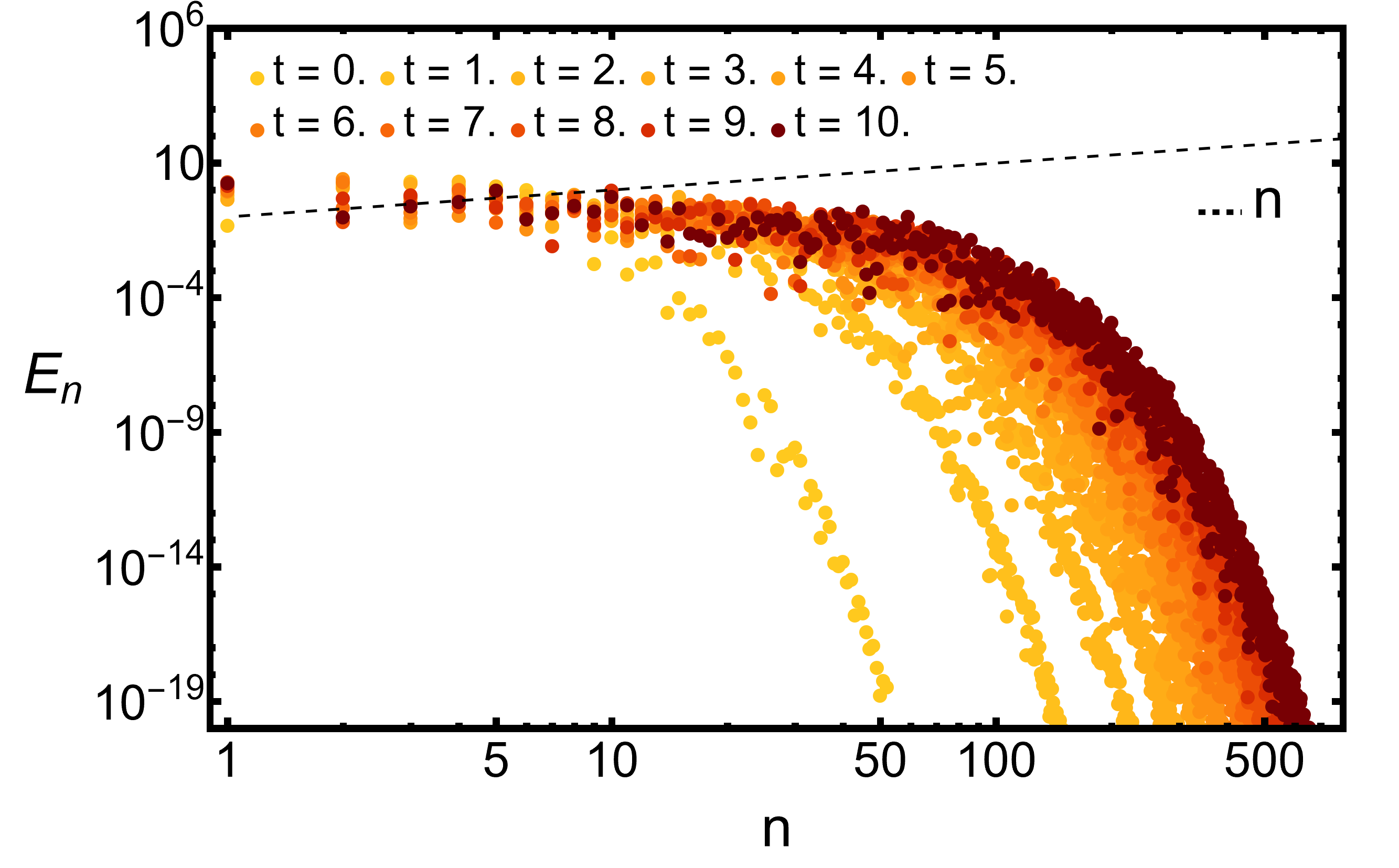}
		\includegraphics[width = 5.6cm]{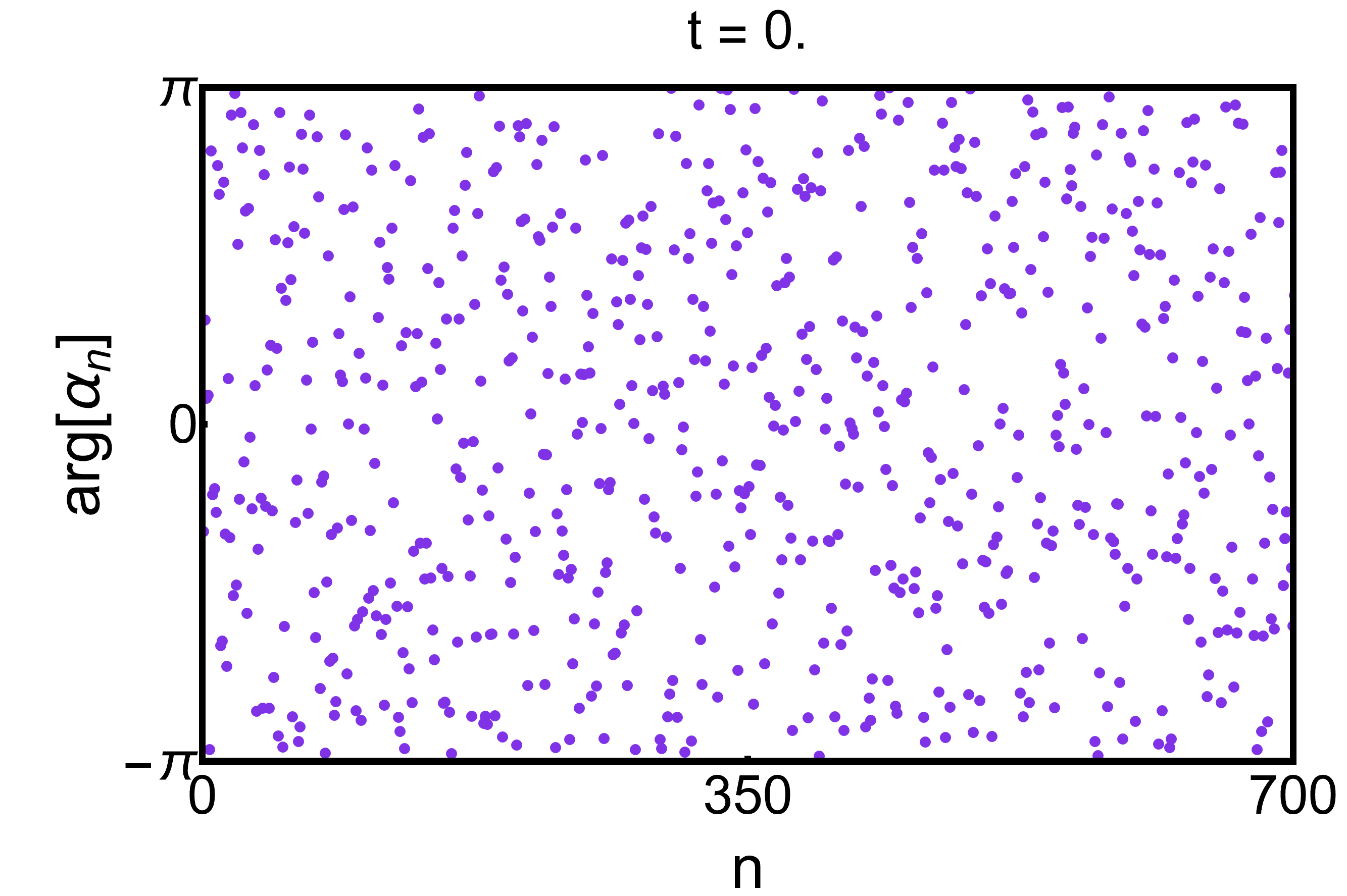}	
		\includegraphics[width = 5.6cm]{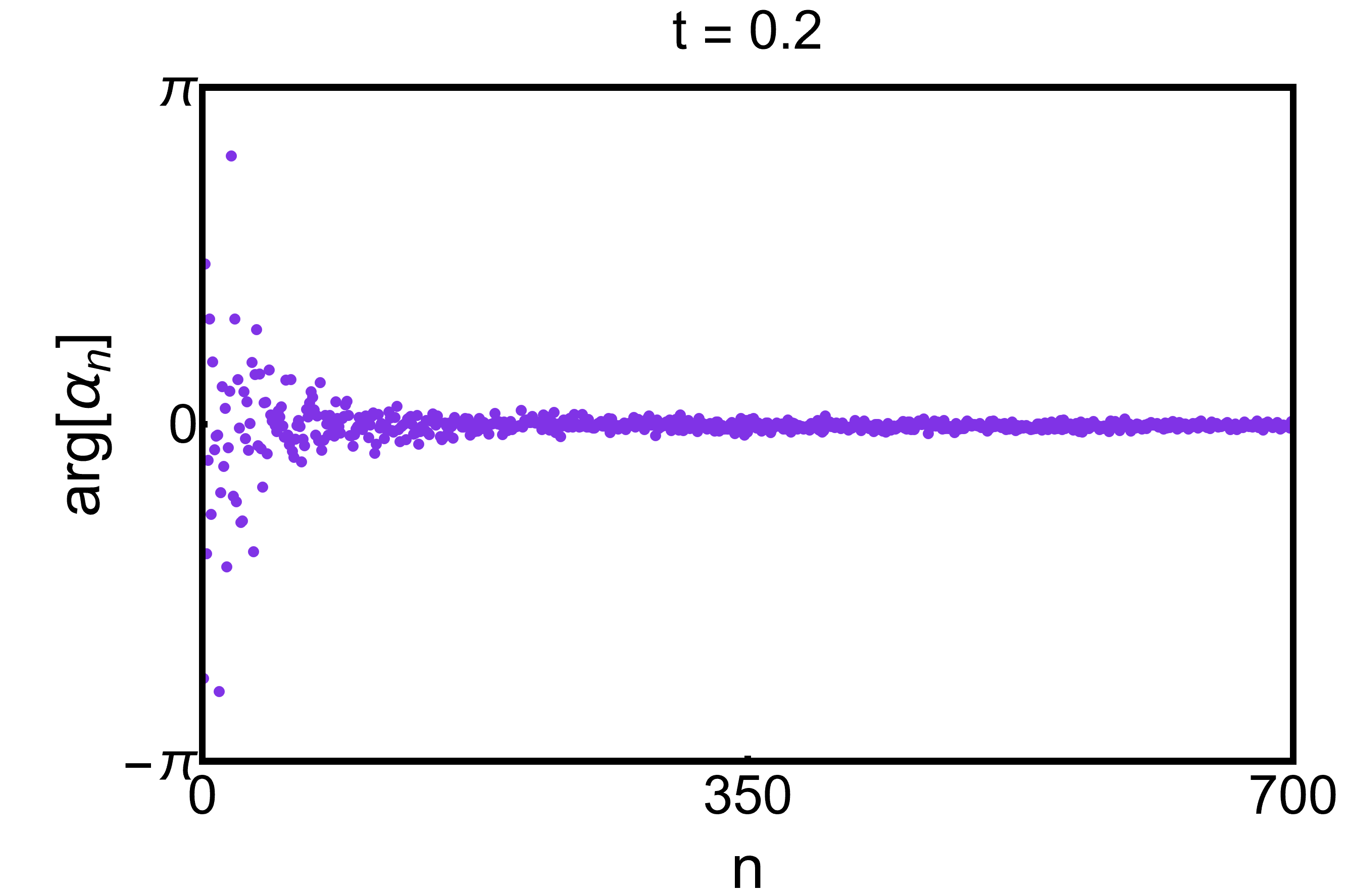}	
		
		\vspace{0.2cm}
		
		\includegraphics[width = 5.6cm]{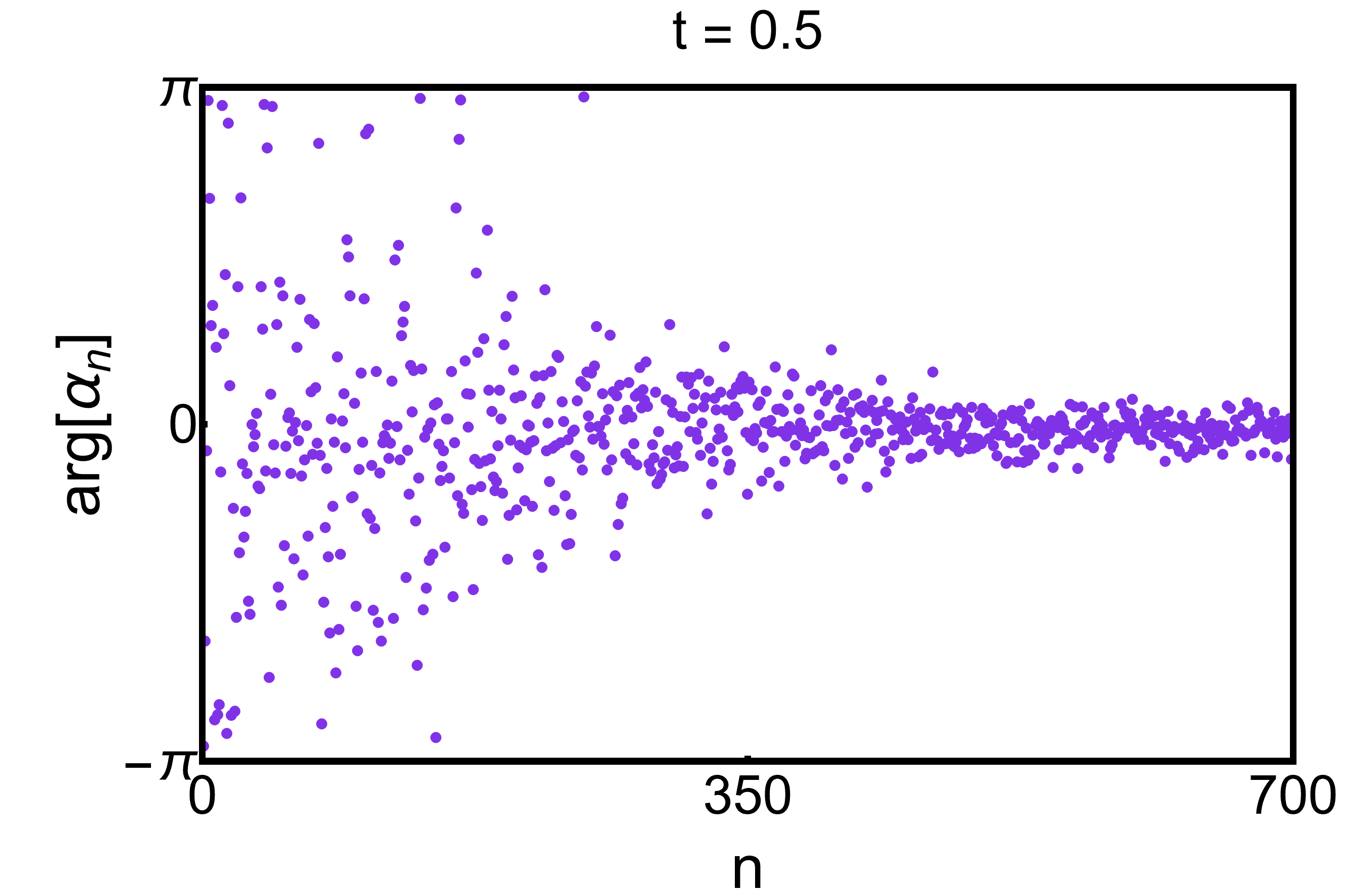}	
		\includegraphics[width = 5.6cm]{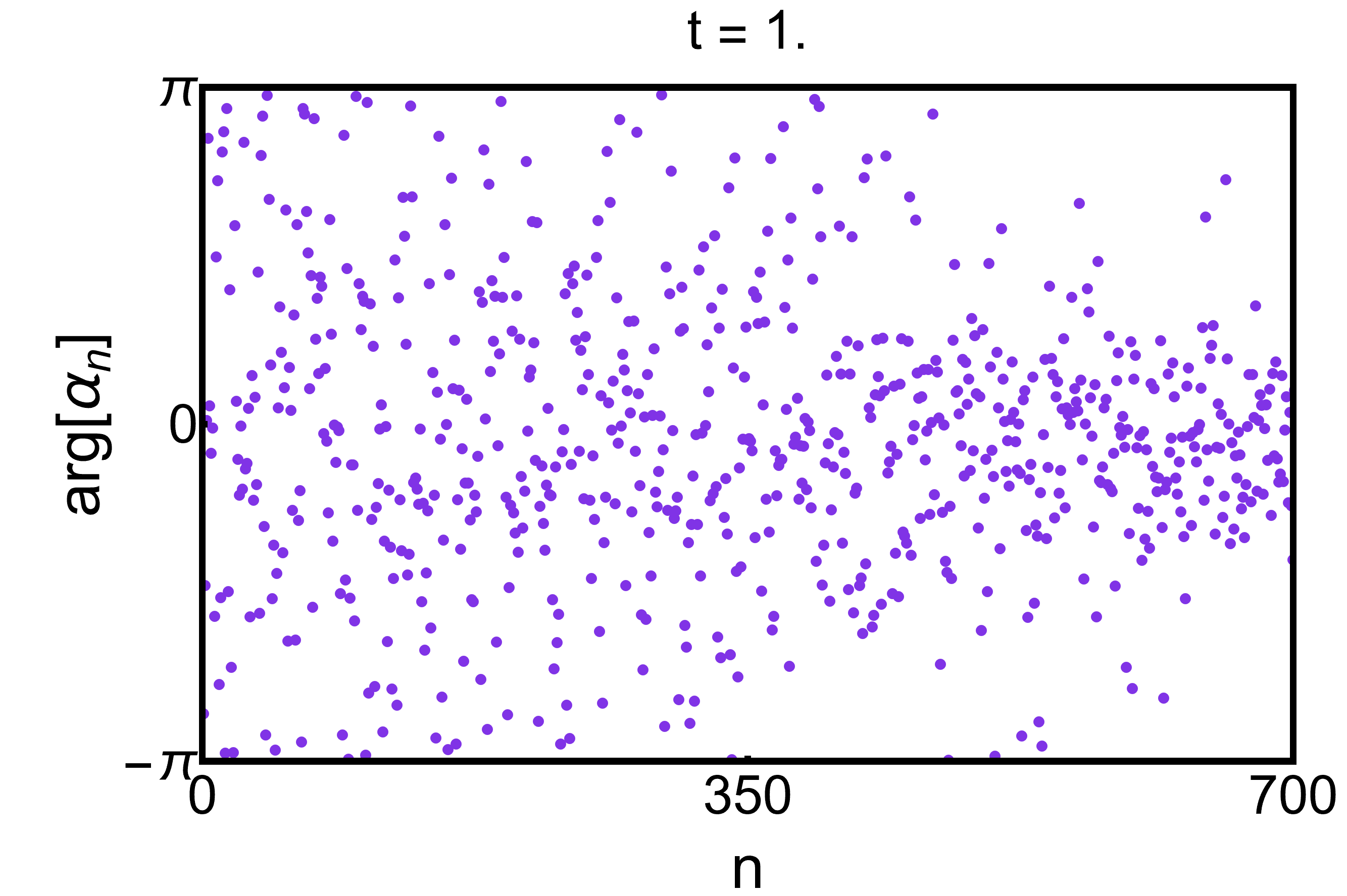}	
		\includegraphics[width = 5.6cm]{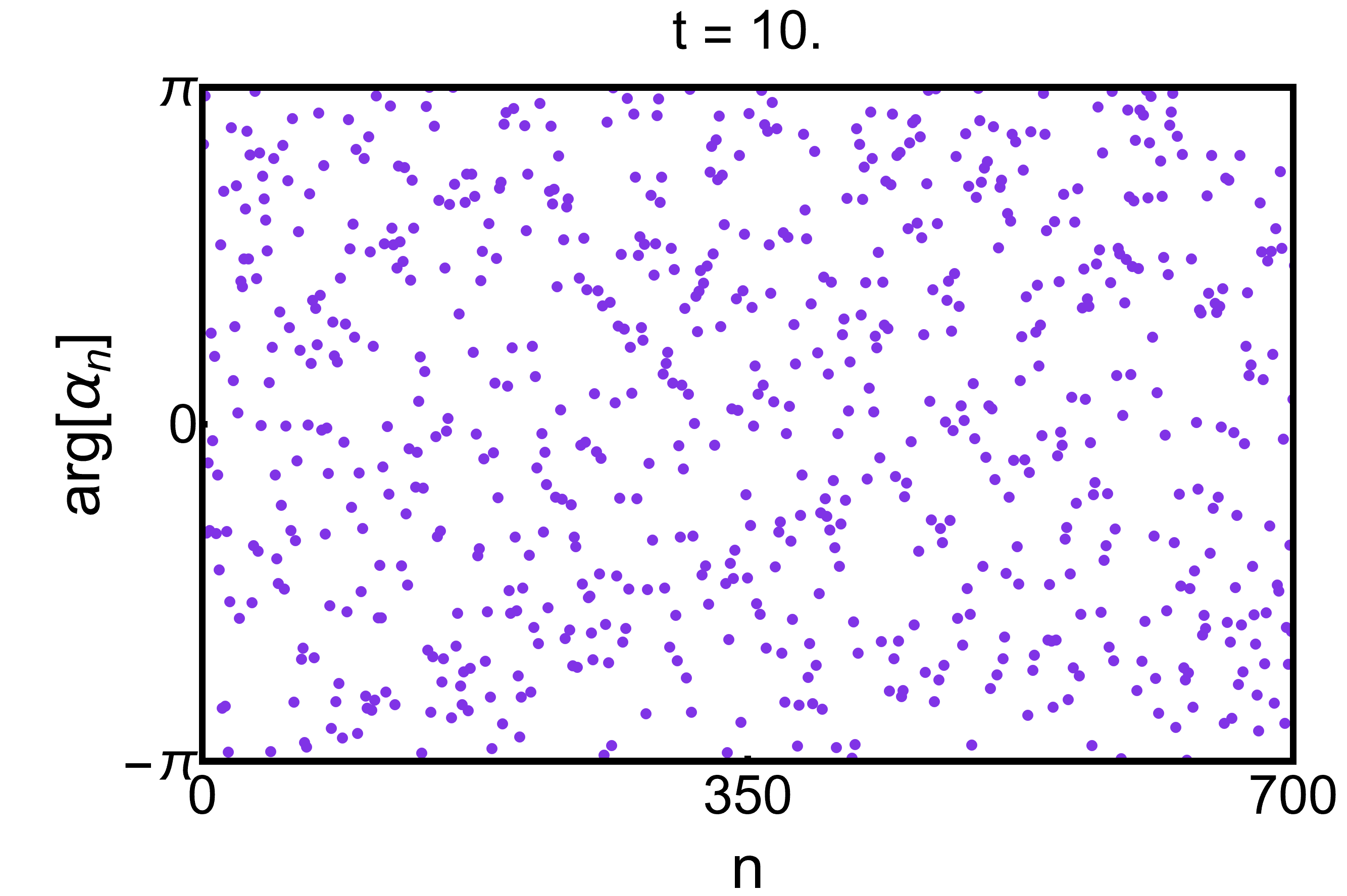}		
		\caption{Evolution of a system admitting Cascade I solutions on the invariant manifold, initialized with random initial data. {\em Upper-left corner}: energy spectrum at successive times. {\em Remaining plots}: phase distribution at the corresponding times.}
		\label{fig:examples_coherent_energy_cascade_I}
	\end{figure}


	\newpage

	\section{Construction of solvable Hamiltonian systems}
	\label{subsec:existence_invariant_manifold}
	
	We will first construct a family of Hamiltonian systems with a low-dimensional invariant manifold that is larger than the one alluded to in Section~\ref{sec:Results}. Starting with this general construction, we will specialize to particular cases of interest in the subsequent sections. The broader class of Hamiltonians corresponds to the equations of motion
	\beq
	i \h_n \frac{d{\q}_n}{dt} = \underset{n+m=k+j}{\underbrace{\sum_{m=0}^{\infty}\sum_{k=0}^{\infty}\sum_{j=0}^{\infty}}} S_{nmkj} \bar{\q}_m\q_k\q_j
	\label{eq:Resonant_System_S_form_section_construction}
	\eeq
	with $\q_n \in\mathbb{C}$, $\h_n\in\mathbb{R}$ an arbitrary sequence, and the couplings $S_{nmkj}\in\mathbb{R}$ maintain the symmetries in (\ref{eq:C_symmetries}): $S_{nmkj} = S_{nmjk} = S_{mnkj} = S_{kjnm} \in\mathbb{R}$. In this case, the conservation laws and the Hamiltonian take the form
	\beq
	N = \sum_{n=0}^{\infty} \h_n |\q_n|^2, \qquad E = \sum_{n=0}^{\infty} n \h_n |\q_n|^2, \qquad \text{and} \qquad 
	\mathcal{H} = \frac{1}{2}\sum_{\underset{n+m=k+j}{n,m,k,j}} S_{nmkj} \bar{\q}_n \bar{\q}_m \q_k \q_j. 
	\label{eq:Hamiltonian}
	\eeq
	The manifold previously introduced in (\ref{eq:invariant_manifold_RESULTS_SECTION}),
	now takes the form:
	\beq
	\q_0 = b,\qquad \q_{n\geq 1} = c\,p^{n-1},
	\label{eq:invariant_manifold_V2_section_construction}  
	\eeq
	with dynamical variables $b,c,p\in\mathbb{C}$ and the constraint $|p|^2 < x_c$, coming from the upper bound on the asymptotic behavior of the sequence $|F_{n\gg 1}| < C n^{\gamma} x_c^{-n}$, with $C>0$ and $x_c>0$ that we prove later in Corollary~\ref{corollary:asymptotic_behavior_Fn_any_parameters}. We shall find that the sequence $F_n$ can have both positive and negative entries for this broader class of systems. Of those, the ones associated with positive sequences, $\h_{n}>0$ for every $n$, correspond to the systems considered in (\ref{eq:Resonant_Equation}) through the transformation:
	\beq
	\alpha_n = \sqrt{\h_n}\ \q_n, \qquad \text{and} \qquad C_{nmkj} =  \frac{S_{nmkj}}{\sqrt{\h_n\h_m\h_k\h_j}}.
	\eeq

	A family of systems for which the above manifold remains invariant is summarized in the following proposition, for which it is useful to introduce the notation
	\beq
	x(t) := |p(t)|^2,
	\eeq
	and classify the coefficients according to the number of zero indices they have:
	\beq
	S_{0000}^{(4)} := S_{0000}, \qquad 
	S_{n0n0}^{(2)} := S_{n0n0}, \qquad 
	S_{nm0(n+m)}^{(1)} := S_{nm0(n+m)}, \qquad 
	S_{nmkj}^{(0)} := S_{nmkj},
	\eeq
	where in these expressions $n,m,k,j\neq 0$. Coefficients with three zero indices are excluded by the resonant condition $n+m=k+j$.

	\begin{proposition} \label{prop:existence_invariant_manifold} Hamiltonian systems of the form (\ref{eq:Resonant_System_S_form_section_construction})
		admit the invariant manifold given in Eq.~(\ref{eq:invariant_manifold_V2_section_construction}) if the coefficients $S_{nmkj}$ and the sequence $\h_n$ satisfy the following conditions: 
		
		\noindent {\bf (i)} Coefficients with one and two zero-indices, $S_{nm0(n+m)}^{(1)}$ and $S_{n0n0}^{(2)}$, have the form
		\beq
		S_{nmkj}
		=
		\left[
		\atwo (nm + kj) + \aone (n+m) + \bone
		\right]
		\frac{\h_n \h_m \h_k \h_j}{\h_{n+m}}.
		\eeq
		
		\noindent {\bf (ii)} Coefficients with nonzero indices satisfy the linear constraint
		\beq
		\sum_{k=1}^{n+m-1} S_{nmk(n+m-k)}^{(0)}
		=
		\left(
		\beta_0 + \beta_1 (n+m) + \beta_2 nm
		\right)
		\h_n \h_m.
		\label{eq:Condition_3_ONLY}
		\eeq
		
		\noindent {\bf (iii)} The sequence $\h_n$ satisfies the recursive relation:
		\beq
		\h_{n\geq 2}
		=
		\frac{1}{n-1}
		\sum_{k=1}^{n-1}
		\left[
		\atwo (n-k)k + \aone n + \bone
		\right]
		\h_k \h_{n-k}.
		\label{eq:fn_iterative_equation_consturction_method_ONLY}
		\eeq
		Where $\h_0=1$, $\h_1 \neq 0$, and the parameters $\bone,\aone,\atwo,\beta_0,\beta_1,\beta_2\in\mathbb{R}$ are arbitrary.
	\end{proposition}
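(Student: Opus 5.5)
Our approach is direct substitution. We insert the ansatz $\q_0=b$, $\q_{n\ge1}=c\,p^{n-1}$ into (\ref{eq:Resonant_System_S_form_section_construction}) and show that the right-hand side, divided by $\h_n$, takes the form $A+B n$ times $p^{n-1}$ (and a compatible expression for $n=0$). The left-hand side of the ansatz is $i\dot b$ for $n=0$ and $i\,\h_n\bigl(\dot c\,p^{n-1}+(n-1)c\,\dot p\,p^{n-2}\bigr)$ for $n\ge1$. The manifold is therefore invariant exactly when, for every $n\ge1$, $\h_n^{-1}\times$RHS is a polynomial of degree at most one in $n$ multiplied by $p^{n-1}$ (with the $n$-coefficient proportional to $p^{-1}$). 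Matching the $n^0$ and $n^1$ coefficients then yields closed ODEs for $\dot c$ and $\dot p$, and the $n=0$ equation gives $\dot b$.

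We organize the sum by the number of zero indices among $(m,k,j)$ for fixed $n\ge1$. Terms with $m=0$, $k=0$ or $j=0$ involve $S^{(2)}_{n0n0}$ (the combinations $\bar b\,b\,c p^{n-1}$ and $|c|^2$-type terms) and $S^{(1)}$ (terms $\bar c\,\bar p^{m-1} b\, c p^{n+m-1}$ with $m\ge1$, and $\bar b\, c^2 p^{n-2}$ coming from $k+j=n$, $k,j\ge1$). The terms with no zero index give $\bar c c^2 \sum_{m\ge1}\bar p^{m-1}p^{n+m-2}\sum_{k} S^{(0)}_{nmk(n+m-k)}$. By condition (ii), the inner sum equals $(\beta_0+\beta_1(n+m)+\beta_2nm)\h_n\h_m$. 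Dividing by $\h_n$ leaves $\sum_m(\beta_0+\beta_1(n+m)+\beta_2 nm)\h_m x^{m-1}$, which is affine in $n$ with coefficients given by the series $\sum \h_m x^{m-1}$ and $\sum m\h_m x^{m-1}$. These converge for $x<x_c$, which is where the asymptotic bound on $\h_n$ from Corollary~\ref{corollary:asymptotic_behavior_Fn_any_parameters} enters. The $S^{(1)}$ terms with $k=0$ or $j=0$ are handled identically using the form in (i). Since $\h_0=1$, the prefactor $\h_n\h_m\h_0\h_{n+m}/\h_{n+m}=\h_n\h_m$, so dividing by $\h_n$ again produces affine-in-$n$ coefficients times $p^{n-1}$ times a convergent series in $x$. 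The $S^{(2)}$ terms are similarly affine in $n$ after division.

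The main obstacle is the term $\bar b\,c^2 p^{n-2}\sum_{k=1}^{n-1}S^{(1)}_{0n k(n-k)}$, i.e., coefficients with the zero in the first pair. By (i) and the symmetry $S_{0nk(n-k)}=S_{k(n-k)0n}$, this equals $[\atwo k(n-k)+\aone n+\bone]\h_k\h_{n-k}\h_0\h_n/\h_n$. The sum over $k$ is then exactly the right-hand side of the recursion (iii), namely $(n-1)\h_n$. After dividing by $\h_n$, this term becomes $(n-1)\bar b c^2 p^{n-2}$, which has precisely the form of the $(n-1)c\dot p\,p^{n-2}$ term and thus fixes $\dot p$. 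This is where (iii) is essential. The case $n=1$ is consistent because the sum is empty and $(n-1)=0$; we will check this separately, together with the conventions used to define $\h_1$.

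Finally, for $n=0$ the right-hand side involves $S^{(4)}|b|^2b$, $S^{(2)}$ terms $\bar c c\, b\,x^{m-1}$, and $S^{(1)}$ terms $\bar b^{\,0}$-type contributions $\bar c\bar p^{m-1}c^2\cdots$. All of these depend only on $(b,c,p)$ through convergent series, which gives $\dot b$. Collecting the results, we obtain a closed three-dimensional ODE system in $(b,c,p)$ that is consistent with all mode equations, and this proves invariance. The remaining bookkeeping is to verify that the symmetries of $S$ make each zero-index class appear with the correct multiplicity (the factors of $2$ from $k\leftrightarrow j$); these factors only rescale the coefficients and do not affect the affine-in-$n$ structure.
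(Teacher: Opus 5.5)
Your argument is correct, but it runs in the opposite direction from the paper's proof.

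\textbf{The paper's route.} The paper substitutes the ansatz only to \emph{extract} necessary closure conditions: $S^{(2)}_{n0n0}$, $\sum_{k}S^{(1)}_{n0k(n-k)}$ and $\sum_m S^{(1)}_{nm0(n+m)}x^m$ must each be $\h_n$ times something affine in $n$, and $\sum_k S^{(0)}_{nmk(n+m-k)}$ must be $\h_n\h_m$ times such a factor. It then \emph{solves} these conditions:
\begin{itemize}
\item the symmetry $S(x,y)=S(y,x)$ of the generating function of the $S^{(1)}$ forces the form (i);
\item evaluating $S(x,x)$ in two ways yields the ODE for $F(x)$;
\item the normalizations $\nu_1=1$, $\nu_0=-1$, $\h_1=1$, together with a gauge transformation fixing the $S^{(2)}$ parameters to $\mu_1,\mu_0$, turn that ODE into the recursion (iii).
\end{itemize}

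\textbf{Your route.} You take (i)--(iii) as hypotheses and check closure term by term. The recursion enters only through $\sum_{k=1}^{n-1}S_{n0k(n-k)}=(n-1)\h_n$. This is exactly the paper's condition $\nu_1 n+\nu_0=n-1$, and it produces the $\bar b c$ term in $\dot p$.

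\textbf{Comparison.} Your route proves precisely the sufficiency the proposition asserts, without any generating-function machinery, and it reproduces the paper's equations for $\dot b,\dot c,\dot p$. The paper's route gives more. It shows that (i)--(iii) are essentially forced, up to those normalizations. It also identifies exactly which couplings remain free (the off-diagonal $S^{(0)}$), which is what Section~\ref{subsec:construction_predominantly_random_Hamiltonian_systems} exploits.

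\textbf{Bookkeeping corrections.} None of these affect the logic.
\begin{itemize}
\item In the $n\ge1$ equation the $S^{(2)}$ terms give only $2(\mu_1 n+\mu_0)\h_n|b|^2c\,p^{n-1}$; there are no $|c|^2$-type $S^{(2)}$ terms.
\item The $S^{(1)}$ terms with $k=0$ or $j=0$ carry an extra factor $p$ relative to $p^{n-1}$.
\item Every contribution to the coefficient of $(n-1)$ carries a factor $c$, for example $(n-1)\bar b c^2/p$. Hence $\dot p=-ipB/c$ is regular.
\item ``Dividing by $\h_n$'' should be phrased as factoring out $\h_n$, which every term on the right-hand side carries explicitly. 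That way the argument also covers parameter values for which some $\h_n$ vanish.
\end{itemize}
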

	\begin{proof}
		This proof proceeds in the following way: (i) we first identify the necessary conditions on the coefficients $S_{nmkj}$ to leave the manifold invariant; (ii) we then solve such constraints by writing the coefficients in terms of the sequence $\h_n$; and (iii) an equation for the sequence $\h_n$ is derived. 
		
		\noindent \underline{\em Closure conditions:} In order to identify the necessary conditions on the couplings $S_{nmkj}$ for the manifold to be invariant, we substitute the manifold expressions for $\q_n$ (\ref{eq:invariant_manifold_V2_section_construction}) into the equations (\ref{eq:Resonant_System_S_form_section_construction}) and impose their closure. The equation for $\dot{b}$ follows from $\dot{\q}_0$:
		\beq
		i\dot{b} = S_{0000}^{(4)} |b|^2b + 2 b \frac{|c|^2}{x}\sum_{m=1}^{\infty} S_{0m0m}^{(2)} x^m + \frac{|c|^2}{x} \frac{c}{p} \sum_{m=1}^{\infty} \sum_{k=1}^{m-1} S_{0mk(m-k)}^{(1)} x^m.
		\label{eq:b_equation_uncomplete}
		\eeq	
		where we have set $\h_0=1$ for simplicity, without altering the general conclusions of this work.\footnote{We exclude the case $\h_0=0$ for being considerably different from the rest. It corresponds to systems where the lowest mode is decoupled from the rest, admitting an invariant manifold that only contains stationary solutions after the shift  $n\to n-1$: $a_{n} = c p^{n}$.} The equations for $\dot{c}$ and $\dot{p}$ come from $\dot{\q}_{n\geq 1}$. Canceling the factor $p^{n-1}$ on both sides, and using the symmetry $S_{nmkj}=S_{nmjk}$ yields
		\begin{multline}
			i\left(\dot{c} - c \frac{\dot{p}}{p} + n c \frac{\dot{p}}{p}\right) \h_n = 2 S_{n0n0}^{(2)}|b|^2 c + 2 b \frac{|c|^2}{\bar{p}}  \sum_{m=1}^{\infty} S_{nm0(n+m)}^{(1)} x^m \\ + \bar{b} \frac{c c}{p} \sum_{k=1}^{n-1} S_{n0k(n-k)}^{(1)} + \frac{|c|^2}{x}c \sum_{m=1}^{\infty} x^m \sum_{k=1}^{n+m-1} S_{nmk(n+m-k)}^{(0)}.
			\label{eq:c-p_equation_uncomplete}
		\end{multline}
		The closure of the ansatz requires matching the dependence on $n$ on both sides --- the powers of $n$ and the sequence $F_n$. This, together with considering $b$ and $c$ as independent variables, leads to the following constraints for the coefficients, with $n,m\geq 1$;
		\begin{align}		
			&S_{n0n0}^{(2)} = \left(\ax n + \bx\right) \h_n, \label{eq:Condition_0}\\
			&\sum_{k=1}^{n-1} S_{n0k(n-k)}^{(1)} = \left(\ah n + \bh\right) \h_n, \label{eq:Condition_1}\\
			&\sum_{m=1}^{\infty} S_{nm0(n+m)}^{(1)} x^m = \left(\at(x)n + \bt(x)\right) \h_n, \label{eq:Condition_2}\\
			&\sum_{k=1}^{n+m-1} S_{nmk(n+m-k)}^{(0)} = \left(\beta_0 + \beta_1(n+m) + \beta_2 n m\right) \h_n \h_m, \label{eq:Condition_3}
		\end{align}
		where $\ax$, $\bx$, $\ah$, $\bh$, $\beta_i \in \mathbb{R}$ are parameters and $\at(x)$, $\bt(x)\in\mathbb{R}$ are functions of $x$. The symmetry between $n$ and $m$ in the last expression originates from the symmetry $n\leftrightarrow m$ of $S_{nmkj}$. 
		
		\noindent \underline{\em Solution to the constraints:} The above conditions are necessary for the manifold to be invariant but they must be supplemented with a condition on $\h_n$ to be sufficient. To do so, we now solve the constraints and derive an equation for $\h_n$:
		\begin{itemize}
			\item $\mathbf{S_{0000}^{(4)}}$: This coefficient is unconstrained by conditions (\ref{eq:Condition_0})-(\ref{eq:Condition_3}), becoming a free parameter.
			
			\item $\mathbf{S_{n0n0}^{(2)}}$: Eq.~(\ref{eq:Condition_0}) forces the form $S_{n0n0}^{(2)} = \left(\ax n + \bx\right) \h_n$. The parameters $\ax$ and $\bx$, however, can take any value via the symmetry transformation $a_n(t) \to a_n(t)\exp\left[-i (\delta_1 n + \delta_2)\int_0^t |a_0(\tau)|^2d\tau\right]$ and $S^{(2)}_{n0n0} \to S^{(2)}_{n0n0} - (\delta_1 n+ \delta_0)\h_n $ with real $\delta_{1}$ and $\delta_2$.  We can therefore fix, without loss of generality, $\ax = \aone$ and $\bx=\bone$ to match the upcoming expression for the couplings $S_{nm0(n+m)}^{(1)}$.
			
			\item $\mathbf{S_{nm0(n+m)}^{(1)}}$: We now use the constraints in Eqs.~(\ref{eq:Condition_1})-(\ref{eq:Condition_2}) to write these coefficients in terms of the sequence $\h_n$. We formulate  the problem in terms of generating functions, 
			\beq
			F(x) = \sum_{n=1}^{\infty} \h_n x^n \qquad \text{and} \qquad S(x,y) = \sum_{n=1}^{\infty}\sum_{m=1}^{\infty}x^m y^n S_{nm0(n+m)}^{(1)}.
			\label{eq:Generating_Functions_F_and_S}
			\eeq
			The third condition is reformulated as follows, after multiplying by $y^n$ and summing over $n$:
			\beq
			S(x,y) = \sum_{n=1}^{\infty} y^n \sum_{m=1}^{\infty} S_{nm0(n+m)}^{(1)} x^m = \at(x) y F'(y) + \bt(x) F(y).
			\label{eq:Generating_Function_Condition_3}
			\eeq
			The symmetry $n\leftrightarrow m$ in the coefficients leads to $S(y,x)=S(x,y)$, constraining $\at(x)$ and $\bt(x)$ to be
			\beq
			\at(x) = \aone F(x) + \atwo x F'(x),\qquad \bt(x) = \bone F(x) + \aone xF'(x), 
			\label{eq:mu_nu_functions_expressions}
			\eeq
			where $\aone$, $\atwo$, $\bone\in \mathbb{R}$. Combining (\ref{eq:Generating_Function_Condition_3})-(\ref{eq:mu_nu_functions_expressions}) in the form
			\begin{multline}
				S(x,x) = \sum_{n=1}^{\infty} \sum_{m=1}^{\infty}  x^{n+m} S_{nm0(n+m)}^{(1)} = \atwo \left(x F'(x)\right)^2+2 \aone x F(x) F'(x) + \bone F(x)^2 \\=  \sum_{n=1}^{\infty} \sum_{m=1}^{\infty} x^{m+n} \h_n \h_m \left[\aone (n+m) + \atwo n m+\bone\right],
				\label{eq:S(x,x)_intermediate_equation}
			\end{multline}
			one can write the expression for the coefficients by matching powers on both sides:
			\beq
			S_{nm0(n+m)}^{(1)} =  \left[\aone (n+m) + \atwo n m+\bone\right] \h_n \h_m.
			\label{eq:Smnij_extression_V2}
			\eeq
			Including the symmetries $n\leftrightarrow m$, $k\leftrightarrow j$ and $(n,m)\leftrightarrow(k,j)$, and $\h_0=1$, the coefficients with a single zero among the indices can be written in the symmetric form:
			\beq
			S_{nmkj}^{(1)} =   \left[\aone (n+m) + \atwo (n m + k j)+\bone\right] \frac{\h_n \h_m\h_k \h_j}{\h_{n+m}}.
			\label{eq:Smnij_extression}
			\eeq
			This expression can be used for $S_{n0n0}^{(2)}$ as well, since it matches their required expression. We are making a slight abuse of notation because some entries of $\h_{n+m}$ might vanish, however, when these coefficients are evaluated, the denominator is always compensated with the same factor in the numerator, avoiding conflicts.
			
			\item $\mathbf{S_{nmkj}^{(0)}}$: The constraint in Eq.~(\ref{eq:Condition_3}) is directly solved by writing $S_{nmnm}^{(0)}$ in terms of the other coefficients, 
			presenting infinitely many solutions for any choice of $\{\beta_0,\beta_1,\beta_2\}$. Rewriting the constraint with all terms on the right-hand side of the expression except $S_{nmnm}^{(0)}$ yields:
			\begin{itemize}
				\item For $n+m$ even:
				\begin{multline}
					S_{nmnm}^{(0)} = \frac{1}{2-\delta_{m,n}}\Bigg{[}  \left(\beta_0 +\beta_1(n+m) + \beta_2 n m\right) \h_n \h_m \\ - (1-\delta_{m,n})S^{(0)}_{nm\frac{n+m}{2}\frac{n+m}{2}} - \sum_{k=1}^{m-1}S^{(0)}_{nmk(n+m-k)} - \sum_{k=1}^{n-1}S^{(0)}_{nmk(n+m-k)}\\
					- (1-\delta_{m,n-2}) \sum_{k = m+1}^{\frac{n+m-2}{2}}S^{(0)}_{nmk(n+m-k)} - (1-\delta_{n,m-2}) \sum_{k = n+1}^{\frac{n+m-2}{2}}S^{(0)}_{nmk(n+m-k)} \Bigg{]}. 
					\label{eq:Snmnm_even_V0}
				\end{multline}
				\item For $n+m$ odd:
				\begin{multline}
					S_{nmnm}^{(0)} = \frac{1}{2}\Bigg{[} \left(\beta_0 +\beta_1(n+m) + \beta_2 n m\right) \h_n \h_m - \sum_{k=1}^{m-1}S^{(0)}_{nmk(n+m-k)} - \sum_{k=1}^{n-1}S^{(0)}_{nmk(n+m-k)} \\
					- (1-\delta_{m,n-1}) \sum_{k = m+1}^{\frac{n+m-1}{2}}S^{(0)}_{nmk(n+m-k)} - (1-\delta_{n,m-1}) \sum_{k = n+1}^{\frac{n+m-1}{2}}S^{(0)}_{nmk(n+m-k)}\Bigg{]}.
					\label{eq:Snmnm_odd_V0}
				\end{multline}
			\end{itemize}
		\end{itemize}

		\noindent \underline{\em A recursive relation for $\h_n$:} 	We now show that equations (\ref{eq:Condition_1})-(\ref{eq:Condition_2}) constrain the sequence $\h_n$. We use the symmetry $(n,m)\leftrightarrow(k,j)$ in the coefficients $S_{nmkj}$ to note that the generating function introduced in (\ref{eq:Generating_Functions_F_and_S}) satisfies
		\beq
		S(x,x) = \sum_{n=1}^{\infty}\sum_{m=1}^{\infty}x^{n+m} S_{nm0(n+m)}^{(1)} = \sum_{n=1}^{\infty}\sum_{m=1}^{\infty}x^{n+m} S_{(n+m)0nm}^{(1)} = \sum_{M=1}^{\infty}x^{M} \sum_{k=1}^{M-1}S_{M0k(M-k)}^{(1)}.
		\eeq
		The last term is precisely the condition in (\ref{eq:Condition_2}) after multiplying by $x^n$ and summing in $n$:
		\beq
		S(x,x) = \sum_{n=1}^{\infty}  x^n \sum_{k=1}^{n-1} S_{n0k(n-k)}^{(1)} = \ah  xF'(x) + \bh F(x).
		\label{eq:Generating_Function_Condition_2}
		\eeq
		A differential equation for $F(x)$ is obtained after combining this expression with (\ref{eq:S(x,x)_intermediate_equation}):
		\beq
		\atwo (x F')^2 + \left(2\aone F - \ah\right)x F' +  (\bone F -\bh) F = 0, \qquad \text{with} \qquad F(0)=0.
		\label{eq:F_differential_equation}
		\eeq	
		The initial condition follows from the power-series form of $F(x)$ in (\ref{eq:Generating_Functions_F_and_S}). The sequence $\h_{n\geq1}$ is then constructed from the solution to this equation, $\h_{n\geq 1} = F^{(n)}(0)/n!$. Alternatively, one can construct the sequence recursively by writing $F(x)$ in terms of $\h_n$:
		\beq
		(\ah n + \bh) \h_n = \begin{cases}
			0 & \text{for } n = 1,\\
			\sum\limits_{k=1}^{n-1} \left(\atwo  (n-k)k  + \aone n + \bone \right) \h_{k}\h_{n-k}& \text{for } n \geq 2.
		\end{cases}
		\label{eq:fn_iterative_equation}
		\eeq 
		  The set of parameters $\{\atwo, \aone,\bone,\ah,\bh,\h_1\}$ can be reduced through a series of transformations that allow us to fix $\ah=1$, $\bh=-1$, $\h_1=1$, and one of the other three parameters $\{\atwo, \aone,\bone\}$ to one without reducing the space of nontrivial solutions. The resulting expression is 
		\beq
		\h_{n\geq 2} = \frac{1}{n-1}\sum_{k=1}^{n-1} \left(\atwo  (n-k)k  + \aone n + \bone \right) \h_{k}\h_{n-k}, \qquad \text{with} \qquad \h_1 = 1,
        \label{eq:fn_iterative_equation_2}
		\eeq 
        and its solutions can be mapped to the nontrivial solutions of Eq.~(\ref{eq:fn_iterative_equation}).
          
        First, we note that nontrivial sequences in Eq.~(\ref{eq:fn_iterative_equation}) require $\bh=-\ah s$ with $s=1,2,3,...$, and the linear dependence on the parameters allows us to set $\ah=1$. The first nontrivial element in the sequence is therefore $\h_s$, which can be fixed to one through the transformation $\h_n\to \delta^{n} \h_n$. Scaling the parameters $\{\aone,\atwo,\bone\}\to \eta\{\aone,\atwo,\bone\}$ leads to another transformation $F_n\to \eta^{n-1}F_n$ that allows us to fix one of these parameters to one (different choices of the fixed parameter lead to different sets of solutions, since the zero value would be excluded). Finally, solutions for the generating function $F(x)$ with $s>1$ and parameters $\{\aone,\atwo,\bone\}$ can be written in terms of  the solution  with $\aone\to s\aone$, $\atwo\to s^2\atwo$, and $s\to1$ (whose coefficients are denoted by $\h_n^{(1)}$) in the form $F(x)  = \sum_{n=1}^{\infty} \h_n^{(1)} x^{sn}$. We then conclude that solutions for $\ah=1$, $\bh=-1$, $\h_1=1$ and $\{\atwo, \aone,\bone\}$ with one of these parameters set to one, namely solutions to Eq.~(\ref{eq:fn_iterative_equation_2}), can be mapped to any solution of (\ref{eq:fn_iterative_equation}) with other values of the parameters.

		To sum up, in this proof we have identified the constraints the coefficients $S_{nmkj}$ must satisfy to leave the manifold under discussion invariant. Then, we have solved these constraints by writing the coefficients in terms of the sequence $\h_n$, some additional parameters, and coefficients that remain unconstrained. Finally, a recursive equation for the sequence $\h_n$ has been obtained, together with a differential equation for its generating function.  
		
	\end{proof}

	
	\section{The sequence $\h_n$}
	\label{subsec:Fn_sequence}
	
	We proceed to classify the properties of the sequence $\h_n$  in terms of the parameters $\{\atwo,\aone,\bone\}$, defined by:
	\beq
	\mathcal{D} := \{\atwo,\aone,\bone \in \mathbb{R}: \quad  \atwo \geq 0, \quad \aone\geq  -\atwo, \quad \text{and} \quad \bone>-2\aone-\atwo\},
	\label{eq:domain_of_parameters}
	\eeq
	together with $\h_1>0$. It guarantees that the coefficients in the construction of $\h_n$ in (\ref{eq:fn_iterative_equation}) are positive and, consequently, that $\h_n$ is positive for all $n$. This restriction allows us to focus on models of the form (\ref{eq:Resonant_Equation}), which have the typical structure for the resonant approximation of weakly nonlinear Hamiltonian PDEs \cite{E2,Kuksin3}.
	
	Before presenting the results, we recall some properties that follow from the proof of Proposition~\ref{prop:existence_invariant_manifold} and will be used from now on. One is the scaling symmetry: $\{\atwo,\aone,\bone\}\to\delta\{\atwo,\aone,\bone\}$ and $F_n\to \delta^{n-1}F_n$, in the construction of the Hamiltonian systems. It allows us to set to one any of the parameters $\{\atwo,\aone,\bone\}$ with a nonvanishing value, introducing a natural classification of systems that we use restricted to the domain $\{\atwo,\aone,\bone\}\in\mathcal{D}$:
	\begin{itemize}
		\item \underline{\em Type I systems:} $\ \ \qquad \atwo=\aone=0$ and $\bone=1$.
		\item \underline{\em Type II systems:} $\ \qquad \{\atwo,\aone,\bone\}\in\mathcal{D} \quad$ with $ \quad  \atwo=0$ and $\aone=1.$
		\item \underline{\em Type III systems:} $\qquad \{\atwo,\aone,\bone\}\in\mathcal{D} \quad$ with $ \quad \atwo\neq 0$.
	\end{itemize}
    
	Another important element in the subsequent derivations is the generating function
	\beq
	F(x) = \sum_{n=1}^{\infty} \h_n x^n,
	\label{eq:generating_function_Fn_V2}
	\eeq
	and its differential equation
	\beq
	\atwo (x F'(x))^2 + \left(2\aone F(x) - 1\right)x F'(x) +  (\bone F(x) + 1) F(x) = 0 \quad \text{with} \quad F(0)=0.
	\label{eq:F_differential_equation_V2}
	\eeq

	\begin{proposition} \label{prop:Fn_sequence_properties} {\bf (Asymptotic behavior of $\h_n$):} Consider the sequence $\h_n$ that solves Eq.~(\ref{eq:fn_iterative_equation_consturction_method_ONLY}) with parameters $\{\atwo,\aone,\bone\}\in\mathcal{D}$. Then, its asymptotic behavior admits the following classification:
		\beq
		\h_{n\gg 1} \sim n^{\gamma} x_c^{-n} \qquad \text{with} \qquad  \gamma = \begin{cases}
			\ \ \ 0, & \text{ for \ \ Type I:\ \ \ \  $\atwo=\aone=0$ \ and \ $\bone=1$},\\
			-3/2, & \text{ for \ \ Type II:\ \ \ $\atwo=0$\ and \ $\aone = 1$},\\
			-5/2, & \text{ for \ \ Type III: \ $\atwo\neq 0$}.
		\end{cases} 
		\eeq
		$x_c$ is given by the singularity of $F(x)$ closest to the origin, which can be calculated from the implicit integration of Eq.~(\ref{eq:F_differential_equation_V2}) as $x_c = x(F_c)$ with 
		\beq
		F_c =
		\begin{cases}
			\qquad \qquad \qquad \infty, & \text{ for \ \ Type I:\ \ \ \  $\atwo=\aone=0$ \ and \ $\bone=1$},\\[0.4em]
			
			\qquad \qquad \qquad \dfrac{1}{2\aone}, & \text{ for \ \ Type II:\ \ \ $\atwo=0$\ and \ $\aone = 1$},\\[0.8em]
			
			\dfrac{1}{2\left(\mu_1+\mu_2 \pm \sqrt{\mu_2(\mu_0+\mu_2+2\mu_1)}\right)}, & \text{ for \ \ Type III: \ $\atwo\neq 0$}.
		\end{cases}
		\eeq
	\end{proposition}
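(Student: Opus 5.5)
The approach is singularity analysis of the generating function $F(x)$, using the ODE (\ref{eq:F_differential_equation_V2}), followed by transfer theorems of Flajolet--Odlyzko type to read off $\h_n\sim n^{\gamma}x_c^{-n}$. Since $\{\atwo,\aone,\bone\}\in\mathcal{D}$, all coefficients in the recursion (\ref{eq:fn_iterative_equation_consturction_method_ONLY}) are positive and $\h_n>0$. Pringsheim's theorem then places the dominant singularity on the positive real axis at the radius of convergence $x_c$. A crude bound $\h_n\le K^n$, obtained by induction from the recursion, shows $x_c>0$.

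Along $[0,x_c)$, $F$ is increasing from $F(0)=0$. We invert the relation and treat $x$ as a function of $F$. Equation (\ref{eq:F_differential_equation_V2}) is quadratic in $xF'$, and we select the branch with $xF'\approx F$ near $0$. This gives the separable equation
\[
\frac{dx}{x}=\frac{dF}{G(F)},
\qquad
G(F)=\frac{1-2\aone F-\sqrt{(1-2\aone F)^2-4\atwo(\bone F+1)F}}{2\atwo},
\]
with the obvious limit when $\atwo=0$. Hence $x(F)=F\exp\!\int_0^F\bigl(1/G(s)-1/s\bigr)ds$. The singularity $x_c=x(F_c)$ is reached either where $x(F)$ stops being invertible, or where $F\to\infty$ at finite $x$.

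We then treat the three types in turn.

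\textbf{Type I} ($\atwo=\aone=0$, $\bone=1$). The ODE becomes $xF'=F(F+1)$, so $F=x/(1-x)$ and $\h_n=1$. Thus $\gamma=0$, $F_c=\infty$, and $x_c=1$.

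\textbf{Type II} ($\atwo=0$, $\aone=1$). Here $xF'=F(\bone F+1)/(1-2F)$. The derivative $F'$ blows up as $F\to1/2=F_c$, while $x$ stays finite, since $dx/dF$ vanishes linearly there. So $x-x_c\approx -\tfrac12 x''(F_c)(F-F_c)^2$, which gives a square-root singularity $F\approx F_c-\kappa\sqrt{x_c-x}$. We must check that $x''(F_c)\neq0$, and that no pole occurs earlier: $\bone F+1>0$ on $[0,1/2]$ by the domain $\mathcal{D}$. The transfer theorem then yields $\h_n\sim n^{-3/2}x_c^{-n}$.

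\textbf{Type III} ($\atwo\ne0$). The discriminant $(1-2\aone F)^2-4\atwo F(\bone F+1)$ vanishes at $F_c$ as given, taking the root closer to $0$ with the $+$ sign in the denominator. Near $F_c$ we have $G(F)=G_c-\lambda\sqrt{F_c-F}+\dots$ with $G_c\ne0$. So $x$ as a function of $F$ has an expansion $x_c+A(F-F_c)+B(F_c-F)^{3/2}+\dots$ with $A\ne0$. Inverting gives $F=F_c+(x-x_c)/A+B'(x_c-x)^{3/2}+\dots$, a $3/2$-power singularity, so $\h_n\sim n^{-5/2}x_c^{-n}$.

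The main obstacle is justifying the transfer theorem, which needs analytic continuation of $F$ to a $\Delta$-domain at $x_c$ and the absence of other singularities on $|x|=x_c$. I would argue this as follows. By positivity and aperiodicity ($\h_n>0$ for all $n$), $x_c$ is the unique singularity on the circle. Because $F$ satisfies an algebraic relation in $(x,F,F')$ that is locally analytic away from the branch point, the implicit-function argument continues $F$ past every other boundary point.

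In each case we must also rule out that $F$ reaches $\infty$, or a zero of $G$, before $F_c$. A zero $G(F)=0$ would require $(\bone F+1)F=0$, which is excluded for $F>0$ in $\mathcal{D}$. So along the real axis the first obstruction is the claimed $F_c$. The remaining work is checking $A\ne0$ and $B\ne0$ via explicit expansions.
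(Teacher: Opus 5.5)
Your overall strategy matches the paper's. You locate the dominant singularity of $F$ from the ODE (\ref{eq:F_differential_equation_V2}) and read off $\gamma$ by Darboux/transfer. Type I is solved exactly, and Type III is handled by expanding near the branch point $F_+$ of $\sqrt{(1-F/F_-)(1-F/F_+)}$ to produce a $(x_c-x)^{3/2}$ term.

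The genuine difference is Type II. The paper does no local analysis there. It integrates the ODE in closed form in two subcases, $\bone=0$ (Lambert $W$, $\h_n=(2n)^{n-1}/n!$) and $\bone=1$ (Fuss--Catalan), and reads $n^{-3/2}$ and $x_c$ off the explicit coefficients. Your argument is that $dx/dF$ has a simple zero at $F=1/2$, with $x''(F_c)=-8x_c/(\bone+2)\neq0$. This is uniform in $\bone$, so it covers the whole admissible range $\bone>-2$. The paper's two subcases reach, after the rescaling $\{\atwo,\aone,\bone\}\to\delta\{\atwo,\aone,\bone\}$ with $\delta>0$, only $\bone\ge0$. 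In exchange, the paper's route yields exact sequences and closed-form $x_c$, which are reused in Proposition~\ref{prop:Fn_explicit_expressions} and Section~\ref{sec:Explicit_systems}. You also confront the $\Delta$-analyticity and the uniqueness of the dominant singularity, which the paper leaves implicit in its appeal to Darboux's theorem.

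Three small repairs are needed:
\begin{enumerate}
\item The Type II expansion should read $x-x_c\approx+\tfrac12x''(F_c)(F-F_c)^2$; the minus sign you wrote is wrong, since $x''(F_c)<0$.
\item Positivity and aperiodicity alone do not exclude other singularities on $|x|=x_c$. What they give is $|F(x)|<F_c$ for $|x|\le x_c$, $x\neq x_c$ (strictness from $\h_1,\h_2>0$). It is this bound that keeps $F$ inside the disc $|F|<F_c$, on which $G$ is analytic. For Type III this step uses $|F_-|\ge F_+$, which follows from $\aone+\atwo\ge0$. The analytic existence theorem for $xF'=G(F)$ at $x\neq0$ then continues $F$ across the rest of the circle.
\item For $\bone<0$ the zero $F=-1/\bone$ of $G$ is positive, contrary to your claim that $(\bone F+1)F=0$ is excluded for $F>0$. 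It is harmless, because $-1/\bone>1/(2\aone+\atwo)>F_+$, and $-1/\bone>1/2$ in Type II.
\end{enumerate}
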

	
	
	\begin{proof} By Darboux theorem \cite{Darboux}, the large-order asymptotics of the  Taylor series (\ref{eq:generating_function_Fn_V2}) is controlled by the singularity of $F(x)$ closest to the origin.
		The proof is separated according to the different types of systems:
		
		\begin{itemize}
			\item {\bf Type I ($\atwo =\aone=0$ and $\bone = 1$):} The generating function of $\h_n$ has a simple solution to its differential equation in (\ref{eq:F_differential_equation_V2}) and all the properties follow directly:
			\beq
			F(x) = \frac{x}{1-x} \qquad \Rightarrow \qquad \h_{n} = 1 \qquad \Rightarrow \qquad \gamma = 0, \  \ x_c = 1, \ \text{and} \ F_c = \infty.
			\label{eq:Fx_1}
			\eeq
			
			\item {\bf Type II ($\atwo = 0$ and $\aone \neq 0$):} Two cases must be considered separately.
			\begin{itemize}
				\item For $\bone = 0$ and $\aone=1$, the differential equation (\ref{eq:F_differential_equation_V2}) can be integrated for $x(F)$, indicating that $F(x)$ can be written in terms of the Lambert $W$-function \cite{W-Lambert_function, W-Lambert_function_Wolfram}:
				\beq
				x(F) = F e^{-2 F} \qquad \Rightarrow \qquad F(x) = -\frac{1}{2} W(-2 x).
				\label{eq:Fx_2}
				\eeq
				The sequence $\h_n$ follows from the series representation of that function,
				\beq
				\h_{n\geq 1} = \frac{(2 n)^{n - 1}}{n!} \underset{n\gg1}{\sim} n^{-3/2}(2e)^n \qquad \Rightarrow \qquad  \gamma = -3/2 \ \ x_c=(2 e)^{-1}, \ \text{and} \ F_c = \frac{1}{2}.
				\eeq 
				
				\item For $\bone=1$,  the differential equation (\ref{eq:F_differential_equation_V2}) takes the form:
				\beq
				xF' = \frac{(1+F) F}{1-2\aone F} \quad \Rightarrow \quad x(F) = \frac{F}{\left(1+F\right)^{2\aone+1}},
				\label{eq:F_fuss_catalan}
				\eeq
				which is the generating function for a particular class of Fuss-Catalan numbers \cite{BG, FussCatalan}:
				\beq
				\h_n = A_n^{(2\aone+1,1)}\underset{n\gg1}{\sim} n^{-3/2} x_c^{-n} \quad \text{with} \quad \gamma= -3/2, \ x_c = \frac{(2\aone)^{2\aone}}{(2\aone+1)^{2\aone+1}}, \ \text{and} \ F_c = \frac{1}{2\aone}
				\eeq
				with the general expression
				\beq
				A_n^{(s,r)} = \frac{r}{sn+r}\binom{sn+r}{n}.
				\eeq
			\end{itemize}
			
			\item {\bf Type III ( $\atwo \neq 0$):} The asymptotic behavior of $\h_n$ can be extracted from the differential equation for its generating function in (\ref{eq:F_differential_equation_V2}). We first write the equation in the form 
			\beq
			2 \mu_2 x\frac{dF}{dx} = 1 -2 \mu_1 F - \sqrt{\left(1-\frac{F}{F_-}\right) \left(1-\frac{F}{F_+}\right)}  \qquad \text{and} \qquad F(0)=0,
			\label{eq:F_differential_equation_square_root}
			\eeq
			with
			\beq
			F_{\pm} = \frac{1}{2\left(\mu_1+\mu_2 \pm \sqrt{\mu_2 (\mu_0+\mu_2+2 \mu_1)}\right)}.
			\eeq
			The minus sign in front of the square root of (\ref{eq:F_differential_equation_square_root}) is the branch compatible with the structure of our generating function in (\ref{eq:Generating_Functions_F_and_S}), as it keeps $F'(0)$ finite. 
			
			In the range of parameters we are considering, $\{\atwo,\aone,\bone\}\in \mathcal{D}$, it can be shown that $F_- > F_+ > 0$ and $F(x)$ grows in  $F\in(0,\ F_+)$, while its second and higher derivatives become singular at $F_+$. This singular behavior implies the asymptotic form $\h_n x^{n} \underset{n\gg 1}{\sim} n^{\gamma} (x/x_c)^n$ with $x_c=x(F_{+})$, and therefore $F_c=F_+$. Thus, the power $\gamma$ can be inferred from the behavior of $F(x)$ near that point:
			\beq
			F(x)= F_+ + C_1 (1-x/x_c) + C_{3/2}(1-x/x_c)^{3/2}+...,
			\eeq
			where $C_j$ are $x$-independent coefficients. The presence of a the noninteger power $3/2$ in the above expression requires the asymptotic behavior for the series $\h_{n\gg 1}\sim n^{-5/2}x_c^{-n}$, as one extracts from the Polylogarithm function \cite{Polylogarithm_function}
			\beq
			\text{Li}_{-\gamma} = \sum_{n=1}^{\infty} n^\gamma z^n \underset{|z|\sim 1^-}{\sim} \zeta (-\gamma ) -\zeta (-\gamma -1) \left(1 - z\right)  +\Gamma (\gamma +1) \left(1 - z\right)^{-(\gamma+1)} + ...,
			\eeq
			with noninteger $\gamma < -1$, and $\Gamma$ and $\zeta$ being the gamma and zeta functions respectively. The explicit value $x_c=x(F_+)$  can be calculated from an implicit integration of equation (\ref{eq:F_differential_equation_square_root}) to obtain $x(F)$. 
			
			These results apply to any combination of parameters in the domain $\mathcal{D}$ with $\atwo\neq 0$, concluding that Type III systems present:
			\beq
			\gamma = -5/2, \qquad \text{and} \qquad F_c = \frac{1}{2\left(\mu_1+\mu_2 + \sqrt{\mu_2 (\mu_0+\mu_2+2 \mu_1)}\right)}. 
			\eeq
		\end{itemize}
	\end{proof}

	
	\begin{proposition} \label{prop:Fn_explicit_expressions} {\bf (Explicit expressions for $\h_n$):} The sequence $\h_n$ defined by the recursive relation in Eq~(\ref{eq:fn_iterative_equation_consturction_method_ONLY}) admits explicit expressions for Type I and Type II systems, and for a subset of Type III systems:
		\begin{itemize} 
			\item \underline{Type I ($\atwo =\aone= 0$ and $\bone = 1$)}: 
			\beq
			\h_{n\geq1} = 1 \qquad \text{and} \qquad x_c = 1.
			\eeq
			\item \underline{Type II ($\atwo=0$ and $\aone\neq 0$)}:
			\begin{eqnarray}
				\{\aone = 1,\ \bone = 0\} & \Rightarrow& \h_{n\geq1} = \frac{(2 n)^{n - 1}}{n!}, \qquad \text{and} \qquad x_c = (2e)^{-1}\\ [6pt]
				\{\aone > 0,\ \bone = 1\} & \Rightarrow& \h_{n\geq1} = A_{n}^{(2\aone+1,1)} \qquad \text{and} \qquad x_c = \frac{(2\aone)^{2\aone}}{(2\aone+1)^{2\aone+1}}.
			\end{eqnarray}
			where $A_n^{(s,1)}$ is a particular set of Fuss-Catalan numbers:
			\beq
			A_n^{(s,r)} = \frac{r}{sn+r} \binom{sn+r}{n}. 
			\label{eq:Fuss_Catalan_numbers_general_form}
			\eeq

			\item \underline{Type III ($\atwo \neq 0$)}:
			\begin{align}
				\{\atwo = 1,\ \aone = 0,\ \bone = 0\}
				&  \quad\;\Rightarrow\; \quad
				\h_{n\geq 1}
				= 2\frac{(2n)^{n-2}}{n!}, \label{eq:Fn_expression_for_mu2_neq_0_mu1_0_mu0_0}
				\\[6pt]
				\{\atwo = \aone^2,\ \aone = \frac{s-1}{2},\ \bone = 1\}
				& \quad\;\Rightarrow\; \quad
				\h_{n\geq 1}
				= \frac{2}{(s-1)n+2} A_n^{(s,1)} \label{eq:F_n_expression_-5over2_system_1},\\[6pt]
				\left\{\atwo = \frac{1}{8\aone} - \frac{\aone}{2},\ \aone = \frac{1}{4s-2},\ \bone = -2\aone\right\}
				& \quad\;\Rightarrow\; \quad
				\h_{n\geq 1}
				= \frac{s-1}{sn-1} A_n^{(s,1)} \label{eq:F_n_expression_-5over2_system_2},\\[6pt]
				\left\{\atwo = \frac{1}{16\aone} - \aone,\ \aone = \frac{1}{8s-4},\ \bone = -\aone\right\}
				& \quad\;\Rightarrow\; \quad
				\h_{n\geq 1}
				= \frac{4s-2}{2sn-1} A_n^{(s,1/2)}, \label{eq:F_n_expression_-5over2_system_3}
			\end{align}
			where $A_n^{(s,r)}$ are Fuss-Catalan numbers given in (\ref{eq:Fuss_Catalan_numbers_general_form}), and $x_c = (2e)^{-1}$ for Eq.~(\ref{eq:Fn_expression_for_mu2_neq_0_mu1_0_mu0_0}) and $x_c=(s-1)^{(s-1)}/s^s$ for the other expressions.
			
		\end{itemize}
	\end{proposition}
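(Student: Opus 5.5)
The plan is to avoid the recursion (\ref{eq:fn_iterative_equation_consturction_method_ONLY}) entirely and instead verify each closed form at the level of the generating function $F(x)=\sum_{n\ge1}\h_n x^n$. By the proof of Proposition~\ref{prop:existence_invariant_manifold}, the sequence $\h_n$ (with $\h_1=1$) is the Taylor sequence of the unique solution of (\ref{eq:F_differential_equation_V2}) with $F(0)=0$ that is analytic at the origin. So for each case it suffices to do two things: exhibit an analytic $F$ with $F(0)=0$ and $F'(0)=1$ that satisfies (\ref{eq:F_differential_equation_V2}), and read off its coefficients. Type I and both Type II families were already integrated in the proof of Proposition~\ref{prop:Fn_sequence_properties}, giving $F=x/(1-x)$, $F=-\tfrac12 W(-2x)$ and the Fuss--Catalan relation (\ref{eq:F_fuss_catalan}). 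For those I would only recall these results together with the values of $x_c$ found there.

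For the Type III case $\{\atwo=1,\aone=\bone=0\}$, the equation reads $(xF')^2-xF'+F=0$. I will use the tree-type function $T(x)=-\tfrac12W(-2x)$, which satisfies $T=xe^{2T}$ and $xT'=T/(1-2T)$. The proposal is $F=T-T^2$. Then $xF'=(1-2T)\,xT'=T$, and the ODE becomes $T^2-T+T-T^2=0$, so it holds. Since $xF'=T=\sum_n (2n)^{n-1}x^n/n!$, we get $\h_n=(2n)^{n-1}/(n\cdot n!)=2(2n)^{n-2}/n!$. The nearest singularity is the branch point $T=\tfrac12$, which gives $x_c=(2e)^{-1}$.

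For the three Fuss--Catalan families, I will parametrize by $B(x)=1+\sum_{n\ge1}A_n^{(s,1)}x^n$. This series satisfies $B=1+xB^s$, so $x=(B-1)/B^s$ and $x\,dB/dx = B(B-1)/\big(B-s(B-1)\big)$. Its powers satisfy $B^r=\sum_n A_n^{(s,r)}x^n$; this includes $r=1/2$, which is needed for (\ref{eq:F_n_expression_-5over2_system_3}). The plan is to find, in each case, a simple function $H(B)$ with $F=H(B(x))$, using Lagrange inversion:
\[
[x^n]H(B)=\frac1n[w^{n-1}]\,H'(1+w)(1+w)^{sn}.
\]
The rational prefactors $2/((s-1)n+2)$, $(s-1)/(sn-1)$ and $(4s-2)/(2sn-1)$ are exactly what arises when $H$ is a two-term combination such as $aB+bB^2$ or $aB^{1/2}+bB^{-1/2}$ (up to constants fixing $F(0)=0$), or equivalently when $xF'$ is a simple expression in $B$. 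Once $H$ is found, substituting $xF'=H'(B)\,x\,dB/dx$ turns (\ref{eq:F_differential_equation_V2}) into a polynomial identity in $B$ (or in $B^{1/2}$). That identity holds precisely for the stated parameter constraints, e.g. $\atwo=\aone^2$ with $\aone=(s-1)/2$ and $\bone=1$. Checking this is routine algebra.

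Finally, the value of $x_c$ comes from the square-root branch point of $B$, where $dx/dB=0$. This occurs at $B=s/(s-1)$, giving $x_c=(s-1)^{s-1}/s^s$; the prefactors in $H$ do not move this singularity, consistent with the $n^{-5/2}$ law of Proposition~\ref{prop:Fn_sequence_properties}. I expect the main obstacle to be identifying the correct ansatz $H(B)$ for each family, especially the half-integer family (\ref{eq:F_n_expression_-5over2_system_3}). I would first make the ansatz ``$xF'$ is a Laurent polynomial in $B^{1/2}$ with two terms'', fix its coefficients by matching $\h_1=1$ and $\h_2$ from the recursion, and then verify the ODE identity.
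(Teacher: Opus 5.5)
Your route is the paper's: you verify each closed form at the level of the generating function, you treat $\{\mu_2=1,\mu_1=\mu_0=0\}$ through the tree function, and for the Fuss--Catalan families you write $F$ as a function of $B=1+G$ and substitute into (\ref{eq:F_differential_equation_V2}). Your $F=T-T^2$ is exactly the paper's $-\tfrac14W(-2x)\,[2+W(-2x)]$. Guessing $H(B)$ by Lagrange inversion is a tidier way to reach what the paper gets by turning the prefactor into a linear relation such as $(s-1)G=sxF'-F$ and integrating an ODE for $\tilde F(G)$. Your two-term ansatz does work. For (\ref{eq:F_n_expression_-5over2_system_2}) one finds $H=-\frac{(s-1)^2}{2s-1}(B-1)-\frac{s^2}{2s-1}(B^{-1}-1)$, which is the paper's $\tilde F(G)$. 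For (\ref{eq:F_n_expression_-5over2_system_3}) one finds $H=-2(s-1)(B^{1/2}-1)-2s(B^{-1/2}-1)$, with $xF'=B^{1/2}-B^{-1/2}$.

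The step that fails as written is the normalization $F'(0)=1$.
\begin{itemize}
\item \textbf{Where it fails.} For (\ref{eq:F_n_expression_-5over2_system_1}), $A_1^{(s,1)}=1$, so the claimed sequence has $F_1=2/(s+1)$. If you impose $F_1=1$ and match $F_2$ from the recursion, you get $F_2=\mu_2+2\mu_1+\mu_0=(s+1)^2/4$ instead of the formula's $F_2=1$. You would then land on the rescaled sequence $\bigl(\tfrac{s+1}{2}\bigr)^n\frac{2}{(s-1)n+2}A_n^{(s,1)}$, whose radius of convergence is $\frac{2}{s+1}\frac{(s-1)^{s-1}}{s^s}$, not the stated $x_c$.
\item \textbf{The uniqueness claim.} The coefficient of $x$ in (\ref{eq:F_differential_equation_V2}) is $-F_1+F_1\equiv0$, so $F(0)=0$ leaves $F_1$ free. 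The equation is invariant under $x\to\delta x$, i.e.\ $F_n\to\delta^nF_n$, and the recursion (\ref{eq:fn_iterative_equation_consturction_method_ONLY}) only requires $F_1\neq0$.
\item \textbf{The fix.} Drop $F'(0)=1$. Equation (\ref{eq:F_differential_equation_V2}) with $F(0)=0$ is equivalent to the recursion for $n\geq2$ with whatever $F_1$ the formula gives, and that is what the paper checks. For (\ref{eq:F_n_expression_-5over2_system_1}), Lagrange inversion gives $H(B)=\frac{(B-1)\left(s+1-(s-1)B\right)}{s+1}$, so $xF'=\frac{2B(B-1)}{s+1}$. For these parameters the equation reads $\bigl(\tfrac{s-1}{2}xF'+F\bigr)^2-xF'+F=0$, and it reduces to $(B-1)^2-(B-1)^2=0$.
\end{itemize}

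One last remark on $x_c$. In every Type III case $dF/dB$ (resp.\ $dF/dT$) vanishes at the branch point, $B=s/(s-1)$ or $T=\tfrac12$. That cancellation is what produces the $n^{-5/2}$ law. So ``the prefactors do not move the singularity'' does not by itself show that $F$ stays singular there. Reading the exponential rate off the explicit coefficients by Stirling settles $x_c$ immediately.
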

	
	\begin{proof} The results for the Type I and II cases follow directly from the proof of Proposition~\ref{prop:Fn_sequence_properties}. We here focus on Type III systems:
		\begin{itemize}
			\item When $\bone = \aone = 0$ but $\atwo = 1$, an implicit integration of Eq.~(\ref{eq:F_differential_equation_square_root}) shows that the generating function can be written in terms of Lambert W-function \cite{W-Lambert_function, W-Lambert_function_Wolfram} :
			\beq
			x(F) = \frac{1-\sqrt{1-\frac{F}{F_+}}}{2 } \exp\left({\textstyle -1 +  \sqrt{1-\frac{F}{F_+}}}\right) \quad \Rightarrow \quad F(x) = -\frac{W(-2 x)}{4} \left[2 + W(-2 x)\right],
			\eeq
			leading to the sequence
			\beq
			\h_{n\geq 1} = 2\frac{(2  n)^{n - 2}}{n!} \underset{n\gg 1}{\sim} n^{-5/2} (2e)^n,
			\eeq
			which matches our general analysis: $\gamma = -5/2$ and $x_c=x(F_+)=(2e)^{-1}$.
			
			\item The remaining expressions for $\h_n$, 
			(\ref{eq:F_n_expression_-5over2_system_1})-(\ref{eq:F_n_expression_-5over2_system_3}), 
			follow from the same procedure. The idea is to express the generating function
			\beq
			F(x) = \sum_{n=1}^{\infty} \h_n x^n
			\label{eq:expression1}
			\eeq
			and its derivative in terms of the generating function for the Fuss-Catalan numbers
			\beq
			G(x) = \sum_{n=1}^{\infty} A_n^{(s,r)} x^n,
			\label{eq:expression2}
			\eeq
			which satisfies \cite{FussCatalan}
			\begin{equation}
				(G(x)+1)^{1/r} - 1 = x (G(x)+1)^{s/r}
				\quad \text{and} \quad
				xG'(x) = r\frac{(G(x)+1)^{\frac{r+1}{r}}-(G(x)+1)}{s+(1-s)(G(x)+1)^{1/r}}.
				\label{eq:Fuss_Catalan_Equation_and_Generalization}
			\end{equation}
			Substituting the resulting expressions in the differential equation for $F(x)$ in (\ref{eq:F_differential_equation_V2}) shows that the equation is satisfied for the corresponding values of $\{\atwo,\aone,\bone\}$, confirming that the sequence $\h_n$ solves Eq.~(\ref{eq:fn_iterative_equation_consturction_method_ONLY}).
			
			We illustrate this procedure for the case of Eq.~(\ref{eq:F_n_expression_-5over2_system_2}), where
			\beq
			\h_n = \frac{s-1}{sn-1} A_n^{(s,1)} \qquad \Rightarrow \qquad (s-1)G(x) = s x F'(x) - F(x).
			\label{eq:G-F_relation}
			\eeq
			We now write $F(x)$ as a function of $G(x)$ by defining $\tilde{F}(G) := F(x(G))$, where the relation $x(G)$ comes from (\ref{eq:Fuss_Catalan_Equation_and_Generalization}). With that, we derive the differential equation
			\begin{equation}
				\frac{s(G+1) G}{(s-1)\bigl(1-(s-1)G\bigr)} \frac{d\tilde{F}}{dG}
				= \frac{\tilde{F}}{s-1} + G,
			\end{equation}
			which admits the solution
			\begin{equation}
				\tilde{F}(G) = \frac{G}{G+1}
				- \frac{(s-1)^2}{2s-1}\frac{G^2}{G+1},
			\end{equation}
			where the integration constant has been fixed to match the structure of the generating functions (\ref{eq:expression1})-(\ref{eq:expression2}) near $x=0$. We can then write
			\begin{equation}
				x \frac{dF(x)}{dx} = \frac{G(x)}{1 + G(x)}
				+ \frac{s-1}{2s-1}\,\frac{G(x)^2}{1 + G(x)}.
			\end{equation}
			A direct substitution in Eq.~(\ref{eq:F_differential_equation_V2}) verifies that the differential equation for $F(x)$ is satisfied for the parameter values in (\ref{eq:F_n_expression_-5over2_system_2}). It then confirms that the sequence $\h_n$ solves equation (\ref{eq:fn_iterative_equation_consturction_method_ONLY}), completing the proof.				
		\end{itemize}
		
	\end{proof}


	Before concluding this section, we provide a useful result to work with the inverse function $x(F)$ in the next section, and the upper bound on the asymptotic behavior of the sequence $\h_n$ used to introduce the invariant manifold in Eq.~(\ref{eq:invariant_manifold_V2_section_construction}).
	
	\begin{proposition} \label{prop:F_grows}
		Let $F(x)$ be the generating function defined in (\ref{eq:generating_function_Fn_V2}) that solves equation (\ref{eq:F_differential_equation_V2}) for the parameters $\{\atwo,\aone,\bone\}\in \mathcal{D}$. Then, $F(x)$ and $x F'(x)$ are smooth positive growing functions in $x\in[0,x_c)$, where $x_c$ is the location of the nearest singularity.
	\end{proposition}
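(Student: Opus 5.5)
The plan is to argue directly from the power series, using positivity of the coefficients, and then use the ODE (\ref{eq:F_differential_equation_V2}) only to locate $x_c$. First, for $\{\atwo,\aone,\bone\}\in\mathcal{D}$ the weights in the recursion (\ref{eq:fn_iterative_equation_consturction_method_ONLY}) are positive. For $1\le k\le n-1$ we have $(n-k)k\ge n-1$ and $n\ge 2$. Hence $\atwo(n-k)k+\aone n+\bone\ge \atwo(n-1)+2\aone+\bone$ when $\aone\ge 0$. The case $\aone<0$ (so $\atwo\ge -\aone>0$) is handled similarly, by bounding $\atwo(n-k)k+\aone n$ below using $\atwo\ge-\aone$. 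In both cases the weight is at least $\bone+2\aone+\atwo>0$. Induction from $\h_1=1>0$ then gives $\h_n>0$ for every $n$.

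Consequently $F(x)=\sum_{n\ge1}\h_n x^n$ and $xF'(x)=\sum_{n\ge1}n\h_n x^n$ are power series with strictly positive coefficients. By Proposition~\ref{prop:Fn_sequence_properties} their common radius of convergence equals $x_c$, the modulus of the nearest singularity. Inside the disc of convergence both series are real-analytic, hence smooth on $[0,x_c)$. For $x\in(0,x_c)$ they are strictly positive, since every term is. They are also strictly increasing, because their derivatives $\sum n\h_n x^{n-1}$ and $\sum n^2\h_n x^{n-1}$ are again series with positive coefficients. At $x=0$ both vanish, so positivity is meant on $(0,x_c)$, or equivalently nonnegativity with growth on $[0,x_c)$.

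The main obstacle is identifying the singularity on the positive real axis with $x_c$, so that no real singularity occurs earlier and the series representation holds on all of $[0,x_c)$. By Pringsheim's theorem, a power series with nonnegative coefficients has a singularity at $x=R$, where $R$ is its radius of convergence. Hence the dominant singularity lies on the positive axis, and the series and the ODE solution coincide on $[0,R)$ with $R=x_c$. As a consistency check, I would verify along the ODE that the solution branch of (\ref{eq:F_differential_equation_square_root}) starting at $F(0)=0$ increases monotonically towards $F_c$. In Type II this follows from $xF'=(1+F)F/(1-2\aone F)>0$ for $F<1/(2\aone)$. In Type III it follows from $2\atwo xF'=1-2\aone F-\sqrt{(1-F/F_-)(1-F/F_+)}>0$ on $(0,F_+)$; squaring, this reduces to a linear-in-$F$ inequality, and positivity follows from the sign conditions defining $\mathcal{D}$. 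This also gives a well-defined smooth inverse $x(F)$ on $[0,F_c)$, which is the use intended for the next section.
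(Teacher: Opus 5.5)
Your proposal is correct and follows essentially the paper's argument. Positivity of the recursion weights on $\mathcal{D}$ gives $\h_n>0$, and since the radius of convergence is $x_c$ (Proposition~\ref{prop:Fn_sequence_properties}), both $F$ and $xF'$ are power series with positive coefficients, hence smooth and increasing on $[0,x_c)$ and strictly positive on $(0,x_c)$, as you note. Your explicit lower bound $\atwo+2\aone+\bone$ on the weights and your use of Pringsheim's theorem to place the nearest singularity at $x=x_c$ fill in details the paper leaves implicit. The only slip is in the optional Type II check: for general $\bone$ the numerator should read $(1+\bone F)F$, which is still positive for $0<F<1/(2\aone)$ because $\bone>-2\aone$.
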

	\begin{proof}
		For parameters $\{\atwo,\aone,\bone\}\in \mathcal{D}$, the coefficients of the generating function are positive, and from Proposition~\ref{prop:Fn_sequence_properties} its asymptotic behavior is $\h_{n\gg 1} \sim n^{\gamma}x_c^{-n}$. Then, $F(x)$ is a positive smooth and growing function in $x\in[0,x_c)$.
	\end{proof}

	\begin{corollary}\label{corollary:asymptotic_behavior_Fn_any_parameters} Let $\h_n$ be the sequence that solves the recursive relation (\ref{eq:fn_iterative_equation_consturction_method_ONLY}) in Proposition~\ref{prop:existence_invariant_manifold}. Such a sequence grows at most exponentially:
		$|\h_n|\leq C n^{\gamma} \delta^{n}$,
		where $C$, $\gamma$, and $\delta$ are constants that depend at most on the parameters $\{\atwo,\aone,\bone\}$ and $\h_1$.
	\end{corollary}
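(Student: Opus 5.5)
The recursion (\ref{eq:fn_iterative_equation_consturction_method_ONLY}) will be handled directly, for arbitrary real parameters $\{\atwo,\aone,\bone\}$ and $\h_1\neq 0$, by a majorant argument. Parameters in $\mathcal{D}$ are not needed, and neither is the Darboux analysis of Proposition~\ref{prop:Fn_sequence_properties}. The idea is to dominate $|\h_n|$ by a positive sequence $G_n$ obeying a recursion of the same shape but with nonnegative coefficients, and then show $G_n$ grows at most exponentially. Since the statement allows a factor $n^{\gamma}$ with $\gamma$ free, it suffices to prove $|\h_n|\le C\,\delta^n$ (taking $\gamma=0$), or $C n^{\gamma}\delta^n$ with some convenient $\gamma$.

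\textbf{Step 1: a uniform coefficient bound.} For $1\le k\le n-1$ and $n\ge 2$ we have $(n-k)k\le n^2/4$. Hence
\[
\frac{\left|\atwo (n-k)k+\aone n+\bone\right|}{n-1}\le \frac{|\atwo| n^2/4+|\aone| n+|\bone|}{n-1}\le K\,n,
\]
with $K$ depending only on $|\atwo|,|\aone|,|\bone|$ (using $n/(n-1)\le 2$). This bound is too crude: it gives $|\h_n|\le Kn\sum_k |\h_k||\h_{n-k}|$, and the factor $n$ per recursion level could produce factorial growth. I will therefore use the sharper bound
\[
\frac{|\atwo|(n-k)k}{n-1}\le |\atwo|\min(k,n-k),
\]
which holds because $(n-k)k/(n-1)\le \min(k,n-k)$. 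This matches the structure of the weighted ansatz below.

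\textbf{Step 2: weighted induction.} I will prove $|\h_n|\le A\, n^{-2}\delta^{n}$ for suitable $A,\delta$, i.e.\ the statement with $\gamma=-2$. The key convolution estimate is
\[
\sum_{k=1}^{n-1}\frac{1}{k^2(n-k)^2}\le \frac{c_0}{n^2}, \qquad \sum_{k=1}^{n-1}\frac{\min(k,n-k)}{k^2(n-k)^2}\le \frac{c_1}{n^2}\log n\;\text{or}\;\le\frac{c_1}{n},
\]
proved by splitting at $k=n/2$. Insert the hypothesis into the recursion. The $\aone n+\bone$ part contributes at most $\frac{|\aone| n+|\bone|}{n-1}A^2\delta^n c_0 n^{-2}\le c_2 A^2\delta^n n^{-2}$. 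The $\atwo$ part contributes at most $|\atwo| A^2\delta^n\sum_k \min(k,n-k)\,k^{-2}(n-k)^{-2}$. On $k\le n/2$ this sum is $\lesssim n^{-2}\sum_{k\le n/2}k^{-1}\sim n^{-2}\log n$, which is off by a logarithm.

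This logarithm is the main obstacle. I would remove it by weakening the ansatz to $|\h_n|\le A\,n^{-3/2}\delta^n$. Then the $\atwo$ sum is $\lesssim n^{-3/2}\sum_{k\le n/2}k^{-1/2}\sim n^{-1}$, which is worse. So instead I will close the induction with a moving constant: set $B_n:=\max_{k\le n}|\h_k|\delta^{-k}k^{2}$. The estimate gives $B_n\le \max\bigl(B_{n-1},\,c B_{n-1}^2(1+\log n)\bigr)$ after rescaling by $\delta$. Rescaling $\h_n\to\eta^{n}\h_n$ makes $\h_1$, and hence $B_{n_0}$ for the first few $n$, arbitrarily small. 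The cleaner route, and the one I would actually take, uses the generating function. The bound from Step 1 shows $|\h_n|\le G_n$, where $G_1=|\h_1|$ and
\[
G_n=\sum_{k=1}^{n-1}\bigl(|\atwo|\min(k,n-k)+2|\aone|+2|\bone|\bigr)G_kG_{n-k}\le \sum_{k=1}^{n-1}\bigl(|\atwo| k+c\bigr)\bigl((n-k)+c\bigr)G_kG_{n-k}/1,
\]
so $G_n\le H_n$ where $H(x)=\sum_n H_n x^n$ satisfies an algebraic/first-order ODE of the form $H=|\h_1|x+(|\atwo|^{1/2}xH'+cH)^2$.

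\textbf{Step 3: analyticity of the majorant.} This ODE has the same type as (\ref{eq:F_differential_equation_V2}), but with all coefficients positive, which is precisely a parameter set inside (a rescaling of) $\mathcal{D}$. So Proposition~\ref{prop:Fn_sequence_properties}, together with Proposition~\ref{prop:F_grows}, applies. Its solution is analytic at $x=0$ with positive radius of convergence $x_c'>0$, and its coefficients obey $H_n\le C n^{\gamma}(x_c')^{-n}$. Taking $\delta=1/x_c'$ then gives the corollary. Alternatively, once analyticity of $H$ near $0$ is established (via Cauchy--Kovalevskaya for the implicit form $2\atwo xH'=\dots$ as in (\ref{eq:F_differential_equation_square_root})), Cauchy estimates give $H_n\le C\delta^n$ directly.

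The delicate point is ensuring that the majorant recursion is exactly of the admissible form (\ref{eq:fn_iterative_equation_consturction_method_ONLY}) with parameters in $\mathcal{D}$. Concretely, I must replace $|\atwo(n-k)k+\aone n+\bone|$ by $\atwo'(n-k)k+\aone' n+\bone'$ with $\atwo'=|\atwo|$, $\aone'=|\aone|+|\atwo|$ and $\bone'=|\bone|$, so that the triangle inequality applies and $\mathcal{D}$'s inequalities hold. That choice works, since $\atwo'\ge0$, $\aone'\ge -\atwo'$ and $\bone'> -2\aone'-\atwo'$, provided $\bone'>0$ or the right side is negative. Induction on $n$ then gives $|\h_n|\le \h_n'$, where $\h_n'$ is the $\mathcal{D}$-sequence with initial value $|\h_1|$. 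Proposition~\ref{prop:Fn_sequence_properties} bounds $\h_n'$, and the $\h_1$ dependence is absorbed via the scaling $\h_n\to\eta^{n}\h_n$.
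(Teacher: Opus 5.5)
Your closing paragraph is the paper's proof, and it is correct. You apply the triangle inequality to the recursion \emph{keeping the prefactor $1/(n-1)$}, and note that $(|\mu_2|,|\mu_1|,|\mu_0|)$ lies in $\mathcal{D}$. Your shift $\mu_1'=|\mu_1|+|\mu_2|$ is unnecessary, and the strict inequality fails only when all three parameters vanish, in which case $F_n=0$ for $n\ge 2$. Induction then gives $|F_n|\le F_n'$, and Proposition~\ref{prop:Fn_sequence_properties} plus the scalings absorbs $F_1$. Note, though, that this uses exactly what your opening said you would not need. More importantly, it is \emph{not} the same majorant as the generating-function route of Steps 2--3, which you call the one you would actually take. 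That route fails.

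The failure comes when you pass from $G_n$ to $H_n$. You bound the weight $|\mu_2|\min(k,n-k)+c'$ by $(ak+c)(a(n-k)+c)$ with $a=|\mu_2|^{1/2}$. This reintroduces exactly the factor of order $n$ you warned about in Step 1. The $k=1$ and $k=n-1$ terms alone give $H_n\ge 2|\mu_2||F_1|(n-1)H_{n-1}$ for $n\ge 3$. Hence $H_n\ge (2|\mu_2||F_1|)^{n-2}(n-1)!\,H_2$, and $H$ has zero radius of convergence whenever $\mu_2\neq 0$. Equivalently, $H=|F_1|x+(axH'+cH)^2$ is not of the form (\ref{eq:F_differential_equation_V2}). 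It has $H-|F_1|x$ where that equation has $xF'-F$, the term encoding the left-hand side $(n-1)F_n$, i.e.\ the division by $n-1$. So there is no analytic solution, and neither Proposition~\ref{prop:Fn_sequence_properties} nor a Cauchy--Kovalevskaya argument can apply. Discard that route.

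If you want the self-contained proof your opening promised, redo Step 2 with the exponent moved to any $\gamma<-2$, rather than to $-3/2$. Assume $|F_k|\le A k^{-3}R^k$ for $k<n$. Use $\frac{(n-k)k}{n-1}\le\min(k,n-k)$ and $\frac{|\mu_1|n+|\mu_0|}{n-1}\le 2|\mu_1|+|\mu_0|$. Splitting at $k=n/2$ gives the convolution bounds
\begin{equation*}
\sum_{k=1}^{n-1}\frac{1}{k^{3}(n-k)^{3}}\le \frac{16\zeta(3)}{n^{3}}, \qquad \sum_{k=1}^{n-1}\frac{\min(k,n-k)}{k^{3}(n-k)^{3}}\le \frac{16\zeta(2)}{n^{3}}.
\end{equation*}
Together these yield $|F_n|\le A^2KR^n n^{-3}$ with $K=16\bigl(\zeta(2)|\mu_2|+\zeta(3)(2|\mu_1|+|\mu_0|)\bigr)$. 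Taking $A=1/K$ and $R=K|F_1|$ closes the induction and gives $|F_n|\le K^{-1}n^{-3}(K|F_1|)^n$. This route is elementary and avoids the Darboux-type singularity analysis behind Proposition~\ref{prop:Fn_sequence_properties}, on which the paper's one-line proof rests.
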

	\begin{proof}
		This results follows from the recursive relation for $\h_n$ in Eq.~(\ref{eq:fn_iterative_equation_consturction_method_ONLY}), which leads to
		\beq
		|\h_{n\geq 2}| \leq \frac{1}{n-1}\sum_{k=1}^{n-1} \left(|\atwo|  (n-k)k  + |\aone| n + |\bone| \right) |\h_{k}||\h_{n-k}|.
		\eeq
		Since the right-hand side has positive parameters, $|\h_{n\geq2}|$ is bounded by a sequence that falls within the scope of Proposition~\ref{prop:Fn_sequence_properties}, with parameters in the domain $\mathcal{D}$ defined in (\ref{eq:domain_of_parameters}). It has been proved in that proposition that those sequences grow at most exponentially, completing the proof. 
	\end{proof}

	
	\section{Dynamics on the invariant manifold}
	\label{subsec:dynamics_within_the_manifold}
	
	The dynamics on the invariant manifold (\ref{eq:invariant_manifold_V2_section_construction}) can be classified for systems with parameters $\{\atwo,\aone,\bone\}\in\mathcal{D}$ defined in (\ref{eq:domain_of_parameters}). We recall that the family of models constructed in Proposition~\ref{prop:existence_invariant_manifold} is very large, and we now focus on models structurally akin to those arising as weakly nonlinear approximations for nonlinear dispersive Hamiltonian PDEs.
	
	\begin{proposition} \label{prop:Dynamics_on_the_manifold} Consider the models constructed in Proposition~\ref{prop:existence_invariant_manifold} with $\{\atwo,\aone,\bone\}\in\mathcal{D}$ defined in (\ref{eq:domain_of_parameters}). Then, their dynamics on the invariant manifold (\ref{eq:invariant_manifold_V2_section_construction}) are entirely determined by the parameters 
		$\{\atwo,\aone,\bone,\beta_2,\beta_1,\beta_0,S_{0000}\}$. Such dynamics present locked phases:
		\beq
		\arg \q_{n\geq 1}(t) = (n-1)\arg p(t) + \arg c(t),
		\eeq
		and three possible behaviors for the energy spectrum: 
		\begin{itemize}
			\item \underline{Stationary solutions:} corresponding to a static energy spectrum, 
			\beq
			E_n(t) = \mathrm{const}_n.
			\eeq
			
			\item \underline{Time-periodic solutions:} corresponding to an oscillatory energy spectrum, 
			\beq
			E_n(t) = E_n(t+ kT) \quad \text{with} \quad k=1,2,3,...
			\eeq
			
			\item \underline{Energy cascade solutions:} corresponding to a power-law formation in the asymptotic tail of the energy spectrum,
			\beq
			E_{n\gg1}(t)\sim |c(t)|^2 \, n^{\gamma} \left(\frac{x(t)}{x_c}\right)^{n-1} \xrightarrow[x\to x_c]{} |c|^2 n^{\gamma}.
			\eeq
		\end{itemize}

		\medskip
		
		\noindent Energy cascades further split into three types entirely determined by the parameters $\{\bone,\aone,\atwo\}$, while their existence depends additionally on the parameters
		$\{\beta_0,\beta_1,S_{0000}\}$:
		\begin{itemize}
			\item \textbf{Cascade I ($\atwo = \aone = 0$, $\bone \neq 0$):} Infinite-time formation of a positive power law, inducing unbounded growth of Sobolev norms:
			\beq
			\frac{E_{n\geq 1}(t)}{|c(t)|^2} \xrightarrow[t\to\infty]{} n, \qquad \text{and} \qquad H^{s>1/2} \xrightarrow[t\to\infty]{} \infty. 
			\eeq
			
			\item \textbf{Cascade II ($\atwo = 0$, $\aone \neq 0$):} 
			Finite-time formation of a power law $n^{-1/2}$, inducing blow-up of Sobolev norms:
			\beq
			\frac{E_{n\gg1}(t)}{|c(t)|^2} \xrightarrow[t\to T]{} n^{-1/2}, \qquad \text{and} \qquad 	H^{s>1/2} \xrightarrow[t\to T]{} \infty.
			\eeq
			
			\item \textbf{Cascade III ($\atwo \neq 0$):} 
			Finite-time formation of a power-law $n^{-3/2}$, inducing blow-up of Sobolev norms:
			\beq
			\frac{E_{n\gg1}(t)}{|c(t)|^2}
			\xrightarrow[t\to T]{} n^{-3/2}, \qquad \text{and} \qquad H^{s\geq 3/4} \xrightarrow[t\to T]{} \infty.
			\eeq
			
		\end{itemize}
		
	\end{proposition}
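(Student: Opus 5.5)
We first derive the closed ODE system for $(b,c,p)$ on the manifold. We substitute the constraints (i)--(iii) of Proposition~\ref{prop:existence_invariant_manifold} into Eqs.~(\ref{eq:b_equation_uncomplete})--(\ref{eq:c-p_equation_uncomplete}). The sums over $m$ then collapse to generating-function expressions: $\sum_m S^{(2)}_{0m0m}x^m=(\aone xF'+\bone F)$, $\sum_m \sum_k S^{(1)}_{0mk(m-k)}x^m = xF'-F$ (by (\ref{eq:Generating_Function_Condition_2}) with $\ah=1,\bh=-1$), and $\sum_m x^m\sum_k S^{(0)}_{nmk(n+m-k)}=\h_n[(\beta_0+\beta_1 n)F+(\beta_1+\beta_2 n)xF']$. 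Matching the coefficients of $n^0\h_n$ and $n^1\h_n$ in (\ref{eq:c-p_equation_uncomplete}) gives explicit equations for $i\dot p/p$ and $i\dot c$, and (\ref{eq:b_equation_uncomplete}) gives $i\dot b$. These involve only $\{\atwo,\aone,\bone,\beta_i,S_{0000}\}$ and $F(x),F'(x)$, which proves the first claim. Phase locking is then immediate from $\q_n=c\,p^{n-1}$.

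Next we reduce the system to one degree of freedom. The conserved quantities restricted to the manifold are $N=|b|^2+|c|^2F(x)/x$ and $E=|c|^2 xF'(x)/x$, and $\mathcal H$ is a quartic expression in $b,c,p$ that also involves $F,F'$. The system is invariant under the two phase symmetries $\q_n\to e^{i(\theta_1 n+\theta_0)}\q_n$. Using these, we eliminate $|b|^2$ and $|c|^2$ in favour of $x$, and we keep a single relative phase $\psi=\arg(\bar b^2 c^2/p)$, which appears in $\mathcal H$ through a term $\propto \mathrm{Re}(\bar b^2c^2/p)(xF'-F)$. The result is a planar system in $(x,\psi)$ with a first integral, i.e.\ $\dot x^2=V(x)$ for an explicit effective potential $V$ built from $N,E,\mathcal H,F(x),xF'(x)$. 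Proposition~\ref{prop:F_grows} guarantees that $F$ and $xF'$ are smooth and increasing on $[0,x_c)$, so $V$ is smooth there. The trichotomy follows from the usual one-dimensional analysis. A double root of $V$ gives a stationary solution. Two simple roots inside $[0,x_c)$ give a periodic motion in $x$, and hence a periodic $E_n=n\h_n|c|^2x^{n-1}$, because $|c|^2$ is a function of $x$. If $V>0$ up to $x_c$, then $x\to x_c$, and Proposition~\ref{prop:Fn_sequence_properties} gives $E_n/|c|^2\sim n^{1+\gamma}(x/x_c)^{n-1}$. This yields the exponents $1$, $-1/2$, $-3/2$ for $\gamma=0,-3/2,-5/2$.

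The main obstacle is to decide whether $x\to x_c$ takes infinite or finite time, and to check Sobolev divergence. Near $x_c$ we expand $V$ using the local form of $F$. For Type I, $F=x/(1-x)$ and $F_c=\infty$. Energy conservation forces $|c|^2\sim E(1-x)^2$, and we expect $V\sim(1-x)^2$, which gives exponential but infinite-time approach. For Types II and III, $F(x_c)=F_c$ is finite. For Type II, $xF'$ diverges like $(1-x/x_c)^{-1/2}$. For Type III, $xF'$ is finite. In both cases we expect $V(x_c)>0$, so $x_c$ is reached in finite $T$. Whether it is reached depends on the initial data and on $\{\beta_0,\beta_1,S_{0000}\}$, through the sign of $V$ on $(x_0,x_c)$. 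Establishing $V(x_c)>0$ (rather than a turning point) for an open set of data is the delicate step. I would first try explicit examples from Proposition~\ref{prop:Fn_explicit_expressions}, and then a perturbative argument in $N,E,\mathcal H$. Finally, $H^{s}$ on the manifold equals $|b|^2+|c|^2\sum(n+1)^{2s}\h_nx^{n-1}$, and it diverges as $x\to x_c$ exactly when $\sum n^{2s+\gamma}$ diverges, with the prefactor $|c|^2$ treated alongside. For Type I, $|c|^2\sim(1-x)^2$ against $\sum n^{2s}x^n\sim(1-x)^{-2s-1}$ gives divergence iff $s>1/2$. For Type II, $|c|^2\sim (1-x/x_c)^{1/2}$ against a sum behaving like $(1-x/x_c)^{-(2s-1/2)}$ gives $s>1/2$. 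For Type III, $|c|^2$ stays finite and $\sum n^{2s-5/2}$ diverges iff $s\ge3/4$.
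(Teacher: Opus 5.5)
Your derivation of the reduced $(b,c,p)$ equations is fine: the three generating-function identities you list are correct. So are the phase-locking statement, the one-dimensional potential picture, and the Sobolev bookkeeping with $|c|^2=E/F'(x)$ and thresholds $s>1/2$, $s>1/2$, $s\ge 3/4$ for Types I, II, III. All of this agrees with the paper. There is one small error in the reduction. $\bar b^2c^2/p$ is not invariant under $\q_n\to e^{i(\theta_1 n+\theta_0)}\q_n$, which acts as $b\to e^{i\theta_0}b$, $c\to e^{i(\theta_0+\theta_1)}c$, $p\to e^{i\theta_1}p$, and it does not appear in $\mathcal H$. The one-zero-index couplings contribute $2\,\mathrm{Re}(b\bar c p)\,|c|^2(xF'-F)/x^2$ to $\mathcal H$, so the relative phase to keep is $\arg(b\bar c p)$.

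The genuine gap is the step you single out as delicate: the target you set there is false. Inserting the $\dot p$ equation into $\dot x=2\,\mathrm{Re}(\bar p\dot p)$ gives $\dot x=2\,\mathrm{Im}(b\bar cp)\,(2\atwo xF'+2\aone F-1)$, hence
\[
\dot x^2=4\,(2\atwo xF'+2\aone F-1)^2\bigl[\,|b|^2|c|^2x-(\mathrm{Re}\,b\bar cp)^2\bigr].
\]
Here $|b|^2,|c|^2$ are fixed by $N,E$, and $\mathrm{Re}\,b\bar cp$ is fixed by $W$, where $\mathcal H=\tfrac12\bigl(S_{0000}N^2+E(W+\beta_2)\bigr)$. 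The prefactor is the $xF'$-derivative of the relation $\atwo(xF')^2+(2\aone F-1)xF'+(\bone F+1)F=0$. It vanishes exactly at $F=F_c$ in Types II and III, so there $V(x_c)=0$ always, never $V(x_c)>0$. Finite time comes instead from the order of vanishing:

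\begin{itemize}
\item In Type II, $|b|^2|c|^2x=O(1/F')\to 0$ while $\mathrm{Re}(b\bar cp)\to x_c(W-4\aone N)/4$. So $V<0$ just below $x_c$ unless $W=4\aone N$, and on that level set $V\sim(x_c-x)^{3/2}$.
\item In Type III the bracket is finite at $x_c$, so $V$ has a simple zero there. That is exactly what you call a turning point, and it is reached in finite time. Positivity of the bracket restricts $W$ to an interval.
\item In Type I the prefactor is $1$ and $V(1)=-W^2/4$, so your $V\sim(1-x)^2$ holds only on $W=0$.
\end{itemize}

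Consequently, in Types I and II cascades occupy a codimension-one level set of $\mathcal H$. No perturbative argument in $(N,E,\mathcal H)$ can produce an open set of cascading data there. Existence has to be shown as in the paper: solve $W=0$ (resp. $W=4\aone N$) for $c$ in terms of $b,p$, then check the sign of the potential on $(x_0,x_c)$ for explicit values of $S_{0000},\beta_0,\beta_1$.
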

	
	\begin{proof}
		The first part of the proposition is proven by deriving the equations of motion on the invariant manifold:
		\begin{align}
			& i\dot{p} = p \left[2 \aone
			N+\beta_2
			E+ (\aone F + \atwo x F') \frac{2 b \bar{c} p}{x}+(\beta_1-2 \aone) E \frac{F}{xF'}\right] + \bar{b} c,  \label{eq:pdot}\\
			& i\dot{b} = b \left[C_{0000} \left(N-\frac{F}{x F'} E\right) +\left(\aone + \bone \frac{F}{xF'}\right) 2 E \right]+\left(1-\frac{F}{xF'}\right) \frac{\bar{p}c}{x} E,  \label{eq:bdot}\\		
			& i\dot{c} = c
			\left[(\beta_1+\beta_2)E+ (\beta_0+\beta_1-2
			(\aone+\bone))E\frac{F}{x F'}+2 N (\aone+\bone)\right] \nonumber \\
			& \qquad +\left(\aone+\atwo + (\aone+\bone)\frac{F}{x F'}\right)  2 b p E. \label{eq:cdot}
		\end{align}
		They follow directly from combining Eqs.~(\ref{eq:b_equation_uncomplete})-(\ref{eq:c-p_equation_uncomplete}) with the conditions for the coefficients in Eqs.~(\ref{eq:Condition_0})-(\ref{eq:Condition_3}). For these expressions, we used $x:=|p|^2$, the generating function $F(x) = \sum_{n=1}^{\infty} \h_n x^n$ and the conserved quantities in (\ref{eq:conserved_quantities}) written in terms of the manifold variables:
		\beq
		N = |b|^2 + \frac{F(x)}{xF'(x)}E, \qquad \text{and} \qquad E = |c|^2 F'(x).
		\label{eq:conserved_quantities_within_manifold}
		\eeq
		We retained $xF'(x)$ to make the expressions more compact, although this can be entirely written in terms of $F(x)$ through its differential equation in (\ref{eq:F_differential_equation}). 
		
		The external dependence of the above system of equations is entirely on the parameters 	$\{\atwo,\aone,\bone,\beta_2,\beta_1,\beta_0,S_{0000}\}$. Consequently, all Hamiltonian systems constructed in Proposition~\ref{prop:existence_invariant_manifold} with the same values of these parameters  present the same behavior on the invariant manifold. This concludes the first part of the proof.
		
		The proofs of the second and third parts of the proposition ---the classification of the dynamics on the invariant manifold and the further separation of energy cascades into three types--- are analogous for all three kinds of systems. We present the argument for Type I systems and then highlight the key differences that arise in the cases of Type II and Type III systems.

		\noindent {\bf - Type I ($\atwo=\aone=0$ and $\bone = 1$):} The key observation is that the energy spectrum is entirely controlled by $x(t)$ for initial conditions on the invariant manifold:
		\begin{equation}
			E_{n\geq 1}(t) = n\, x(t)^{n-1}\, (1-x(t))^2 E.
		\end{equation}
		This expression follows by substituting $|c(t)|^2$ in terms of $x(t)$ with \eqref{eq:conserved_quantities_within_manifold} and $F(x)=x/(1-x)$ coming from (\ref{eq:Fx_1}). 
		
		The study of the dynamics is therefore reduced to deriving and analyzing an evolution equation for $x(t)$. Such equation $\dot{x}(t) = \bar{p} \dot{p} + p \dot{\bar{p}}$ comes from combining $\dot{p}$ in \eqref{eq:pdot} with the conserved quantities to rewrite some expressions in terms of those quantities and $x(t)$; see \cite{BE,BG} for similar developments. The resulting equation has the form of a one-dimensional particle moving in a quartic potential:
		\begin{equation}
			\left(\frac{dx}{dt} \right)^2 + V(x) = 0,
			\label{eq:xdot}
		\end{equation}
		\begin{equation}
			V(x) = C_0 + C_1 (1-x) + C_2 (1-x)^2 + C_3 (1-x)^3 + C_4 (1-x)^4,
		\end{equation}
		with coefficients
		\begin{align}
			&C_0 = \frac{W^2}{4},\nonumber\\
			&C_1 = \left((S_{0000}-2) N - \beta_1 E\right) W, \nonumber\\
			&C_2 = \frac{1}{4} \left(\left(2 (S_{0000}-2) N-2 \beta_1 E\right)^2-2 E W
			\left(\beta_0+S_{0000}-4\right)\right)-4 N E, \nonumber\\
			&C_3 = 4 E (N+E) + \left(\beta_0+S_{0000}-4\right) \left(\beta_1 E-(S_{0000}-2)
			N\right) E, \nonumber\\
			&C_4 = \frac{1}{4} E^2 \left(\left(\beta_0+S_{0000}-4\right)^2-16\right).
		\end{align}
		The quantity $W$ is derived from the Hamiltonian, $\mathcal{H} = \frac{1}{2}(S_{0000}N^2 + E (W+\beta_2))$, with
		\beq
		W = \left((4 - 2S_{0000}) N + 2\beta_1 E\right)(1-x)  +  \left(\beta_0+S_{0000}-4\right)\, E\, (1-x)^2  + 2(b \bar{c} p+\bar{b} c \bar{p}).
		\eeq
		This quantity is conserved on the invariant manifold, provided that the coupling $S_{0000}$ and the parameters $\{\atwo,\aone,\bone,\beta_2,\beta_1,\beta_0\}$ remain constant. Its conservation is independent of whether the couplings $S_{nmkl}^{(0)}$ (those with no zero indices) evolve in time. This follows directly from the conservation of the Hamiltonian on the invariant manifold, proved in Appendix~\ref{app:Hamiltonian_conservation}. This observation is the key ingredient that extends this proposition to systems with time-dependent couplings.

		Equation (\ref{eq:xdot}) restricts the evolution of $x(t)$ to negative values of the potential $V(x)$ on the admissible domain $x\in[0,1)$, the domain that guarantees finite conserved quantities $N$ and $E$. This leads to three possible behaviors regarding the shape of the potential as illustrated in Figure~\ref{fig:tres_potenciales}:

        \begin{figure}
            \centering
            \includegraphics[width=\textwidth]{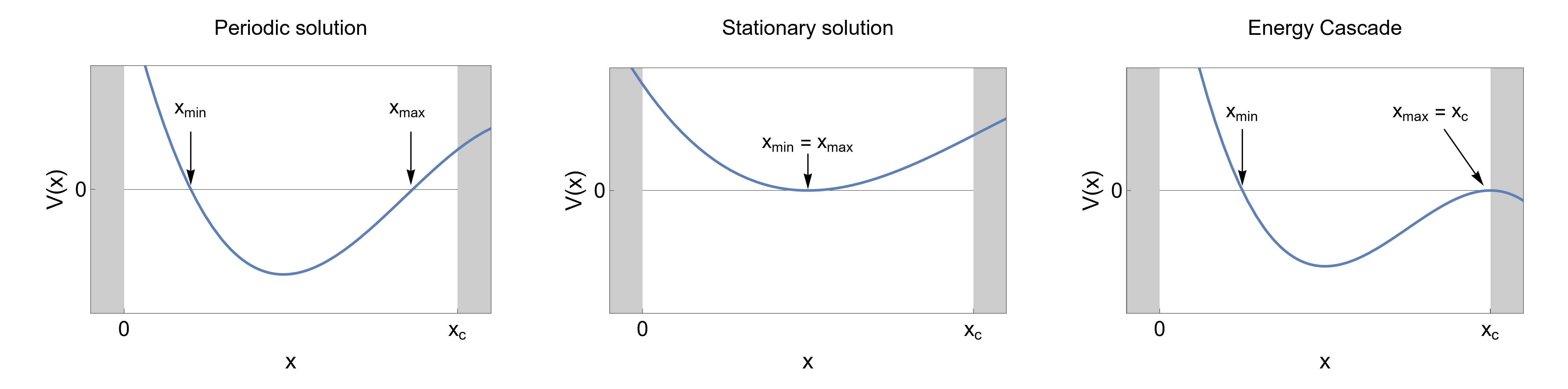}
            \caption{Representative forms of $V(x)$ associated with periodic, stationary, and cascade solutions.}
            \label{fig:tres_potenciales}
        \end{figure}

		\begin{itemize}
			\item {\it Stationary solutions:} These correspond to time-independent energy spectra, attained when $x(t)$ coincides with a vanishing equilibrium point of the potential: $V(x)=V'(x)=0$. The existence of such solutions is trivial by the single-mode initial data $a_n(0) = \delta_{n,0}$, which belongs to the manifold ($b(0)=1$, $c(0)=p(0)=0$) and is an stationary solution for all systems of the form (\ref{eq:Resonant_Equation}). More elaborate solutions can be constructed by solving the constraints $V(x) = V'(x)=0$; see \cite{BE,BG,GG} for examples.
			
			\item {\it Time-periodic solutions:} These arise when $x(0)$ is between two distinct roots of $V(x)$ in $[0,1)$ and is negative between them. It induces $x(t)$ to oscillate between those roots, and consequently the energy spectrum becomes time-periodic. This is a common situation, so their existence can be seen by a simple evaluation of the potential for a couple of initial conditions on the manifold.
			
			\item {\it Energy cascades:} These correspond to solutions that starting with $x(0)\in[0,1)$ it evolves to $x(t)\to 1$, causing the exponential decay of the energy spectrum to disappear in favor of a power-law distribution:
			\beq
			\frac{E_{n\geq 1}(t)}{|c(t)|^2} \ =\ n\, x(t)^{n-1} \ \underset{x\to 1}{\longrightarrow}\ n.
			\eeq
			For such asymptotic power, Sobolev norms $H^{s>1/2}$ grow to infinity, as can be seen from their expression that we restate here:
            \beq
	       H^{s} := \left(\sum_{n=0}^{\infty}(n+1)^{2s}|\alpha_n|^{2}\right)^{1/2}.
	       \eeq
            By contrast, lower norms remain finite, including the conserved quantities $H^{1/2} = \sqrt{N+E}$. The conservation of $E$ forces the amplitude $|c(t)|^2$ to decay to zero at a rate that compensates the divergence of the series; see \cite{BG} for a detailed discussion in deterministic systems. Consequently, norms below $s=1/2$ remain finite as well.
			
			We now analyze the existence of such energy cascades. The potential at the critical point is $V(x=1)=W^2/4\geq 0$. Therefore, cascades can only exist for
			\begin{equation}
				W = 0.
				\label{eq:Energy_cascade_condition_type_I}
			\end{equation}
			Such condition yields $V'(1)=0$ and therefore
			\begin{equation}
				V(x) = (1-x)^2\left[C_2 + C_3\,(1-x) + C_4\,(1-x)^2\right]_{W=0}.
				\label{eq:potential_cascade_Type_I}
			\end{equation}
			Cascades are then associated with an equilibrium point of the potential at $x=1$, which must be negative immediately to the left of that point (a property that we have not proved yet). As a consequence, the power-law spectrum can only be reached in infinite time:
			\begin{equation}
				x(t)\underset{t\to\infty}{\longrightarrow}1
				\qquad\Rightarrow\qquad
				\frac{E_{n\geq1}(t)}{|c(t)|^2}
				\underset{t\to\infty}{\longrightarrow} n.
			\end{equation}
			
			Initial data satisfying condition~(\ref{eq:Energy_cascade_condition_type_I}) can be constructed by expressing $W$ in terms of $b$, $c$, and $p$, and solving for $c$:
			\beq
			c = b p (1-x) \frac{-2 x+\sqrt{2 x} e^{i \lambda } \sqrt{2 \beta_1 (S_{0000}-2)+2 + \left(\left(\beta_0+2\right)
					S_{0000}-2\left(\beta_0+1\right)-S_{0000}^2\right)(1-x)}}{x \left(2 \beta_1+\left(\beta_0-S_{0000}\right)(1-x)\right)}
			\label{eq:c_conditions_Type_I}
			\eeq
			with $\lambda\in[0,2\pi)$ and non-negative discriminant.
			
			Once restricted to these initial conditions, the existence of cascades further requires one to analyze $C_2$, $C_3$, and $C_4$ to guarantee that $x(t)\to 1$ can be reached from the initial value. It introduces a complicated dependence on the parameters $S_{0000}$, $\beta_1$, and $\beta_0$, which we do not analyze in full generality, but it can be easily solved for concrete values. For instance, cascades exist for $S_{0000}=1$ and $\beta_1=\beta_0=0$. The initial data constructed with Eq.~(\ref{eq:c_conditions_Type_I}) and $x_0\in(0,1)$ leads to a potential of the form \eqref{eq:potential_cascade_Type_I}, with $V(0)\geq 0$, $V(x_0)\leq 0$, and $C_4 < 0$. The shape of the potential implies that there is no other root between $x_0$ and $x=1$, so $x(t)\to 1$. A family of models with $S_{0000}=1$ and $\beta_1=-\beta_0=(1-\beta)$ has been studied in \cite{BE}, where further details can be found. This completes the proof for Type I systems. 
		\end{itemize}

		\noindent {\bf - Type II  ($\atwo =0$ and $\aone>0$) and Type III ($\atwo > 0$):} The analysis of these systems follows the same logic as before,  with the difference that $F(x)$ does not admit an explicit expression in general, so we work with $F(t):=F(x(t))$ instead of $x(t)$.
		
		$F(x)$ is a growing function in the interval $x\in[0,x_c)$ (Proposition~\ref{prop:F_grows}), where $x_c$ controls the asymptotic behavior of the sequence $\h_{n\gg 1}\sim n^{\gamma} x_c^{-n}$ (Proposition~\ref{prop:Fn_sequence_properties}). Consequently, the inverse function $x(F)$ is well defined in $F\in[0,x_c)$ where we call $F_c=F(x_c)$. The energy spectrum is then written in the form 
		\beq
		E_n = n\, \h_n\, x(F)^{n}\  \frac{E}{xF'(x)},
		\eeq
		which can be entirely written in terms of $F(t)$ using the equations in (\ref{eq:F_differential_equation_V2}) and (\ref{eq:F_differential_equation_square_root}). An equation for this variable can be derived by combining $\dot{x} = \bar{p}\dot{p} + p \dot{\bar{p}}$ with the equation for $x F'(x)$ from (\ref{eq:F_differential_equation_V2}):
		\beq
		\left(\frac{dF(t)}{dt} \right)^2 + \tilde{V}(F(t)) = 0.
		\label{eq:Fdot}
		\eeq
		The potential takes the general form
		\begin{align}
			\tilde{V}(F) =& \,\frac{(2 \atwo x F' + 2\aone F - 1)^2}{4}\, \Bigg(16 (E^2 F-N E xF') + \frac{(xF')^4}{(F-xF')^2}\Bigg[4 \aone N-W \\
			 &+ \left[(2\beta_1-4 \aone)E+\left(4\bone - 2S_{0000}\right) N \right] \frac{F}{xF'} + (\beta_0+S_{0000}-4 \bone) E \left(\frac{F}{xF'}\right)^2\Bigg]^2\Bigg).\nonumber
		\end{align}
		and the conserved quantity $W$ comes from the  Hamiltonian, $\mathcal{H} = \frac{1}{2}(S_{0000}N^2 + E (W+\beta_2))$,
		\begin{align}
			W = &4 \aone N +\left[(2\beta_1-4 \aone)E+ (4 \bone-2S_{0000}) N\right]\frac{F}{xF'}\\
& \hspace{2cm}+(\beta_0+S_{0000}-4 \bone)E\left(\frac{F}{xF'}\right)^2+\left(1-\frac{F}{xF'}\right)\frac{2 (b\bar{c}p+\bar{b}c\bar{p})}{x}.\nonumber
		\end{align}
		
		The problem is again reduced to the motion of a one-dimensional particle moving in a time-independent potential. This leads to three possible behaviors for the energy spectrum on the manifold: stationary solutions, corresponding to initial conditions coinciding with vanishing equilibrium points of the potential ($\tilde{V}(F_{eq})=\tilde{V}'(F_{eq})=0$); time-periodic solutions, where $F(t)$ evolves between two roots of $\tilde{V}(F)$ in the interval $F\in[0,F_c)$; and energy cascades, corresponding to $F(t)$ approaching the critical value $F_c$, which makes the energy spectrum to develop a power-law tail:
		\beq
		\frac{E_{n\gg 1}(t)}{|c(t)|^2} \sim n^{\gamma+1} \left(\frac{x(F(t))}{x_c}\right)^{n-1} \ \underset{F \to F_c}{\longrightarrow} \ n^{\gamma+1},
		\eeq
		where numerical factors have been omitted. The powers are $\gamma+1=-1/2$ and $-3/2$ for Type II and Type III systems respectively, as established by Proposition~\ref{prop:Fn_sequence_properties}.
		
		Further analysis of energy cascades proceeds separately for each type of system:
		\begin{itemize}
			\item \underline{\em Type II ($\atwo = 0$, $\aone>0$):} The differential equation for the generating function $F(x)$ is
			\beq
			xF'(x) = \frac{F(1+\bone F)}{1-2\aone F},
			\eeq
			which reduces the potential to a quartic polynomial 
			\beq
			\tilde{V}(F) = C_0 + C_1 (F_c-F) + C_2 (F_c-F)^2 + C_3 (F_c-F)^3 + C_4 (F_c - F)^4
			\eeq
			where $F_c = 1/(2\aone)$ and the coefficients $C_i$ have particularly long expressions but follow directly from the above substitution. Energy cascades necessitate $\tilde{V}(F_c) = 0$, which is achieved for
			\beq
			W = 4 \aone N.
			\label{eq:Energy_cascade_condition_Type_II}
			\eeq
			However, for $E,N>0$ the potential satisfies $\tilde{V}'(F_c)>0$, indicating that energy cascades develop in finite time:
			\begin{equation}
				F(t)\underset{t\to T}{\longrightarrow} F_c
				\qquad\Rightarrow\qquad
				\frac{E_{n\geq1}(t)}{|c(t)|^2}
				\underset{t\to T}{\longrightarrow} n^{-1/2}.
			\end{equation}
			
			Proceeding as in the Type I case, condition (\ref{eq:Energy_cascade_condition_Type_II}) is satisfied provided
			\beq
			c = b p \frac{(8\aone + 4 \bone)F + e^{i\lambda} \sqrt{\Delta}}{2 F \left(S_{0000}+4 \mu_1 -\beta_0-2 \beta_1+\left(2 \beta_0 \mu_1-2 \beta_1 \mu_0-2
				S_{0000} \mu_1+4 \mu_1 \mu_0\right) F \right)},
			\label{eq:Type_II_initial_conditions}
			\eeq
			with
			\begin{multline}
				\Delta = \left(\frac{2 \mu_0}{\mu_1}+4\right)^2 + 	\frac{8 \left(\mu_0+2 \mu_1\right) \left(S_{0000} \left(\beta_1-2 \mu_1\right)-2 \left(\beta_1 \mu_0-2 \mu_1 \mu_0+\mu_0+2 \mu_1\right)\right)}{\mu_1} \left(F_c - F\right)\\ + \left(\left(4 \mu_0+8 \mu_1\right)^2-16 \left(S_{0000}-2 \mu_0\right)
				\left(-\beta_0 \mu_1+2 \left(\mu_0+\mu_1\right) \left(\beta_1-2 \mu_1\right)+S_{0000} \mu_1\right)\right) \left(F_c - F\right)^2\\
				+ 32 \mu_1 \left(S_{0000}-2 \mu_0\right) \left(\mu_0 \left(\beta_1-2 \mu_1\right)-\beta_0 \mu_1+S_{0000} \mu_1\right) \left(F_c - F\right)^3
			\end{multline}
			and $\lambda\in[0,2\pi)$ and $\Delta\geq 0$. 
			This condition alone is not sufficient to establish the existence of energy cascades. Additional restrictions on $\{S_{0000},\beta_1,\beta_0\}$ are required. For example, let $S_{0000}=1$, $\bone=1$, and $\beta_0\in[-1,7]$. In this case
			\beq
			\tilde{V}(F_c)= 0, \quad \tilde{V}'(F_c)> 0, \quad \tilde{V}(0)\geq 0, \quad \text{and} \quad \tilde{V}(-1)\leq 0.
			\eeq
			It follows that the potential has a unique root in the interval $F\in[0,F_c)$, while $\tilde{V}(F)<0$ between this root and $F_c$. Consequently, the initial conditions~(\ref{eq:Type_II_initial_conditions}) with the above parameter values generate energy cascades. Such particular example confirms the existence of energy cascades. However, many other parameter choices producing energy cascades can be characterized using the same strategy. A complete description of the parameter constraints leads to considerably more involved expressions, which we do not present here.

			\item \underline{\em Type III ($\atwo > 0$):} This case follows the same logic as for Type II systems with a few differences, so we omit most of the expressions and just explain those differences. We first recall that the generating function satisfies equation~\eqref{eq:F_differential_equation_square_root}:
			\beq
			2 \mu_2 x\frac{dF}{dx} = 1 -2 \mu_1 F - \sqrt{\left(1-\frac{F}{F_-}\right) \left(1-\frac{F}{F_+}\right)}  \qquad \text{and} \qquad F(0)=0,
			\eeq
			with
			\beq
			F_{\pm} = \frac{1}{2\left(\mu_1+\mu_2 \pm \sqrt{\mu_2 (\mu_0+\mu_2+2 \mu_1)}\right)},
			\eeq
			and the value where the generating function loses its smoothness is $F_c = F_+$ for the range of parameters we are considering $\{\atwo,\aone,\bone\}\in\mathcal{D}$.
			
            In this case, the potential does not have a polynomial form in $F$; however, it does satisfy $\tilde{V}(F_c)=0$ for any $W$. The constraint on $W$ comes from requiring that $\tilde{V}'(F_c)>0$, which is quadratic in $W$ --- indicating that Type III systems admit a range of values of $W$ where energy cascades may exist for given values of $N$ and $E$. This is an important difference with the previous cases, where cascades existed only for a specific value of $W$.
            
            Denoting $W_{\pm}$ the roots of $\tilde{V}'(F_c)$, a necessary condition for the existence of cascades is either $W\in[W_-,W_+]$ or $W\in\mathbb{R} - (W_-,W_+)$, depending where $\tilde{V}'(F_c)$ becomes positive. This, in turn, is dictated by the value of the parameters $\{\atwo,\aone,\bone,S_{0000},\beta_1,\beta_0\}$. Systems of this type have been studied in \cite{BG}, reporting energy cascades for the case $S_{0000} = 1$, $\{\atwo = \aone^2,\ \aone = \frac{s-1}{2},\ \bone = 1\}$, and $\beta_2=\beta_1=\beta_0=0$. This particular example confirms the existence of energy cascades. However, many other choices of the parameters lead to the same behavior. Note that for $W$ values where $\tilde{V}(F_c)>0$, the energy spectrum develops a power-law tail in finite time. 
		\end{itemize}
		
		In summary, we have shown the following: the dynamics on the invariant manifold are entirely governed by the parameters 	$\{\atwo,\aone,\bone,\beta_2,\beta_1,\beta_0,S_{0000}\}$; solutions on the manifold can be categorized as either stationary, time-periodic, or energy cascades; energy cascades can be further divided into three types based on the power-law formation time and asymptotic exponent.
		
	\end{proof}
	
	
	\section{Minimally structured Hamiltonian systems}
	\label{subsec:construction_predominantly_random_Hamiltonian_systems}
	
	In the previous sections, we presented a systematic construction and analysis of Hamiltonian systems with an invariant manifold and phase-coherent turbulent energy dynamics. We now show that this construction supports a large-scale randomization of the couplings without disrupting the invariant manifold and its energy cascade solutions.
	
	The key observation is that the existence of the invariant manifold constrains only a small subset of the nonlinear couplings, while the remaining couplings may be chosen arbitrarily, even as stochastic processes, without affecting the dynamics on the invariant manifold.
	
	Proposition~\ref{prop:existence_invariant_manifold} leaves unconstrained the six parameters $\{\atwo,\aone,\bone,\beta_2,\beta_1,\beta_0\}$, the coupling $S_{0000}$, and all couplings $S_{nmkj}^{(0)}$ except for the diagonal entries $S_{nmnm}^{(0)}$. Among these `degrees of freedom,' we choose the parameters $\{\atwo,\aone,\bone,\beta_2,\beta_1,\beta_0, S_{0000}\}$ to be deterministic and time-independent, since they determine the dynamics on the invariant manifold. By contrast, the remaining offdiagonal couplings $S_{nmkj}^{(0)}$ may be specified arbitrarily, provided they are real and satisfy the symmetries $n\leftrightarrow m$, $k\leftrightarrow j$, and $(n,m)\leftrightarrow (k,j)$. In particular, these couplings may be sampled from arbitrary probability distributions or evolve according to arbitrary stochastic processes. No assumptions are required on the underlying distributions or stochastic dynamics, since Propositions~\ref{prop:existence_invariant_manifold}, \ref{prop:Fn_sequence_properties}-\ref{prop:F_grows}, and \ref{prop:Dynamics_on_the_manifold} depend only on algebraic relations among the couplings. Consequently, the dynamics on the invariant manifold are identical for deterministic, random, or stochastic realizations of $S_{nmkj}^{(0)}$. The behavior of the system away from the invariant manifold, however, depends on the particular realization of these couplings.

	This family of Hamiltonian systems is constructed according to the following protocol:
	\begin{enumerate}[label=\textbf{Step \arabic*:},
		leftmargin=1.6cm]
		\item Specify the parameters $\{\atwo,\aone,\bone,\beta_2,\beta_1,\beta_0, S_{0000}\}$.
		\item Construct the sequence $\{\h_n\}$ using either the recursive relation
		(\ref{eq:fn_iterative_equation_consturction_method_ONLY}) or one of its explicit expressions given in Proposition~\ref{prop:Fn_explicit_expressions}.
		\item Determine all couplings with one or two vanishing indices through
		\beq
		S_{nmkj} = \left[
		\atwo (nm + kj) + \aone (n+m) + \bone
		\right]
		\frac{\h_n \h_m \h_k \h_j}{\h_{n+m}} 
		\label{eq:S1_expression_deterministic_couplings}
		\eeq 
		for $nmkj=0$ with $n+m\neq0$.
		\item Choose the couplings $S_{nmkj}^{(0)}$, except the diagonal ones $S_{nmnm}^{(0)}$, from any probability distribution or stochastic process, subject only to the permutation symmetries in the indices given in (\ref{eq:Resonant_System_S_form_section_construction}).
		
		\item Determine the remaining diagonal entries $S_{nmnm}^{(0)}$ from the linear constraints (\ref{eq:Condition_3_ONLY}), thereby inheriting the deterministic or random character of the previous couplings:
		\begin{itemize}
			\item For $n+m$ even:
			\begin{multline}
				S_{nmnm}^{(0)} = \frac{1}{2-\delta_{m,n}}\Bigg{[}  \left(\beta_0 +\beta_1(n+m) + \beta_2 n m\right) \h_n \h_m \\ - (1-\delta_{m,n})S^{(0)}_{nm\frac{n+m}{2}\frac{n+m}{2}} - \sum_{k=1}^{m-1}S^{(0)}_{nmk(n+m-k)} - \sum_{k=1}^{n-1}S^{(0)}_{nmk(n+m-k)}\\
				- (1-\delta_{m,n-2}) \sum_{k = m+1}^{\frac{n+m-2}{2}}S^{(0)}_{nmk(n+m-k)} - (1-\delta_{n,m-2}) \sum_{k = n+1}^{\frac{n+m-2}{2}}S^{(0)}_{nmk(n+m-k)} \Bigg{]}. 
				\label{eq:Snmnm_even}
			\end{multline}
			\item For $n+m$ odd:
			\begin{multline}
				S_{nmnm}^{(0)} = \frac{1}{2}\Bigg{[} \left(\beta_0 +\beta_1(n+m) + \beta_2 n m\right) \h_n \h_m - \sum_{k=1}^{m-1}S^{(0)}_{nmk(n+m-k)} - \sum_{k=1}^{n-1}S^{(0)}_{nmk(n+m-k)} \\
				- (1-\delta_{m,n-1}) \sum_{k = m+1}^{\frac{n+m-1}{2}}S^{(0)}_{nmk(n+m-k)} - (1-\delta_{n,m-1}) \sum_{k = n+1}^{\frac{n+m-1}{2}}S^{(0)}_{nmk(n+m-k)}\Bigg{]}.
				\label{eq:Snmnm_odd}
			\end{multline}
		\end{itemize}
		\item In case one considers systems of the form (\ref{eq:Resonant_Equation}), as we do, an additional step must be taken to retrieve the coefficients:
		\beq
		C_{nmkj} = \frac{S_{nmkj}}{\sqrt{\h_n\h_m\h_k\h_j}}.
		\eeq
	\end{enumerate} 
	
	This construction results in nonlinear random Hamiltonian systems with just an infinitesimal fraction of rigidly prescribed couplings. Such a difference in the size of the random and deterministic sets of couplings is quantified by counting\footnote{The Kronecker deltas arise because $S_{1111}^{(0)}$ and $S_{2121}^{(0)}$ are deterministic, unlike the remaining diagonal entries.} the different deterministic and random entries of $S_{nmk(n+m-k)}$ at each resonant level $M:=n+m\ge1$:
	\begin{equation}
		\mathscr{D}_M = \begin{cases}
			\frac{M}{2}+1 + \delta_{M,2} & \text{even } M,\\
			\frac{M+1}{2} + \delta_{M,3}   & \text{odd } M,
		\end{cases}
		\qquad \qquad
		\mathscr{R}_M = \begin{cases}
			\frac{M(M+2)}{8} - \delta_{M,2} & \text{even } M,\\
			\frac{M^2-1}{8} - \delta_{M,3} & \text{odd } M.
		\end{cases}
		\label{eq:counting_freedoms_and_constraints}
	\end{equation}
	Denoting as $\mathscr{N}_D$ and $\mathscr{N}_R$ the number of deterministic and random couplings across all resonant levels up to $M$ and taking the limit $M\to\infty$, the above linear versus quadratic growth confirms that the set of deterministic couplings is a negligible fraction of the total:
	\beq
	\mathscr{F} = \frac{\mathscr{N}_D}{\mathscr{N}_R + \mathscr{N}_D} \underset{M\to \infty}{\longrightarrow} 0.
	\eeq
	Consequently, the systems constructed in this section provide explicit examples showing that only an infinitesimal fraction of structured nonlinear couplings suffices to guarantee organized energy dynamics, and in particular coherent energy cascades. This conclusion is established on the invariant manifold by Propositions~\ref{prop:existence_invariant_manifold} and \ref{prop:Dynamics_on_the_manifold}, and in the next section, we extend it to the full Hamiltonian system through numerical simulations.

	
	\section{Energy cascades beyond the invariant manifold}
	\label{sec:numerics}
	
	Now that we have established coherent energy cascades on the invariant manifold of our disordered Hamiltonian systems, we will demonstrate that they are not exclusive of those analytically tractable conditions. This is particularly important to study the robustness of coherent cascades, since on the manifold the randomness of the couplings cancels out, while off the manifold the dynamics are under direct influence of the random couplings. We initialize all simulations with random initial data, which are of a completely different nature from the analytically tractable configurations on the manifold and constitute a stringent test for the emergence of coherence.
	
	We consider three classes of systems: {\bf (i)} We first simulate the disordered models constructed in Section~\ref{subsec:construction_predominantly_random_Hamiltonian_systems}, observing the emergence of coherent dynamics, and in particular, coherent energy cascades; {\bf (ii)} Then, we demonstrate that the resonance condition $n+m=k+j$ for the mode couplings in our models is by itself not enough to induce coherent dynamics. To do so, we simulate systems with fully random couplings, without any structured subsets; {\bf (iii)} Finally, we demonstrate that systems with structured subsets of couplings other than the ones presented in the previous sections also give rise to organized energy transfer starting with disordered initial data. 
	
	We perform this exploration starting with a family of random initial data that concentrates the energy in low modes, distributes the phases uniformly on $[0,2\pi)$, and whose amplitudes  are given by independent Gaussian distributions multiplied by a decaying exponential factor for modes beyond an arbitrary cut-off:
	\beq
	\alpha_{n}(0) = \begin{cases}
		g_n & \text{for } n \leq \mathcal{M},\\
		g_n\ e^{-\rho (n-\mathcal{M})} & \text{for } n > \mathcal{M},
	\end{cases} 
	\label{eq:random_initial_data}
	\eeq
	where $\rho>0$ controls the initial exponential suppression of high modes, $g_n$ are independent complex gaussian random variables with mean $0$ and variance $1$, and typically $\mathcal{M}=10$.

	\subsection{Minimally structured systems}
	\label{subsec:numerical_simulations_minimally_structured_systems}
	
	We consider two classes of systems coming from  Section~\ref{subsec:construction_predominantly_random_Hamiltonian_systems}: {\bf (i)} systems whose couplings $C_{nmkj}^{(0)}$ are randomly distributed but fixed in time for each realization and {\bf (ii)} systems with those couplings evolving as stochastic processes. The randomization is introduced by taking the couplings to be real random variables,
    \beq
    C_{nmkj}^{(0)} = \xi_{nmkj},
    \eeq
    where the $\xi_{nmkj}$ are independent random variables subject to the symmetries: $C_{nmkj}^{(0)}=C_{mnkj}^{(0)}=C_{nmjk}^{(0)}=C_{kjnm}^{(0)}$. For the first class of systems, these variables are drawn from normal distributions for each realization
	\beq
	\xi_{nmkj} \sim \mathcal{N}(0, 1),
	\eeq
	and for the second class they follow Ornstein-Uhlenbeck stochastic processes \cite{Stochastic_Processes}, 
	\beq
	d \xi_{nmkj}(t) = -\, \xi_{nmkj}(t)\, dt + \lambda\, dW_{nmkj}(t), \qquad \text{and} \qquad \xi_{nmkj}(0) \sim \mathcal{N}(0, 1),
	\eeq
	where $W_{nmkj}(t)$ are independent Wiener processes, $\lambda$ sets the size of random fluctuations, and each Ornstein-Uhlenbeck process is initialized following a random Gaussian distribution. Note, however, that we are presenting a particular construction of random couplings, but there are infinitely many ways to introduce them as shown in Section~\ref{subsec:construction_predominantly_random_Hamiltonian_systems}. The goal here is not to explore in detail the different properties of every possible choice, but to demonstrate through specific choices that coherent energy cascades arise in minimally structured models even from unfavorable scenarios for coherence due to the random initial data.

	We perform the simulations in truncated versions of the system of equation (\ref{eq:Resonant_Equation}) with $400$ and $800$ modes and evolve with a 4th-order Runge-Kutta scheme. For time-dependent stochastic couplings, these were updated at each time step (while kept constant at internal substeps) using the exact discrete-time update of Ornstein-Uhlenbeck processes \cite{Stochastic_Processes}
	\beq
	\xi_{nmkj}(t+\Delta t)
	=
	e^{-\Delta t} \, \xi_{nmkj}(t)
	+
	\lambda	\sqrt{\frac{1 - e^{-2 \Delta t}}{2}} \ \eta_{nmkj},
	\qquad
	\eta_{nmkj} \sim \mathcal{N}(0,1).
	\eeq
	We use $\lambda = |S_{0000}|$ to have random fluctuations of similar size to low mode couplings.
	The perspective we have outlined follows from these simulations, with analogous behaviors for both random fixed-in-time couplings and stochastic ones.
	
	{\bf \noindent Coherent energy cascades from random initial data:} a significant part of Type III systems ($\atwo > 0$) exhibit coherent cascades structurally akin to those observed on the invariant manifold. As illustrated in Figures~\ref{fig:examples_coherent_energy_cascade_III} and \ref{fig:stochastic_map_Type_3}, high modes undergo phase locking and the energy spectrum approaches a power-law distribution similar to the one calculated in Proposition~\ref{prop:Dynamics_on_the_manifold}:
	\beq
	E_{n\gg 1} \sim n^{-3/2} \qquad \text{and} \qquad \arg \alpha_{n\gg 1} \sim a_1 n + a_0.
	\eeq
	This asymptotic behavior is quantified by fitting the energy spectrum to an asymptotic ansatz of the form $E_{n}(t) = A(t)\,  n^{\gamma(t)} e^{-\rho(t) n}$ with $A,\rho,\gamma \in \mathbb{R}$, for sufficiently large modes, usually from $n = 150$ to $n=\mathcal{N}_{\max}-50$ in simulations of $\mathcal{N}_{\max}$ modes. Modes close to the cutoff are excluded to reduce the influence of truncation effects. The power-law formation corresponds to $\rho(t)$ going to zero as observed in Figure~\ref{fig:stochastic_map_Type_3}. However, since the exponential suppression of the spectrum weakens, truncation effects become larger and the simulation progressively loses track of the actual system. Such limitations are behind the spurious fluctuations observed in the value of $\gamma$ in Figure~\ref{fig:stochastic_map_Type_3} when $\rho$ becomes small --- at those moments a visual inspection of the energy spectrum is more effective to estimate the power-law, as one can also appreciate from Figure~\ref{fig:stochastic_map_Type_3}; see also \cite{Biasi} for the observation of these effects when comparing with analytic solutions.  
	
	Phase locking is quantified over time by the average deviation of the phases for high modes from a linear profile:
	\beq
	\Phi(t):=
	\left<\, 
	\left|
	\, \arg_{[-\pi,\pi]}\left(\alpha_n(t) e^{-i(\phi_1(t)n+\phi_0(t))} \right)\,
	\right|\, \right>_n,
	\eeq
    where $\arg_{[-\pi,\pi]}$ means phases mapped to that interval, a decision that provides two clearly distinct values of $\Phi(t)$ for locked ($\Phi(t) = 0$) and uniformly distributed random phases ($\Phi(t) \approx \pi/2$). The slope and constant term are estimated by averaging the local estimators
	\beq
    \phi_1=\left\langle\arg(\alpha_{n+1}\bar{\alpha}_n)\right\rangle_n,\qquad
	\phi_0=\left\langle\arg(\alpha_n \alpha_n \bar{\alpha}_{2n})\right\rangle_n,
	\eeq
	where the averages are taken over the interval $n=150,\ldots,\mathcal{N}_{\max}-50$ for the first quantity, and $n=150,\ldots,(\mathcal{N}_{\max}-50)/2$ for the second one, in order to reduce truncation and low-mode effects. For an exact linear phase profile the above expressions recover the exact slope and constant term, therefore, $\Phi(t)\to0$ as the phases become increasingly aligned.
	
	A first approximation to the probability of observing a coherent cascade within $0.1$ time units is provided in the heat map of Figure~\ref{fig:stochastic_map_Type_3}. We have performed twenty simulations up to $0.1$ units of time at each point marked by a white spot and counted the number of these simulations for which the exponent $\rho(t)$  became less than $10^{-2}$. For some initial data, energy cascades could emerge at longer times than the ones we have simulated. Thus, the actual probabilities to observe energy cascades in our systems can be larger.
	
	By contrast, systems of Type II exhibit this behavior less robustly, as can be appreciated in Figures~\ref{fig:examples_coherent_energy_cascade_II} and \ref{fig:stochastic_map_Type_2}, while for systems of Type I we did not observe the formation of coherent cascades from random initial data; see Figures~\ref{fig:examples_coherent_energy_cascade_I} and \ref{fig:stochastic_map_Type_1}. Moreover, in early stages of the evolution, the three types of systems (Types I-III) tend to undergo phase alignment in high modes. However, Type I and II systems are not generally able to maintain such structure, and the dynamics become increasingly unstructured, as observed in Figures~\ref{fig:stochastic_map_Type_3} - \ref{fig:stochastic_map_Type_1}.

	The difference in the capacity of the three types of systems to develop coherent cascades is already observed on the invariant manifold, where Type III systems exhibit cascades for a larger domain of initial conditions than the other two, and it seems that this behavior is also extended outside the manifold. In particular, for fixed values the conserved quantities $E$ and $N$, Type III systems admit cascade solutions on the invariant manifold for a range of values of the Hamiltonian, $\mathcal{H}$, while Type II and I only for a specific value.

	\begin{figure}[h!]
		\centering
		\includegraphics[width = 8cm]{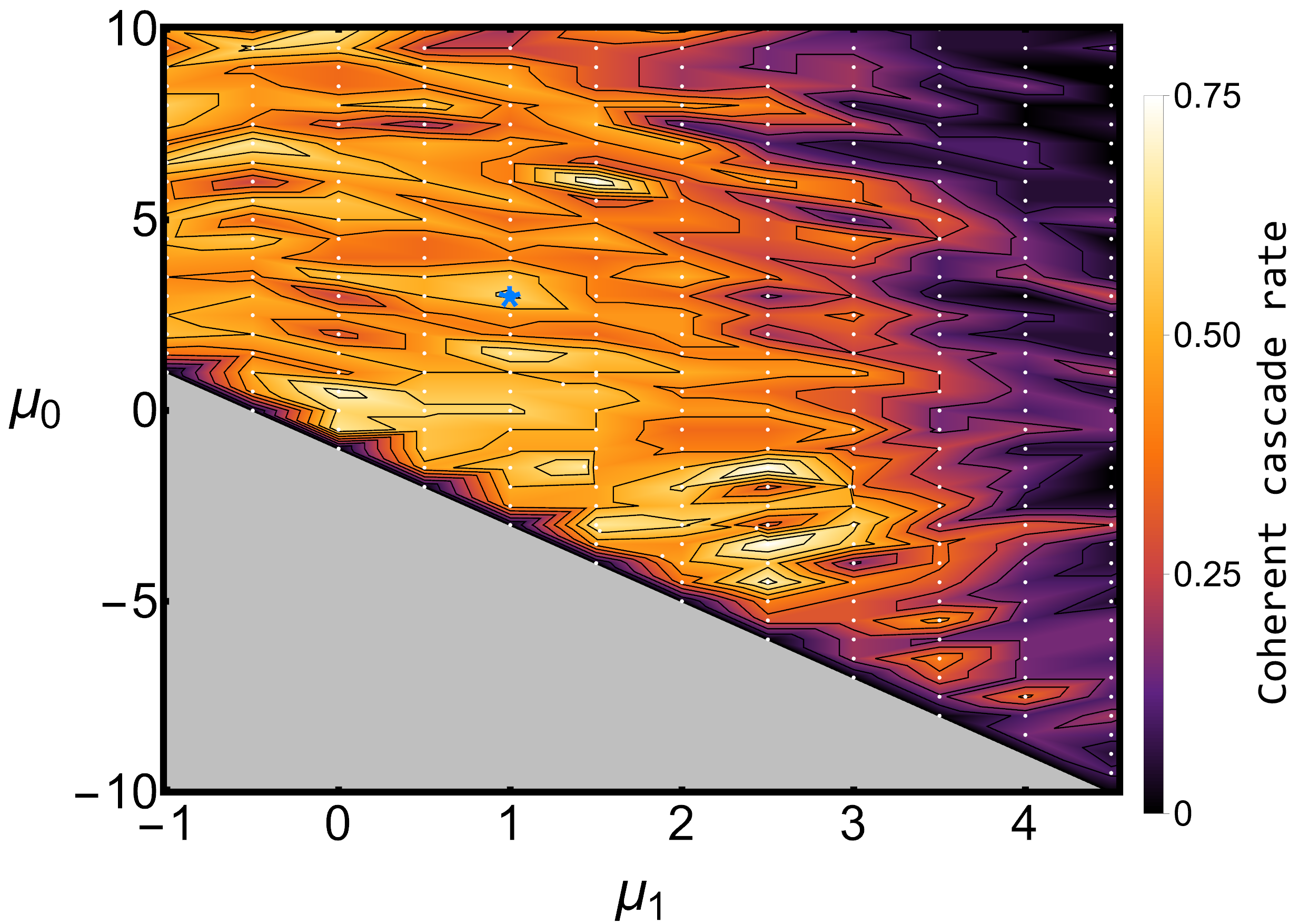}

        \vspace{0.2cm}
        
        \includegraphics[width = 5.6cm]{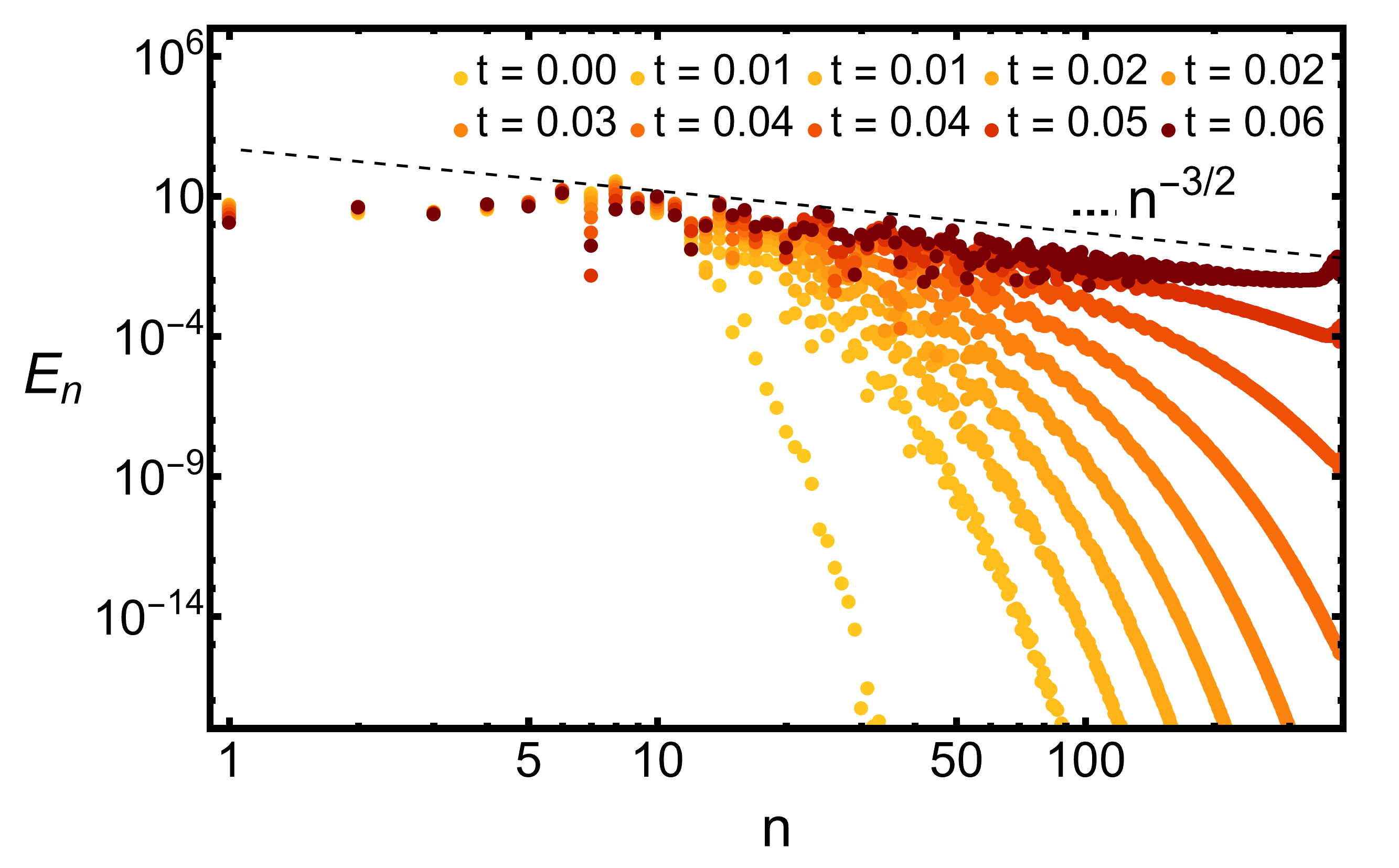}
        \includegraphics[width = 5.6cm]{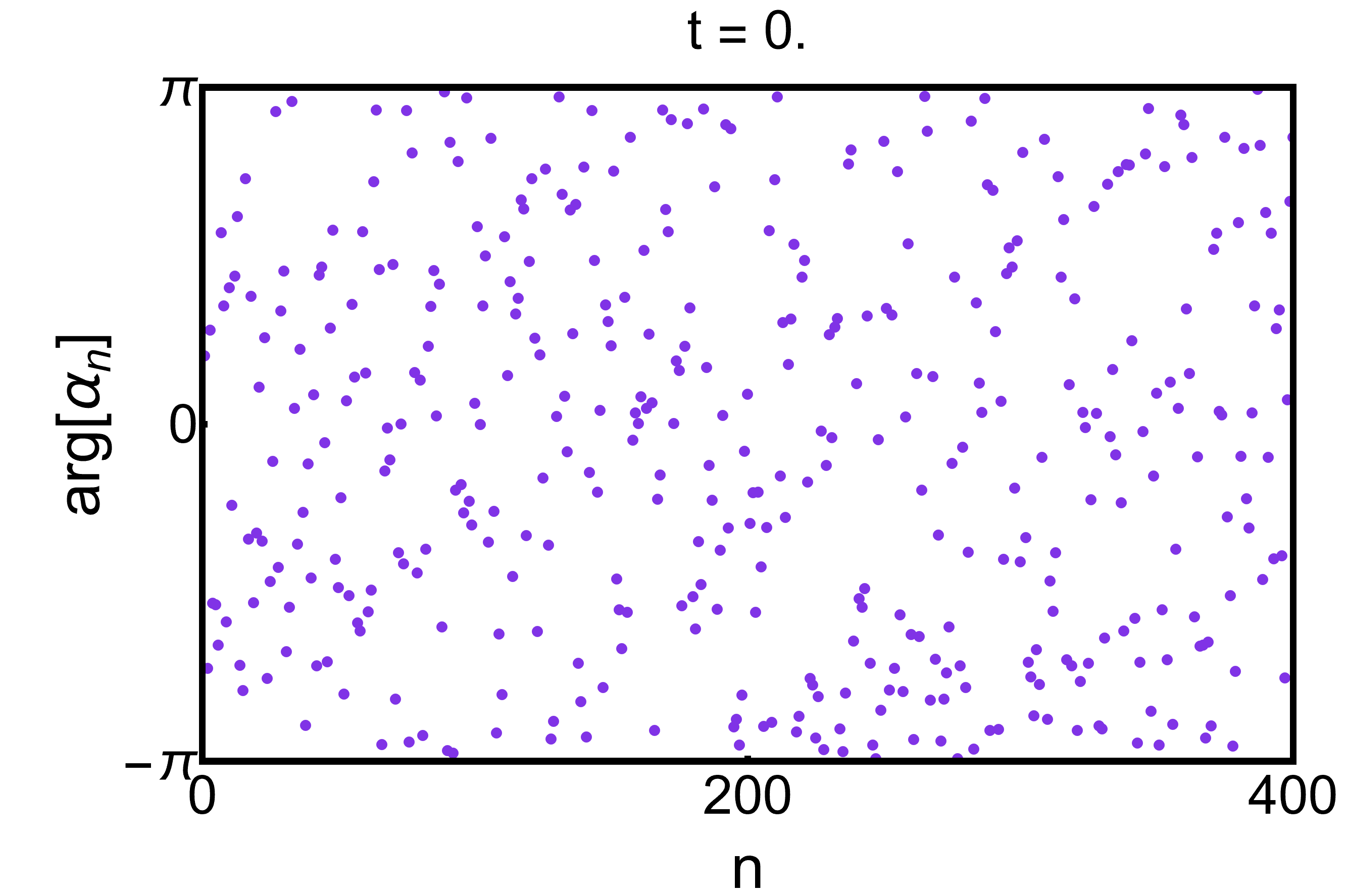}
		\includegraphics[width = 5.6cm]{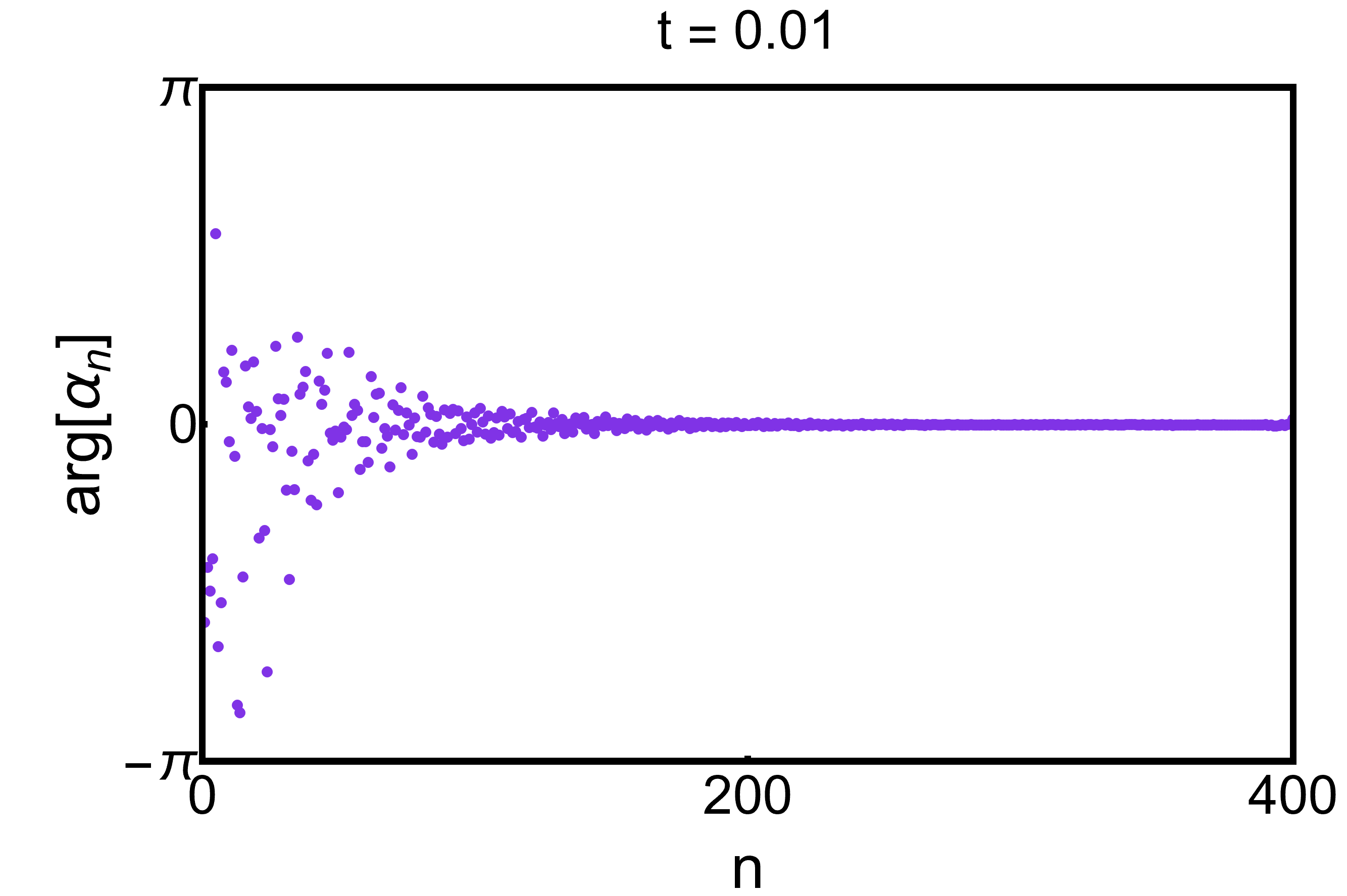}

        \vspace{0.2cm}
        
        \includegraphics[width = 5.6cm]{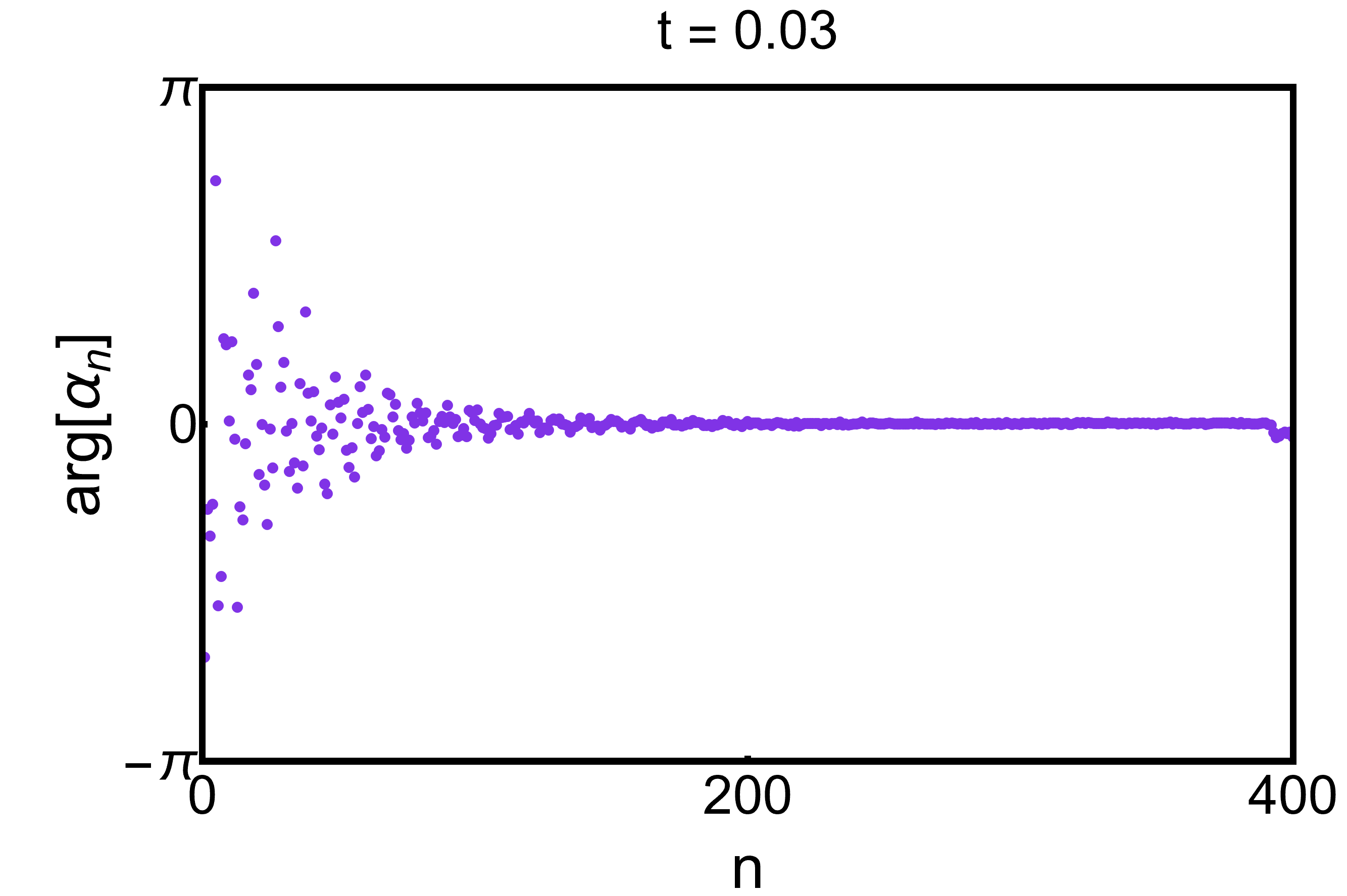}
        \includegraphics[width = 5.6cm]{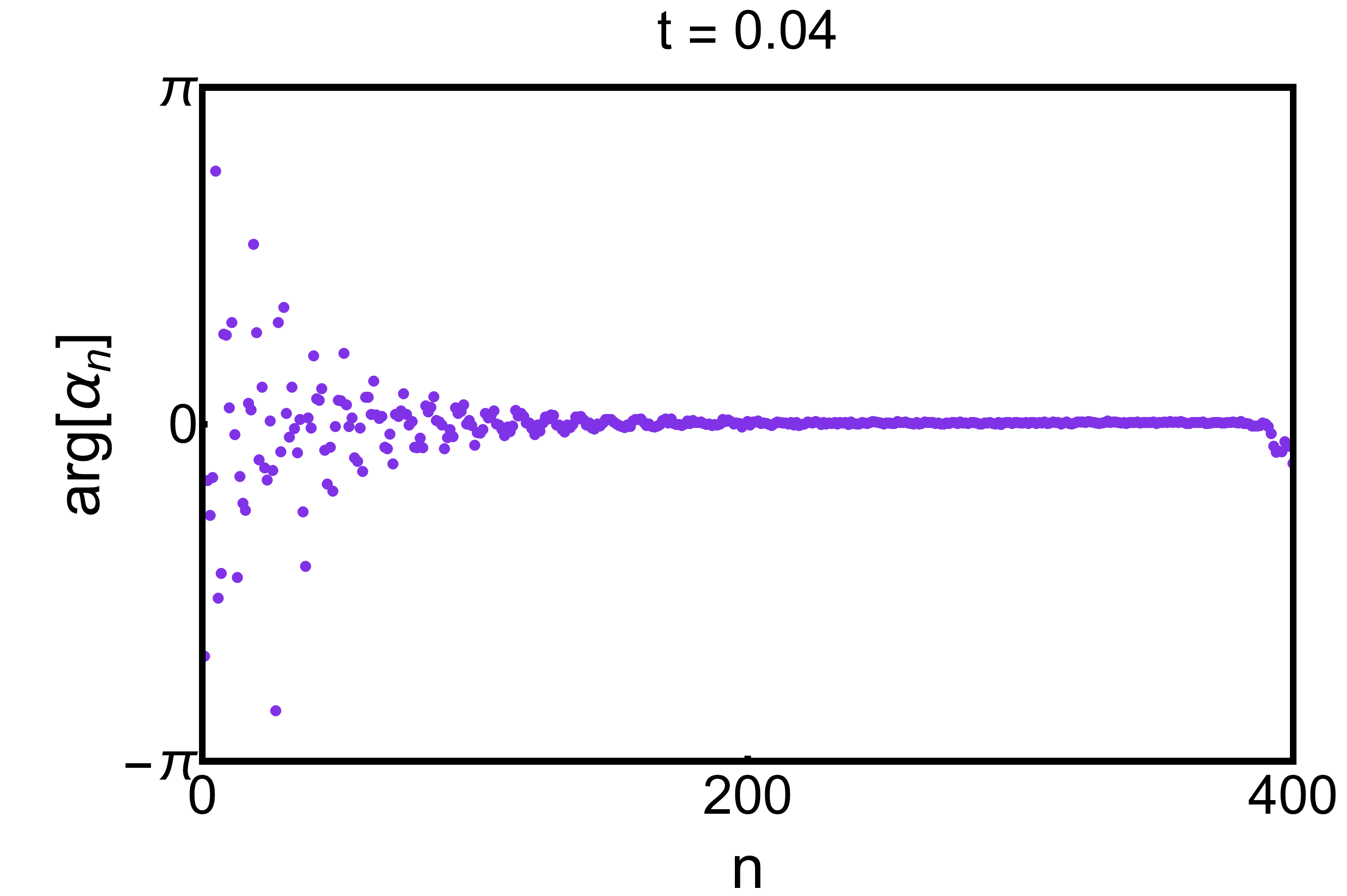}
        \includegraphics[width = 5.6cm]{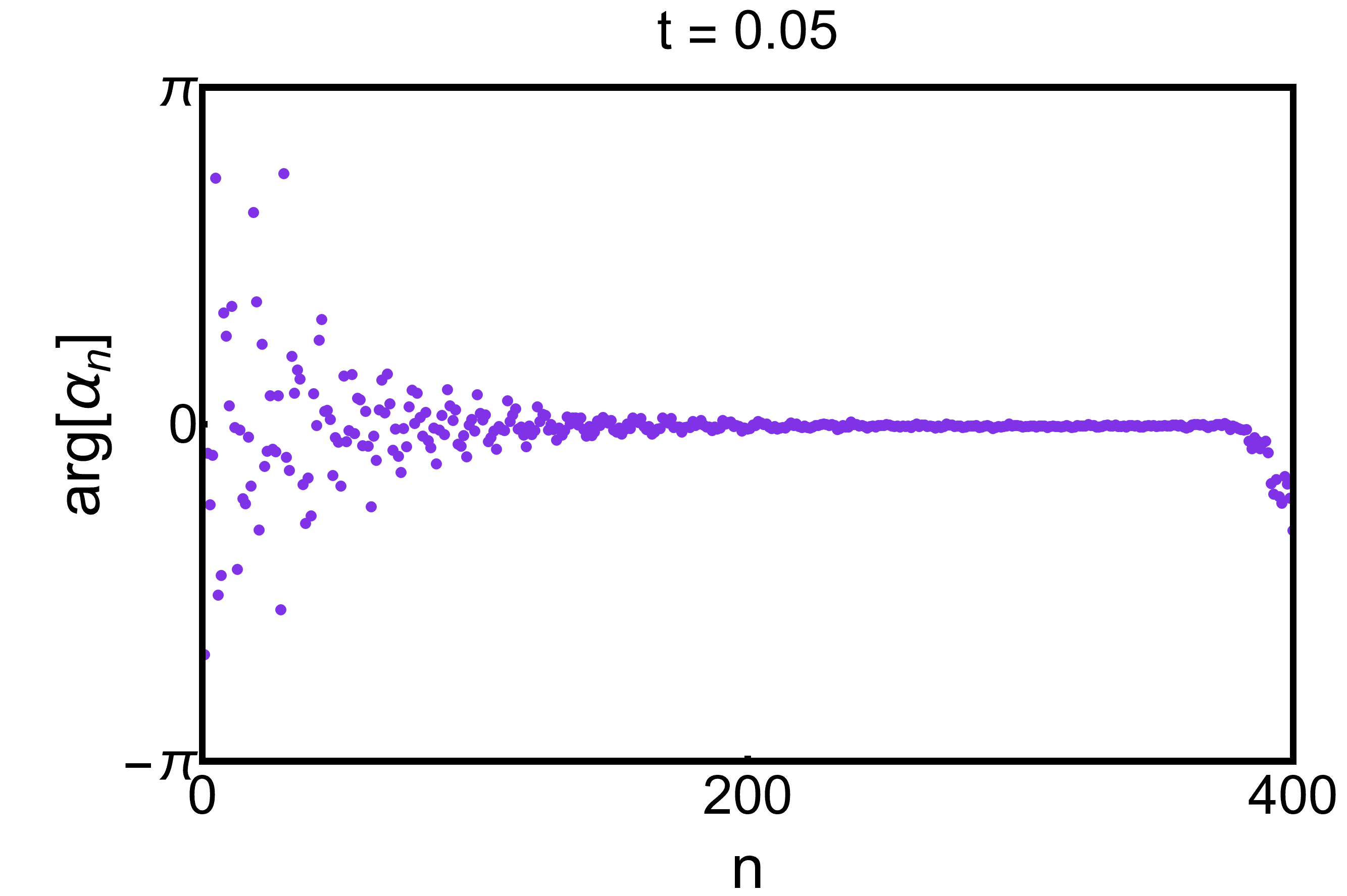}
    
        \vspace{0.2cm}
        
        \includegraphics[width = 5.6cm]{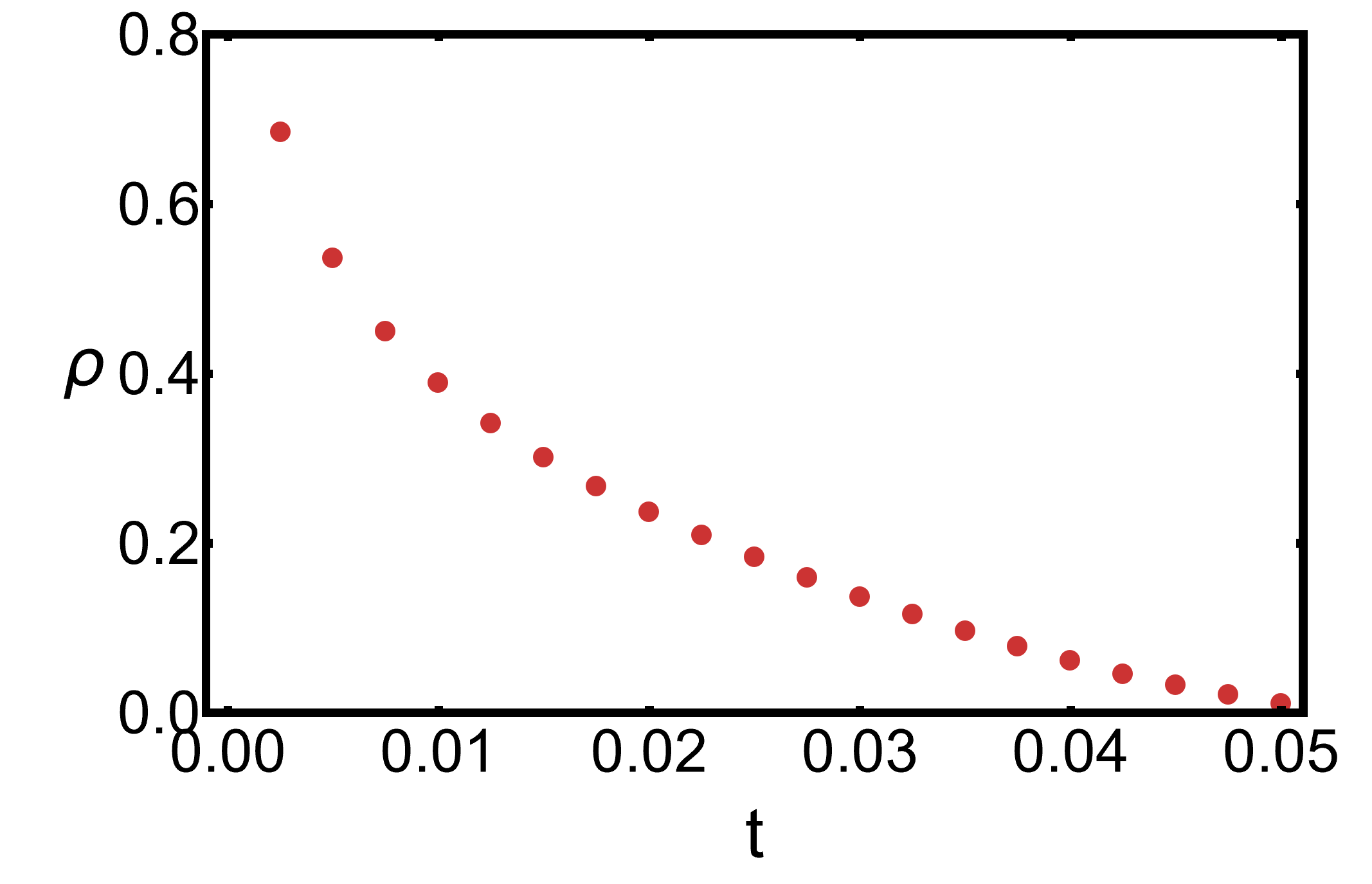}
        \includegraphics[width = 5.6cm]{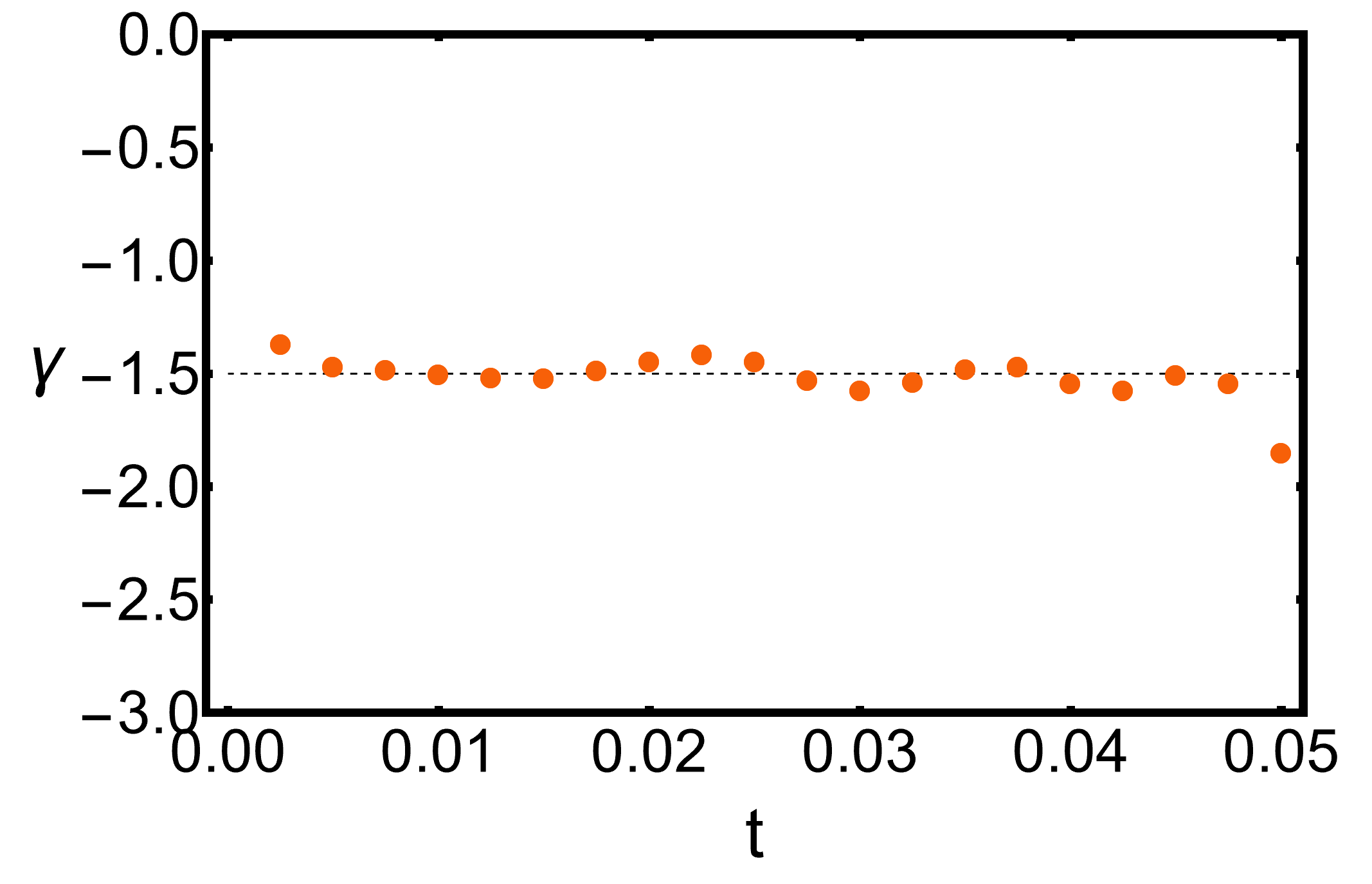}
        \includegraphics[width = 5.6cm]{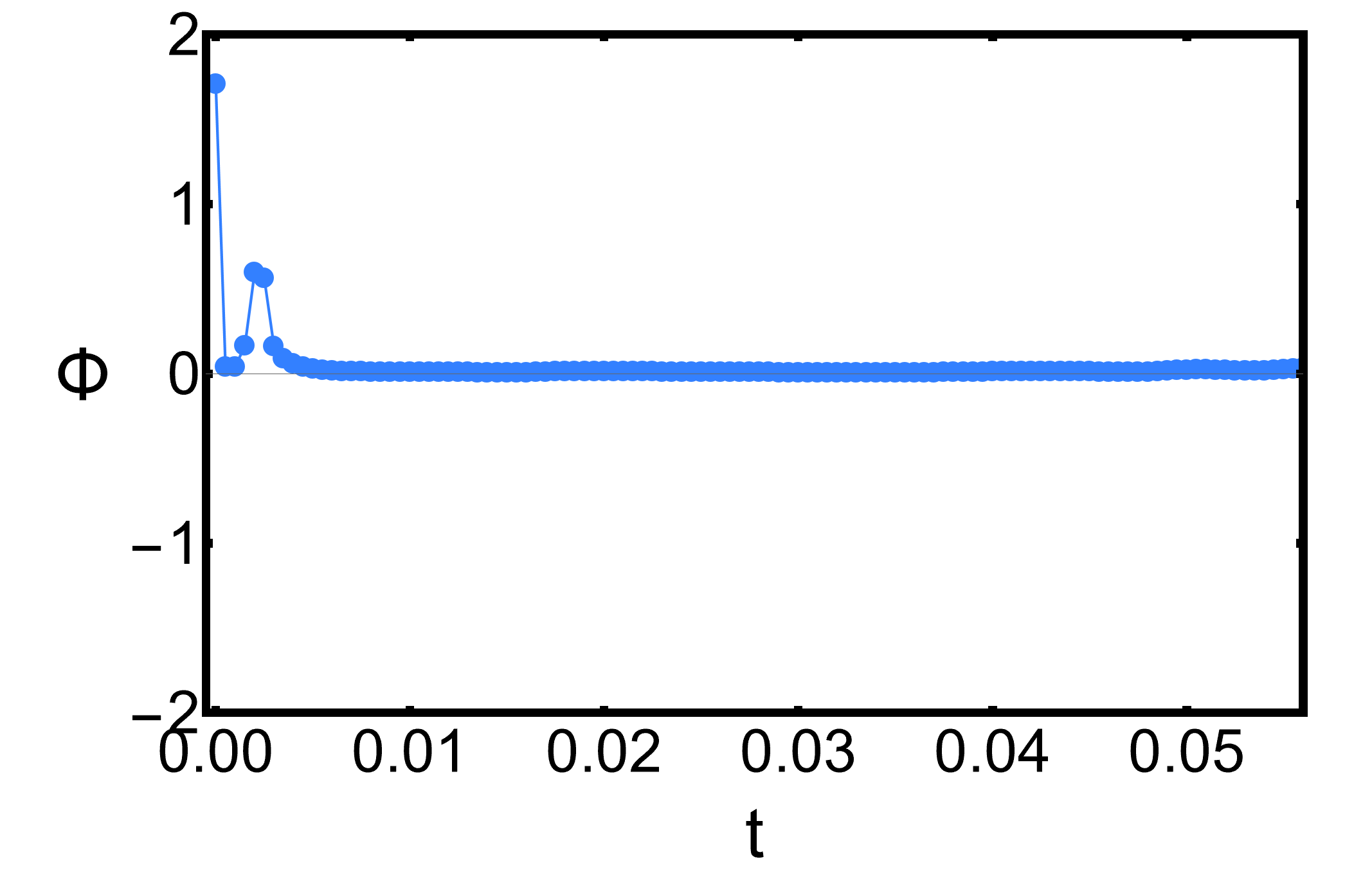}
			\caption{Coherent cascades in Type III systems with $\atwo = 1$ and $\beta_2=\beta_1=\beta_0=0$, stochastic couplings and random initial data. {\em First row:} Occurrence rate of coherent cascades within a time interval $t\in[0,\,0.1]$. The gray region lies outside the parameter domain $\mathcal{D}$, white points indicate the computational grid. {\em Second and third rows:} Evolution of the energy spectrum and a sequence of phase distributions corresponding to a system indicated by the blue star in the previous plot. Deviations from the linear trend on the right-hand side of the phase plots are due to truncation effects from simulating a truncated system of $400$ modes and the gradual loss of the exponential suppression of high modes. {\em Fourth row:} Evolution of the exponent $\rho(t)$, the power-law exponent $\gamma(t)$, and the deviation from a linear phase relation $\Phi(t)$, associated with the previous simulation.}
		\label{fig:stochastic_map_Type_3}
	\end{figure}

    \begin{figure}
		\centering
		\includegraphics[width = 8cm]{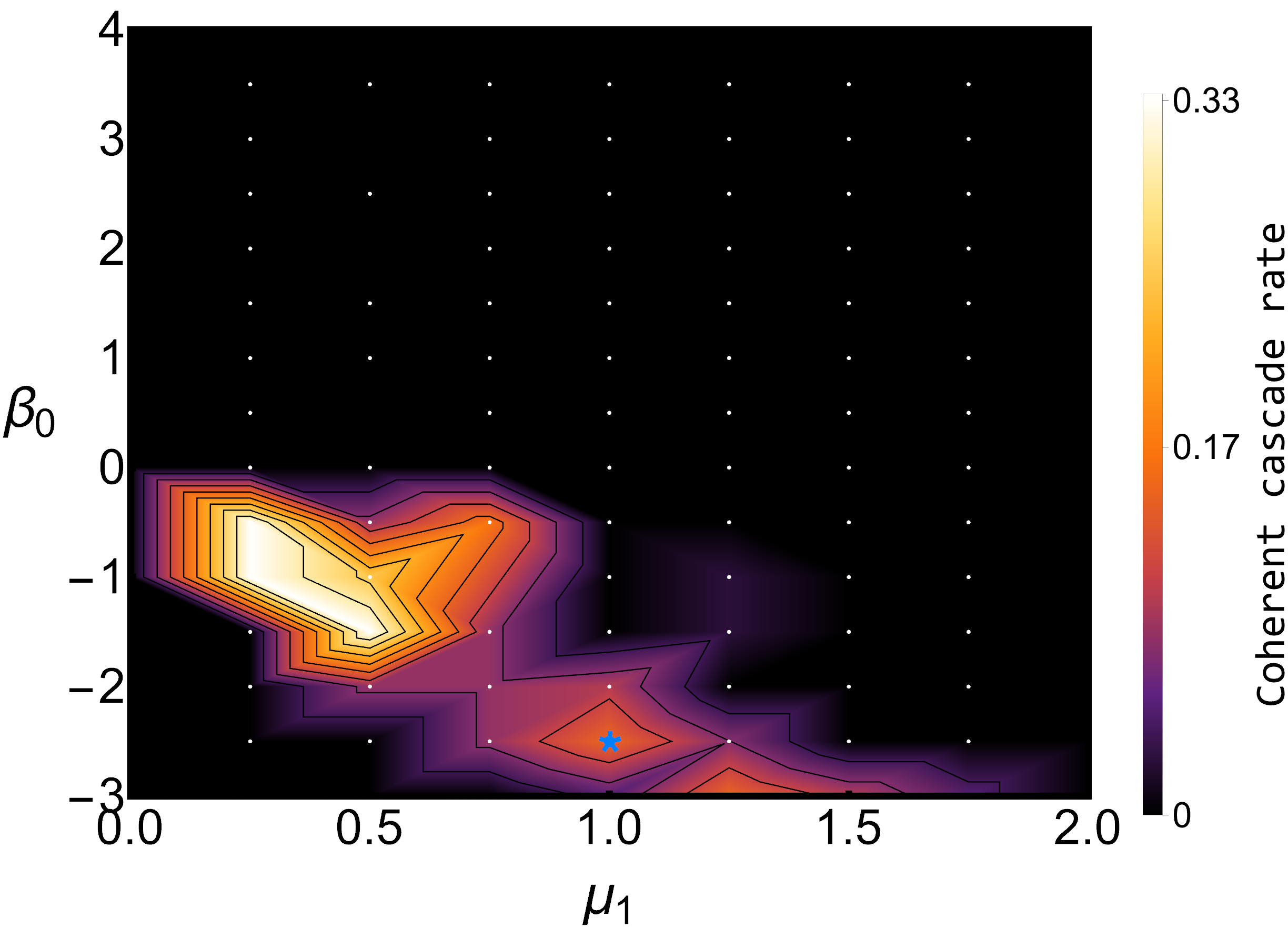}

        \vspace{0.5cm}
        
		\includegraphics[width = 5.6cm]{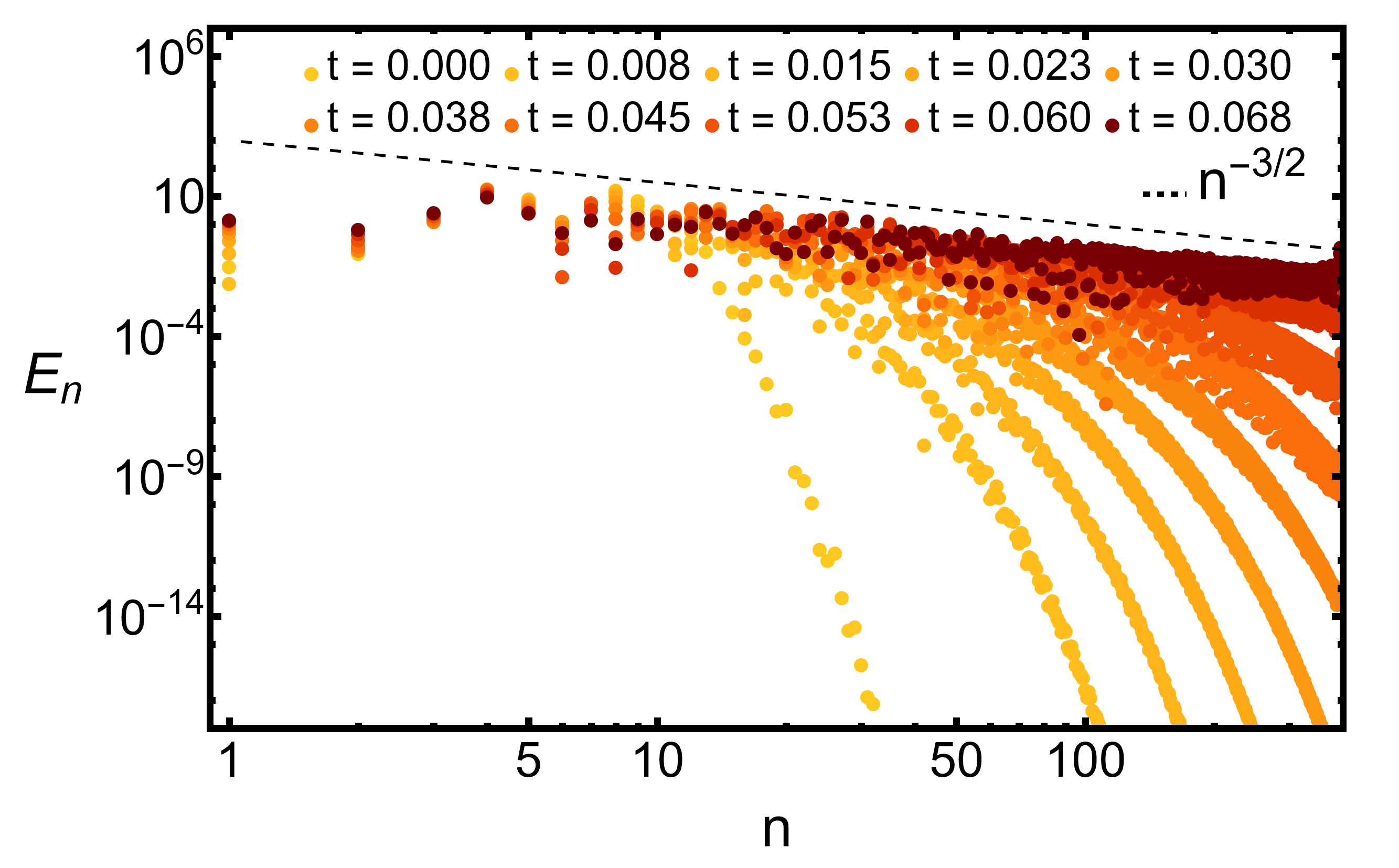}
		\includegraphics[width = 5.6cm]{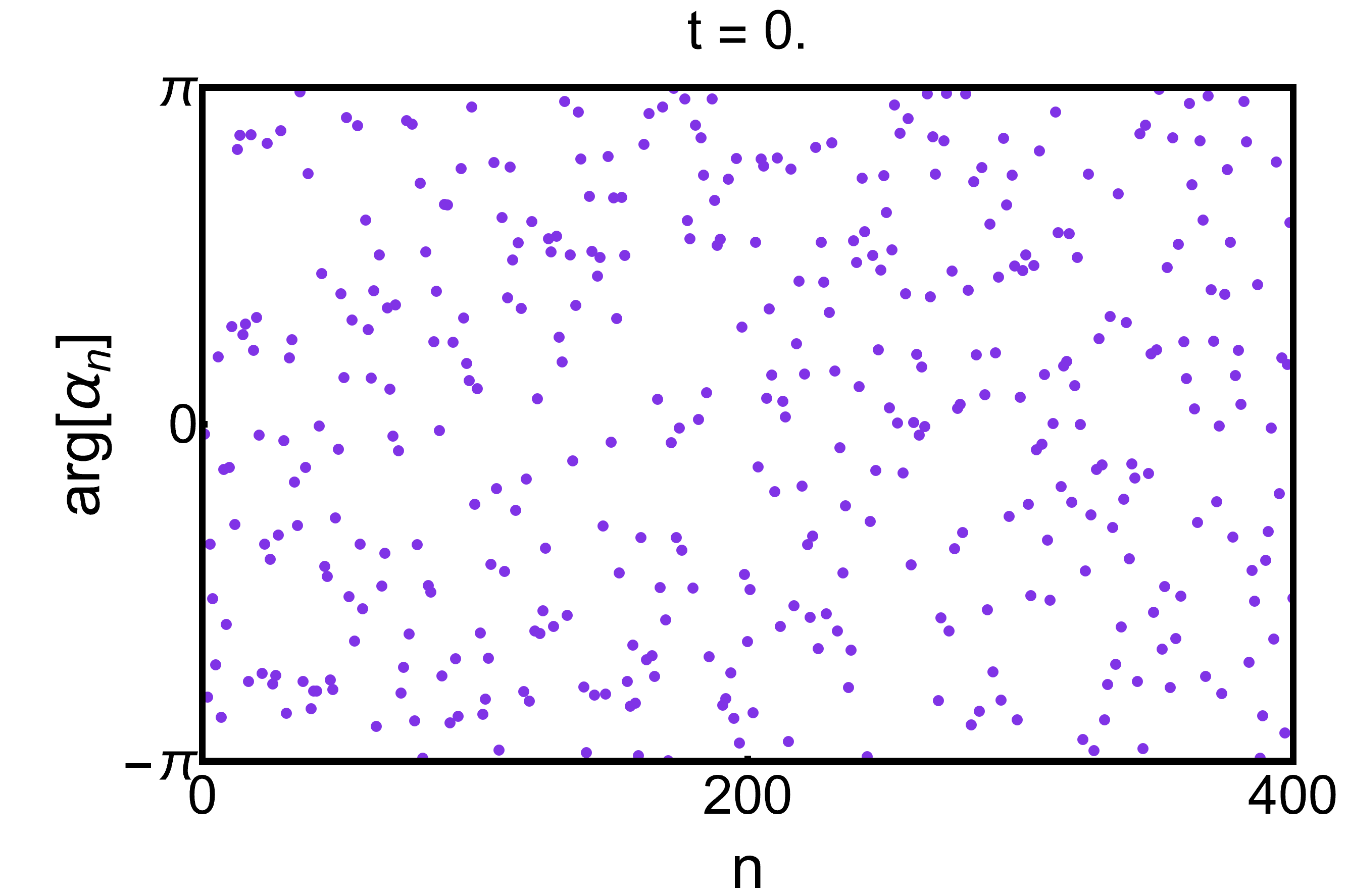}
        \includegraphics[width = 5.6cm]{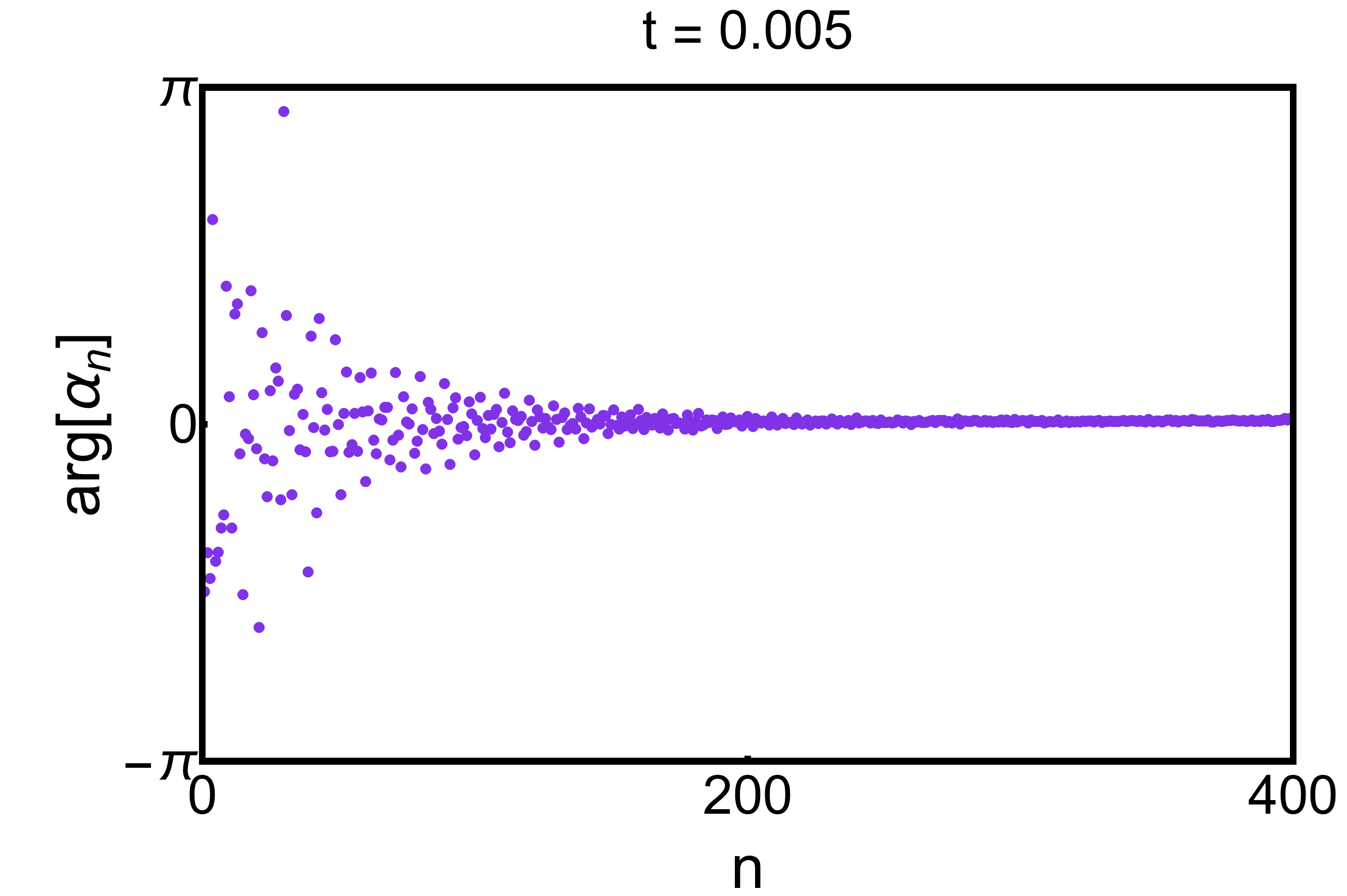}
		
        \vspace{0.5cm}
        
		\includegraphics[width = 5.6cm]{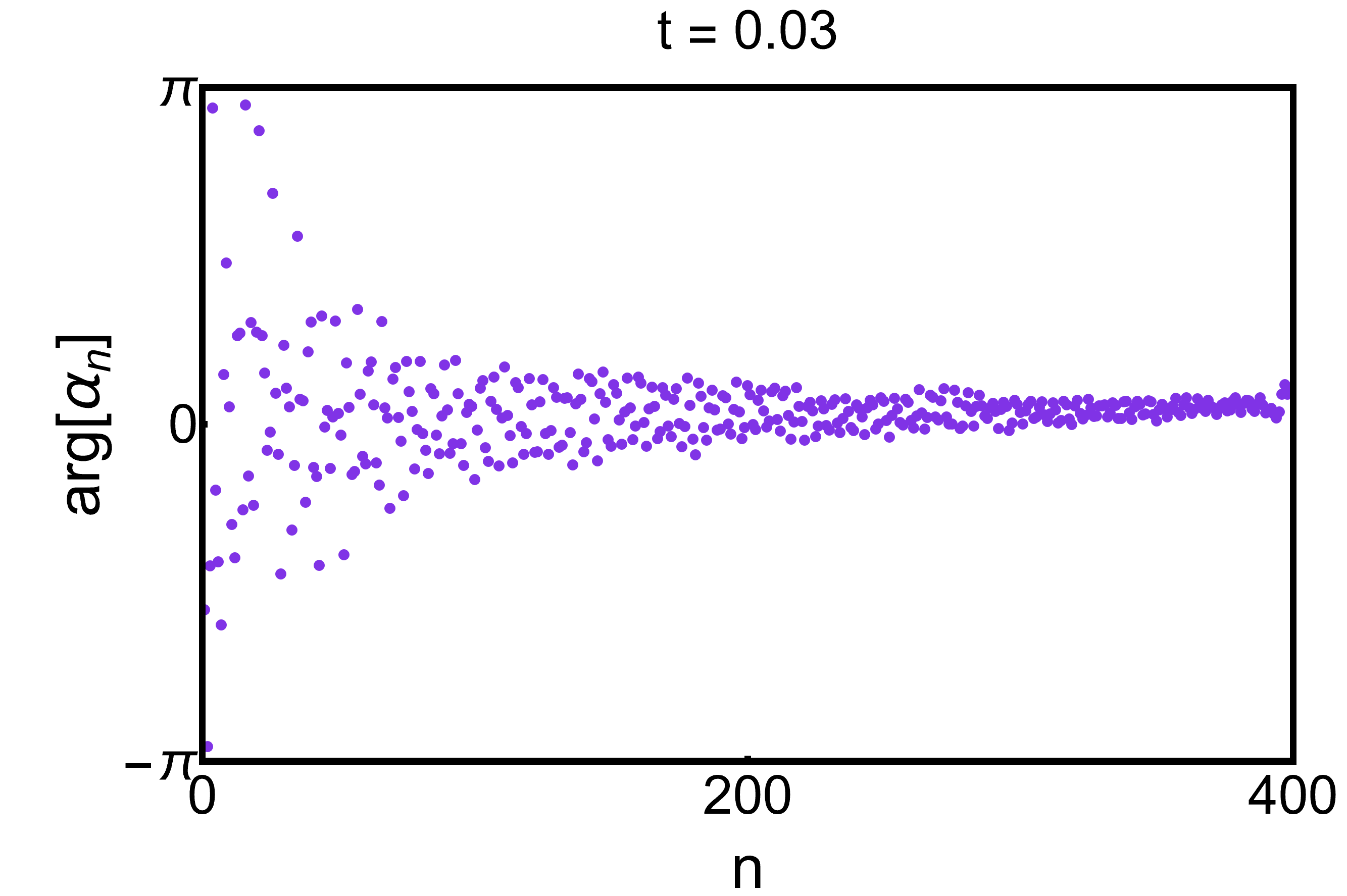}
        \includegraphics[width = 5.6cm]{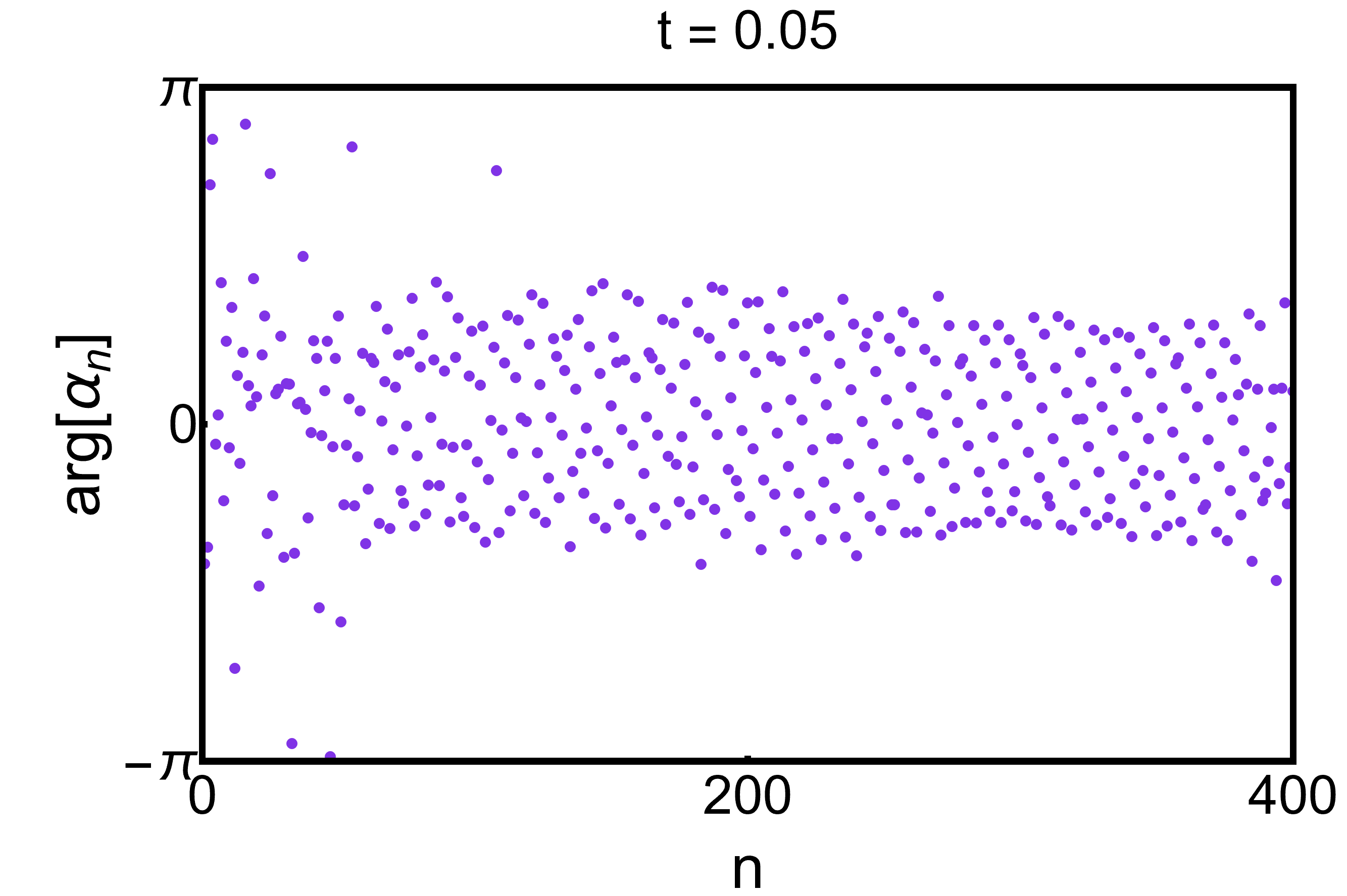}
        \includegraphics[width = 5.6cm]{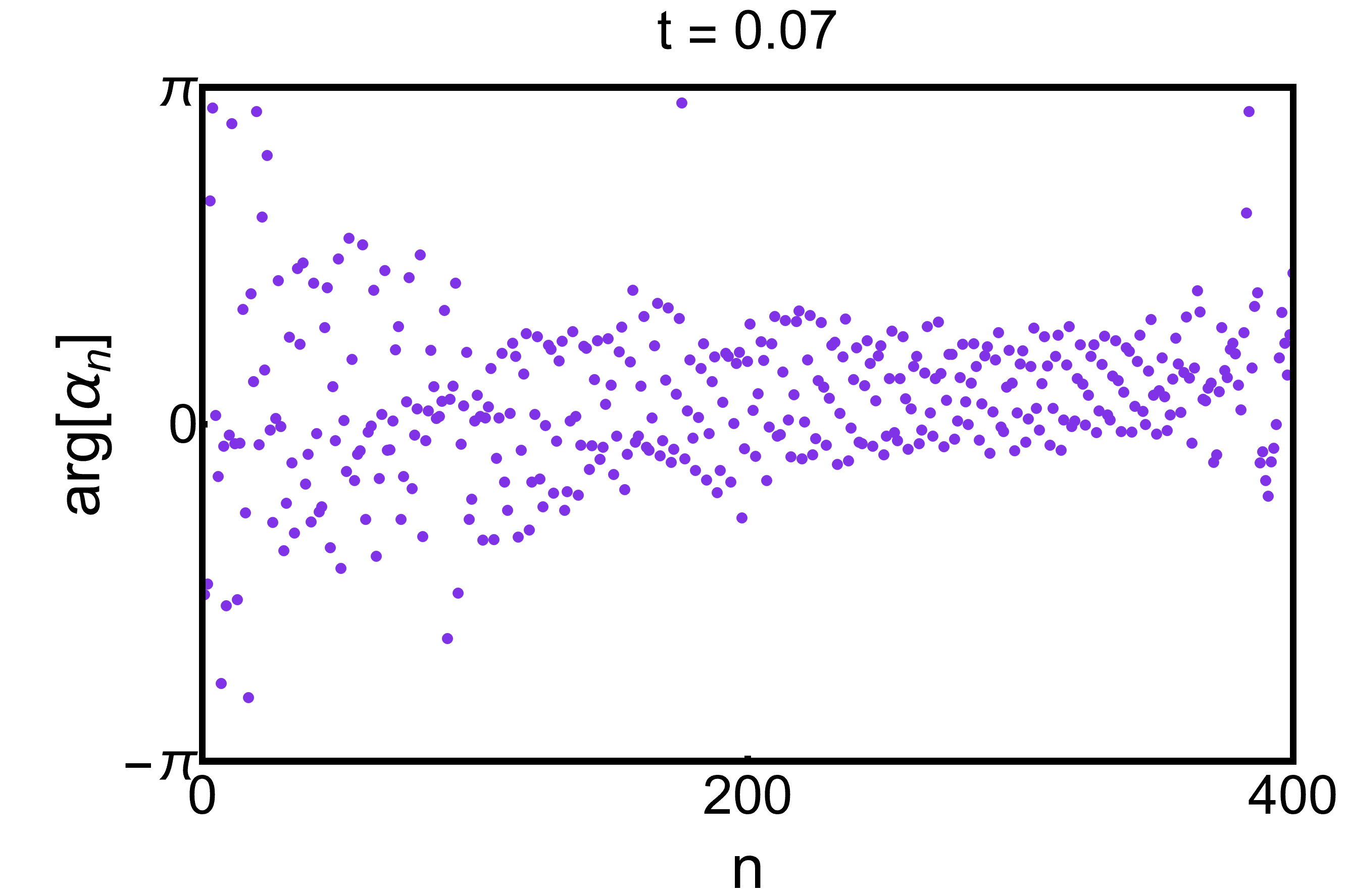}
        
        \vspace{0.5cm}
		
        \includegraphics[width = 5.6cm]{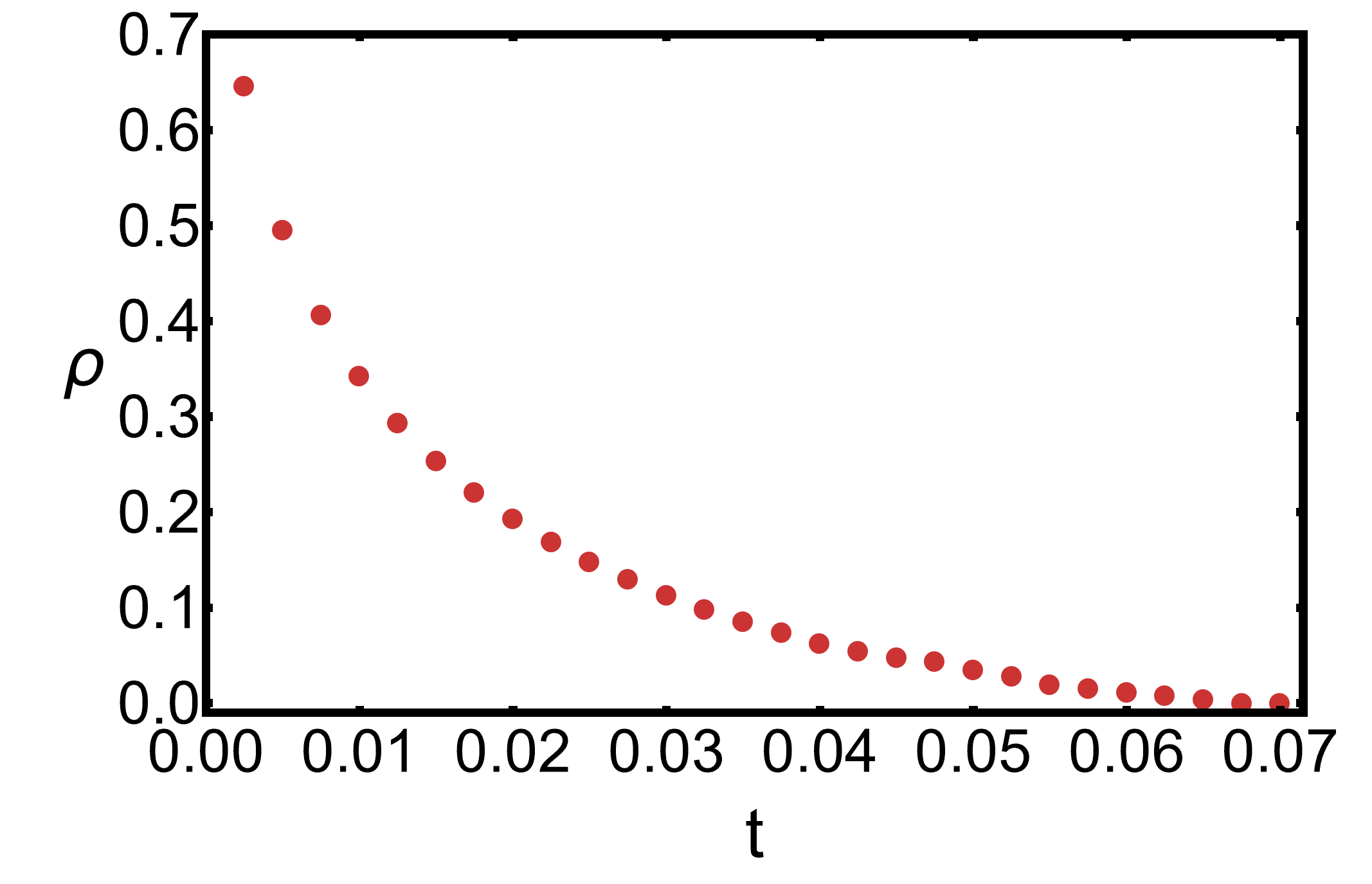}
		\includegraphics[width = 5.6cm]{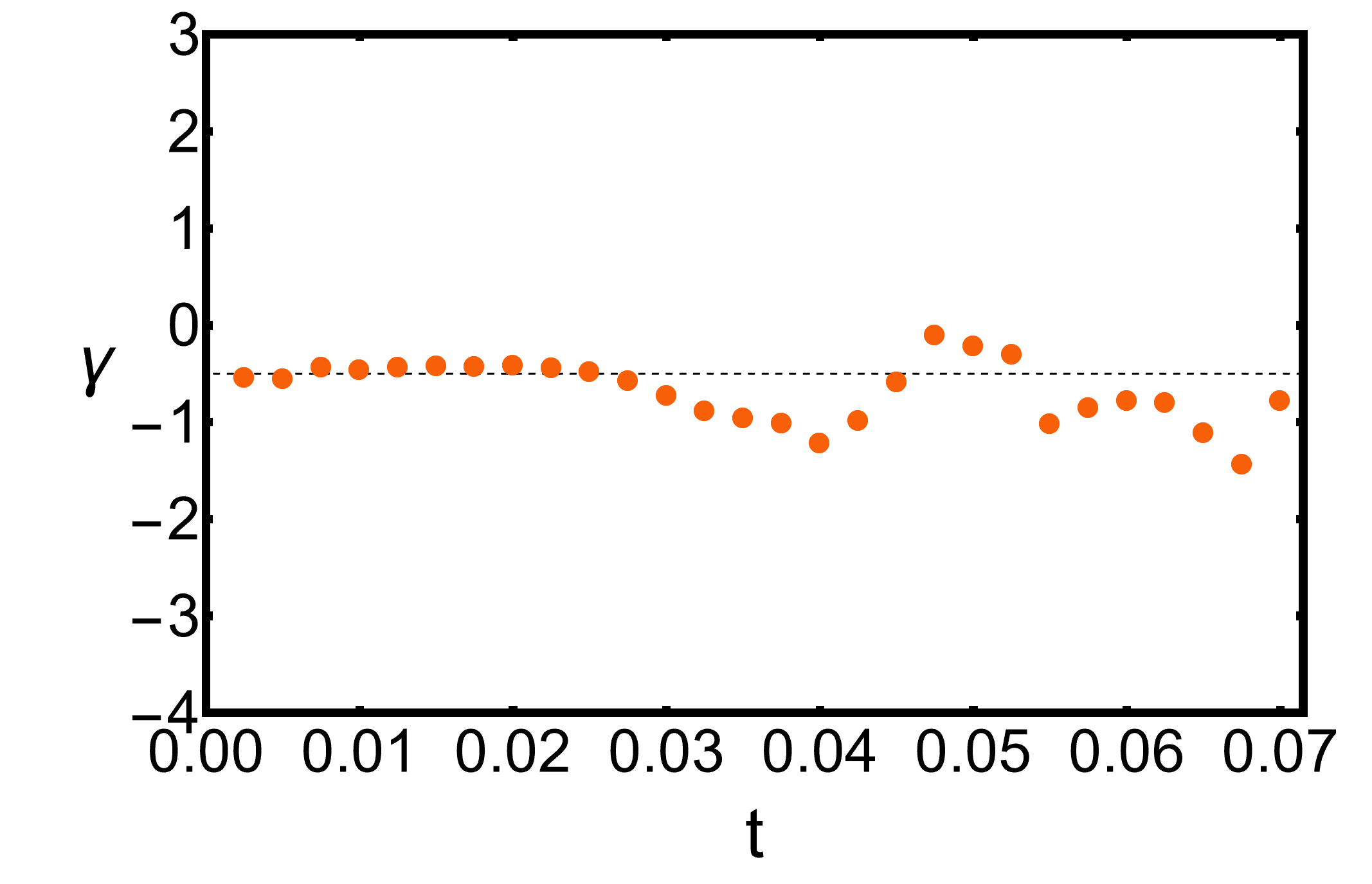}
        \includegraphics[width = 5.6cm]{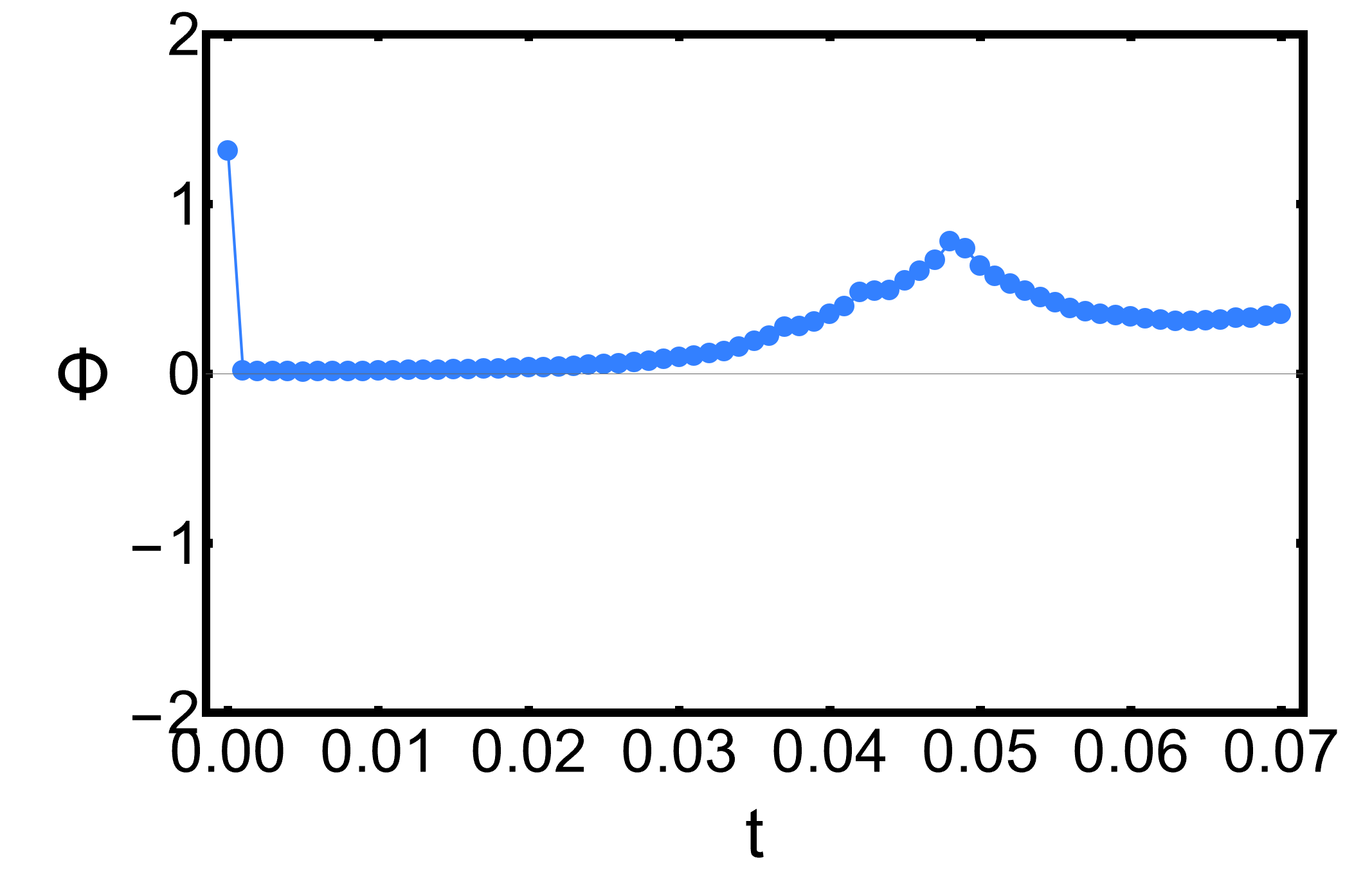}
            \caption{Coherent cascades in Type II systems with $\atwo = 0$, $\bone=1$, $\beta_2=0$,  and $\beta_1=-\beta_0$, stochastic couplings and random initial data. {\em First row:} Occurrence rate of coherent cascades within a time interval $t\in[0,\,0.1]$. The white points indicate the computational grid. {\em Second and third rows:} Evolution of the energy spectrum and a sequence of phase distributions corresponding to a system indicated by the blue star in the previous plot. {\em Fourth row:} Evolution of the exponent $\rho(t)$, the asymptotic power-law exponent $\gamma(t)$, and the deviation from a linear phase relation $\Phi(t)$, associated with the previous simulation.}
		\label{fig:stochastic_map_Type_2}
	\end{figure}

        \begin{figure}
		\centering
        
		\includegraphics[width = 5.6cm]{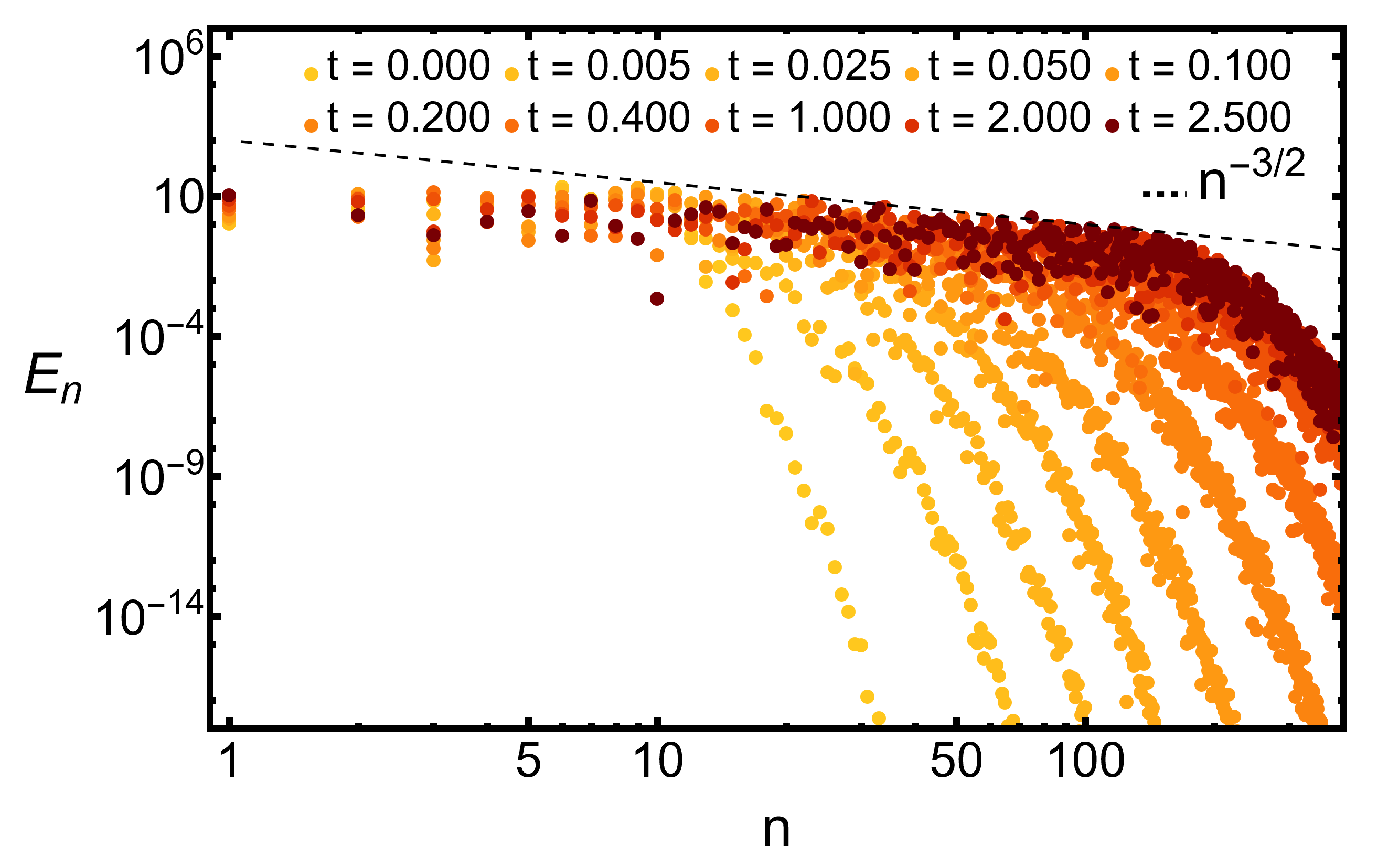}
		\includegraphics[width = 5.6cm]{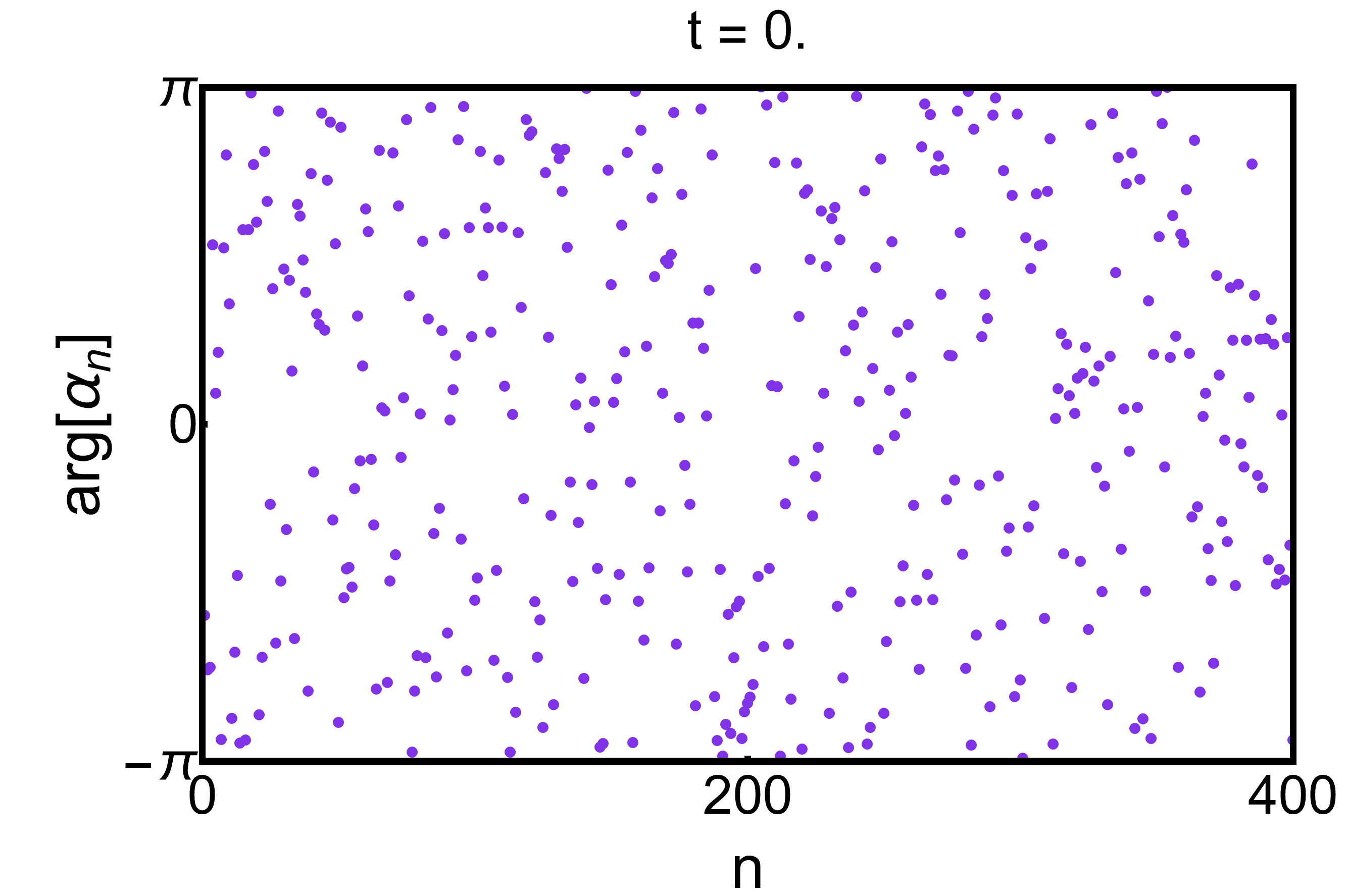}
        \includegraphics[width = 5.6cm]{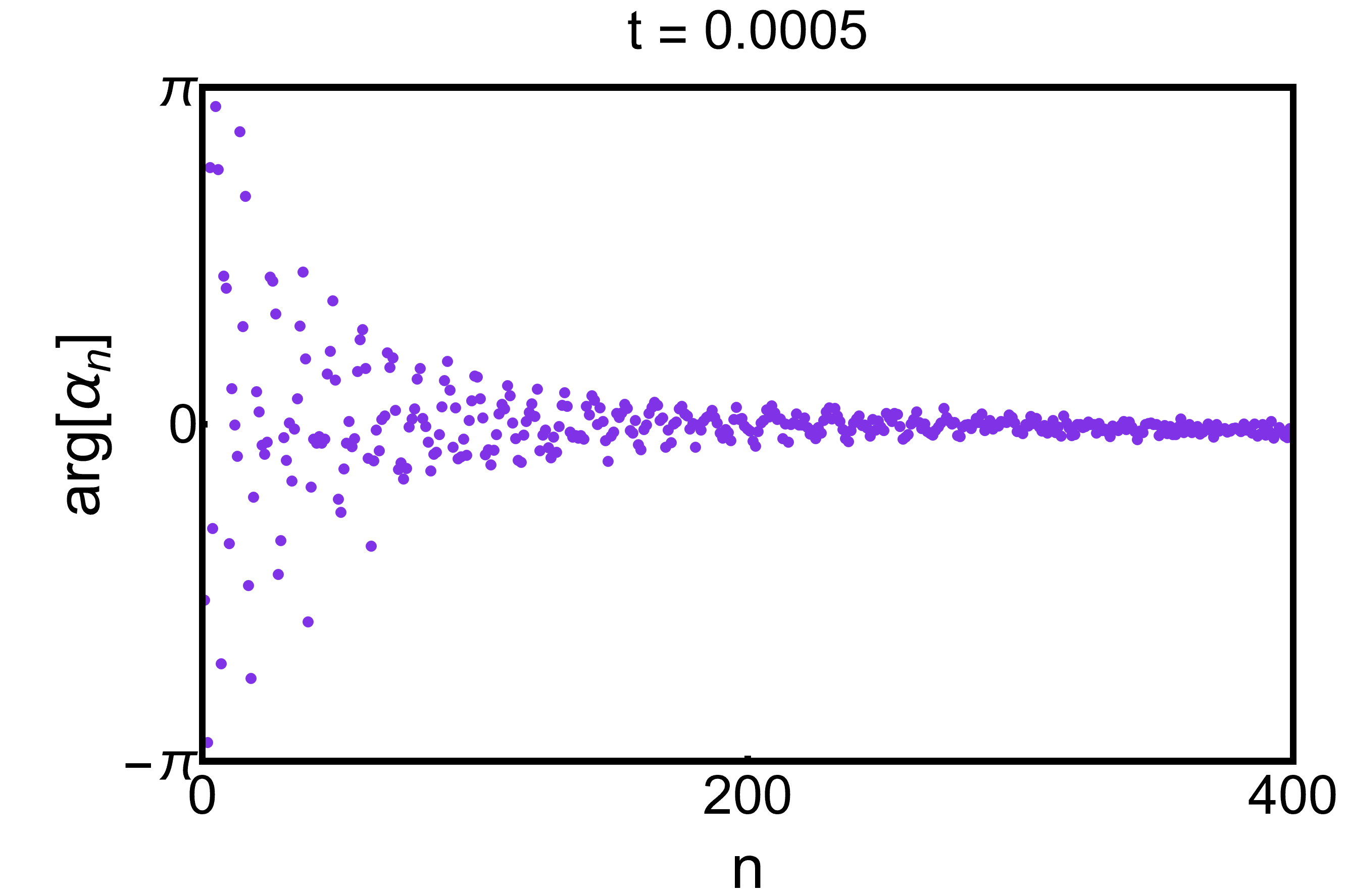}
		
        \vspace{0.5cm}
        
		\includegraphics[width = 5.6cm]{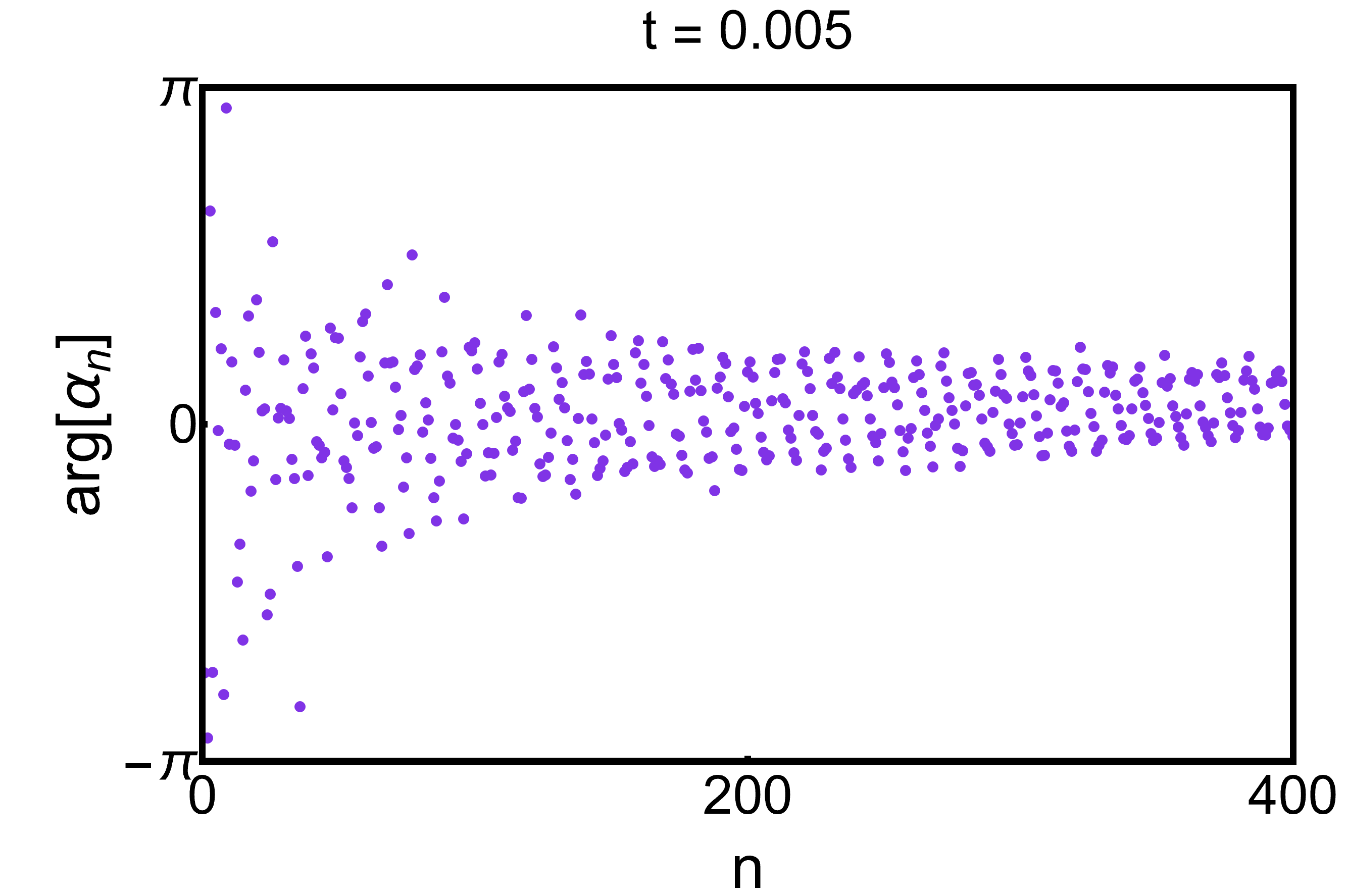}
        \includegraphics[width = 5.6cm]{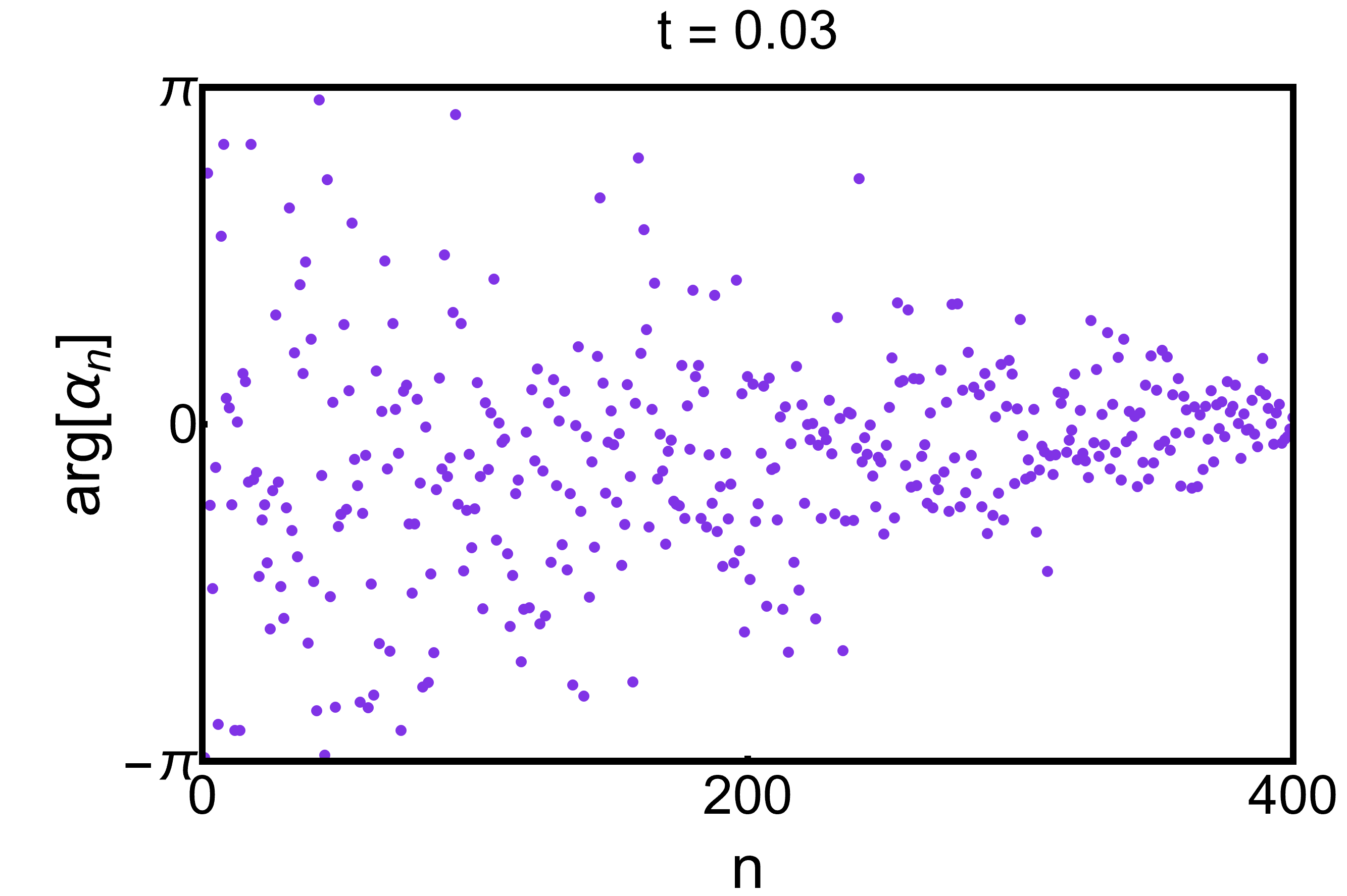}
        \includegraphics[width = 5.6cm]{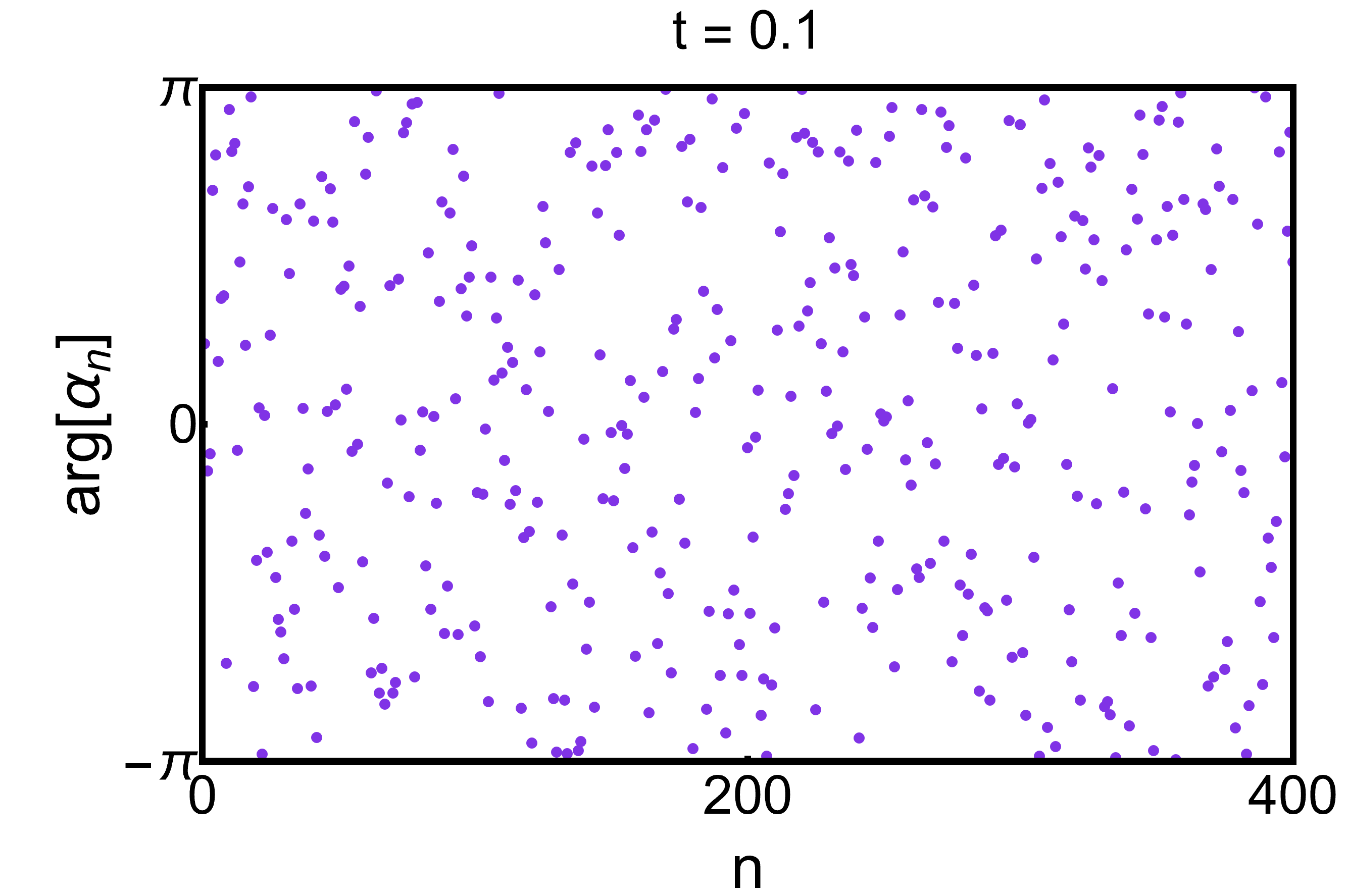}
        
        \vspace{0.5cm}
		
        \includegraphics[width = 5.6cm]{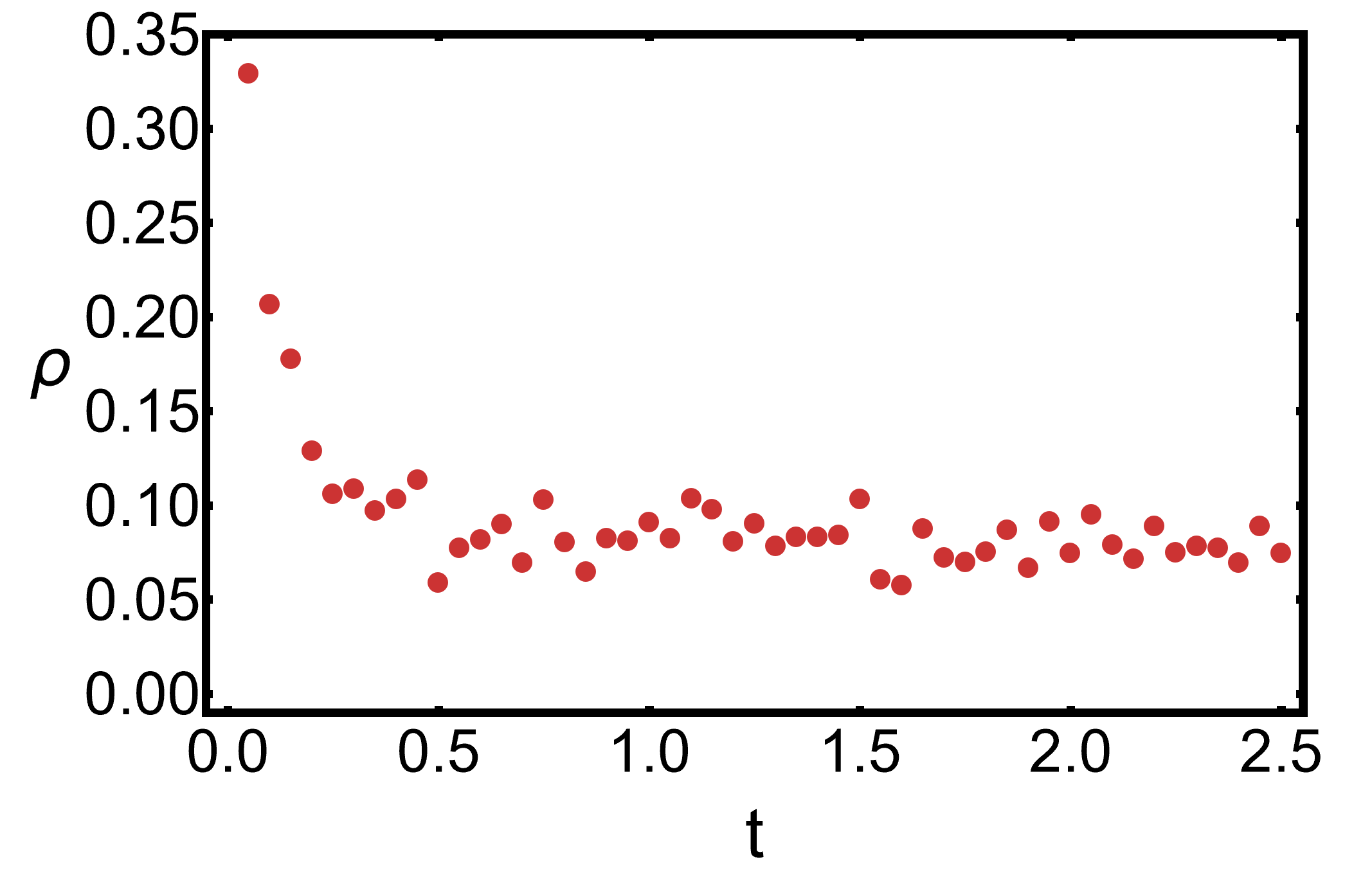}
		\includegraphics[width = 5.6cm]{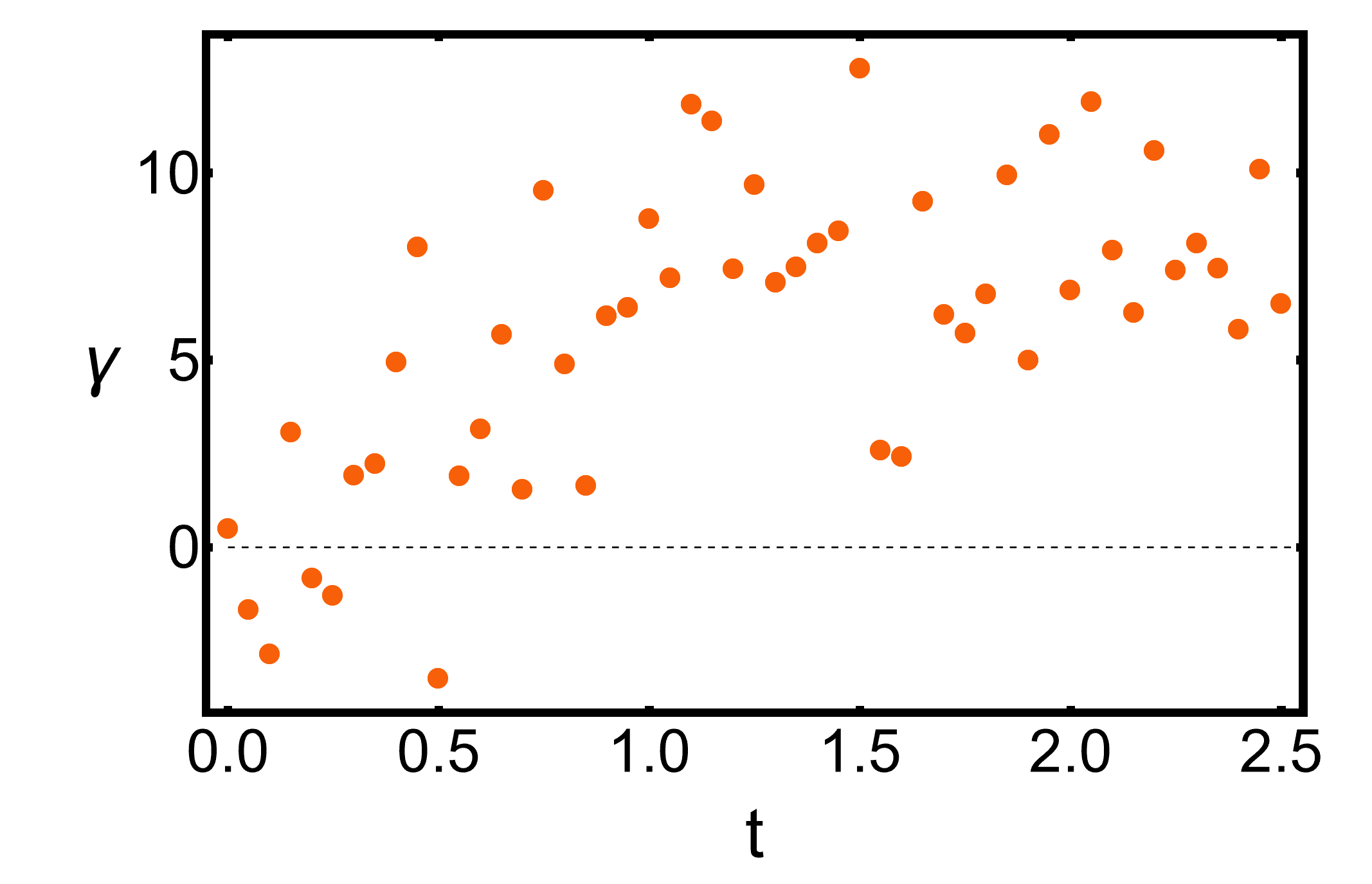}
        \includegraphics[width = 5.6cm]{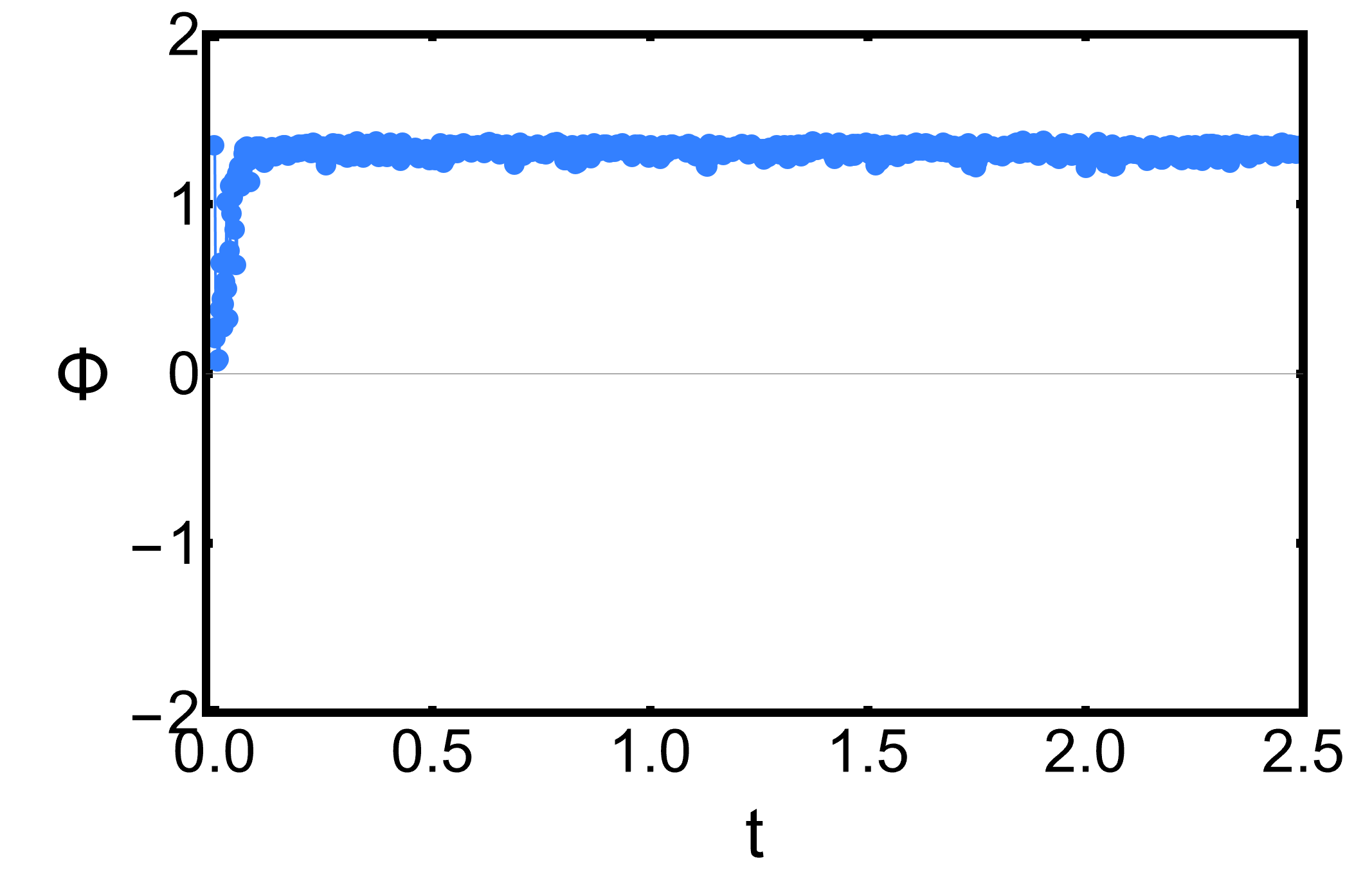}
            \caption{Non-observation of coherent cascades in Type I systems with $\atwo = \aone = 0$, $\bone = 1$, $\beta_2=0$ and $(\beta_0,\beta_1)\in [-5,5]\times[-5,5]$, stochastic couplings and random initial data. {\em First and second rows:} Evolution of the energy spectrum and a sequence of phase distributions corresponding to a simulation with $(\beta_0,\beta_1)=(-3.5,-2.5)$. {\em Third row:} Evolution of the exponent $\rho(t)$, the asymptotic power-law exponent $\gamma(t)$, and the deviation from a linear phase relation, $\Phi(t)$.}
		\label{fig:stochastic_map_Type_1}
	\end{figure}
	
	\newpage
	
	\subsection{Fully random Hamiltonian systems}
	
	Resonant Hamiltonian systems present two elements in their structure that can be a source of order or disorder in the dynamics: the nonlinear couplings $C_{nmkj}$ and the dispersion relation $\omega_n$ that determines the resonant interactions through the condition $\w_n+\w_m = \w_k + \w_j$. Our models present random nonlinear couplings but an organized web of resonances through the linear dispersion relation that leads to the condition $n+m=k+j$. This raises the question of whether such structure is entirely responsible for the emergence of coherent dynamics, namely, whether this alone would be sufficient. Numerical simulations provide evidence to rule out this option in the class of systems (\ref{eq:Resonant_Equation}), enhancing the role played by the structured subset of couplings in the organization of dynamics. 
	
	We have performed simulations of Hamiltonian systems with no structured subset of couplings: all couplings are independent identically distributed Gaussian random variables, up to symmetries,
	\beq
	C_{nmkj} \sim \mathcal{N}(0, 1).
	\eeq
	As shown in Figure~\ref{fig:fully_random_Hamiltonians}. These systems show no clear traces of phase organization when evolved from random initial data (\ref{eq:random_initial_data}) and rapidly lose their initial phase alignment when evolved from coherent data. It suggests that coherence is not a generic feature of fully resonant Hamiltonian systems, but instead requires the presence of a certain structured subset of couplings. A slow migration of energy to high modes is observed, as predicted in \cite{EvninMelonicTurbulence}. However, whether such transfer is maintained in time and ultimately develops a power law is an open question. Such cascades would be driven by noncoherent dynamics, contrary to the ones we have reported in this work.

	\begin{figure}
		\centering
		\includegraphics[width = 5.6cm]{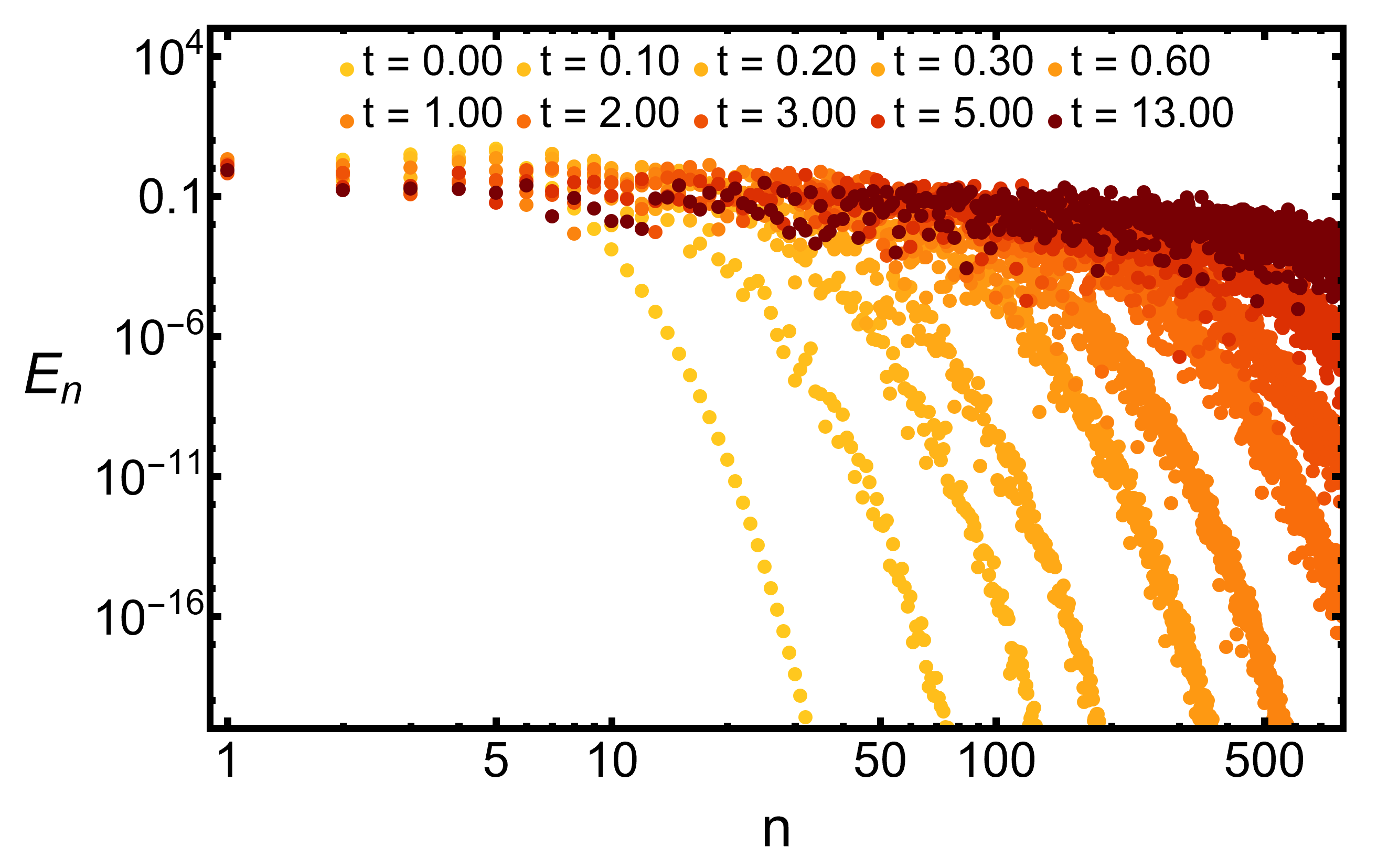}
		\includegraphics[width = 5.6cm]{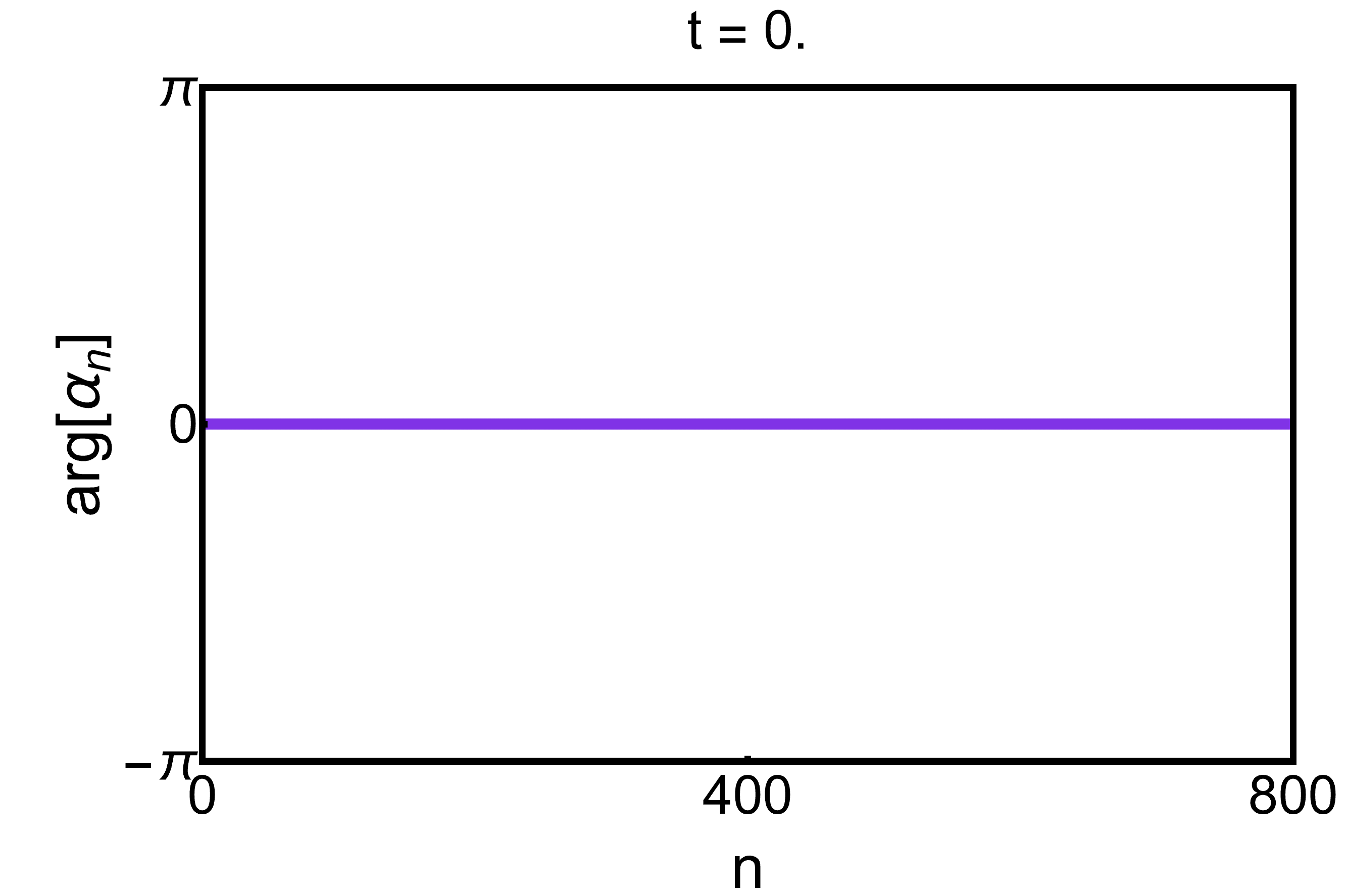}
		\includegraphics[width = 5.6cm]{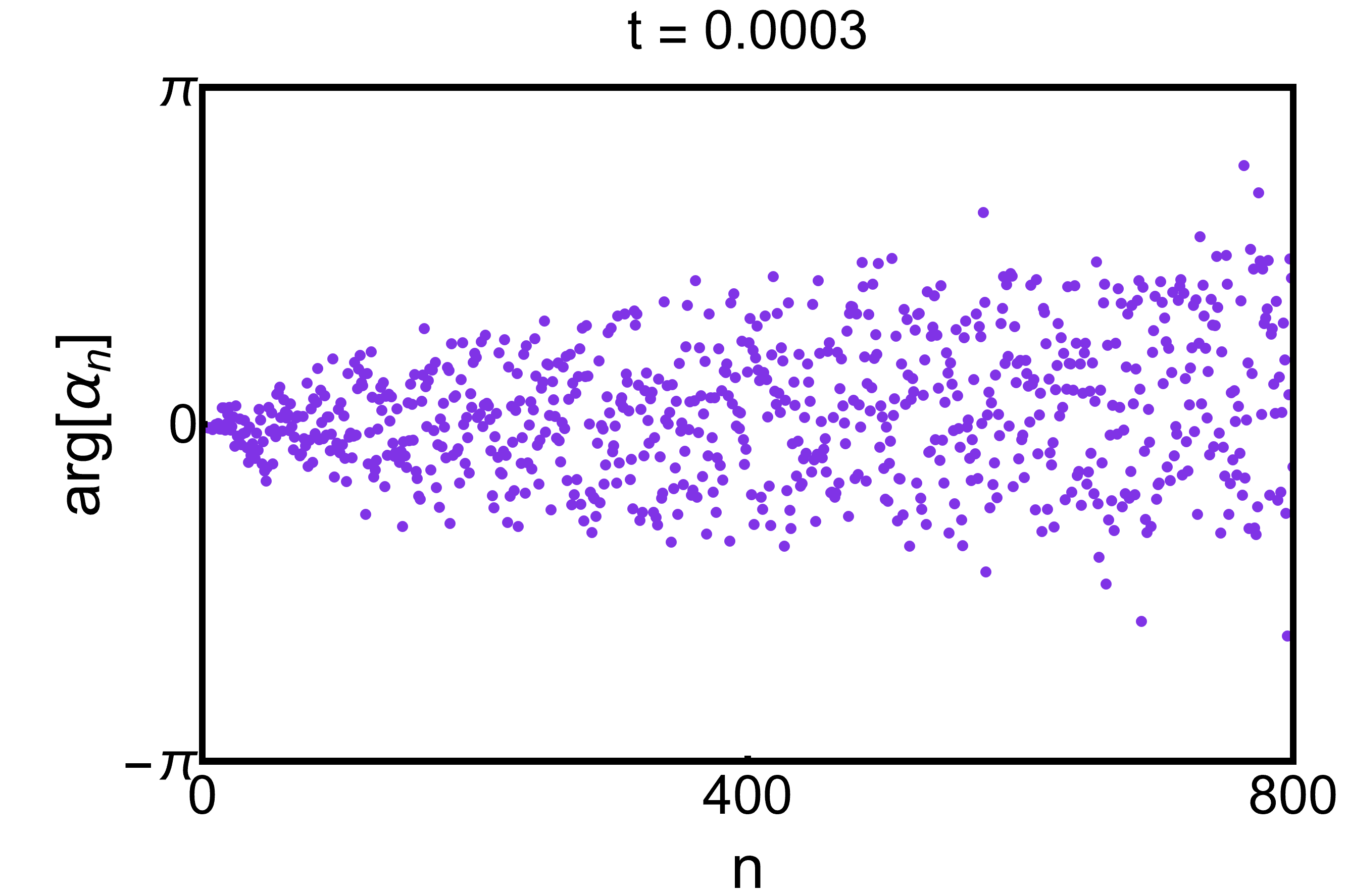}
		
		\vspace{0.1cm}
		\includegraphics[width = 5.6cm]{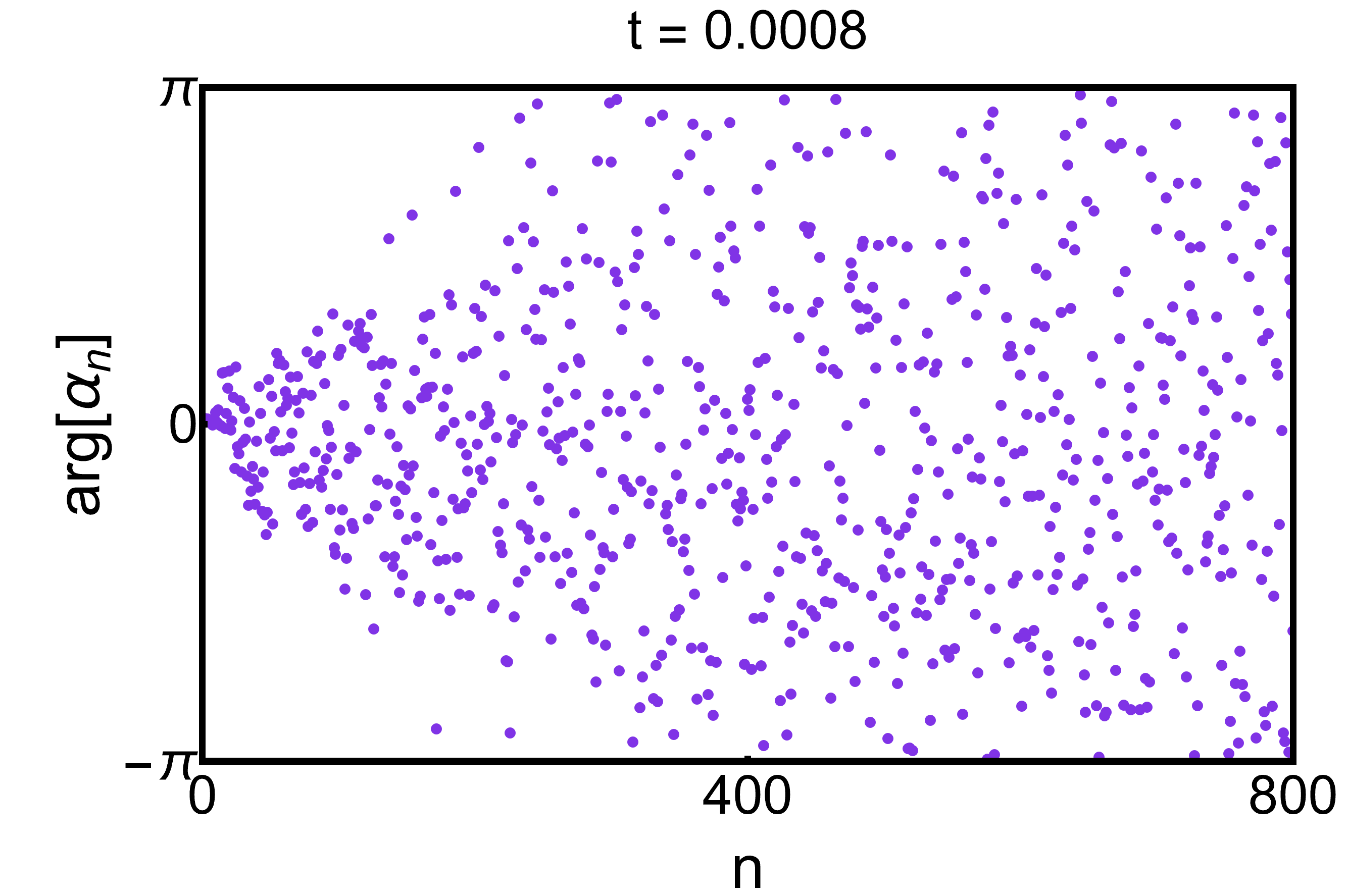}
		\includegraphics[width = 5.6cm]{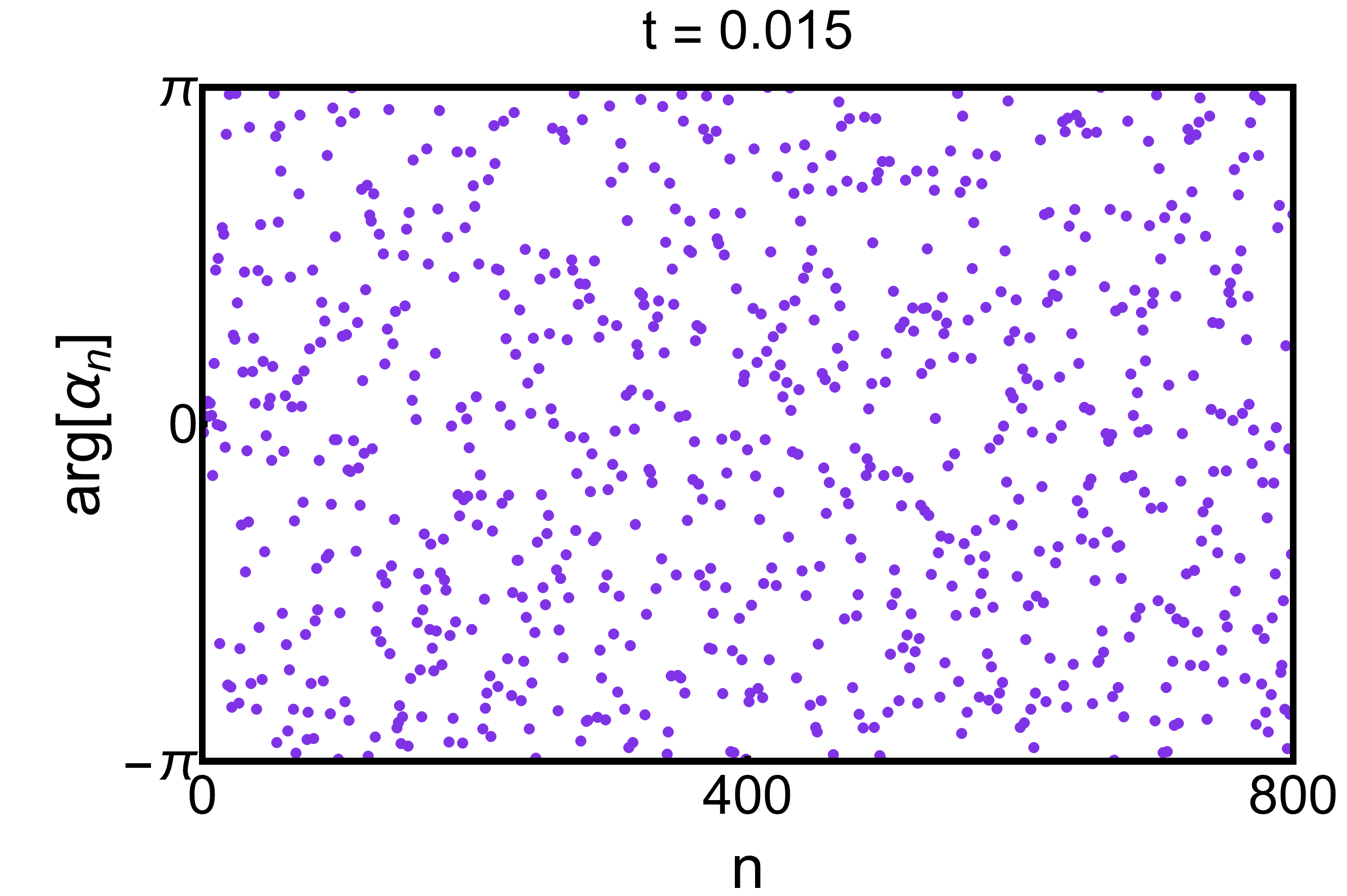}
		\includegraphics[width = 5.6cm]{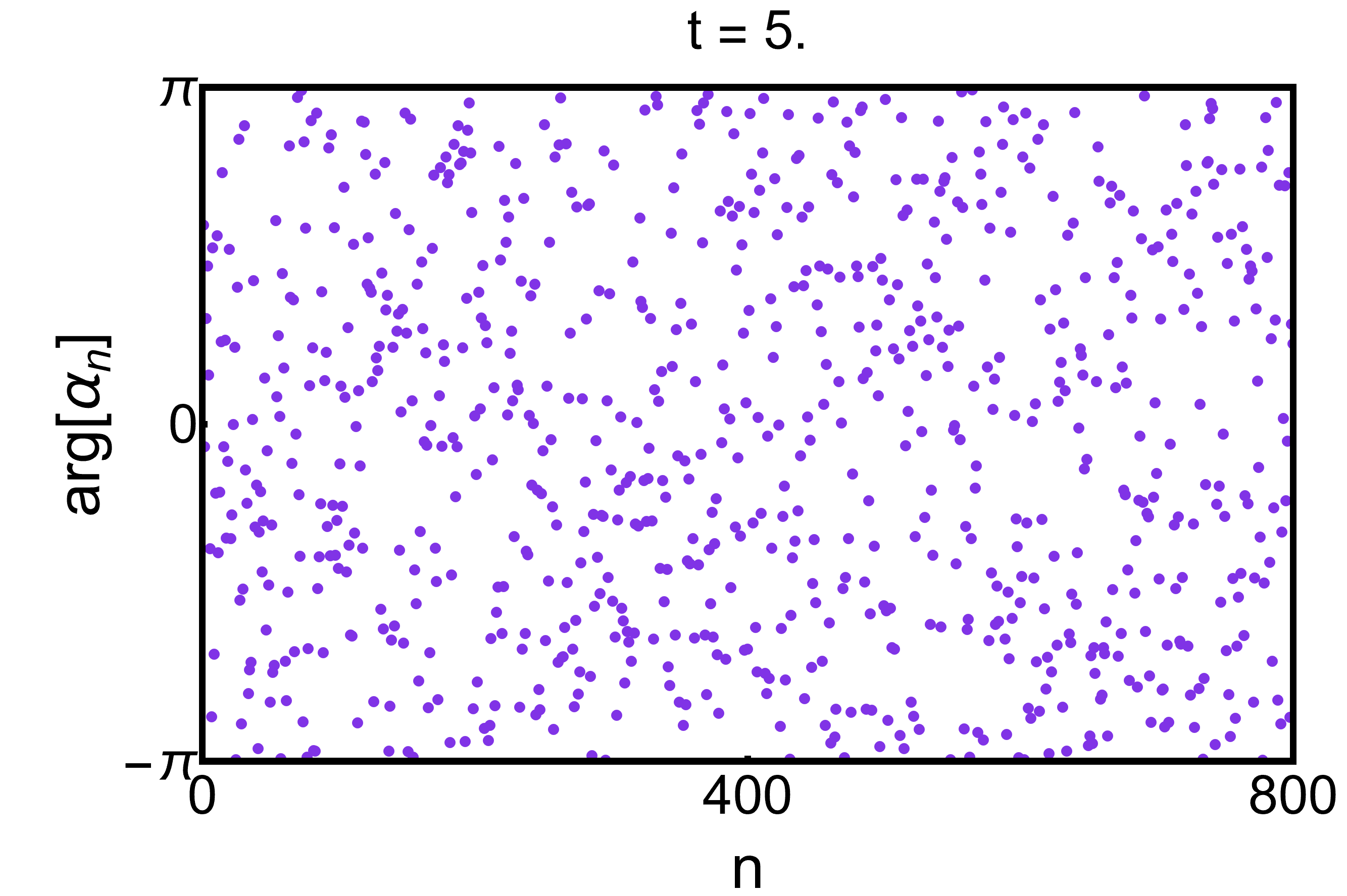}
		
		\vspace{0.2cm}
		
		\includegraphics[width = 5.6cm]{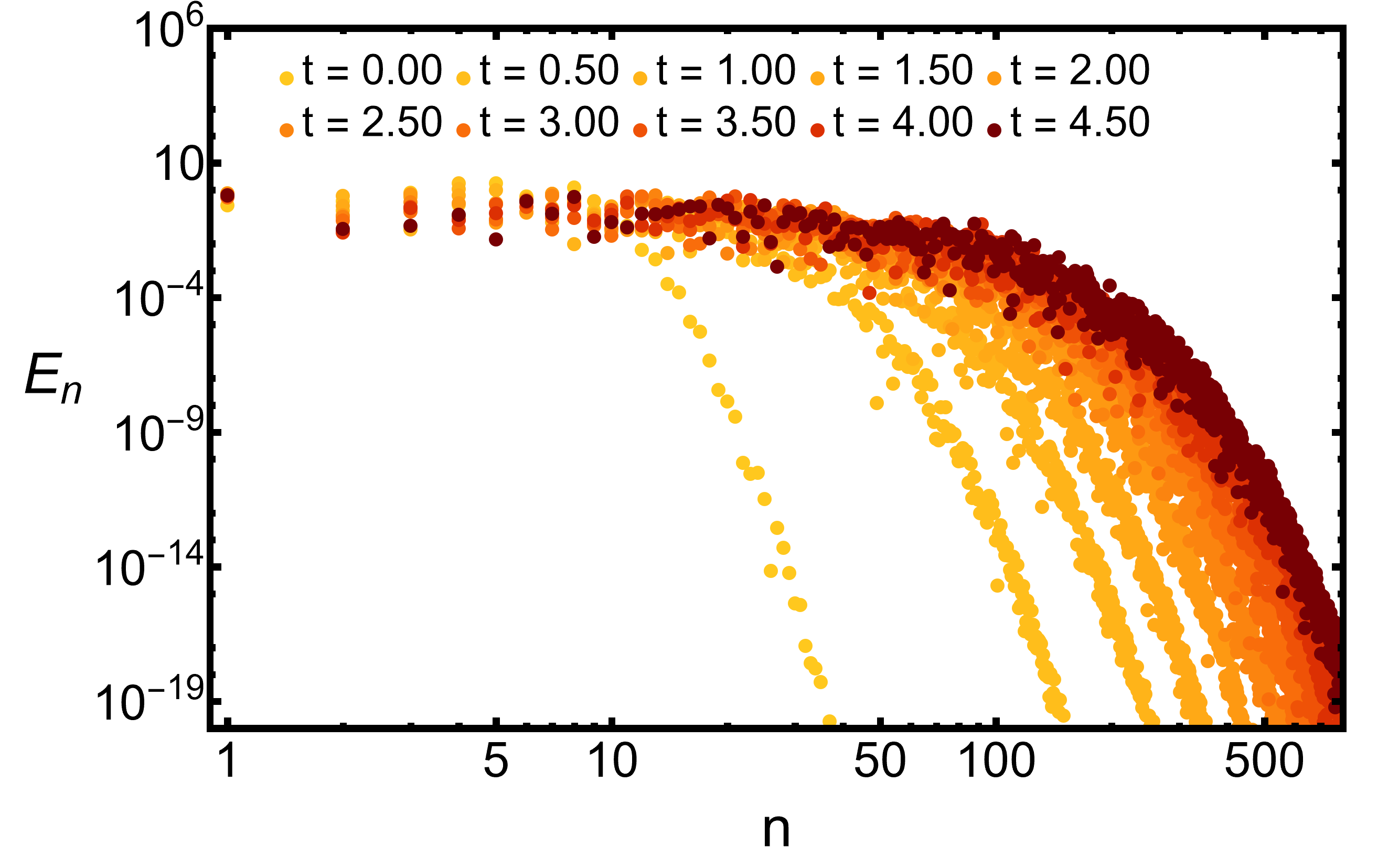}
		\includegraphics[width = 5.6cm]{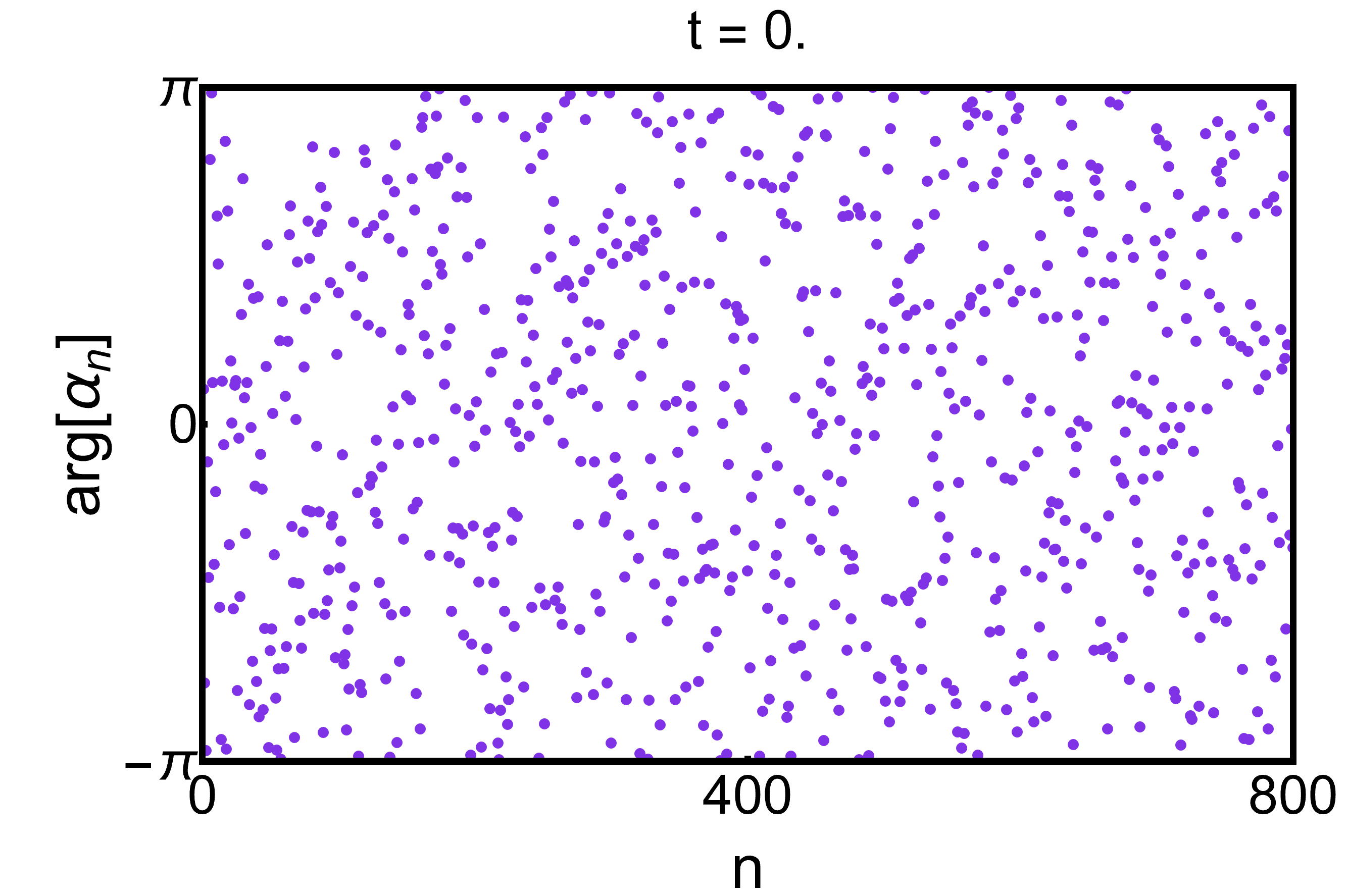}
		\includegraphics[width = 5.6cm]{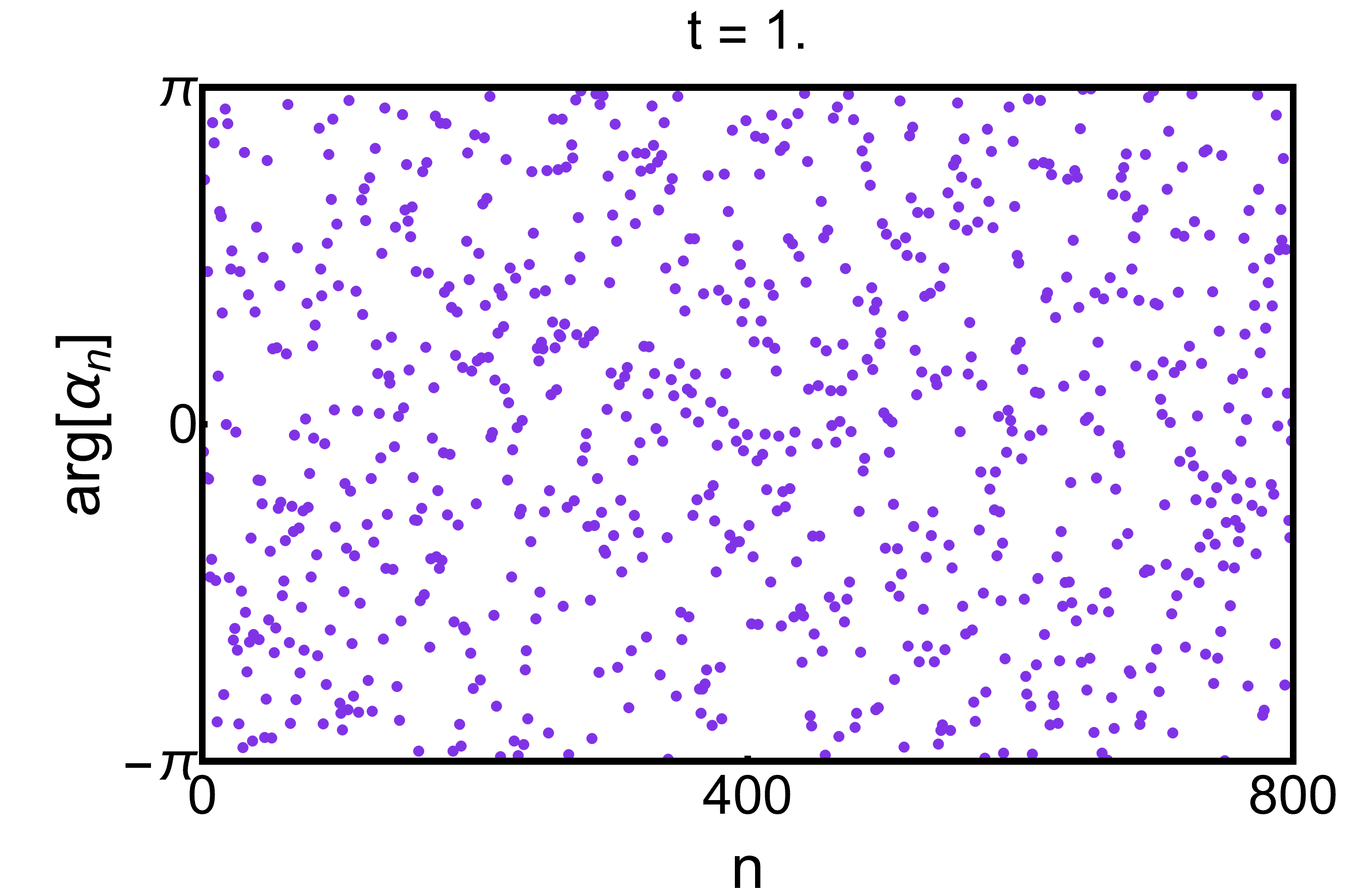}
		
		\vspace{0.1cm}
		\includegraphics[width = 5.6cm]{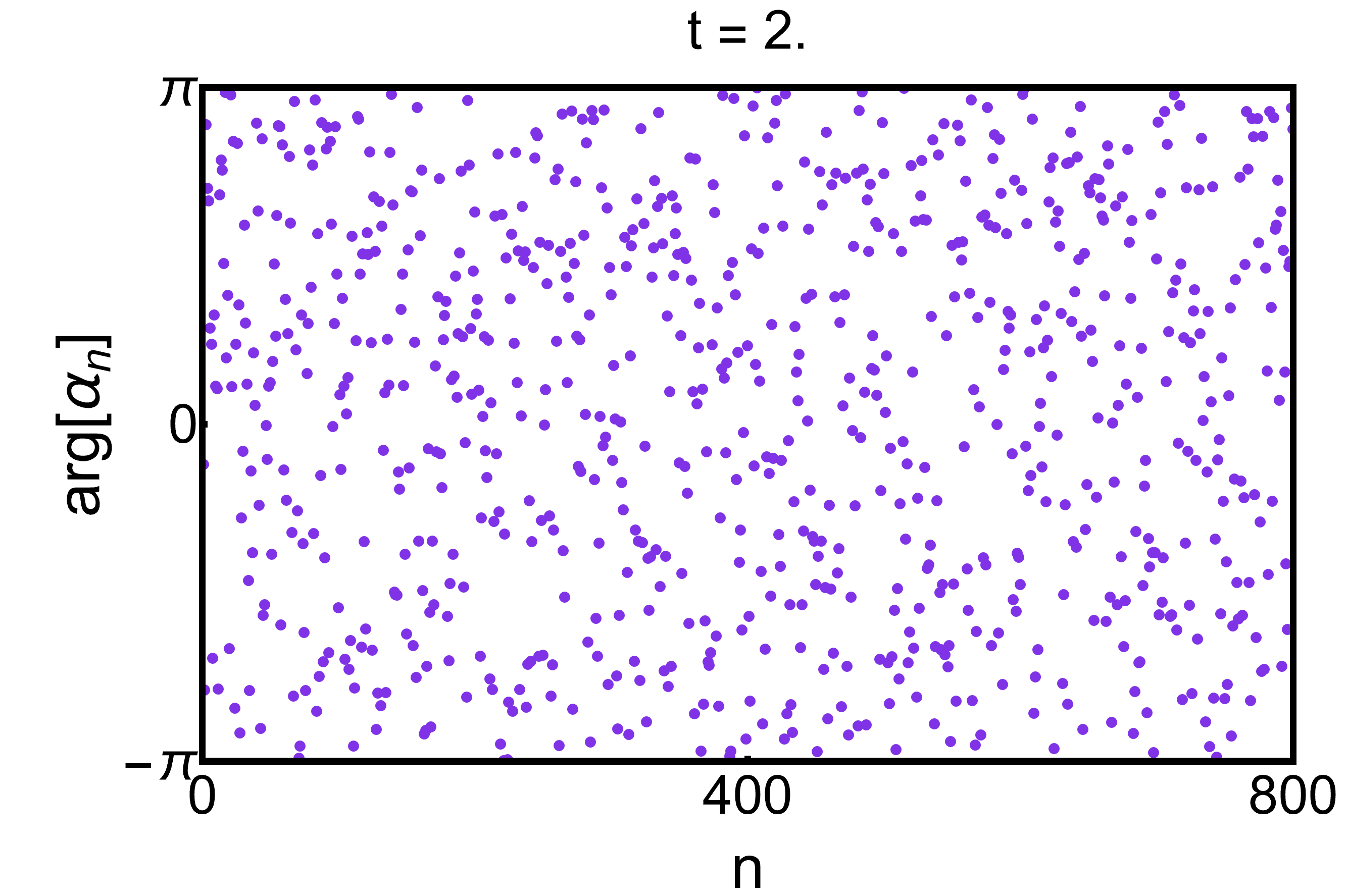}
		\includegraphics[width = 5.6cm]{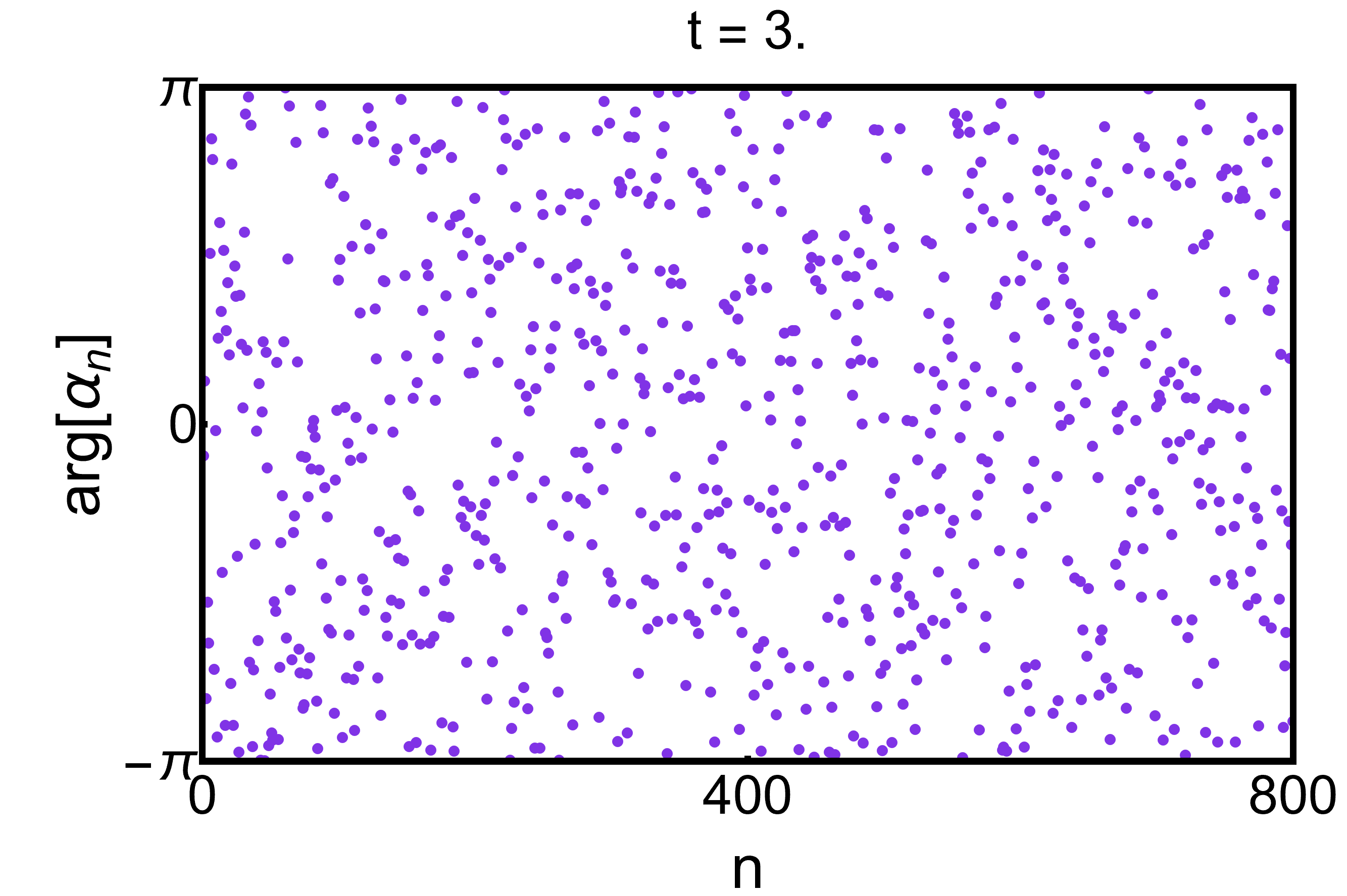}
		\includegraphics[width = 5.6cm]{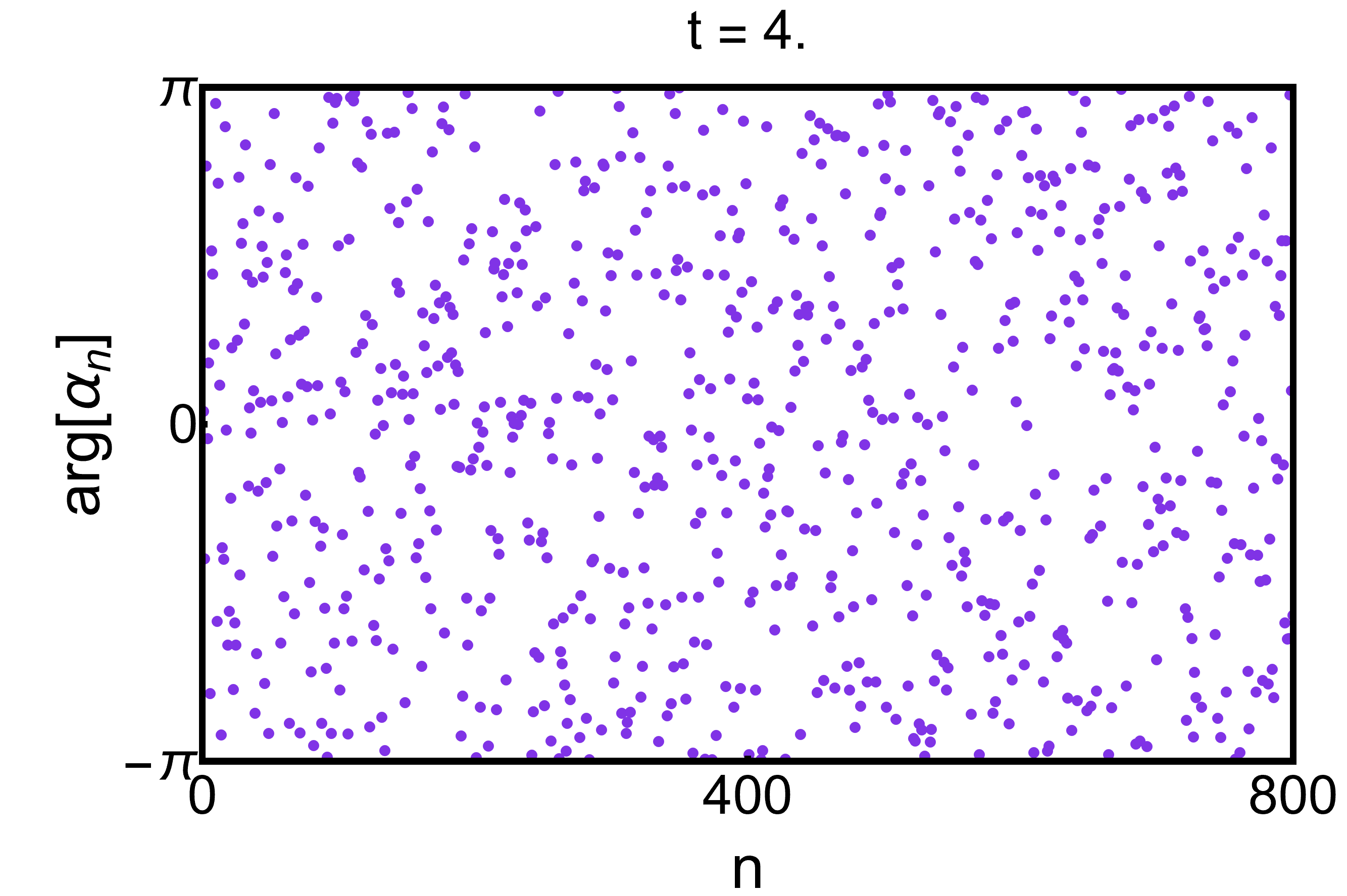}
		\caption{Evolution of a fully random Hamiltonian system initialized with coherent data (two first rows) and random data (two last rows). The first plot shows the evolution of the energy spectrum, while the other plots the distribution of the phases at different times. No emergence of coherence is appreciated.}
		\label{fig:fully_random_Hamiltonians}
	\end{figure}
	

	\subsection{Other structured subsets of couplings}
    \label{subsec:other_minimally_structured_Hamiltonians}

	Simulations of fully random systems suggest that the structured subset of couplings we have introduced in the previous models plays an essential role in the emergence of coherence. A natural question is then whether its specific algebraic structure, which leads to the invariant manifold, is also necessary. To address this point, we consider Hamiltonian systems that retain a similar structured subset of mode interactions as in Section~\ref{subsec:numerical_simulations_minimally_structured_systems}, but whose nonlinear couplings no longer satisfy the algebraic conditions of Proposition~\ref{prop:existence_invariant_manifold}. In particular, we take
	\beq\label{eqFig10}
	C_{nmkj} = \begin{cases}
		\ \ \ \sqrt{n+m+1} & \text{for } \quad   n m k j = 0,\\
		\xi_{nmkj} \ \sqrt{n+m+1} & \text{for } \quad nmkj\neq 0,
	\end{cases}
	\eeq
	where $\xi_{nmkj}$ are the random variables introduced in Section~\ref{subsec:numerical_simulations_minimally_structured_systems}. By construction, these coefficients do not satisfy the hypotheses of Proposition~\ref{prop:existence_invariant_manifold}, and the invariant manifold studied previously is absent. Nevertheless, as shown in Figure~\ref{fig:other_minimally_structured_Hamiltonians}, the energy spectra approach a power-law profile, suggesting an efficient transfer of energy toward high modes, while phase locking is preserved for coherent initial data, and it also emerges and remains robust for random initial data. In contrast to the phase-incoherent behavior observed for systems with fully random nonlinear couplings, these results further emphasize the organizing role played by even a small fraction of structured nonlinear couplings. Although only a simple family of examples was considered, the simulations suggest that coherent cascades may persist in a considerably broader class of random Hamiltonian systems than those covered by the invariant-manifold analysis developed in this work.
	\begin{figure}
		\centering
		\includegraphics[width = 5.6cm]{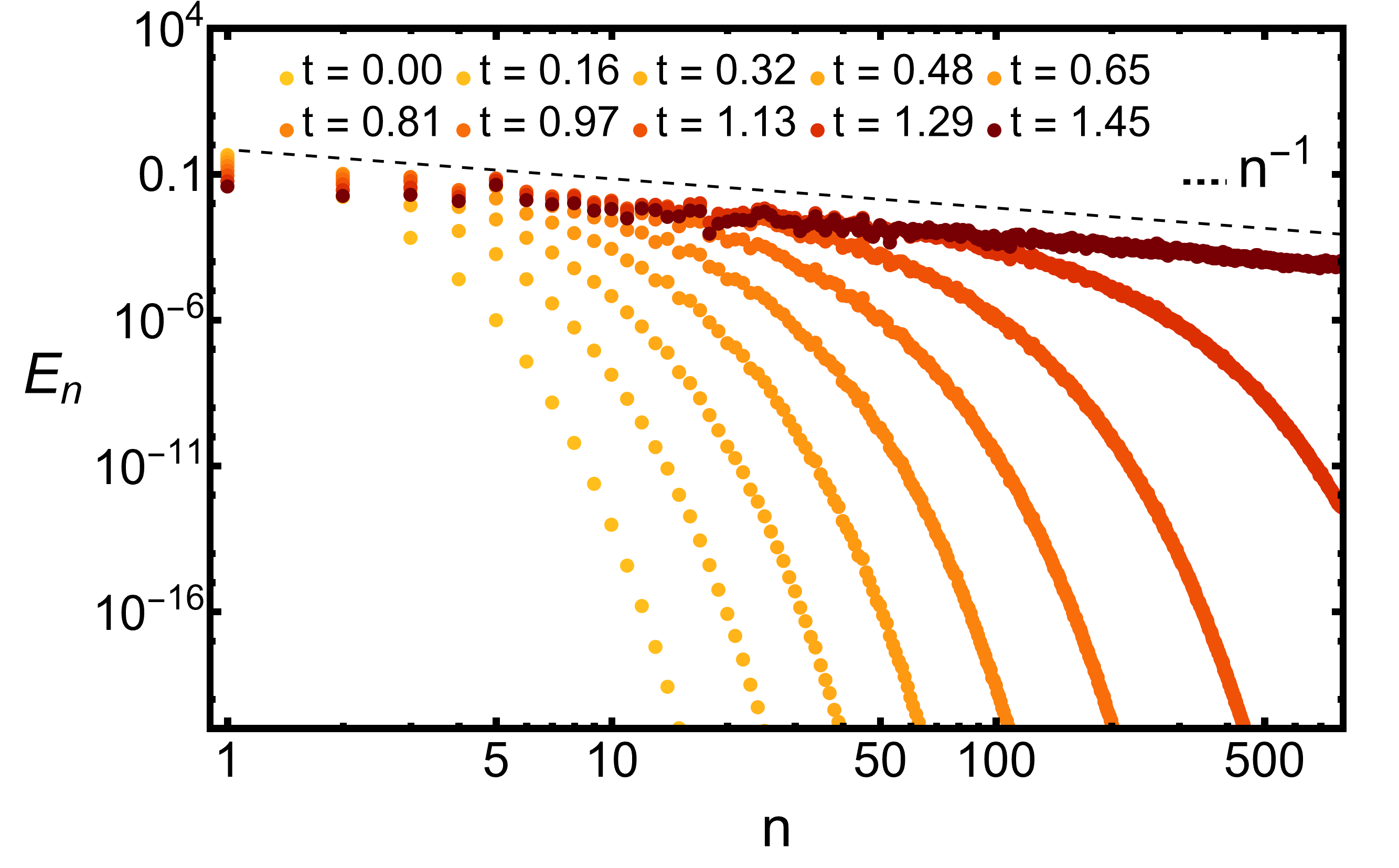}
		\includegraphics[width = 5.6cm]{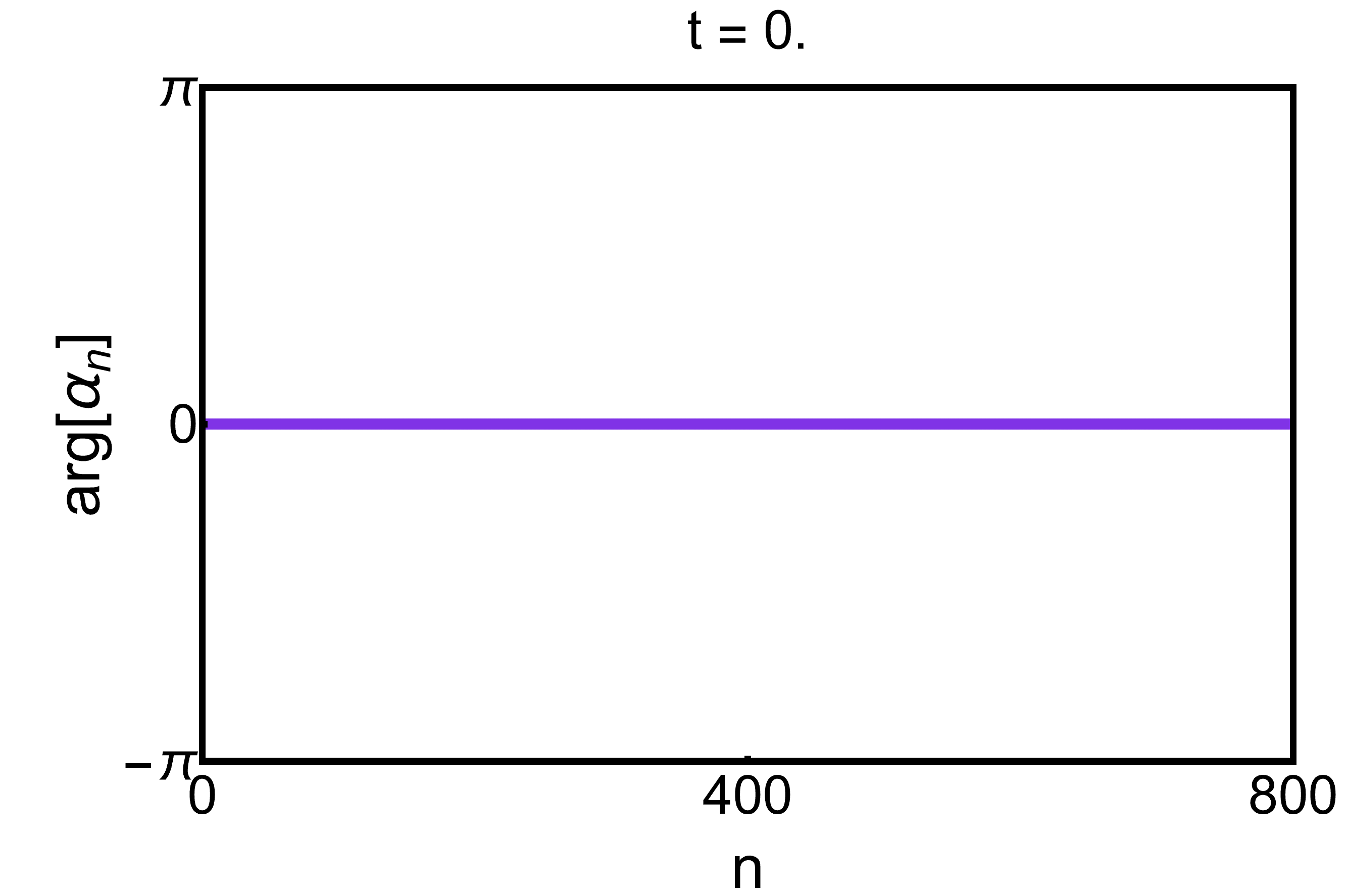}
		\includegraphics[width = 5.6cm]{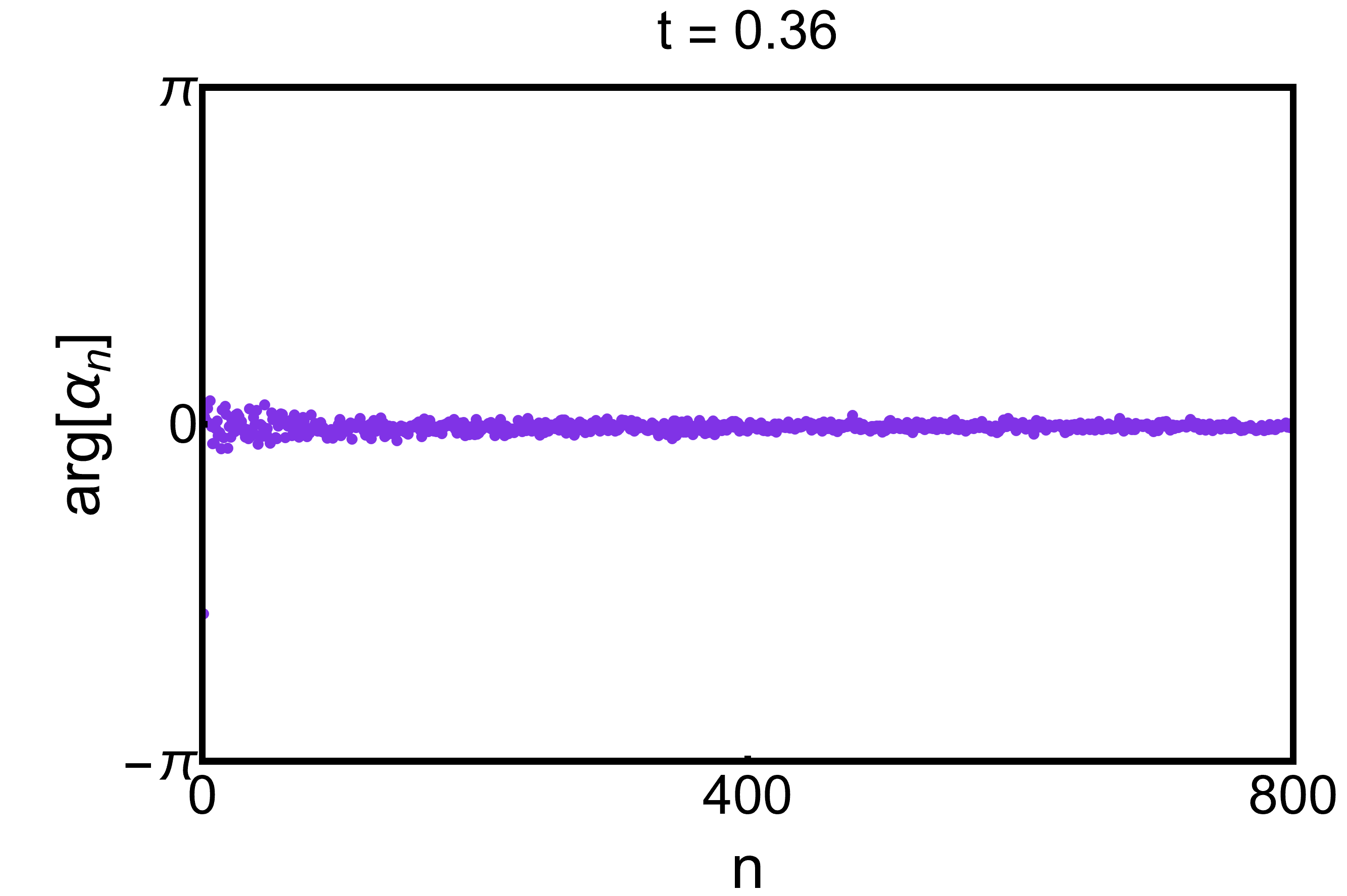}
		
		\vspace{0.1cm}
		\includegraphics[width = 5.6cm]{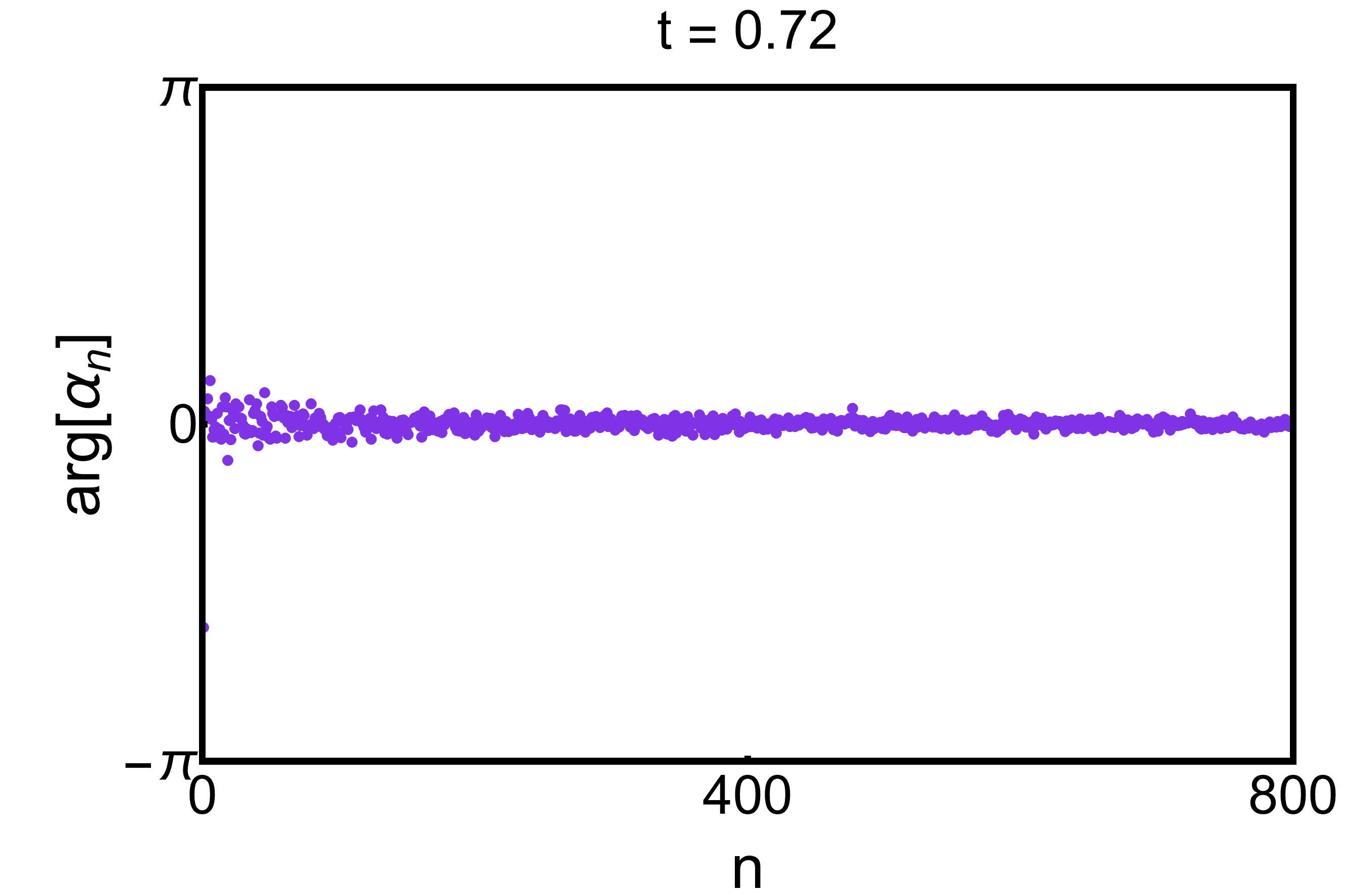}
		\includegraphics[width = 5.6cm]{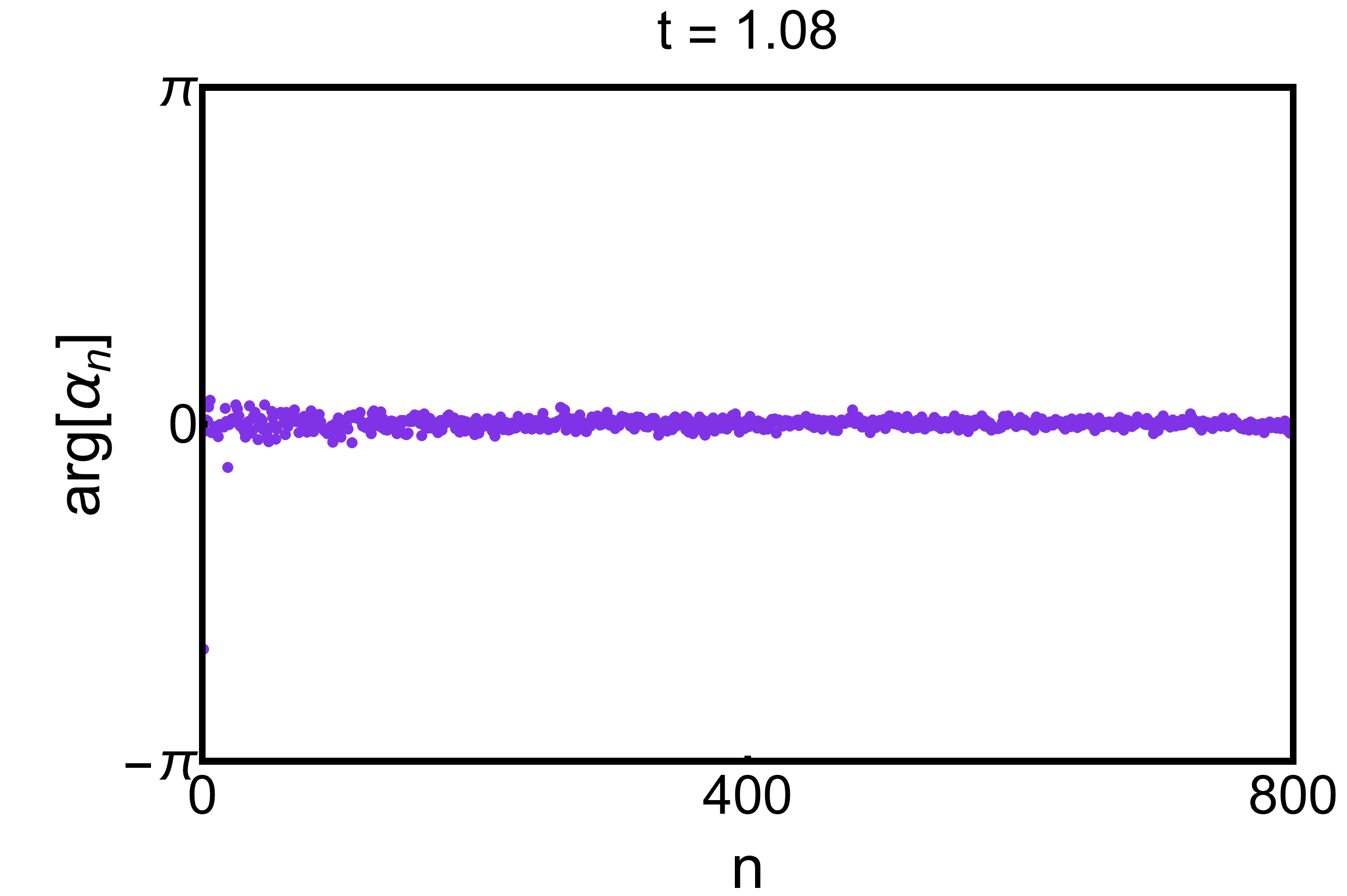}
		\includegraphics[width = 5.6cm]{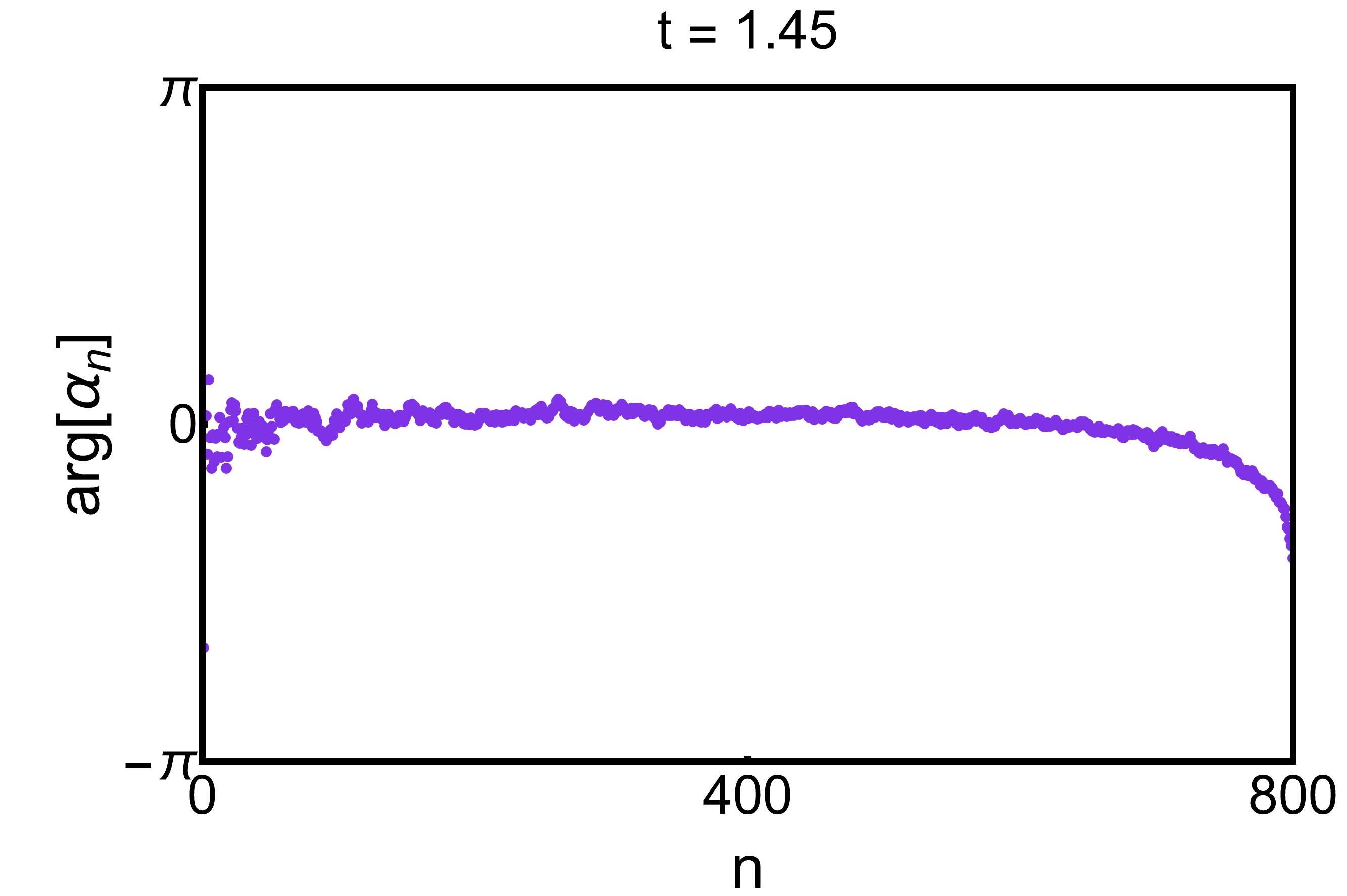}
		
		\vspace{0.2cm}
		
		\includegraphics[width = 5.6cm]{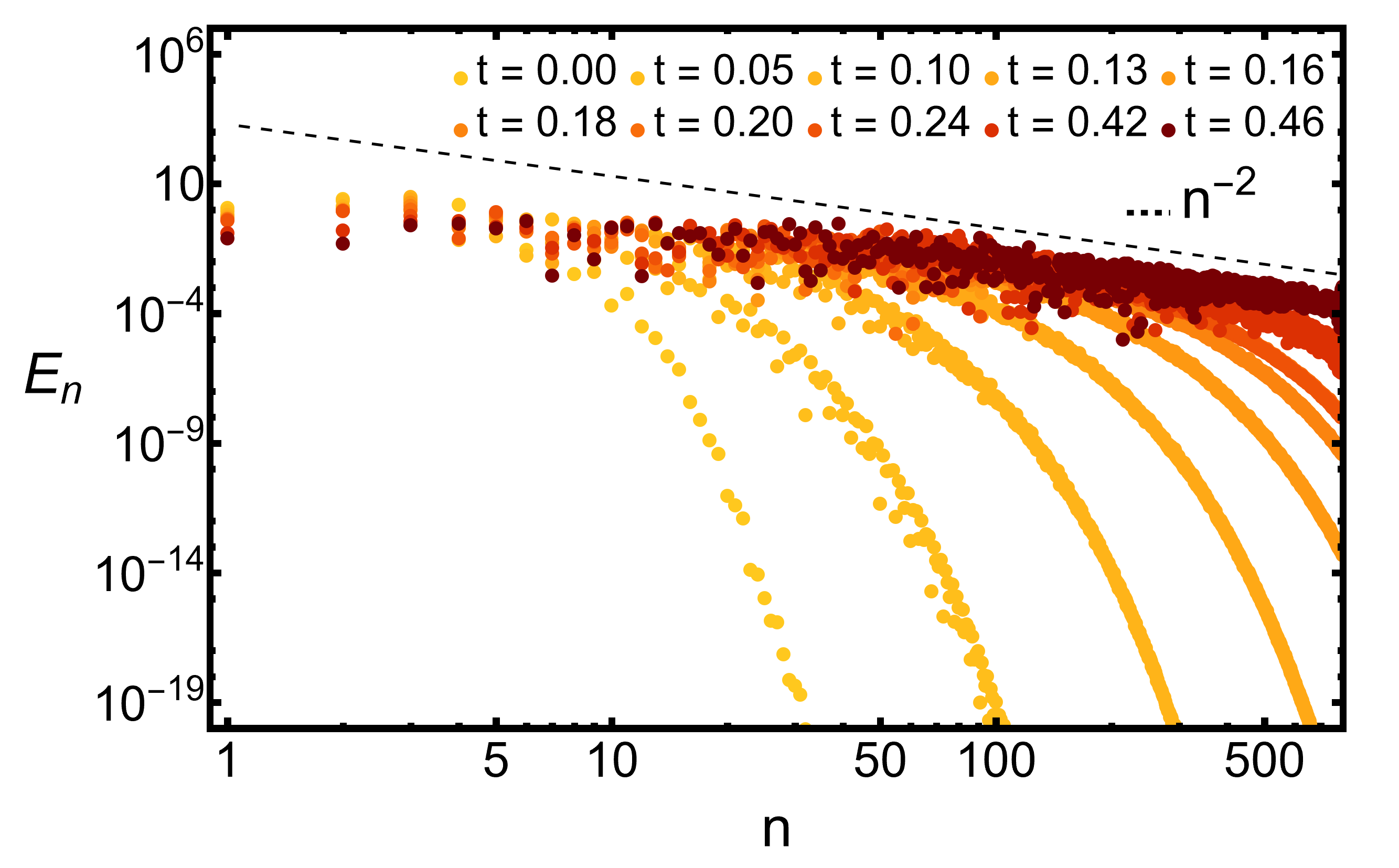}
		\includegraphics[width = 5.6cm]{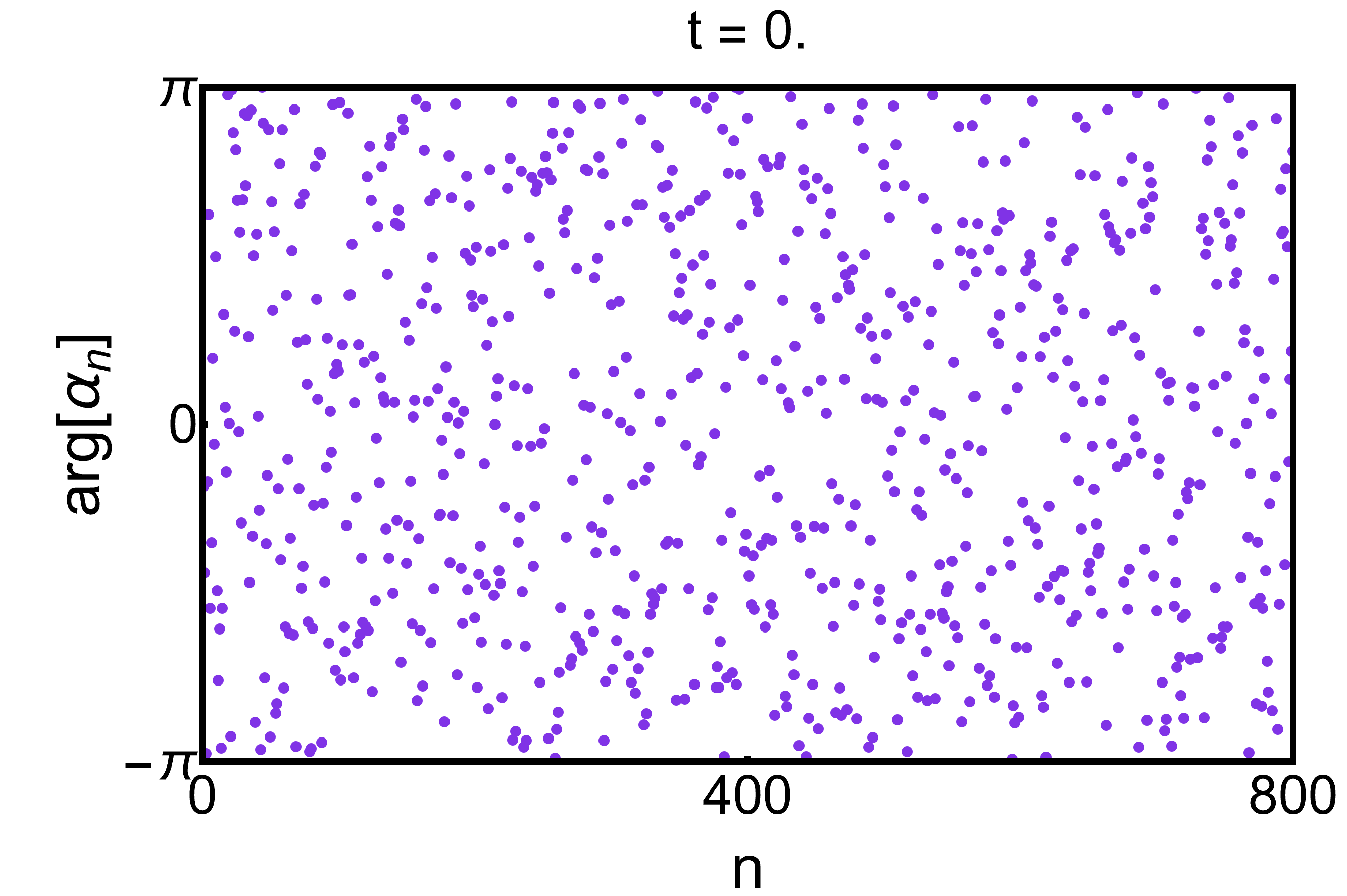}
		\includegraphics[width = 5.6cm]{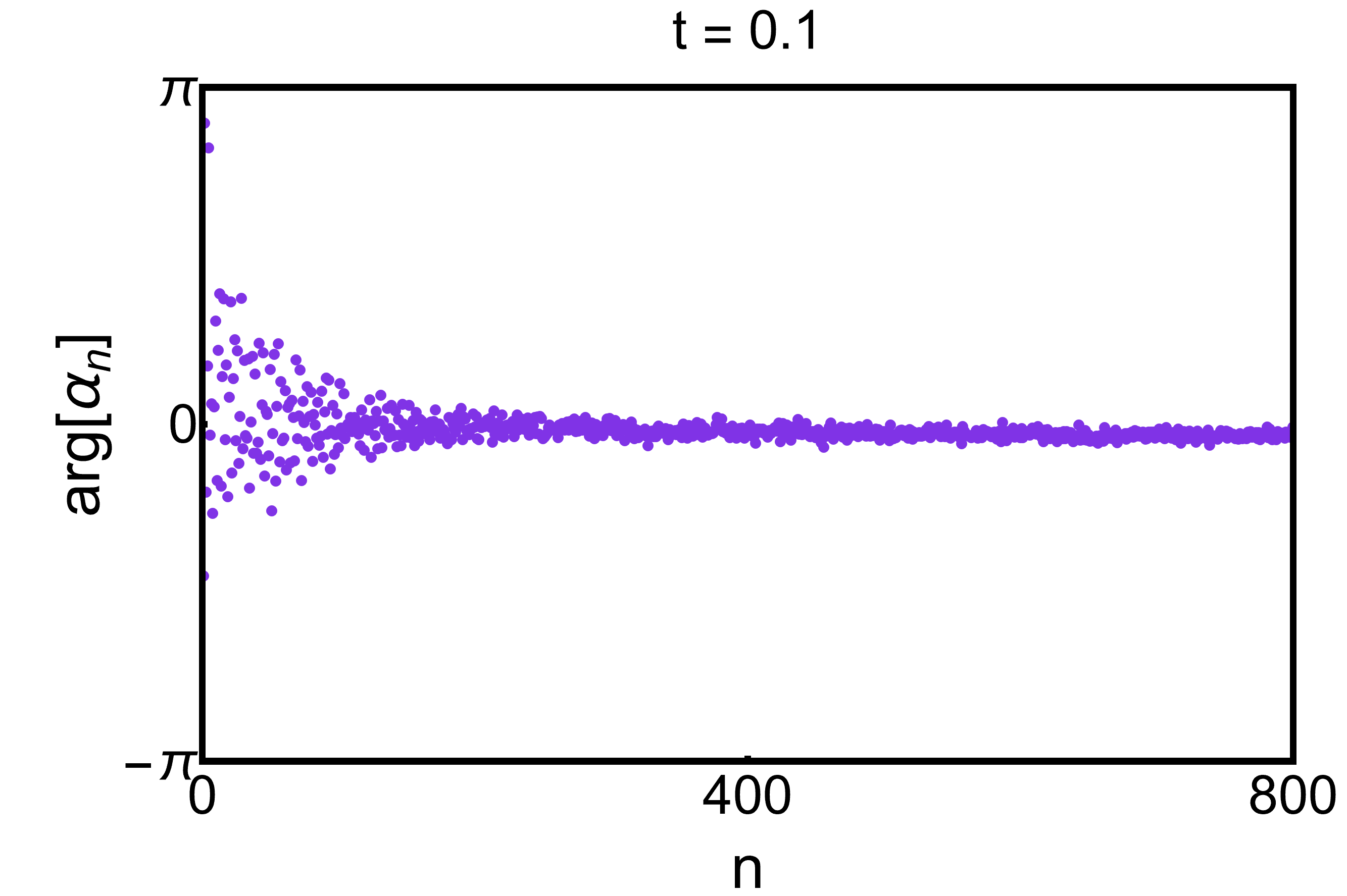}
		
		\vspace{0.1cm}
		\includegraphics[width = 5.6cm]{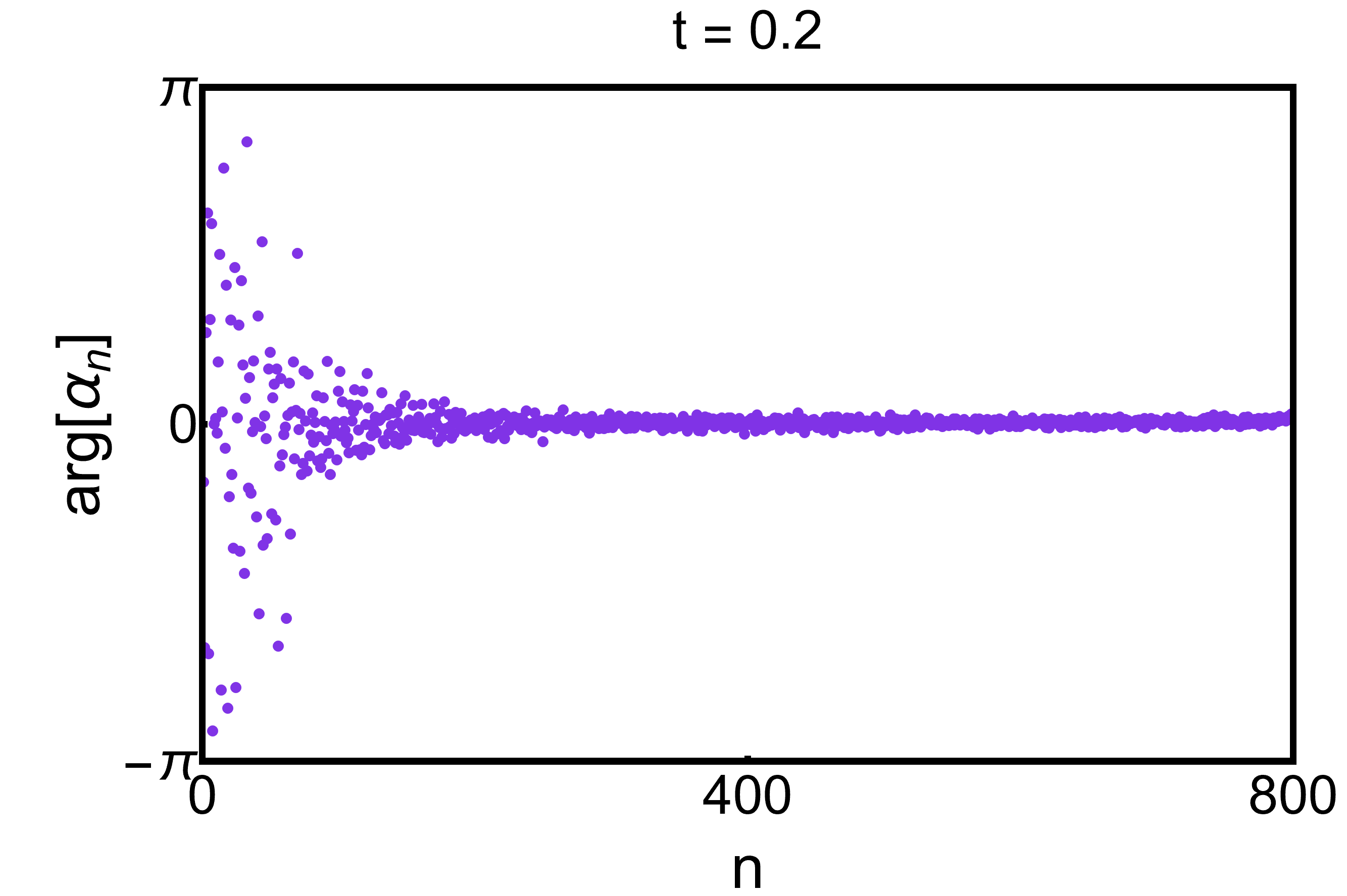}
		\includegraphics[width = 5.6cm]{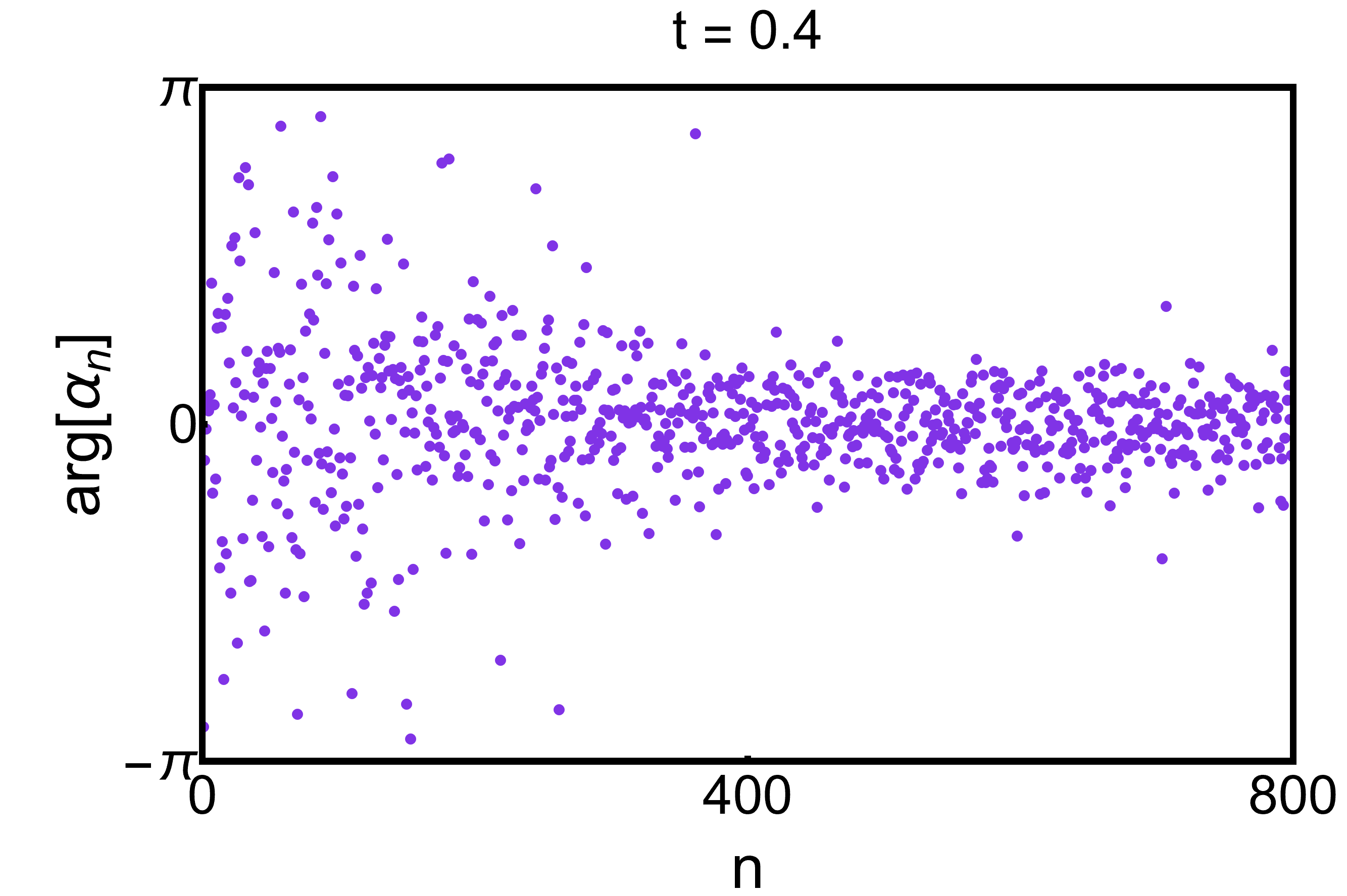}
		\includegraphics[width = 5.6cm]{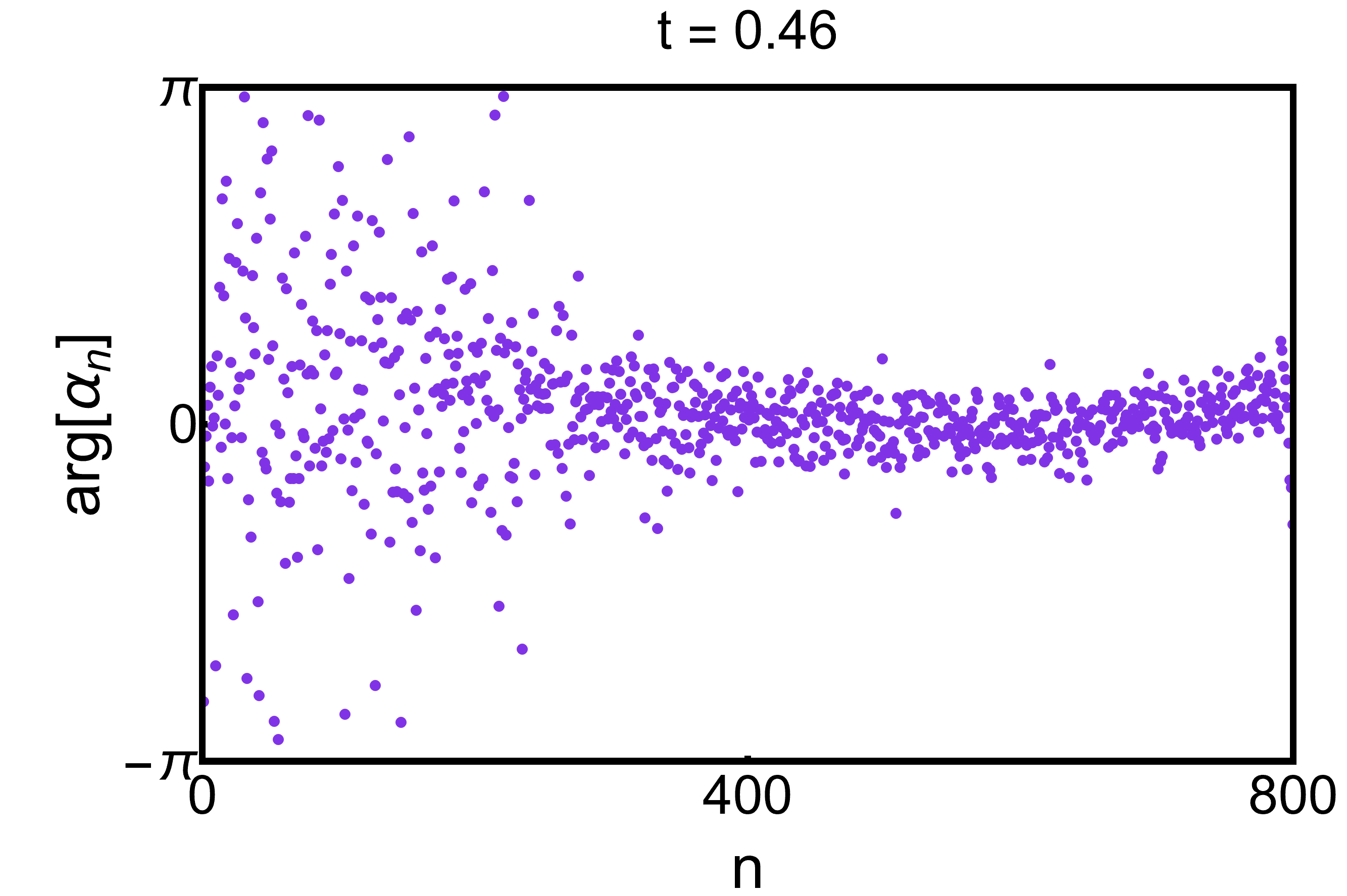}
		\caption{Evolution of the disordered Hamiltonian system (\ref{eqFig10}) initialized with coherent data (two first rows) and random data (two last rows). The first plot shows the energy spectrum at different times and the other plots the distribution of the phases. Both cases demonstrate coherence and strong transfer of energy to high modes. Deviations from the linear trend on the right-hand side of the last phase plots are due to truncation effects.}
		\label{fig:other_minimally_structured_Hamiltonians}
	\end{figure}


    
	\section{Solvable models with explicit expressions}
	\label{sec:Explicit_systems}
	
	Before concluding, we summarize the Hamiltonian systems that arise from our construction with explicit expressions for the couplings, and explain where the previously known models fit among them.
	
	\subsection{Explicit models}
	\label{subsec:Explicit_Coefficients_summary}
	
	\begin{itemize} 
		\item \underline{Type I ($\atwo =\aone= 0$ but $\bone \neq 0$)}:
		\beq
		C_{nmkj} = \begin{cases}
			1 & \text{for } nmkj = 0, \\
			\frac{\beta_1 (n+m) + \beta_0}{n+m-1} + \beta_2 \min(n,m,k,j) & \text{for } nmkj \neq 0.
		\end{cases} \qquad \text{and} \qquad \h_n = 1.
		\label{eq:Explicit_expression_mu2_0_mu1_0}
		\eeq
		
		\item \underline{Type II ($\atwo=0$ but $\aone\neq 0$)}:
		\beq
		C_{nmkj} = \begin{cases}
			[\aone (n+m) + \bone] \ \frac{\sqrt{\h_n \h_m\h_k \h_j}}{\h_{n+m}} & \text{for } nmkj = 0, \\
			\frac{\beta_1 (n+m)+ \beta_0}{n+m  - 1} \  [\aone (n+m) + \bone] \ \frac{\sqrt{\h_n \h_m\h_k \h_j}}{\h_{n+m}} & \text{for } nmkj \neq 0,
		\end{cases}
		\eeq 
		with sequences
		\begin{eqnarray}
			\{\aone > 0,\ \bone = 1\} &\Rightarrow& \h_n = A_{n}^{(2\aone+1,1)},\\[6pt]
			\{\aone = 1,\ \bone = 0\} & \Rightarrow& \h_n = \frac{(2 n)^{n - 1}}{n!}.
		\end{eqnarray}
		where $A_n^{(s,1)}$ is a particular set of Fuss-Catalan numbers \cite{FussCatalan}:
		\beq
		A_n^{(s,r)} = \frac{r}{sn+r} \binom{sn+r}{n}. 
		\label{eq:Fuss_Catalan_numbers_general_form_v2}
		\eeq

		\item \underline{Type III ($\atwo \neq 0$)}:
		\beq
		C_{nmkj} =  \begin{cases}
			\left[
			\atwo (nm + kj) + \aone (n+m) + \bone
			\right]\
			\frac{\sqrt{\h_n \h_m\h_k \h_j}}{\h_{n+m}} & \text{for } nmkj = 0, \\[6pt]
			(1-\beta) \frac{\left[\aone (n+m) + \atwo n m + \bone\right] \left[\aone (n+m)  + \atwo kj+ \bone\right]}{n+m-1}\ \frac{\sqrt{\h_n \h_m\h_k \h_j}}{\h_{n+m}} & \text{for } nmkj \neq 0,
		\end{cases}
		\label{eq:deformation_S_mu2_activated}
		\eeq
		with $\beta_2 = (1-\beta)\atwo$, $\beta_1 = (1-\beta)\aone$, $\beta_0 = (1-\beta)\bone$ and $\beta \in \mathbb{R}$. Explicit expressions for the sequence have been obtained for the parameters:
		\begin{align}
			\{\atwo = 1,\ \aone = 0,\ \bone = 0\}
			&\;\Rightarrow\;
			\h_{n\geq 1}
			= 2\frac{(2n)^{n-2}}{n!}
			\\[6pt]
			\{\atwo = \aone^2,\ \aone = \frac{s-1}{2},\ \bone = 1\}
			&\;\Rightarrow\;
			\h_{n\geq 1}
			= \frac{2}{(s-1)n+2} A_n^{(s,1)}\\[6pt]
			\left\{\atwo = \frac{1}{8\aone} - \frac{\aone}{2},\ \aone = \frac{1}{4s-2},\ \bone = -2\aone\right\}
			&\;\Rightarrow\;
			\h_{n\geq 1}
			= \frac{s-1}{sn-1} A_n^{(s,1)}\\[6pt]
			\left\{\atwo = \frac{1}{16\aone} - \aone,\ \aone = \frac{1}{8s-4},\ \bone = -\aone\right\}
			&\;\Rightarrow\;
			\h_{n\geq 1}
			= \frac{4s-2}{2sn-1} A_n^{(s,1/2)}
		\end{align}
		where $A_n^{(s,r)}$ are the Fuss-Catalan numbers given in (\ref{eq:Fuss_Catalan_numbers_general_form_v2}).
		
	\end{itemize}


	\subsection{Previously known models} 
	

	{\bf Systems with Lax pair structures:} The cubic Szeg\H{o}-equation \cite{GG}, the $\beta$-Szeg\H{o}-equation \cite{BE}, and the min-Hamiltonian system \cite{minHamiltonian} can be written in the form (\ref{eq:Resonant_Equation})-(\ref{eq:C_symmetries}) with coefficients:
	\beq
	C_{nmkj}^{(Sz)} = 1, \qquad C_{nmkj}^{(\beta)} = \begin{cases}
		1 & \text{for }  n m k j =0\\
		1-\beta & \text{for }  n m k j \neq 0
	\end{cases} \qquad \text{and} \qquad C_{nmkj}^{(\text{min})} = \min(n,m,k,j) + 1.
	\eeq
	These models were first constructed following {\it ad hoc} processes, while now arising from our systematic construction for the parameters $\{\atwo,\aone,\bone\} = \{0,0,1\}$ and particular choices of $\{\beta_2,\beta_1,\beta_0\}$ in (\ref{eq:Explicit_expression_mu2_0_mu1_0}) for each of them. The $\alpha$-Szeg\H{o}-equation \cite{Xu} and the damped-Szeg\H{o} equation \cite{GG2,GGH} are linear deformations of the original cubic Szeg\H{o} equation: $i\dot{\alpha_0} \to i\dot{\alpha_0}  \pm \alpha_0$ and $i\dot{\alpha_0} \to i\dot{\alpha_0}  \pm i\alpha_0$, which can be easily incorporated in our construction.

	Interestingly, all these models admit Lax pair structures. This is not an inherent ingredient in our construction; however, given that the framework generates infinitely many systems, it is not implausible that additional examples with Lax pair structures will emerge.

	
	\noindent {\bf Finite-time cascade systems:} Other two classes of Hamiltonian systems have been recently introduced in \cite{Biasi} and \cite{BG} in connection to the problems of coherent condensation and structure formation. They are given by the coefficients
	\beq
	C_{nmkj}^{(Z)} = \left[\aone(n+m)+1\right]\frac{\sqrt{F_nF_mF_kF_j}}{F_{n+m}}
	\eeq
	and
	\beq
	C_{nmkj}^{(Y)} = \begin{cases}
		\left[\atwo (n m+kj) + \aone (n+m) + 1\right]\frac{\sqrt{F_nF_mF_kF_j}}{F_{n+m}} & \text{for  } n m k j =0\\
		0 & \text{otherwise},
	\end{cases}
	\eeq
	which arise from our construction for the parameters $\aone > 0$ and
	\begin{align}
		&\{\atwo = 0,\ \ \bone = 1\} \qquad \text{and} \qquad \{\beta_2 = 0,\ \beta_1 = 1,\ \beta_0 = -1\},\\
		&\{\atwo = \aone^2,\ \bone = 1\} \qquad \text{and} \qquad \{\beta_2 = 0,\ \beta_1 = 0,\ \beta_0 = 0\}.
	\end{align}


	\section{Discussion}
	\label{sec:Discussion}
	
	In this work, we have addressed the question of whether coherent energy cascades can emerge and persist in disordered Hamiltonian systems. In particular, we have demonstrated that fully structured mode interactions are not essential to produce such cascades, but an infinitesimal fraction of structured nonlinear couplings --- embedded in a cloud of random or stochastic ones --- suffices to sustain phase-organized energy transfer to arbitrarily high modes. We now discuss the implications of these findings and their connections to related topics.
	
	\noindent {\bf Scope:} The models presented in this work have specific algebraic constraints in their couplings. This is, however, a consequence of our proof strategy rather than an intrinsic necessity of coherent cascades: in order to establish their existence rigorously, we leveraged a solvable structure that survived large-scale randomization of the couplings. The numerical evidence presented in Section~\ref{sec:numerics} --- in particular the emergence of coherent cascades from random initial data through a process of phase locking --- suggests that the phenomenon is considerably more general than what we have proved here. Therefore, extending the study to broader classes of disordered Hamiltonian systems would be an interesting direction.
	
	\noindent {\bf Structural requirements:} The above direction motivates the question of what the minimal structural requirements of Hamiltonian systems actually are to produce coherent cascades. Numerical simulations of fully random systems — with no structured subsets of couplings but with the resonant condition $n+m=k+j$ maintained — demonstrate that the organized resonance web alone is insufficient to produce coherent energy dynamics, highlighting the essential role of the structured subset of couplings. However, it remains unclear what minimal conditions that subset must satisfy. Whether the ratio of structured to random couplings we have obtained can be reduced, whether the rigid algebraic form of the structured couplings can be relaxed, and what asymptotic behaviors of the interaction coefficients are necessary to support coherent cascades are all natural questions that this work opens up, and that we hope will be pursued in future studies.
	
	\noindent {\bf The growth of Sobolev norms:} The unbounded growth of Sobolev norms --- the rigorous verification of energy transfer from low to arbitrarily high modes --- has been the subject of intense attention over the last three decades, with relatively few rigorous examples identified. Our results provide explicit examples of this behavior in largely disordered systems, with Sobolev norms growing to infinity both in finite and infinite time. These examples are rigorously established on the invariant manifold, while numerical simulations provide evidence that such behaviors extend to broader classes of initial data. We are therefore optimistic that unbounded growth of Sobolev norms holds for broad classes of initial data in the models presented here --- and in `nearby' Hamiltonian systems, both random and deterministic.
	
	\noindent {\bf Phase locking:} Underlying all of the above is a mechanism that our results reveal but do not yet fully explain: the emergence of phase locking from random initial data in systems with a large majority of random couplings. This is striking in the systems we consider, since there is no obvious algebraic mechanism forcing phase alignment when the initial conditions are far from the invariant manifold. The fact that phase locking nonetheless emerges suggests that the minimal structured subset acts as an organizing seed capable of overcoming the disordering effect of the random majority of the nonlinear couplings. This behavior has a significant effect in position space compared to random phase cascades observed in kinetic regimes. While both settings are characterized  by the energy transfer to small spatial scales, phase locking results in energy localization, and statistically independent phases result in spatial roughness (noise-like spatial profiles). Understanding the progenitor mechanism for phase locking more precisely, and in particular whether it can be captured by a reduced description involving only the structured couplings, is therefore an important problem that goes beyond coherent energy cascades, and involves the emergence of other coherent structures such as condensates \cite{Biasi} and rogue waves \cite{RicardoGrandeRogueWaves}, among others. 
	
	\noindent {\bf Weakly nonlinear dispersive PDEs with disorder:} These questions are all closely connected to understanding whether the general observations of this work have realizations in weakly nonlinear waves in disordered media and canonical models of physics, such as nonlinear Schr\"odinger equations. The resonant systems studied here arise as leading-order approximations of weakly nonlinear Hamiltonian PDEs in the regime where nonlinear effects are small but accumulate over long times. In this context, the interaction coefficients $C_{nmkj}$ encode the nonlinear part of the original PDE. The disorder we introduce in these coefficients therefore corresponds to a form of disorder in the nonlinear part of the original PDE, distinct from the more commonly studied linear disorder introduced via random potentials \cite{PicozziDramaticAceleration,PicozziStrongDisorder}. It is natural to ask whether experimental realizations of disordered models similar to ours with coherent energy cascades are achievable, for instance by introducing spatially dependent factors in the nonlinear part of nonlinear Schr\"odinger equations with the harmonic potential, or in similar dispersive equations. While it may be challenging to find ways to finetune individual mode couplings in a reailstic experiment, the subject is attractive, and also connects to the question whether resonant systems with prescribed (not necessarily random) couplings can be engineered in a lab (possible applications can extend far beyond the topics of this article, for example, to the domain of quantum simulations). Numerical results from Section~\ref{subsec:other_minimally_structured_Hamiltonians} suggest that the presence of coherent cascades in disordered systems is plausible; a key step is in finding a way to maintain a pivotal subset of structured nonlinear couplings while conducting large-scale randomization. 

    \noindent {\bf Turbulence in weakly nonlinear PDEs without disorder:} A different way to view introducing disorder into the couplings is as the first step in exploring the neighborhood of a given system in the space of interaction parameters. This may help one bound the differences between solutions of two different systems, for example, a solvable and an analytic intractable one. Interesting unproved conjectures exist in the literature in relation to turbulent energy transfer in PDE dynamics, for example, the Bizo\'n-Rostworowski AdS instability conjecture \cite{BR} (see \cite{E2} for a review). The latter instability can be analyzed \cite{CEV1,BMR} in terms of a resonant Hamiltonian system of the form (\ref{eq:Resonant_Equation}) with very complicated mode couplings \cite{CEV1} that make it analytically intractable.
    Could it be that this, or another similar system, derived from a physically motivated PDE, is sufficiently close in the coupling parameter space to one of the solvable models considered here, while the turbulent cascades are sufficiently robust (as they were in the random coupling explorations presented here) to survive the coupling modifications from one system to the other? A reassuring feature is that, heuristically, the large mode-number asymptotics of the coupling coefficients must play a crucial role in the formation of turbulent cascades, and the solvable resonant systems constructed here display power-law asymptotics similar to what has been seen in resonant Hamiltonian systems arising from realistic PDEs \cite{UVpwr1,UVpwr2,UVpwr3}. Moreover, Hamiltonian models with solvable structures similar to those considered here arise naturally from nonlinear Schr\"odinger and wave equations \cite{BBE1,Evnin,BBE2}.
    
	\noindent {\bf Outlook:} Our results suggest that coherent energy cascades are not a fragile consequence of highly organized nonlinear interactions, but can persist in predominantly disordered Hamiltonian systems provided that a small structured core remains. Understanding how little structure is needed to organize such turbulent dynamics is an important question for future work.  This points toward a perspective in which coherence and disorder coexist rather than collide.


	\noindent \underline{\bf Acknowledgments:} We are grateful to Patrick G\'erard, Daniel Eceizabarrena, and Miguel A. Mu\~noz Mart\'inez for useful discussions on this work. The project that gave rise to these results received the support of a fellowship
	from the ``la Caixa'' Foundation (ID 100010434), with fellowship code LCF/BQ/PI24/12040029, the Mar\'ia de Maeztu grant CEX2023-001318-M funded by MICIU/AEI /10.13039/501100011033, the Xunta de Galicia (CIGUS Network of Research Centres and grant ED431C-2025/11), and an IGFAE Summer Fellowships 2025. This work benefited from the use of the infrastructures provided by the Galician Supercomputing Center (CESGA).


    \noindent \underline{\bf Author contributions:} A.B. and O.E. conceived the initial idea of this project. A.B. developed the analytic and numerical frameworks and wrote the paper. O.E. contributed further to the writing and contextualization. A.I.J. conducted the numerical exploration of random time-independent systems and assisted in analytic developments. B.C. conducted the numerical exploration of stochastic systems and adapted the code accordingly.

	
	\appendix
	
	\section{Hamiltonian conservation on the invariant manifold}
	\label{app:Hamiltonian_conservation}
	\begin{lemma}
		Consider the family of Hamiltonian systems constructed in Proposition~\ref{prop:existence_invariant_manifold} where $S_{nmkj}^{(0)}(t)$ are the only time-dependent couplings. Then the Hamiltonian $\mathcal H$ in (\ref{eq:Hamiltonian}) is conserved for trajectories on the invariant manifold.
	\end{lemma}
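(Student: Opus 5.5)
Since $\mathcal H$ depends on time both through the trajectory and, explicitly, through the couplings $S^{(0)}_{nmkj}(t)$, I will split its derivative as
\[
\frac{d\mathcal H}{dt}=\frac{1}{2}\sum_{n+m=k+j}\dot S_{nmkj}\,\bar a_n\bar a_m a_k a_j+\Big(\sum_n \frac{\partial\mathcal H}{\partial a_n}\dot a_n+\mathrm{c.c.}\Big).
\]
The equations of motion (\ref{eq:Resonant_System_S_form_section_construction}) give $i\h_n\dot a_n=\partial\mathcal H/\partial\bar a_n$ at every instant, including when $S^{(0)}$ is time dependent. Hence the second bracket equals $\sum_n \h_n^{-1}\big(-i|\partial_{\bar a_n}\mathcal H|^2+i|\partial_{\bar a_n}\mathcal H|^2\big)=0$, the usual Hamiltonian cancellation. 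If some $\h_n$ vanish, I would instead pair $\dot a_n$ with the equation directly, using the symmetries of $S$, to reach the same cancellation.

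So the whole task reduces to showing that the explicit term
\[
\frac{1}{2}\sum_{n+m=k+j}\dot S_{nmkj}\,\bar a_n\bar a_m a_k a_j
\]
vanishes on the manifold (\ref{eq:invariant_manifold_V2_section_construction}). Only couplings with no zero index vary in time, and on the manifold $a_n=c\,p^{n-1}$ for $n\ge1$. Therefore this term becomes
\[
\frac{1}{2}\frac{|c|^4}{x^2}\sum_{n,m\ge1}x^{n+m}\sum_{k=1}^{n+m-1}\dot S^{(0)}_{nmk(n+m-k)},
\]
because $\bar p^{\,n+m-2}p^{\,k+j-2}=|p|^{2(n+m-2)}$ when $k+j=n+m$.

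Next I will use condition (ii) of Proposition~\ref{prop:existence_invariant_manifold}, which must hold at each time for the manifold to remain invariant. The inner sum equals $(\beta_0+\beta_1(n+m)+\beta_2nm)\h_n\h_m$. The $\beta_i$ and the $\h_n$ are time independent, since the diagonal entries $S^{(0)}_{nmnm}(t)$ are recomputed from the random ones exactly so as to keep these sums fixed (Step 5 of the construction). Thus $\sum_k\dot S^{(0)}_{nmk(n+m-k)}=0$ for every pair $n,m$, and the explicit term vanishes identically on the manifold.

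The main point of care is the differentiation of $\mathcal H$ along a trajectory whose couplings change, possibly non-smoothly in the stochastic case. For Ornstein--Uhlenbeck couplings $\dot S$ is not classical. I would therefore argue at the level of increments: $\mathcal H(t+\Delta t)-\mathcal H(t)$ splits into a trajectory part and a coupling-increment part. The coupling-increment part is $\frac{1}{2}\sum\Delta S\,\bar a\bar a a a$, which vanishes on the manifold for the same reason, namely that the constrained sums are unchanged. This matches the piecewise-constant update scheme in the paper. Convergence of all the sums is ensured by $x<x_c$ together with Corollary~\ref{corollary:asymptotic_behavior_Fn_any_parameters} and the at-most-polynomial growth in $n,m$ of the prefactor $\beta_0+\beta_1(n+m)+\beta_2nm$.
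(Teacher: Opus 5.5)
Your proposal is correct and follows the same route as the paper. The paper also uses $a_k a_{n+m-k}=a_n a_m$ on the manifold to show that the time-dependent couplings enter $\mathcal H$ only through the sums $\sum_{k}S^{(0)}_{nmk(n+m-k)}$, which condition (ii) fixes to $(\beta_0+\beta_1(n+m)+\beta_2 nm)F_nF_m$, so the restricted Hamiltonian has no explicit time dependence; your splitting of $d\mathcal H/dt$ into a flow part (vanishing by the Hamiltonian structure) and an explicit-$t$ part, and your increment argument for non-smooth stochastic couplings, just spell out steps the paper leaves implicit.
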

	
	\begin{proof} The key observation for this proof is that the Hamiltonian (\ref{eq:Hamiltonian}) has no explicit dependence on time  when restricted to the invariant manifold, being conserved.
		
		The only explicit time dependence of the Hamiltonian arises through the couplings $S_{nmkj}^{(0)}(t)$. It is therefore sufficient to show that this dependence disappears when the Hamiltonian is restricted to the invariant manifold. To show that, we use that  $a_k(t)a_{n+m-k}(t) = a_n(t)a_{m}(t)$ on the invariant manifold, and the constraint (\ref{eq:Condition_3_ONLY}).
		
		The couplings $S_{nmkj}^{(0)}(t)$ only appear in the following combination:
		\begin{multline}
			\sum_{n=1}^{\infty}\sum_{m=1}^{\infty}\sum_{k=1}^{n+m-1}
			S_{nmk,n+m-k}^{(0)}(t)\,
			\bar{a}_n(t) \bar{a}_m(t) a_k(t) a_{n+m-k}(t)=  \sum_{n=1}^{\infty}\sum_{m=1}^{\infty}
			|a_n(t)|^2 |a_m(t)|^2
			\sum_{k=1}^{n+m-1} S_{nmk,n+m-k}^{(0)}(t)\\
			= \sum_{n=1}^{\infty}\sum_{m=1}^{\infty}
			|a_n(t)|^2 |a_m(t)|^2 (\beta_2 nm+\beta_1(n+m)+\beta_0)\, \h_n \h_m.
		\end{multline}
		which is independent of time. Consequently, the Hamiltonian restricted to the invariant manifold has no explicit time dependence, concluding the proof.
	\end{proof}



\begin{thebibliography}{999}
		
		\bibitem{Book_Zakharov}  V. E. Zakharov, V. S. L'vov, and G. Falkovich, {\em Kolmogorov spectra of turbulence I: Wave turbulence}, \doi{Springer (2012)}{https://doi.org/10.1007/978-3-642-50052-7}. 
		
		\bibitem{Book_Nazarenko} S. Nazarenko, {\em Wave Turbulence}, Lectures Notes in Physics, \doi{Springer (2011)}{https://doi.org/10.1007/978-3-642-15942-8}.
		
		\bibitem{Book_Galtier} S. Galtier, {\em Physics of wave turbulence}, \doi{Cambridge University Press (2022)}{https://doi.org/10.1017/9781009275880}.
		
		\bibitem{BR} P. Bizo\'n, and A. Rostworowski, {\em Weakly turbulent instability of anti–de Sitter spacetime}, \doi{Phys.\ Rev.\ Lett. {\bf 107}, 031102 (2011)}{10.1103/PhysRevLett.107.031102}, \arXiv{1104.3702} [gr-qc].	
		
		\bibitem{BMR} P. Bizo\'n, M. Maliborski, and A. Rostworowski, {\em Resonant dynamics and the instability of anti-de Sitter spacetime}, \doi{Phys.\ Rev.\ Lett. {\bf 115}, 081103 (2015)}{10.1103/PhysRevLett.115.081103}, \arXiv{1506.03519} [gr-qc].
		
		\bibitem{Biasi} A.~Biasi, {\em Exact solutions for a coherent phenomenon of condensation in conservative Hamiltonian systems}, \doi{Phys. Rev. E {\bf 110}, 034107 (2024)}{10.1103/PhysRevE.110.034107}, \arXiv{2401.15083} [cond-mat.stat-mech].
		
		\bibitem{BG} A.~Biasi, and P.~G\'erard, {\em Energy cascades and condensation via coherent dynamics in Hamiltonian systems}, \doi{Comm.\ Math.\ Phys. {\bf 406}, 264 (2025)}{10.1007/s00220-025-05449-5}, \arXiv{2412.03663} [math-ph].
		
		\bibitem{Staffilani2010} J.~Colliander, M.~Keel, G.~Staffilani, H.~Takaoka, and T.~Tao, {\em Transfer of energy to high frequencies in the cubic defocusing nonlinear Schr\"odinger equation}, \doi{Inv.\ math. {\bf 181}, 39 (2010)}{10.1007/s00222-010-0242-2}, \arXiv{0808.1742} [math.AP].
		
		\bibitem{GG} P. G\'erard, and S. Grellier, {\em The cubic Szeg\H{o} equation}, \doi{Ann. Sci. \'ENS {\bf 43}, 761 (2010)}{10.24033/asens.2133}, \arXiv{0906.4540} [math.CV].
		
		\bibitem{Kuksin1} S.~B.~Kuksin, {\em Growth and oscillations of solutions of nonlinear Schr\"odinger equation}, \doi{Comm. Math. Phys. {\bf 178}, 265 (1996)}{10.1007/BF02099448}.
		
		\bibitem{Kuksin2} S.~B.~Kuksin, {\em Oscillations in space-periodic nonlinear Schr\"odinger equations}, \doi{Geom.\ Func.\ Anal. {\bf 7}, 338 (1997)}{10.1007/PL00001622}.	
		
		\bibitem{Guardia} M.~Guardia, and V.~Kaloshin, {\em Growth of Sobolev norms in the cubic defocusing nonlinear Schr\"odinger equation}, \doi{J. Eur. Math. Soc. {\bf 17}, 71 (2015)}{10.4171/JEMS/499}, \arXiv{1205.5188} [math.AP].
		
		\bibitem{Hani} Z.~Hani, {\em Long-time instability and unbounded Sobolev orbits for some periodic nonlinear Schr\"odinger equations}, \doi{Arch.\ Rat.\ Mech.\  Anal. {\bf 211}, 929 (2014)}{10.1007/s00205-013-0689-6}, \arXiv{1210.7509} [math.AP].
		
		\bibitem{Hani2} Z. Hani, B. Pausader, N. Tzvetkov, and N. Visciglia, {\em Modified scattering for the cubic Schr\"odinger equation on product spaces and applications}, \doi{Forum Math. Pi {\bf 3}, 63 (2015)}{10.1017/fmp.2015.5}, \arXiv{1311.2275} [math.AP].
	
		\bibitem{HausProcesi} E.~Haus, and M.~Procesi, {\em Growth of Sobolev norms for the quintic NLS on $\mathbb{T}^2$}, \doi{Anal. PDE {\bf 8}, 883 (2015)}{10.2140/apde.2015.8.883}, \arXiv{1405.1538} [math.AP].

		\bibitem{GHP} M. Guardia, E. Haus, and M. Procesi, {\em Growth of Sobolev norms for the analytic NLS on $\mathbb{T}^2$}, \doi{Adv. Math. {\bf 301}, 615 (2016)}{10.1016/j.aim.2016.06.018}, \arXiv{1503.02468} [math.AP].

		\bibitem{Maspero} M.~Guardia, Z.~Hani, E.~Haus, A.~Maspero, and M.~Procesi, {\em Strong nonlinear instability and growth of Sobolev norms near quasiperiodic finite gap tori for the 2D cubic NLS equation}, \doi{J.\ Eur.\ Math.\ Soc. {\bf 25}, 1497 (2022)}{10.4171/JEMS/1200}, \arXiv{1810.03694} [math.AP].
		
		\bibitem{Maspero2} A. Maspero, and F. Murgante, {\em One dimensional energy cascades in a fractional quasilinear NLS}, \doi{Arch. Ration. Mech. Anal. {\bf 250}, 3 (2026)}{10.1007/s00205-025-02159-z}, \arXiv{2408.01097} [math.AP].
		
		\bibitem{Bourgain2} J. Bourgain, {\em On the growth in time of higher Sobolev norms of smooth solutions of Hamiltonian PDE}, \doi{Int.\ Math.\ Res.\ Not. {\bf 1996}, 277 (1996)}{10.1155/S1073792896000207}.

		\bibitem{GGwaveequation} P.~G\'erard, and S.~Grellier,  {\em Effective integrable dynamics for a certain nonlinear wave equation}, \doi{Anal. PDE {\bf 5}, 1139 (2012)}{10.2140/apde.2012.5.1139}, \arXiv{1110.5719} [math.AP].
		
		\bibitem{GerardLenzmannPocovnicuRaphael} P.~G\'erard, E.~Lenzmann, O.~Pocovnicu, and P.~Rapha\"el, {\em A two-soliton with transient turbulent regime for the cubic half-wave equation on the real line}, \doi{Anal.\ PDE {\bf 4}, 7 (2018)}{10.1007/s40818-017-0043-7}, \arXiv{1611.08482} [math.AP].	
		
		\bibitem{GG2} P.~G\'erard, and S.~Grellier, {\em On a damped Szeg\H{o} equation (with an appendix in collaboration with Christian Klein)}, \doi{SIAM J. Math. Anal. {\bf 52}, 4391 (2020)}{10.1137/19M1299189}, \arXiv{1912.10933} [math.AP].
		
		\bibitem{GGH} P.~G\'erard, S.~Grellier, Z.~He, {\em Turbulent cascades for a family of damped Szeg\H{o} equations}, \doi{Nonlinearity {\bf 35}, 4820 (2022)}{10.1088/1361-6544/ac7e13}, \arXiv{2111.05247} [math.AP].
		
		\bibitem{Xu} H.~Xu, {\em Large-time blowup for a perturbation of the cubic Szeg\H{o} equation}, \doi{Anal. \ PDE {\bf 7}, 717 (2014)}{10.2140/apde.2014.7.717}, \arXiv{1307.5284} [math.AP]; {\em The cubic Szeg\H{o} equation with a linear perturbation}, \arXiv{1508.01500} [math.AP].
		
		\bibitem{GL}  P.~G\'erard, and E.~Lenzmann, {\em The Calogero-Moser derivative nonlinear Schr\"odinger equation}, \doi{Comm. Pur. App. Math. {\bf 77}, 4008 (2024)}{10.1002/cpa.22203}, \arXiv{2208.04105} [math.AP]. 
		
		\bibitem{BE} A. Biasi, and O. Evnin, {\em Turbulent cascades in a truncation of the cubic Szeg\H{o} equation and related systems}, \doi{Anal. \ PDE {\bf 15}, 217 (2022)}{10.2140/apde.2022.15.217}, \arXiv{2002.07785} [math.AP].
		
		\bibitem{minHamiltonian} B. Craps, B., M. De Clerck, O. Evnin, P. Hacker, and M. Pavlov, {\em Bounds on quantum evolution complexity via lattice cryptography}, \doi{SciPost Phys. {\bf 13}, 090 (2022)}{10.21468/SciPostPhys.13.4.090}, \arXiv{2202.13924} [quant-ph].
		
		\bibitem{Thomann} 	V. Schwinte, and L. Thomann, {\em  Growth of Sobolev norms for coupled lowest Landau level equations},  \doi{Pure App. Anal. {\bf 3}, 189 (2021)}{10.2140/paa.2021.3.189}, \arXiv{2006.01468} [math.AP]. 
		
		\bibitem{Pocovnicu} O.~Pocovnicu, {\em Traveling waves for the cubic Szeg\H{o} equation on the real line}, \doi{Anal. \ PDE {\bf 4}, 379 (2011)}{10.2140/apde.2011.4.379}, \arXiv{1001.4037} [math.AP].
		
		\bibitem{GP} P.~G\'erard, and A.~Pushnitski, {\em  An inverse spectral problem for Hankel operators and turbulent solutions of the cubic Szeg\H{o} equation on the line}, \doi{J. Eur. Math. Soc. {\bf 27}, 4591 (2025)}{10.4171/JEMS/1457}, \arXiv{2202.03783} [math.AP].
		
		\bibitem{Giuliani} F.~Giuliani, {\em Sobolev instability in the cubic NLS equation with convolution potentials on irrational tori}, \doi{J. Diff. Eq. {\bf 416}, 1 (2025)}{10.1016/j.jde.2024.09.044}, \arXiv{2308.13468} [math.AP].
		
		\bibitem{GuardiaGiuliani} F.~Giuliani, and M.~Guardia, {\em Sobolev norms explosion for the cubic NLS on irrational tori}, \doi{Nonlin. Anal. {\bf 220}, 112865 (2022)}{10.1016/j.na.2022.112865 }, \arXiv{2110.15845} [math.AP].
		
		\bibitem{GiulianiScandone} F.~Giuliani, and R.~Scandone,  {\em Energy cascade and Sobolev norms inflation for the quantum Euler equations on tori}, \doi{Adv. Math. {\bf 479}, 110453 (2025)}{10.1016/j.aim.2025.110453}, \arXiv{2410.21080} [math.AP].
		
		\bibitem{MasperoWaterWaves} B.~Langella, A.~Maspero, F.~Murgante, and S.~Terracina, {\em Transfer of energy for pure-gravity water waves with constant vorticity}, \arXiv{2604.08343}.

		\bibitem{MBox1} M.~Maliborski, {\em Instability of flat space enclosed in a cavity}, \doi{Phys. Rev. Lett. {\bf 109}, 221101 (2012)}{10.1103/PhysRevLett.109.221101}, \arXiv{1208.2934} [gr-qc].
		
		\bibitem{Jalmuzna} P. Bizo\'n, and J. Ja\l mu\.zna, {\em Globally regular instability of AdS$_3$}, \doi{Phys.\ Rev.\ Lett. {\bf 111}, 1306 (2013)}{10.1103/PhysRevLett.111.041102}, \arXiv{1306.0317} [gr-qc].

		\bibitem{E2} O. Evnin, {\em Resonant Hamiltonian systems and weakly nonlinear dynamics in AdS spacetimes}, \doi{Class. Quant. Grav. {\bf  38}, 203001 (2021)}{10.1088/1361-6382/ac1b46}, \arXiv{2104.09797} [gr-qc].
		
		\bibitem{MBox2} J. Kurzweil, and M. Maliborski, {\em Resonant dynamics and the instability of the box Minkowski model}, \doi{Phys.\ Rev. D {\bf 106}, 124020 (2022)}{10.1103/PhysRevD.106.124020}, \arXiv{2209.05608} [gr-qc].
		
		\bibitem{KehleMoschidis} C.~Kehle, and G.~Moschidis, {\em Weakly turbulent dynamics on Schwarzschild-AdS black hole spacetimes}, \arXiv{2604.12118} [gr-qc].

		\bibitem{Freivogel} B.~Freivogel and I-S.~Yang,
		{\it Coherent cascade conjecture for collapsing solutions in global AdS,}
		\doi{Phys. Rev. D \textbf{93}, 103007 (2016)}{10.1103/PhysRevD.93.103007},
		\arXiv{1512.04383} [hep-th].

		\bibitem{EvninMelonicTurbulence} S.~Dartois, O.~Evnin, L.~Lionni, V.~Rivasseau, and G.~Valette, {\em Melonic turbulence}, \doi{Comm.\ Math.\ Phys. {\bf 374}, 1179 (2020)}{10.1007/s00220-020-03683-7}, \arXiv{1810.01848} [math-ph].
		
		\bibitem{PicozziDramaticAceleration} A. Fusaro, J. Garnier, K. Krupa, G. Millot, A. Picozzi,
		{\em Dramatic acceleration of wave condensation mediated
		by disorder in multimode fibers}, 
		\doi{Phys. Rev. Lett. {\bf 122}, 123902 (2019)}{10.1103/PhysRevLett.122.123902}, \arXiv{2011.05111} [physics.optics].
		
		\bibitem{HuRosenhaus}X.~Y.~Hu and V.~Rosenhaus, {\it Random coupling model of turbulence as a classical Sachdev-Ye-Kitaev model,} \doi{Phys. Rev. E \textbf{108}, 054132 (2023)}{10.1103/PhysRevE.108.054132}, \arXiv{2303.03421} [hep-th].
		
		\bibitem{Frahm} K. M. Frahm and  D. L. Shepelyansky, {\em Random matrix model of Kolmogorov-Zakharov turbulence}, \doi{Phys. Rev. E {\bf 109}, 044201 (2024)}{10.1103/PhysRevE.109.044201}, \arXiv{2401.11545} [cond-mat.stat-mech].
		
		\bibitem{rnd} J.-P.~Bouchaud, L.~Cugliandolo, J.~Kurchan and M.~M\'ezard, {\it Mode coupling
			approximations, glass theory and disordered systems,} \doi{Physica A {\bf 226}, 243 (1996)}{10.1016/0378-4371\%2895\%2900423-8},
		\arXiv{cond-mat/9511042}. 

		\bibitem{NazarenkoKrstulovicZhuSemisalov} Y.~Zhu, B.~Semisalov, G.~Krstulovic, and S.~Nazarenko, {\em Direct and inverse cascades in turbulent Bose-Einstein condensate}, \doi{Phys. Rev. Lett. {\bf 130}, 133001 (2023)}{10.1103/PhysRevLett.130.133001}, \arXiv{2208.09279} [cond-mat.quant-gas].
		
		\bibitem{HaniLiNahmodStaffilani} Z.~Hani, Y.~Li, A.~Nahmod, and G.~Staffilani, {\em Non-equilibrium steady state for a three-mode energy cascade model}, \arXiv{2505.16018} [math.PR].
		
		\bibitem{FaouCarles} R.~Carles, and E.~Faou, {\em A toy model for frequency cascade in the nonlinear Schrodinger equation}, \doi{Nonlinearity {\bf 39}, 075016 (2026)}{10.1088/1361-6544/ae83fc}, \arXiv{2506.15226} [math.AP].

		\bibitem{Kuksin3} S. Kuksin and A. Maiocchi, {\em The effective equation method}, in New Approaches to Nonlinear Waves, Springer (2016), \arXiv{1501.04175} [math-ph].

		\bibitem{FaouGermainHani} E.~Faou, P.~Germain, and Z.~Hani, {\em The weakly nonlinear large-box limit of the 2D cubic nonlinear Schr\"odinger equation}, \doi{J. Am. Math. Soc. {\bf 29}, 915  (2016)}{10.1090/jams/845}, \arXiv{1308.6267} [math.AP].

		\bibitem{GHT} P. Germain, Z. Hani and L. Thomann, {\em On the continuous resonant equation for NLS: I. Deterministic analysis}, \doi{J. Math. Pur. App. {\em 105}, 131 (2016)}{10.1016/j.matpur.2015.10.002}, \arXiv{1501.03760} [math.AP].
		
		\bibitem{BBCE} A. Biasi, P. Bizo\'n, B. Craps, and O. Evnin, {\em Exact lowest-Landau-level solutions for vortex precession in BoseEinstein condensates}, \doi{Phys.\ Rev. A {\bf 96}, 053615 (2017)}{10.1103/PhysRevA.96.053615}, \arXiv{1705.00867} [cond-mat.quant-gas]; {\em Two infinite families of resonant solutions for the Gross-Pitaevskii equation}, \doi{Phys.\ Rev. E {\bf 98}, 032222 (2018)}{10.1103/PhysRevE.98.032222}, \arXiv{1805.01775} [cond-mat.quant-gas].
	
		\bibitem{GGT} P. G\'erard, P. Germain and L. Thomann, {\em On the cubic lowest Landau level equation}, \doi{Arch. Rat. Mech. Anal. {\bf 231}, 1073 (2019)}{10.1007/s00205-018-1295-4}, \arXiv{1709.04276} [math.AP].
		
		\bibitem{CF} P. Bizo\'n, B. Craps, O. Evnin, D. Hunik, V. Luyten and M. Maliborski, {\em Conformal flow on $S^3$ and weak field integrability in AdS$_4$}, \doi{Comm. Math. Phys. {\bf 353}, 1179 (2017)}{10.1007/s00220-017-2896-8}, \arXiv{1608.07227} [math.AP].
		
		\bibitem{CEL} B. Craps, O. Evnin and V. Luyten, {\em Maximally rotating waves in AdS and on spheres}, \doi{JHEP {\bf 1709} 059 (2017)}{10.1007/JHEP09\%282017\%29059}, \arXiv{1707.08501} [hep-th].
		
		\bibitem{CEV1} B. Craps, O. Evnin and J. Vanhoof, {\em Renormalization group, secular term resummation and AdS (in)stability}, \doi{JHEP {\bf 2014}, 1 (2014)}{10.1007/JHEP10\%282014\%29048}, \arXiv{1407.6273} [gr-qc]; {\it Renormalization, averaging, conservation laws and AdS (in)stability,} \doi{JHEP \textbf{01}, 108 (2015)}{10.1007/JHEP01(2015)108},\arXiv{1412.3249} [gr-qc].
		
		\bibitem{BEF} P. Bizo\'n, O. Evnin and F. Ficek, {\em A nonrelativistic limit for AdS perturbations}, \doi{JHEP {\bf 12}, 113 (2018)}{10.1007/JHEP12\%282018\%29113}, \arXiv{1810.10574} [gr-qc].

		\bibitem{Stochastic_Processes} S. S\"arkk\"a, and A. Solin, {\em Applied stochastic differential equations}, Cambridge University Press,(2019).

		\bibitem{Darboux}
		G.~V.~Dunne,
		{\it Introductory lectures on resurgence,}
		\doi{Eur. Phys. J. ST \textbf{235}, 3891 (2026)}{10.1140/epjs/s11734-026-02135-y},
		\arXiv{2511.15528} [hep-th].
		
		\bibitem{W-Lambert_function} R.~M.~Corless, G.~H.~Gonnet, D.~E.~G.~Hare, D.~J.~Jeffrey, and D.~E.~Knuth, {\em On the LambertW function}, \doi{Adv. Comput. Math. {\bf 5}, 329 (1996)}{https://doi.org/10.1007/BF02124750}.
		
		\bibitem{W-Lambert_function_Wolfram} E.~W.~Weisstein, {\em Lambert W-Function}, From MathWorld--A Wolfram Resource.\\ \url{https://mathworld.wolfram.com/LambertW-Function.html}.

		\bibitem{FussCatalan} R.~L.~Graham, D.~E.~Knuth, and O.~Patashnik, {\it Concrete Mathematics: A Foundation for Computer Science,}
		Addison-Wesley (1994).

		\bibitem{Polylogarithm_function} Digital Library of Mathematical Functions: \url{https://dlmf.nist.gov/25.12}.

		\bibitem{RicardoGrandeRogueWaves} M.~Berti, R.~Grande, A.~Maspero, and G.~Staffilani, {\em Rogue waves and large deviations for 2D pure gravity deep water waves}, \arXiv{2510.15159} [math.AP].
		
		\bibitem{PicozziStrongDisorder} N. Berti, K. Baudin, A. Fusaro, G. Millot, A. Picozzi, J. Garnier, {\em Interplay of thermalization and strong disorder: wave turbulence theory, numerical simulations, and experiments in multimode optical fibers}, \doi{Phys. Rev. Lett. {\bf 129}, 063901 (2022)}{10.1103/PhysRevLett.129.063901}, \arXiv{2207.08680} [physics.optics].
	
		\bibitem{UVpwr1}
		B.~Craps, O.~Evnin and J.~Vanhoof,
		{\it Ultraviolet asymptotics and singular dynamics of AdS perturbations,}
		\doi{JHEP \textbf{10}, 079 (2015)}{10.1007/JHEP10(2015)079},
		\arXiv{1508.04943} [gr-qc].
		
		\bibitem{UVpwr2}
		O.~Evnin and P.~Jai-akson,
		{\it Detailed ultraviolet asymptotics for AdS scalar field perturbations,}
		\doi{JHEP \textbf{04}, 054 (2016)}{10.1007/JHEP04(2016)054},
		\arXiv{1602.05859} [hep-th].
		
		\bibitem{UVpwr3}
		A.~F.~Biasi, J.~Mas and A.~Paredes,
		{\it Delayed collapses of Bose-Einstein condensates in relation to anti-de Sitter gravity,}
		\doi{Phys. Rev. E \textbf{95}, 032216 (2017)}{10.1103/PhysRevE.95.032216},
		\arXiv{1610.04866} [nlin.PS].
	
		\bibitem{BBE1} A. Biasi, P. Bizo\'n and O. Evnin, {\em Solvable cubic resonant systems}, \doi{Comm. Math. Phys. {\bf 369}, 433 (2019)}{10.1007/s00220-019-03365-z
}, \arXiv{1805.03634} [nlin.SI].
		
		\bibitem{BBE2} A. Biasi, P. Bizo\'n, and O. Evnin, {\em Complex plane representations and stationary states in cubic and quintic resonant systems}, \doi{J. Phys. A {\bf 52}, 435201 (2019)}{10.1088/1751-8121/ab4406}, \arXiv{1904.09575} [math-ph].
		
		\bibitem{Evnin} O. Evnin, {\em Breathing modes, quartic nonlinearities and effective resonant systems}, \doi{SIGMA {\bf 16}, 034 (2020)}{10.3842/SIGMA.2020.034}, \arXiv{1912.07952} [math-ph].

	\end{thebibliography}
\end{document}